\newcommand{\name}{{Taurus}\xspace}
\newcommand{\algoSize}{\small}
\newcommand{\codename}{\name}
\newcommand{\paperTitle}{\codename: Lightweight Parallel Logging for\texorpdfstring{\\}{ }In-Memory Database Management Systems (Extended Version)}
\newcommand{\paperShortTitle}{\codename: Lightweight Parallel Logging for In-Memory Database Management Systems}
\newcommand*\circled[1]{\tikz[baseline=(char.base)]{
            \node[shape=circle,draw,inner sep=0.8pt] (char) {\tiny #1};}}
\renewcommand{\subparagraph}{}
\newcommand{\globalGraphScale}{0.296}
\newcommand\DrawBox[3][]{%
  \begin{tikzpicture}[remember picture,overlay]
    \draw[fill=gray!10,dashed,#1] 
    ([xshift=-0.5em,yshift=2.1ex]{pic cs:#2}) 
    rectangle 
    ([xshift=2pt,yshift=-1.1ex]pic cs:#3);
  \end{tikzpicture}%
}
\let\oldnl\nl
\newcommand{\nonl}{\renewcommand{\nl}{\let\nl\oldnl}}
\def\HiLi{\leavevmode\rlap{\hbox to 
\hsize{\color[rgb]{.7,.7,.7}\leaders\hrule height .8\baselineskip 
depth .5ex\hfill}}}
\Crefname{section}{Sec.}{Secs.}
\Crefname{subsection}{Sec.}{Secs.}
\Crefname{algorithm}{Alg.}{Algs.}
\definecolor{todo-color}{rgb}{1,0,0}
\definecolor{comment-color}{rgb}{0.5,0.1,0.1}
\newcommand{\codeComment}[1]{\textnormal{\color{comment-color}{\textit{\# 
#1}}}\unskip}
\newcommand{\dbLV}{LV\xspace}
\newcommand{\myitem}[1]{\textit{\sf #1}}
\newcommand{\RLV}{{\textit{RLV}}\xspace}
\newcommand{\ELV}{{\textit{ELV}}\xspace}
\newcommand{\PLV}{{\textit{PLV}}\xspace}
\newcommand{\PLVs}{{\textit{PLV}s}\xspace}
\newcommand{\LPLV}{{\textit{LPLV}}\xspace}
\newcommand{\writeLV}{{\textit{writeLV}}\xspace}
\newcommand{\readLV}{{\textit{readLV}}\xspace}
\newcommand{\LV}{{\textit{LV}}\xspace}
\newcommand{\LVs}{{\textit{LV}s}\xspace}
\newcommand{\writeset}{{\textit{writeSet}}\xspace}
\newcommand{\readset}{{\textit{readSet}}\xspace}
\newcommand{\elementWiseMax}{{\textit{ElemWiseMax}}\xspace}
\newcommand{\loglsn}{\textit{logLSN}\xspace}
\newcommand{\hide}[1]{}
\newcommand{\txn}[1]{\textit{T#1}\xspace} %
\newcommand{\logger}[1]{\textit{Log #1}}
\name7
\newcommand{\myhighlightpar}[1]{#1}%
\newcommand{\myhighlight}[1]{#1}%
\newcommand{\tpccNewOrder}{New-Order\xspace}
\newcommand{\tpccPayment}{Payment\xspace}
\newcommand{\tpccStockLevel}{Stock-Level\xspace}
\newcommand{\tpccDelivery}{Delivery\xspace}
\newcommand{\tpccOrderStatus}{Order-Status\xspace}
\definecolor{inlineBG}{HTML}{F1F1F1}
\definecolor{light-yellow}{HTML}{efef00}
\newcommand{\conflictRAW}{RAW\xspace}
\newcommand{\conflictWAR}{WAR\xspace}
\newcommand{\conflictWAW}{WAW\xspace}
\newtheorem{property}{Property}%
\newtheorem{theorem}{Theorem}%
\begin{document}

\title[\paperShortTitle]{\paperTitle}

\newcommand{\mail}[1]{\href{mailto:#1}{#1}}

\newcommand{\superscript}[1]{\ensuremath{^{\textrm{#1}}}}
\def\xCMU{\superscript{$\bigstar$}}
\def\xWISC{\superscript{$\spadesuit$}}
\def\xMIT{\superscript{$\clubsuit$}}

\author{Yu Xia}
\affiliation{
  \institution{Massachusetts Institute of Technology}
}
\email{yuxia@mit.edu}

\author{Xiangyao Yu}
\affiliation{
  \institution{University of Wisconsin--Madison}
}
\email{yxy@cs.wisc.edu}

\author{Andrew Pavlo}
\affiliation{
  \institution{Carnegie Mellon University}
}
\email{pavlo@cs.cmu.edu}

\author{Srinivas Devadas}
\affiliation{
  \institution{Massachusetts Institute of Technology}
}
\email{devadas@mit.edu}

\newcommand{\yxy}[1]{\textcolor{blue}{YXY: #1}}
\newcommand{\yx}[1]{\textcolor{purple}{YX: #1}}
\newcommand{\red}[1]{\textcolor{red}{#1}}

\newcommand{\squishitemize}{
 \begin{list}{$\bullet$}
  { \setlength{\itemsep}{0pt}
     \setlength{\parsep}{2pt}
     \setlength{\topsep}{2pt}
     \setlength{\partopsep}{0pt}
     \setlength{\leftmargin}{1.95em}
     \setlength{\labelwidth}{1.5em}
     \setlength{\labelsep}{0.5em} } }

\newcounter{Lcount}
\newcommand{\squishlist}{
    \begin{list}{\arabic{Lcount}. }
   { \usecounter{Lcount}
        \setlength{\itemsep}{0pt}
        \setlength{\parsep}{3pt}
        \setlength{\topsep}{3pt}
        \setlength{\partopsep}{0pt}
        \setlength{\leftmargin}{2em}
        \setlength{\labelwidth}{1.5em}
        \setlength{\labelsep}{0.5em} } }

\newcommand{\squishend}{\end{list}}

\newcommand{\mainCommandNewOrderTPCCLogNumberofWorkerThreadsvsThroughputScalabilityNoLoggingDataNOWAIT}{0.0\%\xspace}
\newcommand{\mainCommandNewOrderTPCCLogNumberofWorkerThreadsvsThroughputScalabilityTaurusCommandNOWAIT}{$27.9\times$\xspace}
\newcommand{\mainCommandNewOrderTPCCLogNumberofWorkerThreadsvsThroughputScalabilityTaurusDataNOWAIT}{0.0\%\xspace}
\newcommand{\mainCommandNewOrderTPCCLogNumberofWorkerThreadsvsThroughputCompareTaurusCommandNOWAITOverNoLoggingDataNOWAIT}{$2.8\times$\xspace}
\newcommand{\mainCommandNewOrderTPCCLogNumberofWorkerThreadsvsThroughputCompareTaurusCommandNOWAITOverNoLoggingDataNOWAITMax}{$2.8\times$\xspace}
\newcommand{\mainCommandNewOrderTPCCLogNumberofWorkerThreadsvsThroughputCompareTaurusCommandNOWAITOverSerialCommandNOWAIT}{$6.2\times$\xspace}
\newcommand{\mainCommandNewOrderTPCCLogNumberofWorkerThreadsvsThroughputCompareTaurusCommandNOWAITOverSerialCommandNOWAITMax}{$6.1\times$\xspace}
\newcommand{\mainCommandNewOrderTPCCLogNumberofWorkerThreadsvsThroughputCompareTaurusCommandNOWAITOverSerialDataNOWAIT}{$2.8\times$\xspace}
\newcommand{\mainCommandNewOrderTPCCLogNumberofWorkerThreadsvsThroughputCompareTaurusCommandNOWAITOverSerialDataNOWAITMax}{$2.8\times$\xspace}
\newcommand{\mainCommandNewOrderTPCCLogNumberofWorkerThreadsvsThroughputCompareTaurusCommandNOWAITOverPloverDataNOWAIT}{$2.8\times$\xspace}
\newcommand{\mainCommandNewOrderTPCCLogNumberofWorkerThreadsvsThroughputCompareTaurusCommandNOWAITOverPloverDataNOWAITMax}{$2.8\times$\xspace}
\newcommand{\mainCommandNewOrderTPCCLogNumberofWorkerThreadsvsThroughputCompareTaurusCommandNOWAITOverSerialRAIDCommandNOWAIT}{5.1\%\xspace}
\newcommand{\mainCommandNewOrderTPCCLogNumberofWorkerThreadsvsThroughputCompareTaurusCommandNOWAITOverSerialRAIDCommandNOWAITMax}{5.1\%\xspace}
\newcommand{\mainCommandNewOrderTPCCLogNumberofWorkerThreadsvsThroughputCompareTaurusDataNOWAITOverSerialDataNOWAIT}{0.0\%\xspace}
\newcommand{\mainCommandNewOrderTPCCLogNumberofWorkerThreadsvsThroughputCompareTaurusDataNOWAITOverSerialDataNOWAITMax}{0.0\%\xspace}
\newcommand{\mainCommandNewOrderTPCCLogNumberofWorkerThreadsvsThroughputCompareTaurusDataNOWAITOverPloverDataNOWAIT}{0.0\%\xspace}
\newcommand{\mainCommandNewOrderTPCCLogNumberofWorkerThreadsvsThroughputCompareTaurusDataNOWAITOverPloverDataNOWAITMax}{0.0\%\xspace}
\newcommand{\mainCommandNewOrderTPCCLogNumberofWorkerThreadsvsThroughputCompareTaurusDataNOWAITOverSerialRAIDDataNOWAIT}{0.0\%\xspace}
\newcommand{\mainCommandNewOrderTPCCLogNumberofWorkerThreadsvsThroughputCompareTaurusDataNOWAITOverSerialRAIDDataNOWAITMax}{0.0\%\xspace}
\newcommand{\mainCommandNewOrderTPCCLogNumberofWorkerThreadsvsThroughputCompareTaurusCommandNOWAITOverTaurusDataNOWAIT}{$2.8\times$\xspace}
\newcommand{\mainCommandNewOrderTPCCLogNumberofWorkerThreadsvsThroughputCompareTaurusCommandNOWAITOverTaurusDataNOWAITMax}{$2.8\times$\xspace}
\newcommand{\mainCommandNewOrderTPCCLogNumberofWorkerThreadsvsThroughputCompareTaurusCommandNOWAITOverSiloRDataSilo}{$2.8\times$\xspace}
\newcommand{\mainCommandNewOrderTPCCLogNumberofWorkerThreadsvsThroughputCompareTaurusCommandNOWAITOverSiloRDataSiloMax}{$2.8\times$\xspace}
\newcommand{\mainCommandNewOrderTPCCLogNumberofWorkerThreadsvsThroughputCompareTaurusDataNOWAITOverSiloRDataSilo}{0.0\%\xspace}
\newcommand{\mainCommandNewOrderTPCCLogNumberofWorkerThreadsvsThroughputCompareTaurusDataNOWAITOverSiloRDataSiloMax}{0.0\%\xspace}
\newcommand{\mainCommandNewOrderTPCCRecNumberofWorkerThreadsvsMaxThrScalabilityNoLoggingDataNOWAIT}{0.0\%\xspace}
\newcommand{\mainCommandNewOrderTPCCRecNumberofWorkerThreadsvsMaxThrScalabilityTaurusCommandNOWAIT}{$10.4\times$\xspace}
\newcommand{\mainCommandNewOrderTPCCRecNumberofWorkerThreadsvsMaxThrScalabilityTaurusDataNOWAIT}{0.0\%\xspace}
\newcommand{\mainCommandNewOrderTPCCRecNumberofWorkerThreadsvsMaxThrCompareTaurusCommandNOWAITOverNoLoggingDataNOWAIT}{$2.2\times$\xspace}
\newcommand{\mainCommandNewOrderTPCCRecNumberofWorkerThreadsvsMaxThrCompareTaurusCommandNOWAITOverNoLoggingDataNOWAITMax}{$2.2\times$\xspace}
\newcommand{\mainCommandNewOrderTPCCRecNumberofWorkerThreadsvsMaxThrCompareTaurusCommandNOWAITOverSerialCommandNOWAIT}{$17.5\times$\xspace}
\newcommand{\mainCommandNewOrderTPCCRecNumberofWorkerThreadsvsMaxThrCompareTaurusCommandNOWAITOverSerialCommandNOWAITMax}{$17.5\times$\xspace}
\newcommand{\mainCommandNewOrderTPCCRecNumberofWorkerThreadsvsMaxThrCompareTaurusCommandNOWAITOverSerialDataNOWAIT}{$2.2\times$\xspace}
\newcommand{\mainCommandNewOrderTPCCRecNumberofWorkerThreadsvsMaxThrCompareTaurusCommandNOWAITOverSerialDataNOWAITMax}{$2.2\times$\xspace}
\newcommand{\mainCommandNewOrderTPCCRecNumberofWorkerThreadsvsMaxThrCompareTaurusCommandNOWAITOverPloverDataNOWAIT}{$2.2\times$\xspace}
\newcommand{\mainCommandNewOrderTPCCRecNumberofWorkerThreadsvsMaxThrCompareTaurusCommandNOWAITOverPloverDataNOWAITMax}{$2.2\times$\xspace}
\newcommand{\mainCommandNewOrderTPCCRecNumberofWorkerThreadsvsMaxThrCompareTaurusCommandNOWAITOverSerialRAIDCommandNOWAIT}{$17.6\times$\xspace}
\newcommand{\mainCommandNewOrderTPCCRecNumberofWorkerThreadsvsMaxThrCompareTaurusCommandNOWAITOverSerialRAIDCommandNOWAITMax}{$17.6\times$\xspace}
\newcommand{\mainCommandNewOrderTPCCRecNumberofWorkerThreadsvsMaxThrCompareTaurusDataNOWAITOverSerialDataNOWAIT}{0.0\%\xspace}
\newcommand{\mainCommandNewOrderTPCCRecNumberofWorkerThreadsvsMaxThrCompareTaurusDataNOWAITOverSerialDataNOWAITMax}{0.0\%\xspace}
\newcommand{\mainCommandNewOrderTPCCRecNumberofWorkerThreadsvsMaxThrCompareTaurusDataNOWAITOverPloverDataNOWAIT}{0.0\%\xspace}
\newcommand{\mainCommandNewOrderTPCCRecNumberofWorkerThreadsvsMaxThrCompareTaurusDataNOWAITOverPloverDataNOWAITMax}{0.0\%\xspace}
\newcommand{\mainCommandNewOrderTPCCRecNumberofWorkerThreadsvsMaxThrCompareTaurusDataNOWAITOverSerialRAIDDataNOWAIT}{0.0\%\xspace}
\newcommand{\mainCommandNewOrderTPCCRecNumberofWorkerThreadsvsMaxThrCompareTaurusDataNOWAITOverSerialRAIDDataNOWAITMax}{0.0\%\xspace}
\newcommand{\mainCommandNewOrderTPCCRecNumberofWorkerThreadsvsMaxThrCompareTaurusCommandNOWAITOverTaurusDataNOWAIT}{$2.2\times$\xspace}
\newcommand{\mainCommandNewOrderTPCCRecNumberofWorkerThreadsvsMaxThrCompareTaurusCommandNOWAITOverTaurusDataNOWAITMax}{$2.2\times$\xspace}
\newcommand{\mainCommandNewOrderTPCCRecNumberofWorkerThreadsvsMaxThrCompareTaurusCommandNOWAITOverSiloRDataSilo}{$2.2\times$\xspace}
\newcommand{\mainCommandNewOrderTPCCRecNumberofWorkerThreadsvsMaxThrCompareTaurusCommandNOWAITOverSiloRDataSiloMax}{$2.2\times$\xspace}
\newcommand{\mainCommandNewOrderTPCCRecNumberofWorkerThreadsvsMaxThrCompareTaurusDataNOWAITOverSiloRDataSilo}{0.0\%\xspace}
\newcommand{\mainCommandNewOrderTPCCRecNumberofWorkerThreadsvsMaxThrCompareTaurusDataNOWAITOverSiloRDataSiloMax}{0.0\%\xspace}
\newcommand{\mainCommandPayTPCCLogNumberofWorkerThreadsvsThroughputScalabilityNoLoggingDataNOWAIT}{0.0\%\xspace}
\newcommand{\mainCommandPayTPCCLogNumberofWorkerThreadsvsThroughputScalabilityTaurusCommandNOWAIT}{$12.0\times$\xspace}
\newcommand{\mainCommandPayTPCCLogNumberofWorkerThreadsvsThroughputScalabilityTaurusDataNOWAIT}{0.0\%\xspace}
\newcommand{\mainCommandPayTPCCLogNumberofWorkerThreadsvsThroughputCompareTaurusCommandNOWAITOverNoLoggingDataNOWAIT}{$6.8\times$\xspace}
\newcommand{\mainCommandPayTPCCLogNumberofWorkerThreadsvsThroughputCompareTaurusCommandNOWAITOverNoLoggingDataNOWAITMax}{$7.0\times$\xspace}
\newcommand{\mainCommandPayTPCCLogNumberofWorkerThreadsvsThroughputCompareTaurusCommandNOWAITOverSerialCommandNOWAIT}{$6.8\times$\xspace}
\newcommand{\mainCommandPayTPCCLogNumberofWorkerThreadsvsThroughputCompareTaurusCommandNOWAITOverSerialCommandNOWAITMax}{$6.8\times$\xspace}
\newcommand{\mainCommandPayTPCCLogNumberofWorkerThreadsvsThroughputCompareTaurusCommandNOWAITOverSerialDataNOWAIT}{$6.8\times$\xspace}
\newcommand{\mainCommandPayTPCCLogNumberofWorkerThreadsvsThroughputCompareTaurusCommandNOWAITOverSerialDataNOWAITMax}{$7.0\times$\xspace}
\newcommand{\mainCommandPayTPCCLogNumberofWorkerThreadsvsThroughputCompareTaurusCommandNOWAITOverPloverDataNOWAIT}{$6.8\times$\xspace}
\newcommand{\mainCommandPayTPCCLogNumberofWorkerThreadsvsThroughputCompareTaurusCommandNOWAITOverPloverDataNOWAITMax}{$7.0\times$\xspace}
\newcommand{\mainCommandPayTPCCLogNumberofWorkerThreadsvsThroughputCompareTaurusCommandNOWAITOverSerialRAIDCommandNOWAIT}{$1.2\times$\xspace}
\newcommand{\mainCommandPayTPCCLogNumberofWorkerThreadsvsThroughputCompareTaurusCommandNOWAITOverSerialRAIDCommandNOWAITMax}{$1.1\times$\xspace}
\newcommand{\mainCommandPayTPCCLogNumberofWorkerThreadsvsThroughputCompareTaurusDataNOWAITOverSerialDataNOWAIT}{0.0\%\xspace}
\newcommand{\mainCommandPayTPCCLogNumberofWorkerThreadsvsThroughputCompareTaurusDataNOWAITOverSerialDataNOWAITMax}{0.0\%\xspace}
\newcommand{\mainCommandPayTPCCLogNumberofWorkerThreadsvsThroughputCompareTaurusDataNOWAITOverPloverDataNOWAIT}{0.0\%\xspace}
\newcommand{\mainCommandPayTPCCLogNumberofWorkerThreadsvsThroughputCompareTaurusDataNOWAITOverPloverDataNOWAITMax}{0.0\%\xspace}
\newcommand{\mainCommandPayTPCCLogNumberofWorkerThreadsvsThroughputCompareTaurusDataNOWAITOverSerialRAIDDataNOWAIT}{0.0\%\xspace}
\newcommand{\mainCommandPayTPCCLogNumberofWorkerThreadsvsThroughputCompareTaurusDataNOWAITOverSerialRAIDDataNOWAITMax}{0.0\%\xspace}
\newcommand{\mainCommandPayTPCCLogNumberofWorkerThreadsvsThroughputCompareTaurusCommandNOWAITOverTaurusDataNOWAIT}{$6.8\times$\xspace}
\newcommand{\mainCommandPayTPCCLogNumberofWorkerThreadsvsThroughputCompareTaurusCommandNOWAITOverTaurusDataNOWAITMax}{$7.0\times$\xspace}
\newcommand{\mainCommandPayTPCCLogNumberofWorkerThreadsvsThroughputCompareTaurusCommandNOWAITOverSiloRDataSilo}{$6.8\times$\xspace}
\newcommand{\mainCommandPayTPCCLogNumberofWorkerThreadsvsThroughputCompareTaurusCommandNOWAITOverSiloRDataSiloMax}{$7.0\times$\xspace}
\newcommand{\mainCommandPayTPCCLogNumberofWorkerThreadsvsThroughputCompareTaurusDataNOWAITOverSiloRDataSilo}{0.0\%\xspace}
\newcommand{\mainCommandPayTPCCLogNumberofWorkerThreadsvsThroughputCompareTaurusDataNOWAITOverSiloRDataSiloMax}{0.0\%\xspace}
\newcommand{\mainCommandPayTPCCRecNumberofWorkerThreadsvsMaxThrScalabilityNoLoggingDataNOWAIT}{0.0\%\xspace}
\newcommand{\mainCommandPayTPCCRecNumberofWorkerThreadsvsMaxThrScalabilityTaurusCommandNOWAIT}{$7.4\times$\xspace}
\newcommand{\mainCommandPayTPCCRecNumberofWorkerThreadsvsMaxThrScalabilityTaurusDataNOWAIT}{0.0\%\xspace}
\newcommand{\mainCommandPayTPCCRecNumberofWorkerThreadsvsMaxThrCompareTaurusCommandNOWAITOverNoLoggingDataNOWAIT}{$4.7\times$\xspace}
\newcommand{\mainCommandPayTPCCRecNumberofWorkerThreadsvsMaxThrCompareTaurusCommandNOWAITOverNoLoggingDataNOWAITMax}{$5.3\times$\xspace}
\newcommand{\mainCommandPayTPCCRecNumberofWorkerThreadsvsMaxThrCompareTaurusCommandNOWAITOverSerialCommandNOWAIT}{$7.1\times$\xspace}
\newcommand{\mainCommandPayTPCCRecNumberofWorkerThreadsvsMaxThrCompareTaurusCommandNOWAITOverSerialCommandNOWAITMax}{$8.0\times$\xspace}
\newcommand{\mainCommandPayTPCCRecNumberofWorkerThreadsvsMaxThrCompareTaurusCommandNOWAITOverSerialDataNOWAIT}{$4.7\times$\xspace}
\newcommand{\mainCommandPayTPCCRecNumberofWorkerThreadsvsMaxThrCompareTaurusCommandNOWAITOverSerialDataNOWAITMax}{$5.3\times$\xspace}
\newcommand{\mainCommandPayTPCCRecNumberofWorkerThreadsvsMaxThrCompareTaurusCommandNOWAITOverPloverDataNOWAIT}{$4.7\times$\xspace}
\newcommand{\mainCommandPayTPCCRecNumberofWorkerThreadsvsMaxThrCompareTaurusCommandNOWAITOverPloverDataNOWAITMax}{$5.3\times$\xspace}
\newcommand{\mainCommandPayTPCCRecNumberofWorkerThreadsvsMaxThrCompareTaurusCommandNOWAITOverSerialRAIDCommandNOWAIT}{$7.3\times$\xspace}
\newcommand{\mainCommandPayTPCCRecNumberofWorkerThreadsvsMaxThrCompareTaurusCommandNOWAITOverSerialRAIDCommandNOWAITMax}{$8.1\times$\xspace}
\newcommand{\mainCommandPayTPCCRecNumberofWorkerThreadsvsMaxThrCompareTaurusDataNOWAITOverSerialDataNOWAIT}{0.0\%\xspace}
\newcommand{\mainCommandPayTPCCRecNumberofWorkerThreadsvsMaxThrCompareTaurusDataNOWAITOverSerialDataNOWAITMax}{0.0\%\xspace}
\newcommand{\mainCommandPayTPCCRecNumberofWorkerThreadsvsMaxThrCompareTaurusDataNOWAITOverPloverDataNOWAIT}{0.0\%\xspace}
\newcommand{\mainCommandPayTPCCRecNumberofWorkerThreadsvsMaxThrCompareTaurusDataNOWAITOverPloverDataNOWAITMax}{0.0\%\xspace}
\newcommand{\mainCommandPayTPCCRecNumberofWorkerThreadsvsMaxThrCompareTaurusDataNOWAITOverSerialRAIDDataNOWAIT}{0.0\%\xspace}
\newcommand{\mainCommandPayTPCCRecNumberofWorkerThreadsvsMaxThrCompareTaurusDataNOWAITOverSerialRAIDDataNOWAITMax}{0.0\%\xspace}
\newcommand{\mainCommandPayTPCCRecNumberofWorkerThreadsvsMaxThrCompareTaurusCommandNOWAITOverTaurusDataNOWAIT}{$4.7\times$\xspace}
\newcommand{\mainCommandPayTPCCRecNumberofWorkerThreadsvsMaxThrCompareTaurusCommandNOWAITOverTaurusDataNOWAITMax}{$5.3\times$\xspace}
\newcommand{\mainCommandPayTPCCRecNumberofWorkerThreadsvsMaxThrCompareTaurusCommandNOWAITOverSiloRDataSilo}{$4.7\times$\xspace}
\newcommand{\mainCommandPayTPCCRecNumberofWorkerThreadsvsMaxThrCompareTaurusCommandNOWAITOverSiloRDataSiloMax}{$5.3\times$\xspace}
\newcommand{\mainCommandPayTPCCRecNumberofWorkerThreadsvsMaxThrCompareTaurusDataNOWAITOverSiloRDataSilo}{0.0\%\xspace}
\newcommand{\mainCommandPayTPCCRecNumberofWorkerThreadsvsMaxThrCompareTaurusDataNOWAITOverSiloRDataSiloMax}{0.0\%\xspace}
\newcommand{\mainCommandYCSBLogNumberofWorkerThreadsvsThroughputScalabilityNoLoggingDataNOWAIT}{0.0\%\xspace}
\newcommand{\mainCommandYCSBLogNumberofWorkerThreadsvsThroughputScalabilityTaurusCommandNOWAIT}{$38.0\times$\xspace}
\newcommand{\mainCommandYCSBLogNumberofWorkerThreadsvsThroughputScalabilityTaurusDataNOWAIT}{0.0\%\xspace}
\newcommand{\mainCommandYCSBLogNumberofWorkerThreadsvsThroughputCompareTaurusCommandNOWAITOverNoLoggingDataNOWAIT}{$12.0\times$\xspace}
\newcommand{\mainCommandYCSBLogNumberofWorkerThreadsvsThroughputCompareTaurusCommandNOWAITOverNoLoggingDataNOWAITMax}{$12.0\times$\xspace}
\newcommand{\mainCommandYCSBLogNumberofWorkerThreadsvsThroughputCompareTaurusCommandNOWAITOverSerialCommandNOWAIT}{$3.8\times$\xspace}
\newcommand{\mainCommandYCSBLogNumberofWorkerThreadsvsThroughputCompareTaurusCommandNOWAITOverSerialCommandNOWAITMax}{$3.0\times$\xspace}
\newcommand{\mainCommandYCSBLogNumberofWorkerThreadsvsThroughputCompareTaurusCommandNOWAITOverSerialDataNOWAIT}{$12.0\times$\xspace}
\newcommand{\mainCommandYCSBLogNumberofWorkerThreadsvsThroughputCompareTaurusCommandNOWAITOverSerialDataNOWAITMax}{$12.0\times$\xspace}
\newcommand{\mainCommandYCSBLogNumberofWorkerThreadsvsThroughputCompareTaurusCommandNOWAITOverPloverDataNOWAIT}{$12.0\times$\xspace}
\newcommand{\mainCommandYCSBLogNumberofWorkerThreadsvsThroughputCompareTaurusCommandNOWAITOverPloverDataNOWAITMax}{$12.0\times$\xspace}
\newcommand{\mainCommandYCSBLogNumberofWorkerThreadsvsThroughputCompareTaurusCommandNOWAITOverSerialRAIDCommandNOWAIT}{$1.6\times$\xspace}
\newcommand{\mainCommandYCSBLogNumberofWorkerThreadsvsThroughputCompareTaurusCommandNOWAITOverSerialRAIDCommandNOWAITMax}{$1.6\times$\xspace}
\newcommand{\mainCommandYCSBLogNumberofWorkerThreadsvsThroughputCompareTaurusDataNOWAITOverSerialDataNOWAIT}{0.0\%\xspace}
\newcommand{\mainCommandYCSBLogNumberofWorkerThreadsvsThroughputCompareTaurusDataNOWAITOverSerialDataNOWAITMax}{0.0\%\xspace}
\newcommand{\mainCommandYCSBLogNumberofWorkerThreadsvsThroughputCompareTaurusDataNOWAITOverPloverDataNOWAIT}{0.0\%\xspace}
\newcommand{\mainCommandYCSBLogNumberofWorkerThreadsvsThroughputCompareTaurusDataNOWAITOverPloverDataNOWAITMax}{0.0\%\xspace}
\newcommand{\mainCommandYCSBLogNumberofWorkerThreadsvsThroughputCompareTaurusDataNOWAITOverSerialRAIDDataNOWAIT}{0.0\%\xspace}
\newcommand{\mainCommandYCSBLogNumberofWorkerThreadsvsThroughputCompareTaurusDataNOWAITOverSerialRAIDDataNOWAITMax}{0.0\%\xspace}
\newcommand{\mainCommandYCSBLogNumberofWorkerThreadsvsThroughputCompareTaurusCommandNOWAITOverTaurusDataNOWAIT}{$12.0\times$\xspace}
\newcommand{\mainCommandYCSBLogNumberofWorkerThreadsvsThroughputCompareTaurusCommandNOWAITOverTaurusDataNOWAITMax}{$12.0\times$\xspace}
\newcommand{\mainCommandYCSBLogNumberofWorkerThreadsvsThroughputCompareTaurusCommandNOWAITOverSiloRDataSilo}{$12.0\times$\xspace}
\newcommand{\mainCommandYCSBLogNumberofWorkerThreadsvsThroughputCompareTaurusCommandNOWAITOverSiloRDataSiloMax}{$12.0\times$\xspace}
\newcommand{\mainCommandYCSBLogNumberofWorkerThreadsvsThroughputCompareTaurusDataNOWAITOverSiloRDataSilo}{0.0\%\xspace}
\newcommand{\mainCommandYCSBLogNumberofWorkerThreadsvsThroughputCompareTaurusDataNOWAITOverSiloRDataSiloMax}{0.0\%\xspace}
\newcommand{\mainCommandYCSBRecNumberofWorkerThreadsvsMaxThrScalabilityNoLoggingDataNOWAIT}{0.0\%\xspace}
\newcommand{\mainCommandYCSBRecNumberofWorkerThreadsvsMaxThrScalabilityTaurusCommandNOWAIT}{$15.5\times$\xspace}
\newcommand{\mainCommandYCSBRecNumberofWorkerThreadsvsMaxThrScalabilityTaurusDataNOWAIT}{0.0\%\xspace}
\newcommand{\mainCommandYCSBRecNumberofWorkerThreadsvsMaxThrCompareTaurusCommandNOWAITOverNoLoggingDataNOWAIT}{$5.3\times$\xspace}
\newcommand{\mainCommandYCSBRecNumberofWorkerThreadsvsMaxThrCompareTaurusCommandNOWAITOverNoLoggingDataNOWAITMax}{$5.3\times$\xspace}
\newcommand{\mainCommandYCSBRecNumberofWorkerThreadsvsMaxThrCompareTaurusCommandNOWAITOverSerialCommandNOWAIT}{$11.3\times$\xspace}
\newcommand{\mainCommandYCSBRecNumberofWorkerThreadsvsMaxThrCompareTaurusCommandNOWAITOverSerialCommandNOWAITMax}{$11.3\times$\xspace}
\newcommand{\mainCommandYCSBRecNumberofWorkerThreadsvsMaxThrCompareTaurusCommandNOWAITOverSerialDataNOWAIT}{$5.3\times$\xspace}
\newcommand{\mainCommandYCSBRecNumberofWorkerThreadsvsMaxThrCompareTaurusCommandNOWAITOverSerialDataNOWAITMax}{$5.3\times$\xspace}
\newcommand{\mainCommandYCSBRecNumberofWorkerThreadsvsMaxThrCompareTaurusCommandNOWAITOverPloverDataNOWAIT}{$5.3\times$\xspace}
\newcommand{\mainCommandYCSBRecNumberofWorkerThreadsvsMaxThrCompareTaurusCommandNOWAITOverPloverDataNOWAITMax}{$5.3\times$\xspace}
\newcommand{\mainCommandYCSBRecNumberofWorkerThreadsvsMaxThrCompareTaurusCommandNOWAITOverSerialRAIDCommandNOWAIT}{$11.0\times$\xspace}
\newcommand{\mainCommandYCSBRecNumberofWorkerThreadsvsMaxThrCompareTaurusCommandNOWAITOverSerialRAIDCommandNOWAITMax}{$10.9\times$\xspace}
\newcommand{\mainCommandYCSBRecNumberofWorkerThreadsvsMaxThrCompareTaurusDataNOWAITOverSerialDataNOWAIT}{0.0\%\xspace}
\newcommand{\mainCommandYCSBRecNumberofWorkerThreadsvsMaxThrCompareTaurusDataNOWAITOverSerialDataNOWAITMax}{0.0\%\xspace}
\newcommand{\mainCommandYCSBRecNumberofWorkerThreadsvsMaxThrCompareTaurusDataNOWAITOverPloverDataNOWAIT}{0.0\%\xspace}
\newcommand{\mainCommandYCSBRecNumberofWorkerThreadsvsMaxThrCompareTaurusDataNOWAITOverPloverDataNOWAITMax}{0.0\%\xspace}
\newcommand{\mainCommandYCSBRecNumberofWorkerThreadsvsMaxThrCompareTaurusDataNOWAITOverSerialRAIDDataNOWAIT}{0.0\%\xspace}
\newcommand{\mainCommandYCSBRecNumberofWorkerThreadsvsMaxThrCompareTaurusDataNOWAITOverSerialRAIDDataNOWAITMax}{0.0\%\xspace}
\newcommand{\mainCommandYCSBRecNumberofWorkerThreadsvsMaxThrCompareTaurusCommandNOWAITOverTaurusDataNOWAIT}{$5.3\times$\xspace}
\newcommand{\mainCommandYCSBRecNumberofWorkerThreadsvsMaxThrCompareTaurusCommandNOWAITOverTaurusDataNOWAITMax}{$5.3\times$\xspace}
\newcommand{\mainCommandYCSBRecNumberofWorkerThreadsvsMaxThrCompareTaurusCommandNOWAITOverSiloRDataSilo}{$5.3\times$\xspace}
\newcommand{\mainCommandYCSBRecNumberofWorkerThreadsvsMaxThrCompareTaurusCommandNOWAITOverSiloRDataSiloMax}{$5.3\times$\xspace}
\newcommand{\mainCommandYCSBRecNumberofWorkerThreadsvsMaxThrCompareTaurusDataNOWAITOverSiloRDataSilo}{0.0\%\xspace}
\newcommand{\mainCommandYCSBRecNumberofWorkerThreadsvsMaxThrCompareTaurusDataNOWAITOverSiloRDataSiloMax}{0.0\%\xspace}
\newcommand{\contentioncontentionhddYCSBLogZipfianThetavsMaxThrCompareTaurusCommandNOWAITOverNoLoggingDataNOWAIT}{31.4\%\xspace}
\newcommand{\contentioncontentionhddYCSBLogZipfianThetavsMaxThrCompareTaurusCommandNOWAITOverNoLoggingDataNOWAITMax}{36.8\%\xspace}
\newcommand{\mainDataNewOrderTPCCLogNumberofWorkerThreadsvsThroughputScalabilityNoLoggingDataNOWAIT}{0.0\%\xspace}
\newcommand{\mainDataNewOrderTPCCLogNumberofWorkerThreadsvsThroughputScalabilityTaurusCommandNOWAIT}{0.0\%\xspace}
\newcommand{\mainDataNewOrderTPCCLogNumberofWorkerThreadsvsThroughputScalabilityTaurusDataNOWAIT}{$8.1\times$\xspace}
\newcommand{\mainDataNewOrderTPCCLogNumberofWorkerThreadsvsThroughputCompareTaurusCommandNOWAITOverNoLoggingDataNOWAIT}{0.0\%\xspace}
\newcommand{\mainDataNewOrderTPCCLogNumberofWorkerThreadsvsThroughputCompareTaurusCommandNOWAITOverNoLoggingDataNOWAITMax}{0.0\%\xspace}
\newcommand{\mainDataNewOrderTPCCLogNumberofWorkerThreadsvsThroughputCompareTaurusCommandNOWAITOverSerialCommandNOWAIT}{0.0\%\xspace}
\newcommand{\mainDataNewOrderTPCCLogNumberofWorkerThreadsvsThroughputCompareTaurusCommandNOWAITOverSerialCommandNOWAITMax}{0.0\%\xspace}
\newcommand{\mainDataNewOrderTPCCLogNumberofWorkerThreadsvsThroughputCompareTaurusCommandNOWAITOverSerialDataNOWAIT}{$22.8\times$\xspace}
\newcommand{\mainDataNewOrderTPCCLogNumberofWorkerThreadsvsThroughputCompareTaurusCommandNOWAITOverSerialDataNOWAITMax}{$20.1\times$\xspace}
\newcommand{\mainDataNewOrderTPCCLogNumberofWorkerThreadsvsThroughputCompareTaurusCommandNOWAITOverPloverDataNOWAIT}{$4.0\times$\xspace}
\newcommand{\mainDataNewOrderTPCCLogNumberofWorkerThreadsvsThroughputCompareTaurusCommandNOWAITOverPloverDataNOWAITMax}{$3.6\times$\xspace}
\newcommand{\mainDataNewOrderTPCCLogNumberofWorkerThreadsvsThroughputCompareTaurusCommandNOWAITOverSerialRAIDCommandNOWAIT}{0.0\%\xspace}
\newcommand{\mainDataNewOrderTPCCLogNumberofWorkerThreadsvsThroughputCompareTaurusCommandNOWAITOverSerialRAIDCommandNOWAITMax}{0.0\%\xspace}
\newcommand{\mainDataNewOrderTPCCLogNumberofWorkerThreadsvsThroughputCompareTaurusDataNOWAITOverSerialDataNOWAIT}{$8.3\times$\xspace}
\newcommand{\mainDataNewOrderTPCCLogNumberofWorkerThreadsvsThroughputCompareTaurusDataNOWAITOverSerialDataNOWAITMax}{$7.5\times$\xspace}
\newcommand{\mainDataNewOrderTPCCLogNumberofWorkerThreadsvsThroughputCompareTaurusDataNOWAITOverPloverDataNOWAIT}{$1.4\times$\xspace}
\newcommand{\mainDataNewOrderTPCCLogNumberofWorkerThreadsvsThroughputCompareTaurusDataNOWAITOverPloverDataNOWAITMax}{$1.3\times$\xspace}
\newcommand{\mainDataNewOrderTPCCLogNumberofWorkerThreadsvsThroughputCompareTaurusDataNOWAITOverSerialRAIDDataNOWAIT}{$1.3\times$\xspace}
\newcommand{\mainDataNewOrderTPCCLogNumberofWorkerThreadsvsThroughputCompareTaurusDataNOWAITOverSerialRAIDDataNOWAITMax}{$1.3\times$\xspace}
\newcommand{\mainDataNewOrderTPCCLogNumberofWorkerThreadsvsThroughputCompareTaurusCommandNOWAITOverTaurusDataNOWAIT}{$2.7\times$\xspace}
\newcommand{\mainDataNewOrderTPCCLogNumberofWorkerThreadsvsThroughputCompareTaurusCommandNOWAITOverTaurusDataNOWAITMax}{$2.7\times$\xspace}
\newcommand{\mainDataNewOrderTPCCLogNumberofWorkerThreadsvsThroughputCompareTaurusCommandNOWAITOverSiloRDataSilo}{$2.7\times$\xspace}
\newcommand{\mainDataNewOrderTPCCLogNumberofWorkerThreadsvsThroughputCompareTaurusCommandNOWAITOverSiloRDataSiloMax}{$2.6\times$\xspace}
\newcommand{\mainDataNewOrderTPCCLogNumberofWorkerThreadsvsThroughputCompareTaurusDataNOWAITOverSiloRDataSilo}{0.0\%\xspace}
\newcommand{\mainDataNewOrderTPCCLogNumberofWorkerThreadsvsThroughputCompareTaurusDataNOWAITOverSiloRDataSiloMax}{1.5\%\xspace}
\newcommand{\mainDataNewOrderTPCCRecNumberofWorkerThreadsvsMaxThrScalabilityNoLoggingDataNOWAIT}{0.0\%\xspace}
\newcommand{\mainDataNewOrderTPCCRecNumberofWorkerThreadsvsMaxThrScalabilityTaurusCommandNOWAIT}{0.0\%\xspace}
\newcommand{\mainDataNewOrderTPCCRecNumberofWorkerThreadsvsMaxThrScalabilityTaurusDataNOWAIT}{$7.0\times$\xspace}
\newcommand{\mainDataNewOrderTPCCRecNumberofWorkerThreadsvsMaxThrCompareTaurusCommandNOWAITOverNoLoggingDataNOWAIT}{0.0\%\xspace}
\newcommand{\mainDataNewOrderTPCCRecNumberofWorkerThreadsvsMaxThrCompareTaurusCommandNOWAITOverNoLoggingDataNOWAITMax}{0.0\%\xspace}
\newcommand{\mainDataNewOrderTPCCRecNumberofWorkerThreadsvsMaxThrCompareTaurusCommandNOWAITOverSerialCommandNOWAIT}{0.0\%\xspace}
\newcommand{\mainDataNewOrderTPCCRecNumberofWorkerThreadsvsMaxThrCompareTaurusCommandNOWAITOverSerialCommandNOWAITMax}{0.0\%\xspace}
\newcommand{\mainDataNewOrderTPCCRecNumberofWorkerThreadsvsMaxThrCompareTaurusCommandNOWAITOverSerialDataNOWAIT}{$20.4\times$\xspace}
\newcommand{\mainDataNewOrderTPCCRecNumberofWorkerThreadsvsMaxThrCompareTaurusCommandNOWAITOverSerialDataNOWAITMax}{$20.4\times$\xspace}
\newcommand{\mainDataNewOrderTPCCRecNumberofWorkerThreadsvsMaxThrCompareTaurusCommandNOWAITOverPloverDataNOWAIT}{$3.2\times$\xspace}
\newcommand{\mainDataNewOrderTPCCRecNumberofWorkerThreadsvsMaxThrCompareTaurusCommandNOWAITOverPloverDataNOWAITMax}{$3.2\times$\xspace}
\newcommand{\mainDataNewOrderTPCCRecNumberofWorkerThreadsvsMaxThrCompareTaurusCommandNOWAITOverSerialRAIDCommandNOWAIT}{0.0\%\xspace}
\newcommand{\mainDataNewOrderTPCCRecNumberofWorkerThreadsvsMaxThrCompareTaurusCommandNOWAITOverSerialRAIDCommandNOWAITMax}{0.0\%\xspace}
\newcommand{\mainDataNewOrderTPCCRecNumberofWorkerThreadsvsMaxThrCompareTaurusDataNOWAITOverSerialDataNOWAIT}{$6.7\times$\xspace}
\newcommand{\mainDataNewOrderTPCCRecNumberofWorkerThreadsvsMaxThrCompareTaurusDataNOWAITOverSerialDataNOWAITMax}{$6.7\times$\xspace}
\newcommand{\mainDataNewOrderTPCCRecNumberofWorkerThreadsvsMaxThrCompareTaurusDataNOWAITOverPloverDataNOWAIT}{$1.0\times$\xspace}
\newcommand{\mainDataNewOrderTPCCRecNumberofWorkerThreadsvsMaxThrCompareTaurusDataNOWAITOverPloverDataNOWAITMax}{$1.0\times$\xspace}
\newcommand{\mainDataNewOrderTPCCRecNumberofWorkerThreadsvsMaxThrCompareTaurusDataNOWAITOverSerialRAIDDataNOWAIT}{$2.1\times$\xspace}
\newcommand{\mainDataNewOrderTPCCRecNumberofWorkerThreadsvsMaxThrCompareTaurusDataNOWAITOverSerialRAIDDataNOWAITMax}{$2.1\times$\xspace}
\newcommand{\mainDataNewOrderTPCCRecNumberofWorkerThreadsvsMaxThrCompareTaurusCommandNOWAITOverTaurusDataNOWAIT}{$3.0\times$\xspace}
\newcommand{\mainDataNewOrderTPCCRecNumberofWorkerThreadsvsMaxThrCompareTaurusCommandNOWAITOverTaurusDataNOWAITMax}{$3.0\times$\xspace}
\newcommand{\mainDataNewOrderTPCCRecNumberofWorkerThreadsvsMaxThrCompareTaurusCommandNOWAITOverSiloRDataSilo}{$3.0\times$\xspace}
\newcommand{\mainDataNewOrderTPCCRecNumberofWorkerThreadsvsMaxThrCompareTaurusCommandNOWAITOverSiloRDataSiloMax}{$3.0\times$\xspace}
\newcommand{\mainDataNewOrderTPCCRecNumberofWorkerThreadsvsMaxThrCompareTaurusDataNOWAITOverSiloRDataSilo}{0.5\%\xspace}
\newcommand{\mainDataNewOrderTPCCRecNumberofWorkerThreadsvsMaxThrCompareTaurusDataNOWAITOverSiloRDataSiloMax}{0.5\%\xspace}
\newcommand{\mainDataPayTPCCLogNumberofWorkerThreadsvsThroughputScalabilityNoLoggingDataNOWAIT}{0.0\%\xspace}
\newcommand{\mainDataPayTPCCLogNumberofWorkerThreadsvsThroughputScalabilityTaurusCommandNOWAIT}{0.0\%\xspace}
\newcommand{\mainDataPayTPCCLogNumberofWorkerThreadsvsThroughputScalabilityTaurusDataNOWAIT}{$10.2\times$\xspace}
\newcommand{\mainDataPayTPCCLogNumberofWorkerThreadsvsThroughputCompareTaurusCommandNOWAITOverNoLoggingDataNOWAIT}{0.0\%\xspace}
\newcommand{\mainDataPayTPCCLogNumberofWorkerThreadsvsThroughputCompareTaurusCommandNOWAITOverNoLoggingDataNOWAITMax}{0.0\%\xspace}
\newcommand{\mainDataPayTPCCLogNumberofWorkerThreadsvsThroughputCompareTaurusCommandNOWAITOverSerialCommandNOWAIT}{0.0\%\xspace}
\newcommand{\mainDataPayTPCCLogNumberofWorkerThreadsvsThroughputCompareTaurusCommandNOWAITOverSerialCommandNOWAITMax}{0.0\%\xspace}
\newcommand{\mainDataPayTPCCLogNumberofWorkerThreadsvsThroughputCompareTaurusCommandNOWAITOverSerialDataNOWAIT}{$7.5\times$\xspace}
\newcommand{\mainDataPayTPCCLogNumberofWorkerThreadsvsThroughputCompareTaurusCommandNOWAITOverSerialDataNOWAITMax}{$6.8\times$\xspace}
\newcommand{\mainDataPayTPCCLogNumberofWorkerThreadsvsThroughputCompareTaurusCommandNOWAITOverPloverDataNOWAIT}{$2.0\times$\xspace}
\newcommand{\mainDataPayTPCCLogNumberofWorkerThreadsvsThroughputCompareTaurusCommandNOWAITOverPloverDataNOWAITMax}{$2.0\times$\xspace}
\newcommand{\mainDataPayTPCCLogNumberofWorkerThreadsvsThroughputCompareTaurusCommandNOWAITOverSerialRAIDCommandNOWAIT}{0.0\%\xspace}
\newcommand{\mainDataPayTPCCLogNumberofWorkerThreadsvsThroughputCompareTaurusCommandNOWAITOverSerialRAIDCommandNOWAITMax}{0.0\%\xspace}
\newcommand{\mainDataPayTPCCLogNumberofWorkerThreadsvsThroughputCompareTaurusDataNOWAITOverSerialDataNOWAIT}{$9.5\times$\xspace}
\newcommand{\mainDataPayTPCCLogNumberofWorkerThreadsvsThroughputCompareTaurusDataNOWAITOverSerialDataNOWAITMax}{$8.7\times$\xspace}
\newcommand{\mainDataPayTPCCLogNumberofWorkerThreadsvsThroughputCompareTaurusDataNOWAITOverPloverDataNOWAIT}{$2.6\times$\xspace}
\newcommand{\mainDataPayTPCCLogNumberofWorkerThreadsvsThroughputCompareTaurusDataNOWAITOverPloverDataNOWAITMax}{$2.6\times$\xspace}
\newcommand{\mainDataPayTPCCLogNumberofWorkerThreadsvsThroughputCompareTaurusDataNOWAITOverSerialRAIDDataNOWAIT}{$1.3\times$\xspace}
\newcommand{\mainDataPayTPCCLogNumberofWorkerThreadsvsThroughputCompareTaurusDataNOWAITOverSerialRAIDDataNOWAITMax}{$1.3\times$\xspace}
\newcommand{\mainDataPayTPCCLogNumberofWorkerThreadsvsThroughputCompareTaurusCommandNOWAITOverTaurusDataNOWAIT}{21.5\%\xspace}
\newcommand{\mainDataPayTPCCLogNumberofWorkerThreadsvsThroughputCompareTaurusCommandNOWAITOverTaurusDataNOWAITMax}{21.8\%\xspace}
\newcommand{\mainDataPayTPCCLogNumberofWorkerThreadsvsThroughputCompareTaurusCommandNOWAITOverSiloRDataSilo}{25.6\%\xspace}
\newcommand{\mainDataPayTPCCLogNumberofWorkerThreadsvsThroughputCompareTaurusCommandNOWAITOverSiloRDataSiloMax}{25.7\%\xspace}
\newcommand{\mainDataPayTPCCLogNumberofWorkerThreadsvsThroughputCompareTaurusDataNOWAITOverSiloRDataSilo}{5.2\%\xspace}
\newcommand{\mainDataPayTPCCLogNumberofWorkerThreadsvsThroughputCompareTaurusDataNOWAITOverSiloRDataSiloMax}{5.1\%\xspace}
\newcommand{\mainDataPayTPCCRecNumberofWorkerThreadsvsMaxThrScalabilityNoLoggingDataNOWAIT}{0.0\%\xspace}
\newcommand{\mainDataPayTPCCRecNumberofWorkerThreadsvsMaxThrScalabilityTaurusCommandNOWAIT}{0.0\%\xspace}
\newcommand{\mainDataPayTPCCRecNumberofWorkerThreadsvsMaxThrScalabilityTaurusDataNOWAIT}{$8.8\times$\xspace}
\newcommand{\mainDataPayTPCCRecNumberofWorkerThreadsvsMaxThrCompareTaurusCommandNOWAITOverNoLoggingDataNOWAIT}{0.0\%\xspace}
\newcommand{\mainDataPayTPCCRecNumberofWorkerThreadsvsMaxThrCompareTaurusCommandNOWAITOverNoLoggingDataNOWAITMax}{0.0\%\xspace}
\newcommand{\mainDataPayTPCCRecNumberofWorkerThreadsvsMaxThrCompareTaurusCommandNOWAITOverSerialCommandNOWAIT}{0.0\%\xspace}
\newcommand{\mainDataPayTPCCRecNumberofWorkerThreadsvsMaxThrCompareTaurusCommandNOWAITOverSerialCommandNOWAITMax}{0.0\%\xspace}
\newcommand{\mainDataPayTPCCRecNumberofWorkerThreadsvsMaxThrCompareTaurusCommandNOWAITOverSerialDataNOWAIT}{$7.0\times$\xspace}
\newcommand{\mainDataPayTPCCRecNumberofWorkerThreadsvsMaxThrCompareTaurusCommandNOWAITOverSerialDataNOWAITMax}{$7.0\times$\xspace}
\newcommand{\mainDataPayTPCCRecNumberofWorkerThreadsvsMaxThrCompareTaurusCommandNOWAITOverPloverDataNOWAIT}{$1.6\times$\xspace}
\newcommand{\mainDataPayTPCCRecNumberofWorkerThreadsvsMaxThrCompareTaurusCommandNOWAITOverPloverDataNOWAITMax}{$1.6\times$\xspace}
\newcommand{\mainDataPayTPCCRecNumberofWorkerThreadsvsMaxThrCompareTaurusCommandNOWAITOverSerialRAIDCommandNOWAIT}{0.0\%\xspace}
\newcommand{\mainDataPayTPCCRecNumberofWorkerThreadsvsMaxThrCompareTaurusCommandNOWAITOverSerialRAIDCommandNOWAITMax}{0.0\%\xspace}
\newcommand{\mainDataPayTPCCRecNumberofWorkerThreadsvsMaxThrCompareTaurusDataNOWAITOverSerialDataNOWAIT}{$8.2\times$\xspace}
\newcommand{\mainDataPayTPCCRecNumberofWorkerThreadsvsMaxThrCompareTaurusDataNOWAITOverSerialDataNOWAITMax}{$8.1\times$\xspace}
\newcommand{\mainDataPayTPCCRecNumberofWorkerThreadsvsMaxThrCompareTaurusDataNOWAITOverPloverDataNOWAIT}{$1.9\times$\xspace}
\newcommand{\mainDataPayTPCCRecNumberofWorkerThreadsvsMaxThrCompareTaurusDataNOWAITOverPloverDataNOWAITMax}{$1.9\times$\xspace}
\newcommand{\mainDataPayTPCCRecNumberofWorkerThreadsvsMaxThrCompareTaurusDataNOWAITOverSerialRAIDDataNOWAIT}{$2.2\times$\xspace}
\newcommand{\mainDataPayTPCCRecNumberofWorkerThreadsvsMaxThrCompareTaurusDataNOWAITOverSerialRAIDDataNOWAITMax}{$2.2\times$\xspace}
\newcommand{\mainDataPayTPCCRecNumberofWorkerThreadsvsMaxThrCompareTaurusCommandNOWAITOverTaurusDataNOWAIT}{14.0\%\xspace}
\newcommand{\mainDataPayTPCCRecNumberofWorkerThreadsvsMaxThrCompareTaurusCommandNOWAITOverTaurusDataNOWAITMax}{14.1\%\xspace}
\newcommand{\mainDataPayTPCCRecNumberofWorkerThreadsvsMaxThrCompareTaurusCommandNOWAITOverSiloRDataSilo}{15.9\%\xspace}
\newcommand{\mainDataPayTPCCRecNumberofWorkerThreadsvsMaxThrCompareTaurusCommandNOWAITOverSiloRDataSiloMax}{16.4\%\xspace}
\newcommand{\mainDataPayTPCCRecNumberofWorkerThreadsvsMaxThrCompareTaurusDataNOWAITOverSiloRDataSilo}{2.2\%\xspace}
\newcommand{\mainDataPayTPCCRecNumberofWorkerThreadsvsMaxThrCompareTaurusDataNOWAITOverSiloRDataSiloMax}{2.7\%\xspace}
\newcommand{\mainDataYCSBLogNumberofWorkerThreadsvsThroughputScalabilityNoLoggingDataNOWAIT}{0.0\%\xspace}
\newcommand{\mainDataYCSBLogNumberofWorkerThreadsvsThroughputScalabilityTaurusCommandNOWAIT}{0.0\%\xspace}
\newcommand{\mainDataYCSBLogNumberofWorkerThreadsvsThroughputScalabilityTaurusDataNOWAIT}{$7.6\times$\xspace}
\newcommand{\mainDataYCSBLogNumberofWorkerThreadsvsThroughputCompareTaurusCommandNOWAITOverNoLoggingDataNOWAIT}{0.0\%\xspace}
\newcommand{\mainDataYCSBLogNumberofWorkerThreadsvsThroughputCompareTaurusCommandNOWAITOverNoLoggingDataNOWAITMax}{0.0\%\xspace}
\newcommand{\mainDataYCSBLogNumberofWorkerThreadsvsThroughputCompareTaurusCommandNOWAITOverSerialCommandNOWAIT}{0.0\%\xspace}
\newcommand{\mainDataYCSBLogNumberofWorkerThreadsvsThroughputCompareTaurusCommandNOWAITOverSerialCommandNOWAITMax}{0.0\%\xspace}
\newcommand{\mainDataYCSBLogNumberofWorkerThreadsvsThroughputCompareTaurusCommandNOWAITOverSerialDataNOWAIT}{$5.5\times$\xspace}
\newcommand{\mainDataYCSBLogNumberofWorkerThreadsvsThroughputCompareTaurusCommandNOWAITOverSerialDataNOWAITMax}{$5.5\times$\xspace}
\newcommand{\mainDataYCSBLogNumberofWorkerThreadsvsThroughputCompareTaurusCommandNOWAITOverPloverDataNOWAIT}{21.6\%\xspace}
\newcommand{\mainDataYCSBLogNumberofWorkerThreadsvsThroughputCompareTaurusCommandNOWAITOverPloverDataNOWAITMax}{21.6\%\xspace}
\newcommand{\mainDataYCSBLogNumberofWorkerThreadsvsThroughputCompareTaurusCommandNOWAITOverSerialRAIDCommandNOWAIT}{0.0\%\xspace}
\newcommand{\mainDataYCSBLogNumberofWorkerThreadsvsThroughputCompareTaurusCommandNOWAITOverSerialRAIDCommandNOWAITMax}{0.0\%\xspace}
\newcommand{\mainDataYCSBLogNumberofWorkerThreadsvsThroughputCompareTaurusDataNOWAITOverSerialDataNOWAIT}{$7.1\times$\xspace}
\newcommand{\mainDataYCSBLogNumberofWorkerThreadsvsThroughputCompareTaurusDataNOWAITOverSerialDataNOWAITMax}{$7.1\times$\xspace}
\newcommand{\mainDataYCSBLogNumberofWorkerThreadsvsThroughputCompareTaurusDataNOWAITOverPloverDataNOWAIT}{$1.0\times$\xspace}
\newcommand{\mainDataYCSBLogNumberofWorkerThreadsvsThroughputCompareTaurusDataNOWAITOverPloverDataNOWAITMax}{$1.0\times$\xspace}
\newcommand{\mainDataYCSBLogNumberofWorkerThreadsvsThroughputCompareTaurusDataNOWAITOverSerialRAIDDataNOWAIT}{$1.3\times$\xspace}
\newcommand{\mainDataYCSBLogNumberofWorkerThreadsvsThroughputCompareTaurusDataNOWAITOverSerialRAIDDataNOWAITMax}{$1.3\times$\xspace}
\newcommand{\mainDataYCSBLogNumberofWorkerThreadsvsThroughputCompareTaurusCommandNOWAITOverTaurusDataNOWAIT}{21.6\%\xspace}
\newcommand{\mainDataYCSBLogNumberofWorkerThreadsvsThroughputCompareTaurusCommandNOWAITOverTaurusDataNOWAITMax}{22.5\%\xspace}
\newcommand{\mainDataYCSBLogNumberofWorkerThreadsvsThroughputCompareTaurusCommandNOWAITOverSiloRDataSilo}{23.1\%\xspace}
\newcommand{\mainDataYCSBLogNumberofWorkerThreadsvsThroughputCompareTaurusCommandNOWAITOverSiloRDataSiloMax}{23.1\%\xspace}
\newcommand{\mainDataYCSBLogNumberofWorkerThreadsvsThroughputCompareTaurusDataNOWAITOverSiloRDataSilo}{1.8\%\xspace}
\newcommand{\mainDataYCSBLogNumberofWorkerThreadsvsThroughputCompareTaurusDataNOWAITOverSiloRDataSiloMax}{0.8\%\xspace}
\newcommand{\mainDataYCSBRecNumberofWorkerThreadsvsMaxThrScalabilityNoLoggingDataNOWAIT}{0.0\%\xspace}
\newcommand{\mainDataYCSBRecNumberofWorkerThreadsvsMaxThrScalabilityTaurusCommandNOWAIT}{0.0\%\xspace}
\newcommand{\mainDataYCSBRecNumberofWorkerThreadsvsMaxThrScalabilityTaurusDataNOWAIT}{$6.4\times$\xspace}
\newcommand{\mainDataYCSBRecNumberofWorkerThreadsvsMaxThrCompareTaurusCommandNOWAITOverNoLoggingDataNOWAIT}{0.0\%\xspace}
\newcommand{\mainDataYCSBRecNumberofWorkerThreadsvsMaxThrCompareTaurusCommandNOWAITOverNoLoggingDataNOWAITMax}{0.0\%\xspace}
\newcommand{\mainDataYCSBRecNumberofWorkerThreadsvsMaxThrCompareTaurusCommandNOWAITOverSerialCommandNOWAIT}{0.0\%\xspace}
\newcommand{\mainDataYCSBRecNumberofWorkerThreadsvsMaxThrCompareTaurusCommandNOWAITOverSerialCommandNOWAITMax}{0.0\%\xspace}
\newcommand{\mainDataYCSBRecNumberofWorkerThreadsvsMaxThrCompareTaurusCommandNOWAITOverSerialDataNOWAIT}{$6.8\times$\xspace}
\newcommand{\mainDataYCSBRecNumberofWorkerThreadsvsMaxThrCompareTaurusCommandNOWAITOverSerialDataNOWAITMax}{$6.8\times$\xspace}
\newcommand{\mainDataYCSBRecNumberofWorkerThreadsvsMaxThrCompareTaurusCommandNOWAITOverPloverDataNOWAIT}{$1.2\times$\xspace}
\newcommand{\mainDataYCSBRecNumberofWorkerThreadsvsMaxThrCompareTaurusCommandNOWAITOverPloverDataNOWAITMax}{$1.2\times$\xspace}
\newcommand{\mainDataYCSBRecNumberofWorkerThreadsvsMaxThrCompareTaurusCommandNOWAITOverSerialRAIDCommandNOWAIT}{0.0\%\xspace}
\newcommand{\mainDataYCSBRecNumberofWorkerThreadsvsMaxThrCompareTaurusCommandNOWAITOverSerialRAIDCommandNOWAITMax}{0.0\%\xspace}
\newcommand{\mainDataYCSBRecNumberofWorkerThreadsvsMaxThrCompareTaurusDataNOWAITOverSerialDataNOWAIT}{$5.5\times$\xspace}
\newcommand{\mainDataYCSBRecNumberofWorkerThreadsvsMaxThrCompareTaurusDataNOWAITOverSerialDataNOWAITMax}{$5.5\times$\xspace}
\newcommand{\mainDataYCSBRecNumberofWorkerThreadsvsMaxThrCompareTaurusDataNOWAITOverPloverDataNOWAIT}{2.8\%\xspace}
\newcommand{\mainDataYCSBRecNumberofWorkerThreadsvsMaxThrCompareTaurusDataNOWAITOverPloverDataNOWAITMax}{6.3\%\xspace}
\newcommand{\mainDataYCSBRecNumberofWorkerThreadsvsMaxThrCompareTaurusDataNOWAITOverSerialRAIDDataNOWAIT}{$1.7\times$\xspace}
\newcommand{\mainDataYCSBRecNumberofWorkerThreadsvsMaxThrCompareTaurusDataNOWAITOverSerialRAIDDataNOWAITMax}{$1.7\times$\xspace}
\newcommand{\mainDataYCSBRecNumberofWorkerThreadsvsMaxThrCompareTaurusCommandNOWAITOverTaurusDataNOWAIT}{$1.2\times$\xspace}
\newcommand{\mainDataYCSBRecNumberofWorkerThreadsvsMaxThrCompareTaurusCommandNOWAITOverTaurusDataNOWAITMax}{$1.2\times$\xspace}
\newcommand{\mainDataYCSBRecNumberofWorkerThreadsvsMaxThrCompareTaurusCommandNOWAITOverSiloRDataSilo}{$1.2\times$\xspace}
\newcommand{\mainDataYCSBRecNumberofWorkerThreadsvsMaxThrCompareTaurusCommandNOWAITOverSiloRDataSiloMax}{$1.2\times$\xspace}
\newcommand{\mainDataYCSBRecNumberofWorkerThreadsvsMaxThrCompareTaurusDataNOWAITOverSiloRDataSilo}{2.7\%\xspace}
\newcommand{\mainDataYCSBRecNumberofWorkerThreadsvsMaxThrCompareTaurusDataNOWAITOverSiloRDataSiloMax}{3.1\%\xspace}
\newcommand{\AdvmainLnXVIsiloNewOrderTPCCLogNumberofWorkerThreadsvsMaxThrScalabilityNoLoggingDataSilo}{$45.2\times$\xspace}
\newcommand{\AdvmainLnXVIsiloNewOrderTPCCLogNumberofWorkerThreadsvsMaxThrScalabilityTaurusCommandSilo}{$39.9\times$\xspace}
\newcommand{\AdvmainLnXVIsiloNewOrderTPCCLogNumberofWorkerThreadsvsMaxThrScalabilityTaurusDataSilo}{$22.7\times$\xspace}
\newcommand{\AdvmainLnXVIsiloNewOrderTPCCLogNumberofWorkerThreadsvsMaxThrCompareTaurusCommandSiloOverNoLoggingDataSilo}{32.9\%\xspace}
\newcommand{\AdvmainLnXVIsiloNewOrderTPCCLogNumberofWorkerThreadsvsMaxThrCompareTaurusCommandSiloOverNoLoggingDataSiloMax}{32.9\%\xspace}
\newcommand{\AdvmainLnXVIsiloNewOrderTPCCLogNumberofWorkerThreadsvsMaxThrCompareTaurusCommandSiloOverTaurusDataSilo}{$1.9\times$\xspace}
\newcommand{\AdvmainLnXVIsiloNewOrderTPCCLogNumberofWorkerThreadsvsMaxThrCompareTaurusCommandSiloOverTaurusDataSiloMax}{$1.9\times$\xspace}
\newcommand{\AdvmainLnXVIsiloNewOrderTPCCLogNumberofWorkerThreadsvsMaxThrCompareTaurusCommandSiloOverSiloRDataSilo}{$1.8\times$\xspace}
\newcommand{\AdvmainLnXVIsiloNewOrderTPCCLogNumberofWorkerThreadsvsMaxThrCompareTaurusCommandSiloOverSiloRDataSiloMax}{$1.8\times$\xspace}
\newcommand{\AdvmainLnXVIsiloNewOrderTPCCLogNumberofWorkerThreadsvsMaxThrCompareTaurusDataSiloOverSiloRDataSilo}{4.2\%\xspace}
\newcommand{\AdvmainLnXVIsiloNewOrderTPCCLogNumberofWorkerThreadsvsMaxThrCompareTaurusDataSiloOverSiloRDataSiloMax}{5.7\%\xspace}
\newcommand{\AdvmainLnXVIsiloNewOrderTPCCRecNumberofWorkerThreadsvsMaxThrScalabilityNoLoggingDataSilo}{0.0\%\xspace}
\newcommand{\AdvmainLnXVIsiloNewOrderTPCCRecNumberofWorkerThreadsvsMaxThrScalabilityTaurusCommandSilo}{$36.1\times$\xspace}
\newcommand{\AdvmainLnXVIsiloNewOrderTPCCRecNumberofWorkerThreadsvsMaxThrScalabilityTaurusDataSilo}{$6.4\times$\xspace}
\newcommand{\AdvmainLnXVIsiloNewOrderTPCCRecNumberofWorkerThreadsvsMaxThrCompareTaurusCommandSiloOverNoLoggingDataSilo}{$9.5\times$\xspace}
\newcommand{\AdvmainLnXVIsiloNewOrderTPCCRecNumberofWorkerThreadsvsMaxThrCompareTaurusCommandSiloOverNoLoggingDataSiloMax}{$9.5\times$\xspace}
\newcommand{\AdvmainLnXVIsiloNewOrderTPCCRecNumberofWorkerThreadsvsMaxThrCompareTaurusCommandSiloOverTaurusDataSilo}{$3.7\times$\xspace}
\newcommand{\AdvmainLnXVIsiloNewOrderTPCCRecNumberofWorkerThreadsvsMaxThrCompareTaurusCommandSiloOverTaurusDataSiloMax}{$3.7\times$\xspace}
\newcommand{\AdvmainLnXVIsiloNewOrderTPCCRecNumberofWorkerThreadsvsMaxThrCompareTaurusCommandSiloOverSiloRDataSilo}{$6.4\times$\xspace}
\newcommand{\AdvmainLnXVIsiloNewOrderTPCCRecNumberofWorkerThreadsvsMaxThrCompareTaurusCommandSiloOverSiloRDataSiloMax}{$6.2\times$\xspace}
\newcommand{\AdvmainLnXVIsiloNewOrderTPCCRecNumberofWorkerThreadsvsMaxThrCompareTaurusDataSiloOverSiloRDataSilo}{$1.7\times$\xspace}
\newcommand{\AdvmainLnXVIsiloNewOrderTPCCRecNumberofWorkerThreadsvsMaxThrCompareTaurusDataSiloOverSiloRDataSiloMax}{$1.7\times$\xspace}
\newcommand{\AdvmainLnXVIsiloPayTPCCLogNumberofWorkerThreadsvsMaxThrScalabilityNoLoggingDataSilo}{$43.5\times$\xspace}
\newcommand{\AdvmainLnXVIsiloPayTPCCLogNumberofWorkerThreadsvsMaxThrScalabilityTaurusCommandSilo}{$36.0\times$\xspace}
\newcommand{\AdvmainLnXVIsiloPayTPCCLogNumberofWorkerThreadsvsMaxThrScalabilityTaurusDataSilo}{$12.8\times$\xspace}
\newcommand{\AdvmainLnXVIsiloPayTPCCLogNumberofWorkerThreadsvsMaxThrCompareTaurusCommandSiloOverNoLoggingDataSilo}{42.1\%\xspace}
\newcommand{\AdvmainLnXVIsiloPayTPCCLogNumberofWorkerThreadsvsMaxThrCompareTaurusCommandSiloOverNoLoggingDataSiloMax}{42.1\%\xspace}
\newcommand{\AdvmainLnXVIsiloPayTPCCLogNumberofWorkerThreadsvsMaxThrCompareTaurusCommandSiloOverTaurusDataSilo}{$3.1\times$\xspace}
\newcommand{\AdvmainLnXVIsiloPayTPCCLogNumberofWorkerThreadsvsMaxThrCompareTaurusCommandSiloOverTaurusDataSiloMax}{$3.0\times$\xspace}
\newcommand{\AdvmainLnXVIsiloPayTPCCLogNumberofWorkerThreadsvsMaxThrCompareTaurusCommandSiloOverSiloRDataSilo}{$2.8\times$\xspace}
\newcommand{\AdvmainLnXVIsiloPayTPCCLogNumberofWorkerThreadsvsMaxThrCompareTaurusCommandSiloOverSiloRDataSiloMax}{$2.6\times$\xspace}
\newcommand{\AdvmainLnXVIsiloPayTPCCLogNumberofWorkerThreadsvsMaxThrCompareTaurusDataSiloOverSiloRDataSilo}{11.6\%\xspace}
\newcommand{\AdvmainLnXVIsiloPayTPCCLogNumberofWorkerThreadsvsMaxThrCompareTaurusDataSiloOverSiloRDataSiloMax}{12.4\%\xspace}
\newcommand{\AdvmainLnXVIsiloPayTPCCRecNumberofWorkerThreadsvsMaxThrScalabilityNoLoggingDataSilo}{0.0\%\xspace}
\newcommand{\AdvmainLnXVIsiloPayTPCCRecNumberofWorkerThreadsvsMaxThrScalabilityTaurusCommandSilo}{$29.7\times$\xspace}
\newcommand{\AdvmainLnXVIsiloPayTPCCRecNumberofWorkerThreadsvsMaxThrScalabilityTaurusDataSilo}{$7.6\times$\xspace}
\newcommand{\AdvmainLnXVIsiloPayTPCCRecNumberofWorkerThreadsvsMaxThrCompareTaurusCommandSiloOverNoLoggingDataSilo}{$33.7\times$\xspace}
\newcommand{\AdvmainLnXVIsiloPayTPCCRecNumberofWorkerThreadsvsMaxThrCompareTaurusCommandSiloOverNoLoggingDataSiloMax}{$33.7\times$\xspace}
\newcommand{\AdvmainLnXVIsiloPayTPCCRecNumberofWorkerThreadsvsMaxThrCompareTaurusCommandSiloOverTaurusDataSilo}{$3.7\times$\xspace}
\newcommand{\AdvmainLnXVIsiloPayTPCCRecNumberofWorkerThreadsvsMaxThrCompareTaurusCommandSiloOverTaurusDataSiloMax}{$3.6\times$\xspace}
\newcommand{\AdvmainLnXVIsiloPayTPCCRecNumberofWorkerThreadsvsMaxThrCompareTaurusCommandSiloOverSiloRDataSilo}{$4.1\times$\xspace}
\newcommand{\AdvmainLnXVIsiloPayTPCCRecNumberofWorkerThreadsvsMaxThrCompareTaurusCommandSiloOverSiloRDataSiloMax}{$4.0\times$\xspace}
\newcommand{\AdvmainLnXVIsiloPayTPCCRecNumberofWorkerThreadsvsMaxThrCompareTaurusDataSiloOverSiloRDataSilo}{$1.1\times$\xspace}
\newcommand{\AdvmainLnXVIsiloPayTPCCRecNumberofWorkerThreadsvsMaxThrCompareTaurusDataSiloOverSiloRDataSiloMax}{$1.1\times$\xspace}
\newcommand{\AdvmainLnXVIsiloYCSBLogNumberofWorkerThreadsvsMaxThrScalabilityNoLoggingDataSilo}{$65.1\times$\xspace}
\newcommand{\AdvmainLnXVIsiloYCSBLogNumberofWorkerThreadsvsMaxThrScalabilityTaurusCommandSilo}{$62.2\times$\xspace}
\newcommand{\AdvmainLnXVIsiloYCSBLogNumberofWorkerThreadsvsMaxThrScalabilityTaurusDataSilo}{$40.7\times$\xspace}
\newcommand{\AdvmainLnXVIsiloYCSBLogNumberofWorkerThreadsvsMaxThrCompareTaurusCommandSiloOverNoLoggingDataSilo}{21.7\%\xspace}
\newcommand{\AdvmainLnXVIsiloYCSBLogNumberofWorkerThreadsvsMaxThrCompareTaurusCommandSiloOverNoLoggingDataSiloMax}{21.7\%\xspace}
\newcommand{\AdvmainLnXVIsiloYCSBLogNumberofWorkerThreadsvsMaxThrCompareTaurusCommandSiloOverTaurusDataSilo}{$1.6\times$\xspace}
\newcommand{\AdvmainLnXVIsiloYCSBLogNumberofWorkerThreadsvsMaxThrCompareTaurusCommandSiloOverTaurusDataSiloMax}{$1.4\times$\xspace}
\newcommand{\AdvmainLnXVIsiloYCSBLogNumberofWorkerThreadsvsMaxThrCompareTaurusCommandSiloOverSiloRDataSilo}{$1.3\times$\xspace}
\newcommand{\AdvmainLnXVIsiloYCSBLogNumberofWorkerThreadsvsMaxThrCompareTaurusCommandSiloOverSiloRDataSiloMax}{$1.3\times$\xspace}
\newcommand{\AdvmainLnXVIsiloYCSBLogNumberofWorkerThreadsvsMaxThrCompareTaurusDataSiloOverSiloRDataSilo}{16.3\%\xspace}
\newcommand{\AdvmainLnXVIsiloYCSBLogNumberofWorkerThreadsvsMaxThrCompareTaurusDataSiloOverSiloRDataSiloMax}{7.7\%\xspace}
\newcommand{\AdvmainLnXVIsiloYCSBRecNumberofWorkerThreadsvsMaxThrScalabilityNoLoggingDataSilo}{0.0\%\xspace}
\newcommand{\AdvmainLnXVIsiloYCSBRecNumberofWorkerThreadsvsMaxThrScalabilityTaurusCommandSilo}{$53.0\times$\xspace}
\newcommand{\AdvmainLnXVIsiloYCSBRecNumberofWorkerThreadsvsMaxThrScalabilityTaurusDataSilo}{$31.9\times$\xspace}
\newcommand{\AdvmainLnXVIsiloYCSBRecNumberofWorkerThreadsvsMaxThrCompareTaurusCommandSiloOverNoLoggingDataSilo}{$16.9\times$\xspace}
\newcommand{\AdvmainLnXVIsiloYCSBRecNumberofWorkerThreadsvsMaxThrCompareTaurusCommandSiloOverNoLoggingDataSiloMax}{$16.9\times$\xspace}
\newcommand{\AdvmainLnXVIsiloYCSBRecNumberofWorkerThreadsvsMaxThrCompareTaurusCommandSiloOverTaurusDataSilo}{$1.5\times$\xspace}
\newcommand{\AdvmainLnXVIsiloYCSBRecNumberofWorkerThreadsvsMaxThrCompareTaurusCommandSiloOverTaurusDataSiloMax}{$1.5\times$\xspace}
\newcommand{\AdvmainLnXVIsiloYCSBRecNumberofWorkerThreadsvsMaxThrCompareTaurusCommandSiloOverSiloRDataSilo}{$1.4\times$\xspace}
\newcommand{\AdvmainLnXVIsiloYCSBRecNumberofWorkerThreadsvsMaxThrCompareTaurusCommandSiloOverSiloRDataSiloMax}{$1.4\times$\xspace}
\newcommand{\AdvmainLnXVIsiloYCSBRecNumberofWorkerThreadsvsMaxThrCompareTaurusDataSiloOverSiloRDataSilo}{11.3\%\xspace}
\newcommand{\AdvmainLnXVIsiloYCSBRecNumberofWorkerThreadsvsMaxThrCompareTaurusDataSiloOverSiloRDataSiloMax}{11.3\%\xspace}
\newcommand{\AdvmainLnXVINewOrderTPCCLogNumberofWorkerThreadsvsMaxThrScalabilityNoLoggingDataNOWAIT}{$35.1\times$\xspace}
\newcommand{\AdvmainLnXVINewOrderTPCCLogNumberofWorkerThreadsvsMaxThrScalabilityTaurusCommandNOWAIT}{$35.3\times$\xspace}
\newcommand{\AdvmainLnXVINewOrderTPCCLogNumberofWorkerThreadsvsMaxThrScalabilityTaurusDataNOWAIT}{$19.1\times$\xspace}
\newcommand{\AdvmainLnXVINewOrderTPCCLogNumberofWorkerThreadsvsMaxThrCompareTaurusCommandNOWAITOverNoLoggingDataNOWAIT}{29.7\%\xspace}
\newcommand{\AdvmainLnXVINewOrderTPCCLogNumberofWorkerThreadsvsMaxThrCompareTaurusCommandNOWAITOverNoLoggingDataNOWAITMax}{29.7\%\xspace}
\newcommand{\AdvmainLnXVINewOrderTPCCLogNumberofWorkerThreadsvsMaxThrCompareTaurusCommandNOWAITOverSerialCommandNOWAIT}{$1.3\times$\xspace}
\newcommand{\AdvmainLnXVINewOrderTPCCLogNumberofWorkerThreadsvsMaxThrCompareTaurusCommandNOWAITOverSerialCommandNOWAITMax}{$1.3\times$\xspace}
\newcommand{\AdvmainLnXVINewOrderTPCCLogNumberofWorkerThreadsvsMaxThrCompareTaurusCommandNOWAITOverSerialDataNOWAIT}{$15.7\times$\xspace}
\newcommand{\AdvmainLnXVINewOrderTPCCLogNumberofWorkerThreadsvsMaxThrCompareTaurusCommandNOWAITOverSerialDataNOWAITMax}{$15.5\times$\xspace}
\newcommand{\AdvmainLnXVINewOrderTPCCLogNumberofWorkerThreadsvsMaxThrCompareTaurusCommandNOWAITOverPloverDataNOWAIT}{$22.0\times$\xspace}
\newcommand{\AdvmainLnXVINewOrderTPCCLogNumberofWorkerThreadsvsMaxThrCompareTaurusCommandNOWAITOverPloverDataNOWAITMax}{$20.6\times$\xspace}
\newcommand{\AdvmainLnXVINewOrderTPCCLogNumberofWorkerThreadsvsMaxThrCompareTaurusCommandNOWAITOverSerialRAIDCommandNOWAIT}{12.8\%\xspace}
\newcommand{\AdvmainLnXVINewOrderTPCCLogNumberofWorkerThreadsvsMaxThrCompareTaurusCommandNOWAITOverSerialRAIDCommandNOWAITMax}{12.8\%\xspace}
\newcommand{\AdvmainLnXVINewOrderTPCCLogNumberofWorkerThreadsvsMaxThrCompareTaurusDataNOWAITOverSerialDataNOWAIT}{$7.8\times$\xspace}
\newcommand{\AdvmainLnXVINewOrderTPCCLogNumberofWorkerThreadsvsMaxThrCompareTaurusDataNOWAITOverSerialDataNOWAITMax}{$7.8\times$\xspace}
\newcommand{\AdvmainLnXVINewOrderTPCCLogNumberofWorkerThreadsvsMaxThrCompareTaurusDataNOWAITOverPloverDataNOWAIT}{$10.9\times$\xspace}
\newcommand{\AdvmainLnXVINewOrderTPCCLogNumberofWorkerThreadsvsMaxThrCompareTaurusDataNOWAITOverPloverDataNOWAITMax}{$10.4\times$\xspace}
\newcommand{\AdvmainLnXVINewOrderTPCCLogNumberofWorkerThreadsvsMaxThrCompareTaurusDataNOWAITOverSerialRAIDDataNOWAIT}{$1.2\times$\xspace}
\newcommand{\AdvmainLnXVINewOrderTPCCLogNumberofWorkerThreadsvsMaxThrCompareTaurusDataNOWAITOverSerialRAIDDataNOWAITMax}{$1.2\times$\xspace}
\newcommand{\AdvmainLnXVINewOrderTPCCLogNumberofWorkerThreadsvsMaxThrCompareTaurusCommandNOWAITOverTaurusDataNOWAIT}{$2.0\times$\xspace}
\newcommand{\AdvmainLnXVINewOrderTPCCLogNumberofWorkerThreadsvsMaxThrCompareTaurusCommandNOWAITOverTaurusDataNOWAITMax}{$2.0\times$\xspace}
\newcommand{\AdvmainLnXVINewOrderTPCCLogNumberofWorkerThreadsvsMaxThrCompareTaurusCommandNOWAITOverSiloRDataSilo}{$4.1\times$\xspace}
\newcommand{\AdvmainLnXVINewOrderTPCCLogNumberofWorkerThreadsvsMaxThrCompareTaurusCommandNOWAITOverSiloRDataSiloMax}{$4.1\times$\xspace}
\newcommand{\AdvmainLnXVINewOrderTPCCLogNumberofWorkerThreadsvsMaxThrCompareTaurusDataNOWAITOverSiloRDataSilo}{$2.0\times$\xspace}
\newcommand{\AdvmainLnXVINewOrderTPCCLogNumberofWorkerThreadsvsMaxThrCompareTaurusDataNOWAITOverSiloRDataSiloMax}{$2.1\times$\xspace}
\newcommand{\AdvmainLnXVINewOrderTPCCRecNumberofWorkerThreadsvsMaxThrScalabilityNoLoggingDataNOWAIT}{0.0\%\xspace}
\newcommand{\AdvmainLnXVINewOrderTPCCRecNumberofWorkerThreadsvsMaxThrScalabilityTaurusCommandNOWAIT}{$35.4\times$\xspace}
\newcommand{\AdvmainLnXVINewOrderTPCCRecNumberofWorkerThreadsvsMaxThrScalabilityTaurusDataNOWAIT}{$6.2\times$\xspace}
\newcommand{\AdvmainLnXVINewOrderTPCCRecNumberofWorkerThreadsvsMaxThrCompareTaurusCommandNOWAITOverNoLoggingDataNOWAIT}{$9.7\times$\xspace}
\newcommand{\AdvmainLnXVINewOrderTPCCRecNumberofWorkerThreadsvsMaxThrCompareTaurusCommandNOWAITOverNoLoggingDataNOWAITMax}{$9.7\times$\xspace}
\newcommand{\AdvmainLnXVINewOrderTPCCRecNumberofWorkerThreadsvsMaxThrCompareTaurusCommandNOWAITOverSerialCommandNOWAIT}{$75.6\times$\xspace}
\newcommand{\AdvmainLnXVINewOrderTPCCRecNumberofWorkerThreadsvsMaxThrCompareTaurusCommandNOWAITOverSerialCommandNOWAITMax}{$75.6\times$\xspace}
\newcommand{\AdvmainLnXVINewOrderTPCCRecNumberofWorkerThreadsvsMaxThrCompareTaurusCommandNOWAITOverSerialDataNOWAIT}{$61.1\times$\xspace}
\newcommand{\AdvmainLnXVINewOrderTPCCRecNumberofWorkerThreadsvsMaxThrCompareTaurusCommandNOWAITOverSerialDataNOWAITMax}{$60.6\times$\xspace}
\newcommand{\AdvmainLnXVINewOrderTPCCRecNumberofWorkerThreadsvsMaxThrCompareTaurusCommandNOWAITOverPloverDataNOWAIT}{$2.4\times$\xspace}
\newcommand{\AdvmainLnXVINewOrderTPCCRecNumberofWorkerThreadsvsMaxThrCompareTaurusCommandNOWAITOverPloverDataNOWAITMax}{$2.4\times$\xspace}
\newcommand{\AdvmainLnXVINewOrderTPCCRecNumberofWorkerThreadsvsMaxThrCompareTaurusCommandNOWAITOverSerialRAIDCommandNOWAIT}{$75.9\times$\xspace}
\newcommand{\AdvmainLnXVINewOrderTPCCRecNumberofWorkerThreadsvsMaxThrCompareTaurusCommandNOWAITOverSerialRAIDCommandNOWAITMax}{$75.8\times$\xspace}
\newcommand{\AdvmainLnXVINewOrderTPCCRecNumberofWorkerThreadsvsMaxThrCompareTaurusDataNOWAITOverSerialDataNOWAIT}{$16.2\times$\xspace}
\newcommand{\AdvmainLnXVINewOrderTPCCRecNumberofWorkerThreadsvsMaxThrCompareTaurusDataNOWAITOverSerialDataNOWAITMax}{$16.1\times$\xspace}
\newcommand{\AdvmainLnXVINewOrderTPCCRecNumberofWorkerThreadsvsMaxThrCompareTaurusDataNOWAITOverPloverDataNOWAIT}{36.0\%\xspace}
\newcommand{\AdvmainLnXVINewOrderTPCCRecNumberofWorkerThreadsvsMaxThrCompareTaurusDataNOWAITOverPloverDataNOWAITMax}{35.8\%\xspace}
\newcommand{\AdvmainLnXVINewOrderTPCCRecNumberofWorkerThreadsvsMaxThrCompareTaurusDataNOWAITOverSerialRAIDDataNOWAIT}{$15.9\times$\xspace}
\newcommand{\AdvmainLnXVINewOrderTPCCRecNumberofWorkerThreadsvsMaxThrCompareTaurusDataNOWAITOverSerialRAIDDataNOWAITMax}{$15.9\times$\xspace}
\newcommand{\AdvmainLnXVINewOrderTPCCRecNumberofWorkerThreadsvsMaxThrCompareTaurusCommandNOWAITOverTaurusDataNOWAIT}{$3.8\times$\xspace}
\newcommand{\AdvmainLnXVINewOrderTPCCRecNumberofWorkerThreadsvsMaxThrCompareTaurusCommandNOWAITOverTaurusDataNOWAITMax}{$3.8\times$\xspace}
\newcommand{\AdvmainLnXVINewOrderTPCCRecNumberofWorkerThreadsvsMaxThrCompareTaurusCommandNOWAITOverSiloRDataSilo}{$9.7\times$\xspace}
\newcommand{\AdvmainLnXVINewOrderTPCCRecNumberofWorkerThreadsvsMaxThrCompareTaurusCommandNOWAITOverSiloRDataSiloMax}{$9.7\times$\xspace}
\newcommand{\AdvmainLnXVINewOrderTPCCRecNumberofWorkerThreadsvsMaxThrCompareTaurusDataNOWAITOverSiloRDataSilo}{$2.6\times$\xspace}
\newcommand{\AdvmainLnXVINewOrderTPCCRecNumberofWorkerThreadsvsMaxThrCompareTaurusDataNOWAITOverSiloRDataSiloMax}{$2.6\times$\xspace}
\newcommand{\AdvmainLnXVIPayTPCCLogNumberofWorkerThreadsvsMaxThrScalabilityNoLoggingDataNOWAIT}{$43.4\times$\xspace}
\newcommand{\AdvmainLnXVIPayTPCCLogNumberofWorkerThreadsvsMaxThrScalabilityTaurusCommandNOWAIT}{$35.4\times$\xspace}
\newcommand{\AdvmainLnXVIPayTPCCLogNumberofWorkerThreadsvsMaxThrScalabilityTaurusDataNOWAIT}{$10.8\times$\xspace}
\newcommand{\AdvmainLnXVIPayTPCCLogNumberofWorkerThreadsvsMaxThrCompareTaurusCommandNOWAITOverNoLoggingDataNOWAIT}{43.4\%\xspace}
\newcommand{\AdvmainLnXVIPayTPCCLogNumberofWorkerThreadsvsMaxThrCompareTaurusCommandNOWAITOverNoLoggingDataNOWAITMax}{43.4\%\xspace}
\newcommand{\AdvmainLnXVIPayTPCCLogNumberofWorkerThreadsvsMaxThrCompareTaurusCommandNOWAITOverSerialCommandNOWAIT}{$2.9\times$\xspace}
\newcommand{\AdvmainLnXVIPayTPCCLogNumberofWorkerThreadsvsMaxThrCompareTaurusCommandNOWAITOverSerialCommandNOWAITMax}{$2.9\times$\xspace}
\newcommand{\AdvmainLnXVIPayTPCCLogNumberofWorkerThreadsvsMaxThrCompareTaurusCommandNOWAITOverSerialDataNOWAIT}{$24.1\times$\xspace}
\newcommand{\AdvmainLnXVIPayTPCCLogNumberofWorkerThreadsvsMaxThrCompareTaurusCommandNOWAITOverSerialDataNOWAITMax}{$24.0\times$\xspace}
\newcommand{\AdvmainLnXVIPayTPCCLogNumberofWorkerThreadsvsMaxThrCompareTaurusCommandNOWAITOverPloverDataNOWAIT}{$52.6\times$\xspace}
\newcommand{\AdvmainLnXVIPayTPCCLogNumberofWorkerThreadsvsMaxThrCompareTaurusCommandNOWAITOverPloverDataNOWAITMax}{$39.1\times$\xspace}
\newcommand{\AdvmainLnXVIPayTPCCLogNumberofWorkerThreadsvsMaxThrCompareTaurusCommandNOWAITOverSerialRAIDCommandNOWAIT}{$3.2\times$\xspace}
\newcommand{\AdvmainLnXVIPayTPCCLogNumberofWorkerThreadsvsMaxThrCompareTaurusCommandNOWAITOverSerialRAIDCommandNOWAITMax}{$3.2\times$\xspace}
\newcommand{\AdvmainLnXVIPayTPCCLogNumberofWorkerThreadsvsMaxThrCompareTaurusDataNOWAITOverSerialDataNOWAIT}{$6.7\times$\xspace}
\newcommand{\AdvmainLnXVIPayTPCCLogNumberofWorkerThreadsvsMaxThrCompareTaurusDataNOWAITOverSerialDataNOWAITMax}{$6.9\times$\xspace}
\newcommand{\AdvmainLnXVIPayTPCCLogNumberofWorkerThreadsvsMaxThrCompareTaurusDataNOWAITOverPloverDataNOWAIT}{$14.7\times$\xspace}
\newcommand{\AdvmainLnXVIPayTPCCLogNumberofWorkerThreadsvsMaxThrCompareTaurusDataNOWAITOverPloverDataNOWAITMax}{$11.2\times$\xspace}
\newcommand{\AdvmainLnXVIPayTPCCLogNumberofWorkerThreadsvsMaxThrCompareTaurusDataNOWAITOverSerialRAIDDataNOWAIT}{4.6\%\xspace}
\newcommand{\AdvmainLnXVIPayTPCCLogNumberofWorkerThreadsvsMaxThrCompareTaurusDataNOWAITOverSerialRAIDDataNOWAITMax}{9.5\%\xspace}
\newcommand{\AdvmainLnXVIPayTPCCLogNumberofWorkerThreadsvsMaxThrCompareTaurusCommandNOWAITOverTaurusDataNOWAIT}{$3.6\times$\xspace}
\newcommand{\AdvmainLnXVIPayTPCCLogNumberofWorkerThreadsvsMaxThrCompareTaurusCommandNOWAITOverTaurusDataNOWAITMax}{$3.5\times$\xspace}
\newcommand{\AdvmainLnXVIPayTPCCLogNumberofWorkerThreadsvsMaxThrCompareTaurusCommandNOWAITOverSiloRDataSilo}{$23.4\times$\xspace}
\newcommand{\AdvmainLnXVIPayTPCCLogNumberofWorkerThreadsvsMaxThrCompareTaurusCommandNOWAITOverSiloRDataSiloMax}{$23.4\times$\xspace}
\newcommand{\AdvmainLnXVIPayTPCCLogNumberofWorkerThreadsvsMaxThrCompareTaurusDataNOWAITOverSiloRDataSilo}{$6.5\times$\xspace}
\newcommand{\AdvmainLnXVIPayTPCCLogNumberofWorkerThreadsvsMaxThrCompareTaurusDataNOWAITOverSiloRDataSiloMax}{$6.7\times$\xspace}
\newcommand{\AdvmainLnXVIPayTPCCRecNumberofWorkerThreadsvsMaxThrScalabilityNoLoggingDataNOWAIT}{0.0\%\xspace}
\newcommand{\AdvmainLnXVIPayTPCCRecNumberofWorkerThreadsvsMaxThrScalabilityTaurusCommandNOWAIT}{$30.8\times$\xspace}
\newcommand{\AdvmainLnXVIPayTPCCRecNumberofWorkerThreadsvsMaxThrScalabilityTaurusDataNOWAIT}{$7.5\times$\xspace}
\newcommand{\AdvmainLnXVIPayTPCCRecNumberofWorkerThreadsvsMaxThrCompareTaurusCommandNOWAITOverNoLoggingDataNOWAIT}{$36.5\times$\xspace}
\newcommand{\AdvmainLnXVIPayTPCCRecNumberofWorkerThreadsvsMaxThrCompareTaurusCommandNOWAITOverNoLoggingDataNOWAITMax}{$36.5\times$\xspace}
\newcommand{\AdvmainLnXVIPayTPCCRecNumberofWorkerThreadsvsMaxThrCompareTaurusCommandNOWAITOverSerialCommandNOWAIT}{$57.3\times$\xspace}
\newcommand{\AdvmainLnXVIPayTPCCRecNumberofWorkerThreadsvsMaxThrCompareTaurusCommandNOWAITOverSerialCommandNOWAITMax}{$57.3\times$\xspace}
\newcommand{\AdvmainLnXVIPayTPCCRecNumberofWorkerThreadsvsMaxThrCompareTaurusCommandNOWAITOverSerialDataNOWAIT}{$35.0\times$\xspace}
\newcommand{\AdvmainLnXVIPayTPCCRecNumberofWorkerThreadsvsMaxThrCompareTaurusCommandNOWAITOverSerialDataNOWAITMax}{$34.9\times$\xspace}
\newcommand{\AdvmainLnXVIPayTPCCRecNumberofWorkerThreadsvsMaxThrCompareTaurusCommandNOWAITOverPloverDataNOWAIT}{$6.3\times$\xspace}
\newcommand{\AdvmainLnXVIPayTPCCRecNumberofWorkerThreadsvsMaxThrCompareTaurusCommandNOWAITOverPloverDataNOWAITMax}{$3.8\times$\xspace}
\newcommand{\AdvmainLnXVIPayTPCCRecNumberofWorkerThreadsvsMaxThrCompareTaurusCommandNOWAITOverSerialRAIDCommandNOWAIT}{$57.1\times$\xspace}
\newcommand{\AdvmainLnXVIPayTPCCRecNumberofWorkerThreadsvsMaxThrCompareTaurusCommandNOWAITOverSerialRAIDCommandNOWAITMax}{$57.1\times$\xspace}
\newcommand{\AdvmainLnXVIPayTPCCRecNumberofWorkerThreadsvsMaxThrCompareTaurusDataNOWAITOverSerialDataNOWAIT}{$8.8\times$\xspace}
\newcommand{\AdvmainLnXVIPayTPCCRecNumberofWorkerThreadsvsMaxThrCompareTaurusDataNOWAITOverSerialDataNOWAITMax}{$8.8\times$\xspace}
\newcommand{\AdvmainLnXVIPayTPCCRecNumberofWorkerThreadsvsMaxThrCompareTaurusDataNOWAITOverPloverDataNOWAIT}{$1.6\times$\xspace}
\newcommand{\AdvmainLnXVIPayTPCCRecNumberofWorkerThreadsvsMaxThrCompareTaurusDataNOWAITOverPloverDataNOWAITMax}{5.0\%\xspace}
\newcommand{\AdvmainLnXVIPayTPCCRecNumberofWorkerThreadsvsMaxThrCompareTaurusDataNOWAITOverSerialRAIDDataNOWAIT}{$8.9\times$\xspace}
\newcommand{\AdvmainLnXVIPayTPCCRecNumberofWorkerThreadsvsMaxThrCompareTaurusDataNOWAITOverSerialRAIDDataNOWAITMax}{$8.8\times$\xspace}
\newcommand{\AdvmainLnXVIPayTPCCRecNumberofWorkerThreadsvsMaxThrCompareTaurusCommandNOWAITOverTaurusDataNOWAIT}{$4.0\times$\xspace}
\newcommand{\AdvmainLnXVIPayTPCCRecNumberofWorkerThreadsvsMaxThrCompareTaurusCommandNOWAITOverTaurusDataNOWAITMax}{$4.0\times$\xspace}
\newcommand{\AdvmainLnXVIPayTPCCRecNumberofWorkerThreadsvsMaxThrCompareTaurusCommandNOWAITOverSiloRDataSilo}{$36.5\times$\xspace}
\newcommand{\AdvmainLnXVIPayTPCCRecNumberofWorkerThreadsvsMaxThrCompareTaurusCommandNOWAITOverSiloRDataSiloMax}{$36.5\times$\xspace}
\newcommand{\AdvmainLnXVIPayTPCCRecNumberofWorkerThreadsvsMaxThrCompareTaurusDataNOWAITOverSiloRDataSilo}{$9.2\times$\xspace}
\newcommand{\AdvmainLnXVIPayTPCCRecNumberofWorkerThreadsvsMaxThrCompareTaurusDataNOWAITOverSiloRDataSiloMax}{$9.2\times$\xspace}
\newcommand{\AdvmainLnXVIYCSBLogNumberofWorkerThreadsvsMaxThrScalabilityNoLoggingDataNOWAIT}{$70.8\times$\xspace}
\newcommand{\AdvmainLnXVIYCSBLogNumberofWorkerThreadsvsMaxThrScalabilityTaurusCommandNOWAIT}{$61.6\times$\xspace}
\newcommand{\AdvmainLnXVIYCSBLogNumberofWorkerThreadsvsMaxThrScalabilityTaurusDataNOWAIT}{$41.0\times$\xspace}
\newcommand{\AdvmainLnXVIYCSBLogNumberofWorkerThreadsvsMaxThrCompareTaurusCommandNOWAITOverNoLoggingDataNOWAIT}{29.5\%\xspace}
\newcommand{\AdvmainLnXVIYCSBLogNumberofWorkerThreadsvsMaxThrCompareTaurusCommandNOWAITOverNoLoggingDataNOWAITMax}{29.5\%\xspace}
\newcommand{\AdvmainLnXVIYCSBLogNumberofWorkerThreadsvsMaxThrCompareTaurusCommandNOWAITOverSerialCommandNOWAIT}{$1.3\times$\xspace}
\newcommand{\AdvmainLnXVIYCSBLogNumberofWorkerThreadsvsMaxThrCompareTaurusCommandNOWAITOverSerialCommandNOWAITMax}{$1.3\times$\xspace}
\newcommand{\AdvmainLnXVIYCSBLogNumberofWorkerThreadsvsMaxThrCompareTaurusCommandNOWAITOverSerialDataNOWAIT}{$15.5\times$\xspace}
\newcommand{\AdvmainLnXVIYCSBLogNumberofWorkerThreadsvsMaxThrCompareTaurusCommandNOWAITOverSerialDataNOWAITMax}{$14.6\times$\xspace}
\newcommand{\AdvmainLnXVIYCSBLogNumberofWorkerThreadsvsMaxThrCompareTaurusCommandNOWAITOverPloverDataNOWAIT}{$2.4\times$\xspace}
\newcommand{\AdvmainLnXVIYCSBLogNumberofWorkerThreadsvsMaxThrCompareTaurusCommandNOWAITOverPloverDataNOWAITMax}{$2.4\times$\xspace}
\newcommand{\AdvmainLnXVIYCSBLogNumberofWorkerThreadsvsMaxThrCompareTaurusCommandNOWAITOverSerialRAIDCommandNOWAIT}{$1.7\times$\xspace}
\newcommand{\AdvmainLnXVIYCSBLogNumberofWorkerThreadsvsMaxThrCompareTaurusCommandNOWAITOverSerialRAIDCommandNOWAITMax}{$1.7\times$\xspace}
\newcommand{\AdvmainLnXVIYCSBLogNumberofWorkerThreadsvsMaxThrCompareTaurusDataNOWAITOverSerialDataNOWAIT}{$9.9\times$\xspace}
\newcommand{\AdvmainLnXVIYCSBLogNumberofWorkerThreadsvsMaxThrCompareTaurusDataNOWAITOverSerialDataNOWAITMax}{$10.0\times$\xspace}
\newcommand{\AdvmainLnXVIYCSBLogNumberofWorkerThreadsvsMaxThrCompareTaurusDataNOWAITOverPloverDataNOWAIT}{$1.6\times$\xspace}
\newcommand{\AdvmainLnXVIYCSBLogNumberofWorkerThreadsvsMaxThrCompareTaurusDataNOWAITOverPloverDataNOWAITMax}{$1.7\times$\xspace}
\newcommand{\AdvmainLnXVIYCSBLogNumberofWorkerThreadsvsMaxThrCompareTaurusDataNOWAITOverSerialRAIDDataNOWAIT}{$1.5\times$\xspace}
\newcommand{\AdvmainLnXVIYCSBLogNumberofWorkerThreadsvsMaxThrCompareTaurusDataNOWAITOverSerialRAIDDataNOWAITMax}{$1.6\times$\xspace}
\newcommand{\AdvmainLnXVIYCSBLogNumberofWorkerThreadsvsMaxThrCompareTaurusCommandNOWAITOverTaurusDataNOWAIT}{$1.6\times$\xspace}
\newcommand{\AdvmainLnXVIYCSBLogNumberofWorkerThreadsvsMaxThrCompareTaurusCommandNOWAITOverTaurusDataNOWAITMax}{$1.5\times$\xspace}
\newcommand{\AdvmainLnXVIYCSBLogNumberofWorkerThreadsvsMaxThrCompareTaurusCommandNOWAITOverSiloRDataSilo}{$14.6\times$\xspace}
\newcommand{\AdvmainLnXVIYCSBLogNumberofWorkerThreadsvsMaxThrCompareTaurusCommandNOWAITOverSiloRDataSiloMax}{$14.6\times$\xspace}
\newcommand{\AdvmainLnXVIYCSBLogNumberofWorkerThreadsvsMaxThrCompareTaurusDataNOWAITOverSiloRDataSilo}{$9.3\times$\xspace}
\newcommand{\AdvmainLnXVIYCSBLogNumberofWorkerThreadsvsMaxThrCompareTaurusDataNOWAITOverSiloRDataSiloMax}{$9.9\times$\xspace}
\newcommand{\AdvmainLnXVIYCSBRecNumberofWorkerThreadsvsMaxThrScalabilityNoLoggingDataNOWAIT}{0.0\%\xspace}
\newcommand{\AdvmainLnXVIYCSBRecNumberofWorkerThreadsvsMaxThrScalabilityTaurusCommandNOWAIT}{$49.4\times$\xspace}
\newcommand{\AdvmainLnXVIYCSBRecNumberofWorkerThreadsvsMaxThrScalabilityTaurusDataNOWAIT}{$30.9\times$\xspace}
\newcommand{\AdvmainLnXVIYCSBRecNumberofWorkerThreadsvsMaxThrCompareTaurusCommandNOWAITOverNoLoggingDataNOWAIT}{$16.0\times$\xspace}
\newcommand{\AdvmainLnXVIYCSBRecNumberofWorkerThreadsvsMaxThrCompareTaurusCommandNOWAITOverNoLoggingDataNOWAITMax}{$16.0\times$\xspace}
\newcommand{\AdvmainLnXVIYCSBRecNumberofWorkerThreadsvsMaxThrCompareTaurusCommandNOWAITOverSerialCommandNOWAIT}{$42.6\times$\xspace}
\newcommand{\AdvmainLnXVIYCSBRecNumberofWorkerThreadsvsMaxThrCompareTaurusCommandNOWAITOverSerialCommandNOWAITMax}{$42.6\times$\xspace}
\newcommand{\AdvmainLnXVIYCSBRecNumberofWorkerThreadsvsMaxThrCompareTaurusCommandNOWAITOverSerialDataNOWAIT}{$33.9\times$\xspace}
\newcommand{\AdvmainLnXVIYCSBRecNumberofWorkerThreadsvsMaxThrCompareTaurusCommandNOWAITOverSerialDataNOWAITMax}{$33.8\times$\xspace}
\newcommand{\AdvmainLnXVIYCSBRecNumberofWorkerThreadsvsMaxThrCompareTaurusCommandNOWAITOverPloverDataNOWAIT}{$1.1\times$\xspace}
\newcommand{\AdvmainLnXVIYCSBRecNumberofWorkerThreadsvsMaxThrCompareTaurusCommandNOWAITOverPloverDataNOWAITMax}{$1.0\times$\xspace}
\newcommand{\AdvmainLnXVIYCSBRecNumberofWorkerThreadsvsMaxThrCompareTaurusCommandNOWAITOverSerialRAIDCommandNOWAIT}{$42.6\times$\xspace}
\newcommand{\AdvmainLnXVIYCSBRecNumberofWorkerThreadsvsMaxThrCompareTaurusCommandNOWAITOverSerialRAIDCommandNOWAITMax}{$42.6\times$\xspace}
\newcommand{\AdvmainLnXVIYCSBRecNumberofWorkerThreadsvsMaxThrCompareTaurusDataNOWAITOverSerialDataNOWAIT}{$22.9\times$\xspace}
\newcommand{\AdvmainLnXVIYCSBRecNumberofWorkerThreadsvsMaxThrCompareTaurusDataNOWAITOverSerialDataNOWAITMax}{$23.1\times$\xspace}
\newcommand{\AdvmainLnXVIYCSBRecNumberofWorkerThreadsvsMaxThrCompareTaurusDataNOWAITOverPloverDataNOWAIT}{28.5\%\xspace}
\newcommand{\AdvmainLnXVIYCSBRecNumberofWorkerThreadsvsMaxThrCompareTaurusDataNOWAITOverPloverDataNOWAITMax}{29.7\%\xspace}
\newcommand{\AdvmainLnXVIYCSBRecNumberofWorkerThreadsvsMaxThrCompareTaurusDataNOWAITOverSerialRAIDDataNOWAIT}{$22.8\times$\xspace}
\newcommand{\AdvmainLnXVIYCSBRecNumberofWorkerThreadsvsMaxThrCompareTaurusDataNOWAITOverSerialRAIDDataNOWAITMax}{$23.1\times$\xspace}
\newcommand{\AdvmainLnXVIYCSBRecNumberofWorkerThreadsvsMaxThrCompareTaurusCommandNOWAITOverTaurusDataNOWAIT}{$1.5\times$\xspace}
\newcommand{\AdvmainLnXVIYCSBRecNumberofWorkerThreadsvsMaxThrCompareTaurusCommandNOWAITOverTaurusDataNOWAITMax}{$1.5\times$\xspace}
\newcommand{\AdvmainLnXVIYCSBRecNumberofWorkerThreadsvsMaxThrCompareTaurusCommandNOWAITOverSiloRDataSilo}{$16.0\times$\xspace}
\newcommand{\AdvmainLnXVIYCSBRecNumberofWorkerThreadsvsMaxThrCompareTaurusCommandNOWAITOverSiloRDataSiloMax}{$16.0\times$\xspace}
\newcommand{\AdvmainLnXVIYCSBRecNumberofWorkerThreadsvsMaxThrCompareTaurusDataNOWAITOverSiloRDataSilo}{$10.8\times$\xspace}
\newcommand{\AdvmainLnXVIYCSBRecNumberofWorkerThreadsvsMaxThrCompareTaurusDataNOWAITOverSiloRDataSiloMax}{$11.0\times$\xspace}
\newcommand{\mainRAMYCSBLogNumberofWorkerThreadsvsThroughputScalabilityNoLoggingDataNOWAIT}{$70.6\times$\xspace}
\newcommand{\mainRAMYCSBLogNumberofWorkerThreadsvsThroughputScalabilityTaurusCommandNOWAIT}{$64.7\times$\xspace}
\newcommand{\mainRAMYCSBLogNumberofWorkerThreadsvsThroughputScalabilityTaurusDataNOWAIT}{$56.8\times$\xspace}
\newcommand{\mainRAMYCSBLogNumberofWorkerThreadsvsThroughputCompareTaurusCommandNOWAITOverNoLoggingDataNOWAIT}{31.6\%\xspace}
\newcommand{\mainRAMYCSBLogNumberofWorkerThreadsvsThroughputCompareTaurusCommandNOWAITOverNoLoggingDataNOWAITMax}{31.6\%\xspace}
\newcommand{\mainRAMYCSBLogNumberofWorkerThreadsvsThroughputCompareTaurusCommandNOWAITOverSerialCommandNOWAIT}{$1.5\times$\xspace}
\newcommand{\mainRAMYCSBLogNumberofWorkerThreadsvsThroughputCompareTaurusCommandNOWAITOverSerialCommandNOWAITMax}{$1.5\times$\xspace}
\newcommand{\mainRAMYCSBLogNumberofWorkerThreadsvsThroughputCompareTaurusCommandNOWAITOverSerialDataNOWAIT}{$6.9\times$\xspace}
\newcommand{\mainRAMYCSBLogNumberofWorkerThreadsvsThroughputCompareTaurusCommandNOWAITOverSerialDataNOWAITMax}{$6.4\times$\xspace}
\newcommand{\mainRAMYCSBLogNumberofWorkerThreadsvsThroughputCompareTaurusCommandNOWAITOverPloverDataNOWAIT}{$2.6\times$\xspace}
\newcommand{\mainRAMYCSBLogNumberofWorkerThreadsvsThroughputCompareTaurusCommandNOWAITOverPloverDataNOWAITMax}{$2.6\times$\xspace}
\newcommand{\mainRAMYCSBLogNumberofWorkerThreadsvsThroughputCompareTaurusCommandNOWAITOverSerialRAIDCommandNOWAIT}{$14.2\times$\xspace}
\newcommand{\mainRAMYCSBLogNumberofWorkerThreadsvsThroughputCompareTaurusCommandNOWAITOverSerialRAIDCommandNOWAITMax}{$14.2\times$\xspace}
\newcommand{\mainRAMYCSBLogNumberofWorkerThreadsvsThroughputCompareTaurusDataNOWAITOverSerialDataNOWAIT}{$5.7\times$\xspace}
\newcommand{\mainRAMYCSBLogNumberofWorkerThreadsvsThroughputCompareTaurusDataNOWAITOverSerialDataNOWAITMax}{$5.3\times$\xspace}
\newcommand{\mainRAMYCSBLogNumberofWorkerThreadsvsThroughputCompareTaurusDataNOWAITOverPloverDataNOWAIT}{$2.1\times$\xspace}
\newcommand{\mainRAMYCSBLogNumberofWorkerThreadsvsThroughputCompareTaurusDataNOWAITOverPloverDataNOWAITMax}{$2.1\times$\xspace}
\newcommand{\mainRAMYCSBLogNumberofWorkerThreadsvsThroughputCompareTaurusDataNOWAITOverSerialRAIDDataNOWAIT}{$11.7\times$\xspace}
\newcommand{\mainRAMYCSBLogNumberofWorkerThreadsvsThroughputCompareTaurusDataNOWAITOverSerialRAIDDataNOWAITMax}{$11.7\times$\xspace}
\newcommand{\mainRAMYCSBLogNumberofWorkerThreadsvsThroughputCompareTaurusCommandNOWAITOverTaurusDataNOWAIT}{$1.2\times$\xspace}
\newcommand{\mainRAMYCSBLogNumberofWorkerThreadsvsThroughputCompareTaurusCommandNOWAITOverTaurusDataNOWAITMax}{$1.2\times$\xspace}
\newcommand{\mainRAMYCSBLogNumberofWorkerThreadsvsThroughputCompareTaurusCommandNOWAITOverSiloRDataSilo}{$1.2\times$\xspace}
\newcommand{\mainRAMYCSBLogNumberofWorkerThreadsvsThroughputCompareTaurusCommandNOWAITOverSiloRDataSiloMax}{$1.2\times$\xspace}
\newcommand{\mainRAMYCSBLogNumberofWorkerThreadsvsThroughputCompareTaurusDataNOWAITOverSiloRDataSilo}{$1.0\times$\xspace}
\newcommand{\mainRAMYCSBLogNumberofWorkerThreadsvsThroughputCompareTaurusDataNOWAITOverSiloRDataSiloMax}{$1.0\times$\xspace}
\newcommand{\mainRAMYCSBRecNumberofWorkerThreadsvsThroughputScalabilityNoLoggingDataNOWAIT}{0.0\%\xspace}
\newcommand{\mainRAMYCSBRecNumberofWorkerThreadsvsThroughputScalabilityTaurusCommandNOWAIT}{$51.8\times$\xspace}
\newcommand{\mainRAMYCSBRecNumberofWorkerThreadsvsThroughputScalabilityTaurusDataNOWAIT}{$36.3\times$\xspace}
\newcommand{\mainRAMYCSBRecNumberofWorkerThreadsvsThroughputCompareTaurusCommandNOWAITOverNoLoggingDataNOWAIT}{$16.6\times$\xspace}
\newcommand{\mainRAMYCSBRecNumberofWorkerThreadsvsThroughputCompareTaurusCommandNOWAITOverNoLoggingDataNOWAITMax}{$16.6\times$\xspace}
\newcommand{\mainRAMYCSBRecNumberofWorkerThreadsvsThroughputCompareTaurusCommandNOWAITOverSerialCommandNOWAIT}{$43.9\times$\xspace}
\newcommand{\mainRAMYCSBRecNumberofWorkerThreadsvsThroughputCompareTaurusCommandNOWAITOverSerialCommandNOWAITMax}{$43.9\times$\xspace}
\newcommand{\mainRAMYCSBRecNumberofWorkerThreadsvsThroughputCompareTaurusCommandNOWAITOverSerialDataNOWAIT}{$32.6\times$\xspace}
\newcommand{\mainRAMYCSBRecNumberofWorkerThreadsvsThroughputCompareTaurusCommandNOWAITOverSerialDataNOWAITMax}{$32.6\times$\xspace}
\newcommand{\mainRAMYCSBRecNumberofWorkerThreadsvsThroughputCompareTaurusCommandNOWAITOverPloverDataNOWAIT}{$1.2\times$\xspace}
\newcommand{\mainRAMYCSBRecNumberofWorkerThreadsvsThroughputCompareTaurusCommandNOWAITOverPloverDataNOWAITMax}{$1.2\times$\xspace}
\newcommand{\mainRAMYCSBRecNumberofWorkerThreadsvsThroughputCompareTaurusCommandNOWAITOverSerialRAIDCommandNOWAIT}{$16.6\times$\xspace}
\newcommand{\mainRAMYCSBRecNumberofWorkerThreadsvsThroughputCompareTaurusCommandNOWAITOverSerialRAIDCommandNOWAITMax}{$16.6\times$\xspace}
\newcommand{\mainRAMYCSBRecNumberofWorkerThreadsvsThroughputCompareTaurusDataNOWAITOverSerialDataNOWAIT}{$29.7\times$\xspace}
\newcommand{\mainRAMYCSBRecNumberofWorkerThreadsvsThroughputCompareTaurusDataNOWAITOverSerialDataNOWAITMax}{$29.7\times$\xspace}
\newcommand{\mainRAMYCSBRecNumberofWorkerThreadsvsThroughputCompareTaurusDataNOWAITOverPloverDataNOWAIT}{$1.1\times$\xspace}
\newcommand{\mainRAMYCSBRecNumberofWorkerThreadsvsThroughputCompareTaurusDataNOWAITOverPloverDataNOWAITMax}{$1.1\times$\xspace}
\newcommand{\mainRAMYCSBRecNumberofWorkerThreadsvsThroughputCompareTaurusDataNOWAITOverSerialRAIDDataNOWAIT}{$15.1\times$\xspace}
\newcommand{\mainRAMYCSBRecNumberofWorkerThreadsvsThroughputCompareTaurusDataNOWAITOverSerialRAIDDataNOWAITMax}{$15.1\times$\xspace}
\newcommand{\mainRAMYCSBRecNumberofWorkerThreadsvsThroughputCompareTaurusCommandNOWAITOverTaurusDataNOWAIT}{$1.1\times$\xspace}
\newcommand{\mainRAMYCSBRecNumberofWorkerThreadsvsThroughputCompareTaurusCommandNOWAITOverTaurusDataNOWAITMax}{$1.1\times$\xspace}
\newcommand{\mainRAMYCSBRecNumberofWorkerThreadsvsThroughputCompareTaurusCommandNOWAITOverSiloRDataSilo}{$1.0\times$\xspace}
\newcommand{\mainRAMYCSBRecNumberofWorkerThreadsvsThroughputCompareTaurusCommandNOWAITOverSiloRDataSiloMax}{$1.0\times$\xspace}
\newcommand{\mainRAMYCSBRecNumberofWorkerThreadsvsThroughputCompareTaurusDataNOWAITOverSiloRDataSilo}{6.1\%\xspace}
\newcommand{\mainRAMYCSBRecNumberofWorkerThreadsvsThroughputCompareTaurusDataNOWAITOverSiloRDataSiloMax}{6.1\%\xspace}
\newcommand{\mainNewOrderTPCCLogNumberofWorkerThreadsvsThroughputScalabilityNoLoggingDataNOWAIT}{$27.6\times$\xspace}
\newcommand{\mainNewOrderTPCCLogNumberofWorkerThreadsvsThroughputScalabilityTaurusCommandNOWAIT}{$27.9\times$\xspace}
\newcommand{\mainNewOrderTPCCLogNumberofWorkerThreadsvsThroughputScalabilityTaurusDataNOWAIT}{$8.1\times$\xspace}
\newcommand{\mainNewOrderTPCCLogNumberofWorkerThreadsvsThroughputCompareTaurusCommandNOWAITOverNoLoggingDataNOWAIT}{22.4\%\xspace}
\newcommand{\mainNewOrderTPCCLogNumberofWorkerThreadsvsThroughputCompareTaurusCommandNOWAITOverNoLoggingDataNOWAITMax}{22.4\%\xspace}
\newcommand{\mainNewOrderTPCCLogNumberofWorkerThreadsvsThroughputCompareTaurusCommandNOWAITOverSerialCommandNOWAIT}{$6.2\times$\xspace}
\newcommand{\mainNewOrderTPCCLogNumberofWorkerThreadsvsThroughputCompareTaurusCommandNOWAITOverSerialCommandNOWAITMax}{$6.1\times$\xspace}
\newcommand{\mainNewOrderTPCCLogNumberofWorkerThreadsvsThroughputCompareTaurusCommandNOWAITOverSerialDataNOWAIT}{$64.3\times$\xspace}
\newcommand{\mainNewOrderTPCCLogNumberofWorkerThreadsvsThroughputCompareTaurusCommandNOWAITOverSerialDataNOWAITMax}{$56.6\times$\xspace}
\newcommand{\mainNewOrderTPCCLogNumberofWorkerThreadsvsThroughputCompareTaurusCommandNOWAITOverPloverDataNOWAIT}{$11.2\times$\xspace}
\newcommand{\mainNewOrderTPCCLogNumberofWorkerThreadsvsThroughputCompareTaurusCommandNOWAITOverPloverDataNOWAITMax}{$10.0\times$\xspace}
\newcommand{\mainNewOrderTPCCLogNumberofWorkerThreadsvsThroughputCompareTaurusCommandNOWAITOverSerialRAIDCommandNOWAIT}{5.1\%\xspace}
\newcommand{\mainNewOrderTPCCLogNumberofWorkerThreadsvsThroughputCompareTaurusCommandNOWAITOverSerialRAIDCommandNOWAITMax}{5.1\%\xspace}
\newcommand{\mainNewOrderTPCCLogNumberofWorkerThreadsvsThroughputCompareTaurusDataNOWAITOverSerialDataNOWAIT}{$8.3\times$\xspace}
\newcommand{\mainNewOrderTPCCLogNumberofWorkerThreadsvsThroughputCompareTaurusDataNOWAITOverSerialDataNOWAITMax}{$7.5\times$\xspace}
\newcommand{\mainNewOrderTPCCLogNumberofWorkerThreadsvsThroughputCompareTaurusDataNOWAITOverPloverDataNOWAIT}{$1.4\times$\xspace}
\newcommand{\mainNewOrderTPCCLogNumberofWorkerThreadsvsThroughputCompareTaurusDataNOWAITOverPloverDataNOWAITMax}{$1.3\times$\xspace}
\newcommand{\mainNewOrderTPCCLogNumberofWorkerThreadsvsThroughputCompareTaurusDataNOWAITOverSerialRAIDDataNOWAIT}{$1.3\times$\xspace}
\newcommand{\mainNewOrderTPCCLogNumberofWorkerThreadsvsThroughputCompareTaurusDataNOWAITOverSerialRAIDDataNOWAITMax}{$1.3\times$\xspace}
\newcommand{\mainNewOrderTPCCLogNumberofWorkerThreadsvsThroughputCompareTaurusCommandNOWAITOverTaurusDataNOWAIT}{$7.7\times$\xspace}
\newcommand{\mainNewOrderTPCCLogNumberofWorkerThreadsvsThroughputCompareTaurusCommandNOWAITOverTaurusDataNOWAITMax}{$7.6\times$\xspace}
\newcommand{\mainNewOrderTPCCLogNumberofWorkerThreadsvsThroughputCompareTaurusCommandNOWAITOverSiloRDataSilo}{$7.7\times$\xspace}
\newcommand{\mainNewOrderTPCCLogNumberofWorkerThreadsvsThroughputCompareTaurusCommandNOWAITOverSiloRDataSiloMax}{$7.4\times$\xspace}
\newcommand{\mainNewOrderTPCCLogNumberofWorkerThreadsvsThroughputCompareTaurusDataNOWAITOverSiloRDataSilo}{0.0\%\xspace}
\newcommand{\mainNewOrderTPCCLogNumberofWorkerThreadsvsThroughputCompareTaurusDataNOWAITOverSiloRDataSiloMax}{1.5\%\xspace}
\newcommand{\mainPayTPCCLogNumberofWorkerThreadsvsThroughputScalabilityNoLoggingDataNOWAIT}{$31.5\times$\xspace}
\newcommand{\mainPayTPCCLogNumberofWorkerThreadsvsThroughputScalabilityTaurusCommandNOWAIT}{$12.0\times$\xspace}
\newcommand{\mainPayTPCCLogNumberofWorkerThreadsvsThroughputScalabilityTaurusDataNOWAIT}{$10.2\times$\xspace}
\newcommand{\mainPayTPCCLogNumberofWorkerThreadsvsThroughputCompareTaurusCommandNOWAITOverNoLoggingDataNOWAIT}{72.6\%\xspace}
\newcommand{\mainPayTPCCLogNumberofWorkerThreadsvsThroughputCompareTaurusCommandNOWAITOverNoLoggingDataNOWAITMax}{72.1\%\xspace}
\newcommand{\mainPayTPCCLogNumberofWorkerThreadsvsThroughputCompareTaurusCommandNOWAITOverSerialCommandNOWAIT}{$6.8\times$\xspace}
\newcommand{\mainPayTPCCLogNumberofWorkerThreadsvsThroughputCompareTaurusCommandNOWAITOverSerialCommandNOWAITMax}{$6.8\times$\xspace}
\newcommand{\mainPayTPCCLogNumberofWorkerThreadsvsThroughputCompareTaurusCommandNOWAITOverSerialDataNOWAIT}{$51.2\times$\xspace}
\newcommand{\mainPayTPCCLogNumberofWorkerThreadsvsThroughputCompareTaurusCommandNOWAITOverSerialDataNOWAITMax}{$47.5\times$\xspace}
\newcommand{\mainPayTPCCLogNumberofWorkerThreadsvsThroughputCompareTaurusCommandNOWAITOverPloverDataNOWAIT}{$13.8\times$\xspace}
\newcommand{\mainPayTPCCLogNumberofWorkerThreadsvsThroughputCompareTaurusCommandNOWAITOverPloverDataNOWAITMax}{$14.1\times$\xspace}
\newcommand{\mainPayTPCCLogNumberofWorkerThreadsvsThroughputCompareTaurusCommandNOWAITOverSerialRAIDCommandNOWAIT}{$1.2\times$\xspace}
\newcommand{\mainPayTPCCLogNumberofWorkerThreadsvsThroughputCompareTaurusCommandNOWAITOverSerialRAIDCommandNOWAITMax}{$1.1\times$\xspace}
\newcommand{\mainPayTPCCLogNumberofWorkerThreadsvsThroughputCompareTaurusDataNOWAITOverSerialDataNOWAIT}{$9.5\times$\xspace}
\newcommand{\mainPayTPCCLogNumberofWorkerThreadsvsThroughputCompareTaurusDataNOWAITOverSerialDataNOWAITMax}{$8.7\times$\xspace}
\newcommand{\mainPayTPCCLogNumberofWorkerThreadsvsThroughputCompareTaurusDataNOWAITOverPloverDataNOWAIT}{$2.6\times$\xspace}
\newcommand{\mainPayTPCCLogNumberofWorkerThreadsvsThroughputCompareTaurusDataNOWAITOverPloverDataNOWAITMax}{$2.6\times$\xspace}
\newcommand{\mainPayTPCCLogNumberofWorkerThreadsvsThroughputCompareTaurusDataNOWAITOverSerialRAIDDataNOWAIT}{$1.3\times$\xspace}
\newcommand{\mainPayTPCCLogNumberofWorkerThreadsvsThroughputCompareTaurusDataNOWAITOverSerialRAIDDataNOWAITMax}{$1.3\times$\xspace}
\newcommand{\mainPayTPCCLogNumberofWorkerThreadsvsThroughputCompareTaurusCommandNOWAITOverTaurusDataNOWAIT}{$5.4\times$\xspace}
\newcommand{\mainPayTPCCLogNumberofWorkerThreadsvsThroughputCompareTaurusCommandNOWAITOverTaurusDataNOWAITMax}{$5.5\times$\xspace}
\newcommand{\mainPayTPCCLogNumberofWorkerThreadsvsThroughputCompareTaurusCommandNOWAITOverSiloRDataSilo}{$5.1\times$\xspace}
\newcommand{\mainPayTPCCLogNumberofWorkerThreadsvsThroughputCompareTaurusCommandNOWAITOverSiloRDataSiloMax}{$5.2\times$\xspace}
\newcommand{\mainPayTPCCLogNumberofWorkerThreadsvsThroughputCompareTaurusDataNOWAITOverSiloRDataSilo}{5.2\%\xspace}
\newcommand{\mainPayTPCCLogNumberofWorkerThreadsvsThroughputCompareTaurusDataNOWAITOverSiloRDataSiloMax}{5.1\%\xspace}
\newcommand{\mainTPCFLogNumberofWorkerThreadsvsThroughputScalabilityNoLoggingDataNOWAIT}{$14.4\times$\xspace}
\newcommand{\mainTPCFLogNumberofWorkerThreadsvsThroughputScalabilityTaurusCommandNOWAIT}{$14.4\times$\xspace}
\newcommand{\mainTPCFLogNumberofWorkerThreadsvsThroughputScalabilityTaurusDataNOWAIT}{$14.7\times$\xspace}
\newcommand{\mainTPCFLogNumberofWorkerThreadsvsThroughputCompareTaurusCommandNOWAITOverNoLoggingDataNOWAIT}{13.0\%\xspace}
\newcommand{\mainTPCFLogNumberofWorkerThreadsvsThroughputCompareTaurusCommandNOWAITOverNoLoggingDataNOWAITMax}{11.7\%\xspace}
\newcommand{\mainTPCFLogNumberofWorkerThreadsvsThroughputCompareTaurusCommandNOWAITOverSerialCommandNOWAIT}{9.6\%\xspace}
\newcommand{\mainTPCFLogNumberofWorkerThreadsvsThroughputCompareTaurusCommandNOWAITOverSerialCommandNOWAITMax}{8.3\%\xspace}
\newcommand{\mainTPCFLogNumberofWorkerThreadsvsThroughputCompareTaurusCommandNOWAITOverSerialDataNOWAIT}{$3.2\times$\xspace}
\newcommand{\mainTPCFLogNumberofWorkerThreadsvsThroughputCompareTaurusCommandNOWAITOverSerialDataNOWAITMax}{$3.0\times$\xspace}
\newcommand{\mainTPCFLogNumberofWorkerThreadsvsThroughputCompareTaurusCommandNOWAITOverPloverDataNOWAIT}{34.7\%\xspace}
\newcommand{\mainTPCFLogNumberofWorkerThreadsvsThroughputCompareTaurusCommandNOWAITOverPloverDataNOWAITMax}{33.4\%\xspace}
\newcommand{\mainTPCFLogNumberofWorkerThreadsvsThroughputCompareTaurusCommandNOWAITOverSerialRAIDCommandNOWAIT}{7.4\%\xspace}
\newcommand{\mainTPCFLogNumberofWorkerThreadsvsThroughputCompareTaurusCommandNOWAITOverSerialRAIDCommandNOWAITMax}{5.5\%\xspace}
\newcommand{\mainTPCFLogNumberofWorkerThreadsvsThroughputCompareTaurusDataNOWAITOverSerialDataNOWAIT}{$3.1\times$\xspace}
\newcommand{\mainTPCFLogNumberofWorkerThreadsvsThroughputCompareTaurusDataNOWAITOverSerialDataNOWAITMax}{$2.9\times$\xspace}
\newcommand{\mainTPCFLogNumberofWorkerThreadsvsThroughputCompareTaurusDataNOWAITOverPloverDataNOWAIT}{36.8\%\xspace}
\newcommand{\mainTPCFLogNumberofWorkerThreadsvsThroughputCompareTaurusDataNOWAITOverPloverDataNOWAITMax}{36.8\%\xspace}
\newcommand{\mainTPCFLogNumberofWorkerThreadsvsThroughputCompareTaurusDataNOWAITOverSerialRAIDDataNOWAIT}{5.0\%\xspace}
\newcommand{\mainTPCFLogNumberofWorkerThreadsvsThroughputCompareTaurusDataNOWAITOverSerialRAIDDataNOWAITMax}{5.0\%\xspace}
\newcommand{\mainTPCFLogNumberofWorkerThreadsvsThroughputCompareTaurusCommandNOWAITOverTaurusDataNOWAIT}{$1.0\times$\xspace}
\newcommand{\mainTPCFLogNumberofWorkerThreadsvsThroughputCompareTaurusCommandNOWAITOverTaurusDataNOWAITMax}{$1.1\times$\xspace}
\newcommand{\mainTPCFLogNumberofWorkerThreadsvsThroughputCompareTaurusCommandNOWAITOverSiloRDataSilo}{34.7\%\xspace}
\newcommand{\mainTPCFLogNumberofWorkerThreadsvsThroughputCompareTaurusCommandNOWAITOverSiloRDataSiloMax}{33.4\%\xspace}
\newcommand{\mainTPCFLogNumberofWorkerThreadsvsThroughputCompareTaurusDataNOWAITOverSiloRDataSilo}{36.8\%\xspace}
\newcommand{\mainTPCFLogNumberofWorkerThreadsvsThroughputCompareTaurusDataNOWAITOverSiloRDataSiloMax}{36.8\%\xspace}
\newcommand{\mainTPCFRecNumberofWorkerThreadsvsThroughputScalabilityNoLoggingDataNOWAIT}{0.0\%\xspace}
\newcommand{\mainTPCFRecNumberofWorkerThreadsvsThroughputScalabilityTaurusCommandNOWAIT}{$8.4\times$\xspace}
\newcommand{\mainTPCFRecNumberofWorkerThreadsvsThroughputScalabilityTaurusDataNOWAIT}{$5.9\times$\xspace}
\newcommand{\mainTPCFRecNumberofWorkerThreadsvsThroughputCompareTaurusCommandNOWAITOverNoLoggingDataNOWAIT}{18.4\%\xspace}
\newcommand{\mainTPCFRecNumberofWorkerThreadsvsThroughputCompareTaurusCommandNOWAITOverNoLoggingDataNOWAITMax}{15.7\%\xspace}
\newcommand{\mainTPCFRecNumberofWorkerThreadsvsThroughputCompareTaurusCommandNOWAITOverSerialCommandNOWAIT}{$12.8\times$\xspace}
\newcommand{\mainTPCFRecNumberofWorkerThreadsvsThroughputCompareTaurusCommandNOWAITOverSerialCommandNOWAITMax}{$13.2\times$\xspace}
\newcommand{\mainTPCFRecNumberofWorkerThreadsvsThroughputCompareTaurusCommandNOWAITOverSerialDataNOWAIT}{$9.4\times$\xspace}
\newcommand{\mainTPCFRecNumberofWorkerThreadsvsThroughputCompareTaurusCommandNOWAITOverSerialDataNOWAITMax}{$9.7\times$\xspace}
\newcommand{\mainTPCFRecNumberofWorkerThreadsvsThroughputCompareTaurusCommandNOWAITOverPloverDataNOWAIT}{18.4\%\xspace}
\newcommand{\mainTPCFRecNumberofWorkerThreadsvsThroughputCompareTaurusCommandNOWAITOverPloverDataNOWAITMax}{15.7\%\xspace}
\newcommand{\mainTPCFRecNumberofWorkerThreadsvsThroughputCompareTaurusCommandNOWAITOverSerialRAIDCommandNOWAIT}{$13.6\times$\xspace}
\newcommand{\mainTPCFRecNumberofWorkerThreadsvsThroughputCompareTaurusCommandNOWAITOverSerialRAIDCommandNOWAITMax}{$13.9\times$\xspace}
\newcommand{\mainTPCFRecNumberofWorkerThreadsvsThroughputCompareTaurusDataNOWAITOverSerialDataNOWAIT}{$7.5\times$\xspace}
\newcommand{\mainTPCFRecNumberofWorkerThreadsvsThroughputCompareTaurusDataNOWAITOverSerialDataNOWAITMax}{$7.5\times$\xspace}
\newcommand{\mainTPCFRecNumberofWorkerThreadsvsThroughputCompareTaurusDataNOWAITOverPloverDataNOWAIT}{34.3\%\xspace}
\newcommand{\mainTPCFRecNumberofWorkerThreadsvsThroughputCompareTaurusDataNOWAITOverPloverDataNOWAITMax}{34.3\%\xspace}
\newcommand{\mainTPCFRecNumberofWorkerThreadsvsThroughputCompareTaurusDataNOWAITOverSerialRAIDDataNOWAIT}{$8.0\times$\xspace}
\newcommand{\mainTPCFRecNumberofWorkerThreadsvsThroughputCompareTaurusDataNOWAITOverSerialRAIDDataNOWAITMax}{$7.9\times$\xspace}
\newcommand{\mainTPCFRecNumberofWorkerThreadsvsThroughputCompareTaurusCommandNOWAITOverTaurusDataNOWAIT}{$1.2\times$\xspace}
\newcommand{\mainTPCFRecNumberofWorkerThreadsvsThroughputCompareTaurusCommandNOWAITOverTaurusDataNOWAITMax}{$1.3\times$\xspace}
\newcommand{\mainTPCFRecNumberofWorkerThreadsvsThroughputCompareTaurusCommandNOWAITOverSiloRDataSilo}{18.4\%\xspace}
\newcommand{\mainTPCFRecNumberofWorkerThreadsvsThroughputCompareTaurusCommandNOWAITOverSiloRDataSiloMax}{15.7\%\xspace}
\newcommand{\mainTPCFRecNumberofWorkerThreadsvsThroughputCompareTaurusDataNOWAITOverSiloRDataSilo}{34.3\%\xspace}
\newcommand{\mainTPCFRecNumberofWorkerThreadsvsThroughputCompareTaurusDataNOWAITOverSiloRDataSiloMax}{34.3\%\xspace}
\newcommand{\mainYCSBLogNumberofWorkerThreadsvsThroughputScalabilityNoLoggingDataNOWAIT}{$42.3\times$\xspace}
\newcommand{\mainYCSBLogNumberofWorkerThreadsvsThroughputScalabilityTaurusCommandNOWAIT}{$38.0\times$\xspace}
\newcommand{\mainYCSBLogNumberofWorkerThreadsvsThroughputScalabilityTaurusDataNOWAIT}{$7.6\times$\xspace}
\newcommand{\mainYCSBLogNumberofWorkerThreadsvsThroughputCompareTaurusCommandNOWAITOverNoLoggingDataNOWAIT}{26.9\%\xspace}
\newcommand{\mainYCSBLogNumberofWorkerThreadsvsThroughputCompareTaurusCommandNOWAITOverNoLoggingDataNOWAITMax}{26.9\%\xspace}
\newcommand{\mainYCSBLogNumberofWorkerThreadsvsThroughputCompareTaurusCommandNOWAITOverSerialCommandNOWAIT}{$3.8\times$\xspace}
\newcommand{\mainYCSBLogNumberofWorkerThreadsvsThroughputCompareTaurusCommandNOWAITOverSerialCommandNOWAITMax}{$3.0\times$\xspace}
\newcommand{\mainYCSBLogNumberofWorkerThreadsvsThroughputCompareTaurusCommandNOWAITOverSerialDataNOWAIT}{$66.4\times$\xspace}
\newcommand{\mainYCSBLogNumberofWorkerThreadsvsThroughputCompareTaurusCommandNOWAITOverSerialDataNOWAITMax}{$65.8\times$\xspace}
\newcommand{\mainYCSBLogNumberofWorkerThreadsvsThroughputCompareTaurusCommandNOWAITOverPloverDataNOWAIT}{$9.4\times$\xspace}
\newcommand{\mainYCSBLogNumberofWorkerThreadsvsThroughputCompareTaurusCommandNOWAITOverPloverDataNOWAITMax}{$9.4\times$\xspace}
\newcommand{\mainYCSBLogNumberofWorkerThreadsvsThroughputCompareTaurusCommandNOWAITOverSerialRAIDCommandNOWAIT}{$1.6\times$\xspace}
\newcommand{\mainYCSBLogNumberofWorkerThreadsvsThroughputCompareTaurusCommandNOWAITOverSerialRAIDCommandNOWAITMax}{$1.6\times$\xspace}
\newcommand{\mainYCSBLogNumberofWorkerThreadsvsThroughputCompareTaurusDataNOWAITOverSerialDataNOWAIT}{$7.1\times$\xspace}
\newcommand{\mainYCSBLogNumberofWorkerThreadsvsThroughputCompareTaurusDataNOWAITOverSerialDataNOWAITMax}{$7.1\times$\xspace}
\newcommand{\mainYCSBLogNumberofWorkerThreadsvsThroughputCompareTaurusDataNOWAITOverPloverDataNOWAIT}{$1.0\times$\xspace}
\newcommand{\mainYCSBLogNumberofWorkerThreadsvsThroughputCompareTaurusDataNOWAITOverPloverDataNOWAITMax}{$1.0\times$\xspace}
\newcommand{\mainYCSBLogNumberofWorkerThreadsvsThroughputCompareTaurusDataNOWAITOverSerialRAIDDataNOWAIT}{$1.3\times$\xspace}
\newcommand{\mainYCSBLogNumberofWorkerThreadsvsThroughputCompareTaurusDataNOWAITOverSerialRAIDDataNOWAITMax}{$1.3\times$\xspace}
\newcommand{\mainYCSBLogNumberofWorkerThreadsvsThroughputCompareTaurusCommandNOWAITOverTaurusDataNOWAIT}{$9.4\times$\xspace}
\newcommand{\mainYCSBLogNumberofWorkerThreadsvsThroughputCompareTaurusCommandNOWAITOverTaurusDataNOWAITMax}{$9.3\times$\xspace}
\newcommand{\mainYCSBLogNumberofWorkerThreadsvsThroughputCompareTaurusCommandNOWAITOverSiloRDataSilo}{$9.2\times$\xspace}
\newcommand{\mainYCSBLogNumberofWorkerThreadsvsThroughputCompareTaurusCommandNOWAITOverSiloRDataSiloMax}{$9.2\times$\xspace}
\newcommand{\mainYCSBLogNumberofWorkerThreadsvsThroughputCompareTaurusDataNOWAITOverSiloRDataSilo}{1.8\%\xspace}
\newcommand{\mainYCSBLogNumberofWorkerThreadsvsThroughputCompareTaurusDataNOWAITOverSiloRDataSiloMax}{0.8\%\xspace}
\newcommand{\mainYCSBRecNumberofWorkerThreadsvsMaxThrScalabilityNoLoggingDataNOWAIT}{0.0\%\xspace}
\newcommand{\mainYCSBRecNumberofWorkerThreadsvsMaxThrScalabilityTaurusCommandNOWAIT}{$15.5\times$\xspace}
\newcommand{\mainYCSBRecNumberofWorkerThreadsvsMaxThrScalabilityTaurusDataNOWAIT}{$6.4\times$\xspace}
\newcommand{\mainYCSBRecNumberofWorkerThreadsvsMaxThrCompareTaurusCommandNOWAITOverNoLoggingDataNOWAIT}{$5.3\times$\xspace}
\newcommand{\mainYCSBRecNumberofWorkerThreadsvsMaxThrCompareTaurusCommandNOWAITOverNoLoggingDataNOWAITMax}{$5.3\times$\xspace}
\newcommand{\mainYCSBRecNumberofWorkerThreadsvsMaxThrCompareTaurusCommandNOWAITOverSerialCommandNOWAIT}{$11.3\times$\xspace}
\newcommand{\mainYCSBRecNumberofWorkerThreadsvsMaxThrCompareTaurusCommandNOWAITOverSerialCommandNOWAITMax}{$11.3\times$\xspace}
\newcommand{\mainYCSBRecNumberofWorkerThreadsvsMaxThrCompareTaurusCommandNOWAITOverSerialDataNOWAIT}{$36.0\times$\xspace}
\newcommand{\mainYCSBRecNumberofWorkerThreadsvsMaxThrCompareTaurusCommandNOWAITOverSerialDataNOWAITMax}{$36.1\times$\xspace}
\newcommand{\mainYCSBRecNumberofWorkerThreadsvsMaxThrCompareTaurusCommandNOWAITOverPloverDataNOWAIT}{$6.4\times$\xspace}
\newcommand{\mainYCSBRecNumberofWorkerThreadsvsMaxThrCompareTaurusCommandNOWAITOverPloverDataNOWAITMax}{$6.1\times$\xspace}
\newcommand{\mainYCSBRecNumberofWorkerThreadsvsMaxThrCompareTaurusCommandNOWAITOverSerialRAIDCommandNOWAIT}{$11.0\times$\xspace}
\newcommand{\mainYCSBRecNumberofWorkerThreadsvsMaxThrCompareTaurusCommandNOWAITOverSerialRAIDCommandNOWAITMax}{$10.9\times$\xspace}
\newcommand{\mainYCSBRecNumberofWorkerThreadsvsMaxThrCompareTaurusDataNOWAITOverSerialDataNOWAIT}{$5.5\times$\xspace}
\newcommand{\mainYCSBRecNumberofWorkerThreadsvsMaxThrCompareTaurusDataNOWAITOverSerialDataNOWAITMax}{$5.5\times$\xspace}
\newcommand{\mainYCSBRecNumberofWorkerThreadsvsMaxThrCompareTaurusDataNOWAITOverPloverDataNOWAIT}{2.8\%\xspace}
\newcommand{\mainYCSBRecNumberofWorkerThreadsvsMaxThrCompareTaurusDataNOWAITOverPloverDataNOWAITMax}{6.3\%\xspace}
\newcommand{\mainYCSBRecNumberofWorkerThreadsvsMaxThrCompareTaurusDataNOWAITOverSerialRAIDDataNOWAIT}{$1.7\times$\xspace}
\newcommand{\mainYCSBRecNumberofWorkerThreadsvsMaxThrCompareTaurusDataNOWAITOverSerialRAIDDataNOWAITMax}{$1.7\times$\xspace}
\newcommand{\mainYCSBRecNumberofWorkerThreadsvsMaxThrCompareTaurusCommandNOWAITOverTaurusDataNOWAIT}{$6.6\times$\xspace}
\newcommand{\mainYCSBRecNumberofWorkerThreadsvsMaxThrCompareTaurusCommandNOWAITOverTaurusDataNOWAITMax}{$6.6\times$\xspace}
\newcommand{\mainYCSBRecNumberofWorkerThreadsvsMaxThrCompareTaurusCommandNOWAITOverSiloRDataSilo}{$6.4\times$\xspace}
\newcommand{\mainYCSBRecNumberofWorkerThreadsvsMaxThrCompareTaurusCommandNOWAITOverSiloRDataSiloMax}{$6.3\times$\xspace}
\newcommand{\mainYCSBRecNumberofWorkerThreadsvsMaxThrCompareTaurusDataNOWAITOverSiloRDataSilo}{2.7\%\xspace}
\newcommand{\mainYCSBRecNumberofWorkerThreadsvsMaxThrCompareTaurusDataNOWAITOverSiloRDataSiloMax}{3.1\%\xspace}

\begin{abstract}

Existing single-stream logging schemes are unsuitable for in-memory database 
management systems (DBMSs) as the single log is often a performance bottleneck. 
To overcome this problem, we present \name, an efficient parallel logging scheme that 
uses multiple log streams, and is compatible with both data and command logging.
\name tracks and encodes transaction dependencies using a vector of log sequence numbers 
(LSNs). These vectors ensure that the dependencies are 
fully captured in logging and correctly enforced in recovery.
\myhighlightpar{Our experimental evaluation with an in-memory DBMS shows that \name's parallel 
logging achieves up 
to
\AdvmainLnXVIYCSBLogNumberofWorkerThreadsvsMaxThrCompareTaurusDataNOWAITOverSerialDataNOWAIT
 and 
 \AdvmainLnXVIPayTPCCLogNumberofWorkerThreadsvsMaxThrCompareTaurusCommandNOWAITOverSerialCommandNOWAIT
 speedups over single-streamed data logging and command logging, respectively. It 
also enables the DBMS to recover up to 
\AdvmainLnXVIYCSBRecNumberofWorkerThreadsvsMaxThrCompareTaurusDataNOWAITOverSerialDataNOWAIT and
\AdvmainLnXVINewOrderTPCCRecNumberofWorkerThreadsvsMaxThrCompareTaurusCommandNOWAITOverSerialCommandNOWAIT
faster than 
these baselines for data and command logging, respectively. We also compare \name 
with two state-of-the-art parallel logging schemes and show that the DBMS achieves up to  
\AdvmainLnXVIsiloPayTPCCLogNumberofWorkerThreadsvsMaxThrCompareTaurusCommandSiloOverSiloRDataSilo
better performance on NVMe drives and \mainYCSBLogNumberofWorkerThreadsvsThroughputCompareTaurusCommandNOWAITOverSiloRDataSilo on HDDs.}

\end{abstract}

\maketitle

\section{Introduction} \label{sec:intro}

A database management system (DBMS) guarantees that a transaction's modifications to the database 
persist even if the system crashes. The most common method to enforce durability is 
\textit{write-ahead-logging}, where each transaction sequentially writes its changes to a persistent 
storage device (e.g., HDD, SSD, NVM) before it commits~\cite{mohan1992aries}. With increasing 
parallelism in modern multicore hardware and the rising trend of high-throughput in-memory DBMSs, 
the scalability bottleneck caused by sequential logging~\cite{johnson2010aether, tu13, 
wang2014using, zheng2014fast} is onerous, motivating the need for a parallel 
solution. 

It is non-trivial, however, to perform parallel logging because the system must 
ensure the correct recovery order of transactions. Although this 
is straightforward in sequential logging because the LSNs (the positions of transaction 
records in the log file) explicitly define the order of 
transactions, it is not easy to efficiently recover transactions that are distributed across 
multiple logs without central LSNs. A parallel logging scheme must maintain transactions' order 
information across multiple logs to recover correctly. 

There are several parallel logging and recovery proposals in the literature~\cite{johnson2010aether, 
tu13, wang2014using, zheng2014fast}. These previous designs, however, are limited in their scope and 
applicability. Some algorithms support only parallel data logging but not parallel command 
logging~\cite{tu13, zheng2014fast, hong2013kuafu}; some can only parallelize the recovery process 
but not the logging process~\cite{dewitt1984implementation, nakamura2019integration}; a few 
protocols assume NVM hardware but do not work for conventional storage devices~\cite{arulraj2015let, 
arulraj2016write, chatzistergiou2015rewind, fang2011high, huang2014nvram, kim2016nvwal, 
kimura2015foedus, wang2014scalable}. As such, previously proposed methods are insufficient for 
modern DBMSs in diverse operating environments.

To overcome these limitations, we present \textbf{\name}, a lightweight protocol that performs 
both logging and recovery in parallel, supports both data and command logging, and is compatible 
with multiple concurrency control schemes. \name achieves this by tracking the 
inter-transaction dependencies. The recovery algorithm uses this information to determine the order 
of transactions. \codename encodes dependencies into a vector of LSNs, which we define as the 
\textit{LSN Vector} (\dbLV). LSN Vectors are inspired by vector clocks to enforce partial orderings 
in message-passing systems \cite{fidge1987timestamps, mattern1988virtual}. To reduce the overhead of 
maintaining \dbLV{s}, \codename compresses the vector based on the observation that a DBMS 
can recover transactions with no dependencies in any order. Thus, \codename does not need to store 
many \dbLV{s}, thereby reducing the space overhead.

\myhighlightpar{We compare the performance of \codename to a sequential logging scheme (with and without RAID-0 
setups) and state-of-the-art 
parallel logging schemes (i.e., Silo-R~\cite{tu13, zheng2014fast}
\myhighlightpar{and Plover~\cite{zhou2020plover}}) on YCSB and TPC-C benchmarks. 
Our evaluation on eight NVMe SSDs shows that \codename with data logging 
outperforms 
sequential data logging by 
\AdvmainLnXVIYCSBLogNumberofWorkerThreadsvsMaxThrCompareTaurusDataNOWAITOverSerialDataNOWAIT
at runtime, 
and \codename with command 
logging outperforms the sequential command logging by 
\AdvmainLnXVIPayTPCCLogNumberofWorkerThreadsvsMaxThrCompareTaurusCommandNOWAITOverSerialCommandNOWAIT.
During recovery, \codename with data logging and command logging are 
\AdvmainLnXVIYCSBRecNumberofWorkerThreadsvsMaxThrCompareTaurusDataNOWAITOverSerialDataNOWAIT and
\AdvmainLnXVINewOrderTPCCRecNumberofWorkerThreadsvsMaxThrCompareTaurusCommandNOWAITOverSerialCommandNOWAIT
faster than the serial baselines, respectively. \name with data logging matches the performance of 
the other parallel schemes, and \name with command logging is 
\AdvmainLnXVIsiloPayTPCCLogNumberofWorkerThreadsvsMaxThrCompareTaurusCommandSiloOverSiloRDataSilo faster 
at both runtime and recovery. Another evaluation on eight HDDs shows that 
\name{} with command logging achieves 
\mainYCSBLogNumberofWorkerThreadsvsThroughputCompareTaurusCommandNOWAITOverSiloRDataSilo and
\mainYCSBRecNumberofWorkerThreadsvsMaxThrCompareTaurusCommandNOWAITOverSiloRDataSilo 
faster than these parallel algorithms in logging and recovery, respectively.}

The main contributions of this paper include:
\squishitemize
    \item
    We propose the \name parallel scheme that supports both command 
    logging and data logging. We formally prove the correctness and liveness in \cref{sec:proof}. 

    \item
    We propose optimizations to reduce the memory footprint of the dependency information that 
    \codename maintains and extensions for supporting multiple concurrency control algorithms.

    \item
    We evaluate \name against sequential and the parallel logging 
    schemes, and demonstrate its advantages and generality. 

    \item We open source \name and evaluation scripts at \url{https://github.com/yuxiamit/DBx1000\_logging}.
\squishend

\section{Background}
\label{sec:background}

We first provide an overview of conventional serial logging protocols and 
then discuss the challenges of extending a logging algorithm to support a parallel environment.

\subsection{Serial Logging}
\label{sec:back-serial}

In a serial logging protocol, the DBMS constructs a single log stream for all transactions. The 
protocol maintains the ordering invariant that, if \txn{2} depends on \txn{1}, then the DBMS writes 
\txn{2} to disk after \txn{1}. The DBMS ensures a transaction commits only after it 
successfully writes the transaction's log records to disk.
During recovery, the DBMS reads the log sequentially, starting from the last checkpoint. The 
DBMS replays each transaction sequentially until it encounters an incomplete log record or the end 
of the file.

In general, there are two categories of logging schemes. The first is \textit{data logging}, where 
log records contain the physical modifications that transactions made to the database. The recovery 
process with this scheme is to re-apply these changes back to the database. The other category, 
called \textit{command logging}~\cite{malviya13}, reduces the amount of log data by only recording 
transactions' high-level commands (i.e., invocations of stored procedures). The log records for these 
commands are typically smaller in size than the physical changes made to the database. The recovery 
process involves more computation, as all transactions are re-executed sequentially. When the disk 
bandwidth is the bottleneck, command logging can substantially outperform data logging. 

Although serial logging is inherently sequential, one can improve its performance by using RAID 
disks that act as a single storage device to increase disk 
bandwidth~\cite{patterson1988case}. Serial logging can also support parallel recovery if the DBMS 
uses data logging~\cite{speer2007c, tu13, zheng2014fast}. But the fundamental property that 
distinguishes serial logging from parallel logging is that it relies on a single log stream that 
respects all the data dependencies among transactions. On a modern in-memory DBMS with many CPU 
cores, such a single log stream is a contention point that becomes a scalability 
bottleneck~\cite{tu13}. \myhighlightpar{Competing for the single atomic
LSN counter inhibits performance due to cache coherence traffic~\cite{yu2014}.} 

\subsection{Parallel Logging Challenges}
\label{sec:back-parallel}

\begin{figure}
    \centering
    \adjincludegraphics[scale=0.25, trim={0 {0.1\height} 0 0}, clip]{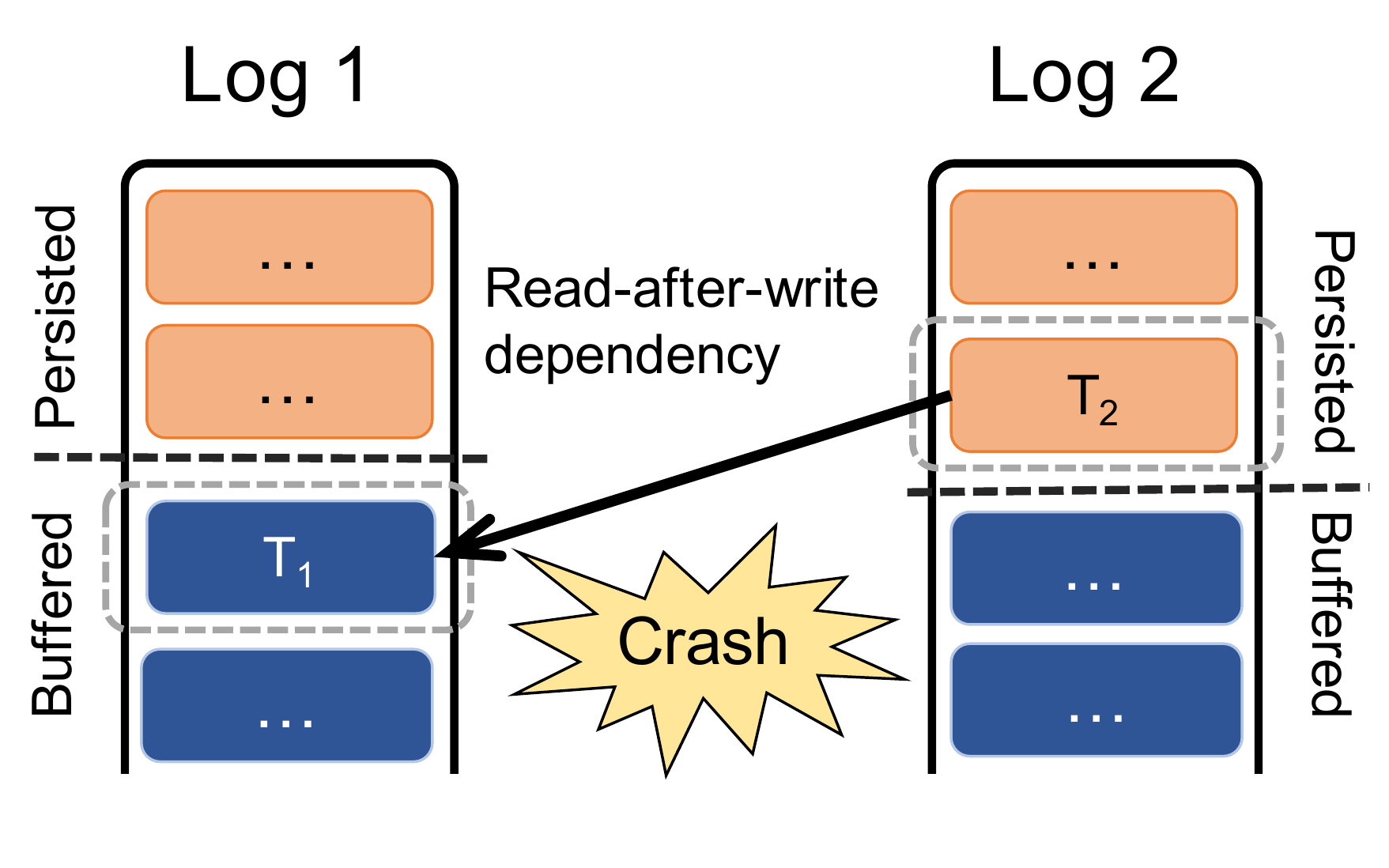}
    \caption{
        \textbf{Data Dependency in Parallel Logging} ---
        Transaction \txn{2} depends on \txn{1}.
        The two transactions write to different logs. %
    }
    \label{fig:back-challenges}
\end{figure}

Parallel logging allows transactions to write to multiple log streams (e.g., one stream per disk), 
thereby avoiding serial logging's scalability bottlenecks to satisfy the high throughput demands of 
in-memory DBMSs. Multiple streams inhibit an inherent natural ordering of transactions. 
Therefore, other mechanisms are required to track and enforce the ordering among these transactions. 
\cref{fig:back-challenges} shows an example with transactions \txn{1} and \txn{2}, where \txn{2} 
depends on \txn{1} with a read-after-write (RAW) data dependency. 

In this example, we assume that \txn{1} writes to \logger{1} and \txn{2} writes to \logger{2} and 
they may be flushed in any order. If \txn{2} is already persistent in \logger{2} while \txn{1} is 
still in the log buffer (shown in \cref{fig:back-challenges}), the DBMS must \textit{not} commit 
\txn{2} since \txn{1} has not committed. Furthermore, if the DBMS crashes, then the 
recovery process must be aware of such data dependency and therefore should not recover \txn{2}. 
Specifically, parallel logging faces the following three challenges.
\vspace*{-0.1in} \\

\textbf{Challenge \#1 -- When to Commit a Transaction:} %
The DBMS can only commit a transaction if it is persistent and all the 
transactions that it depends on can commit. In serial logging, this requirement
is satisfied if the transaction itself is persistent, indicating all the preceding transactions are also persistent. In parallel logging, however, a transaction must 
identify when other transactions that it depends on can commit, especially those that are 
storing their log records on other log streams. For the example shown in 
\cref{fig:back-challenges}, the DBMS can commit \txn{2} only after \txn{1} is already persistent. 
\vspace*{-0.1in} \\

\textbf{Challenge \#2 -- Whether to Recover a Transaction:} %
\myhighlightpar{A technique like \textit{Early-Lock-Release} (ELR) prevents transactions from waiting for 
log persistency during execution by allowing a transaction to release locks early before the log records hit disks~\cite{dewitt1984implementation}. But this means that during recovery, the DBMS has to 
determine whether transactions successfully committed before a crash.}
The DBMS ignores any transaction that fails to complete properly.
For the example in \cref{fig:back-challenges}, if \txn{2} is in the 
log but \txn{1} is not, then the DBMS should not process \txn{2} during recovery.
\vspace*{-0.1in} \\

\textbf{Challenge \#3 -- Determine the Recovery Order:} %
The DBMS must recover transactions in the order that respects data dependencies. 
If both \txn{1} and \txn{2} are persistent and have committed before 
the crash, the DBMS must recover \txn{1} before \txn{2}, since \txn{2} reads the value that is 
written by \txn{1}.
\vspace*{-0.1in} \\

One can resolve some of the above issues if the DBMS satisfies certain assumptions. 
For example, if the concurrency control algorithm enforces dependent transactions to write to disks 
in the corresponding order, then this solves the first and second challenges: the 
persistence of one transaction implies that any transactions that it depends on are also persistent. 
If the DBMS uses data logging, then it needs to handle write-after-write (\conflictWAW) 
dependencies, but not read-after-write (\conflictRAW) or \myhighlight{write-after-read 
(\conflictWAR)} dependencies. 
\myhighlightpar{For example, consider a transaction $\txn{1}$ that writes A=1, and a transaction $\txn{2}$ that
reads A and then writes B=A+1. Suppose the initial value of A is 0, and 
the DBMS schedules $\txn{2}$ before $\txn{1}$, resulting in A=1 and B=1. With this schedule, 
$\txn{1}$ has a \conflictWAR dependency on $\txn{2}$. If the DBMS does not 
track \conflictWAR dependencies and perform command logging, running $\txn{1}$ before $\txn{2}$ will result in 
A=1 and B=2, which violates correctness. But if the DBMS performs data logging, then $\txn{1}$ 
will have a record of A=1 and $\txn{2}$ will have a record of B=1. Regardless of the 
recovery order between $\txn{1}$ and $\txn{2}$, the resulting state is always correct. Supporting only data 
logging simplifies the protocol~\cite{tu13, zheng2014fast}. }
These assumptions, 
however, would hurt either performance or generality of the DBMS. 
\myhighlightpar{Our experiments in \cref{sec:evaluation} show that \name command logging outperforms 
all the data logging baselines by up to 6.4$\times$ in both logging and recovery.}

\section{\name Parallel Logging}
\label{sec:protocol}

We now present the \name protocol in detail. The core idea of \name is to use a lightweight 
dependency tracking mechanism called \textit{LSN Vector}. After first describing LSN Vectors, we 
then explain how \name uses them in \cref{sec:taurus-logging} and 
\cref{sec:taurus-recovery} during runtime and recovery operations, respectively.
\myhighlightpar{We then discuss how \name{} supports index operations like range scan, 
insertions, and deletions. Lastly, we describe limitations of \name and 
potential solutions.}

Although \name supports multiple concurrency control schemes (see 
\cref{sec:protocol-occ-variant}), for the sake of simplicity, we assume strict two-phase locking 
(S2PL) in this section unless otherwise stated. We also assume that the DBMS uses multiple disks 
with each log file residing 
on one disk. Each transaction writes only a single log entry to one log file at commit time. The 
design of a single log entry per transaction simplifies the protocol and is used by other in-memory 
DBMSs, including Hekaton~\cite{diaconu13}, Silo~\cite{tu13, zheng2014fast}, and 
H-Store~\cite{kallman08}.

\subsection{LSN Vector}
\label{sec:protocol-lv}

\begin{figure}[t!]
    \centering
    \adjincludegraphics[width=0.55\columnwidth, trim={0 0 0 {0.04\height}}, clip]{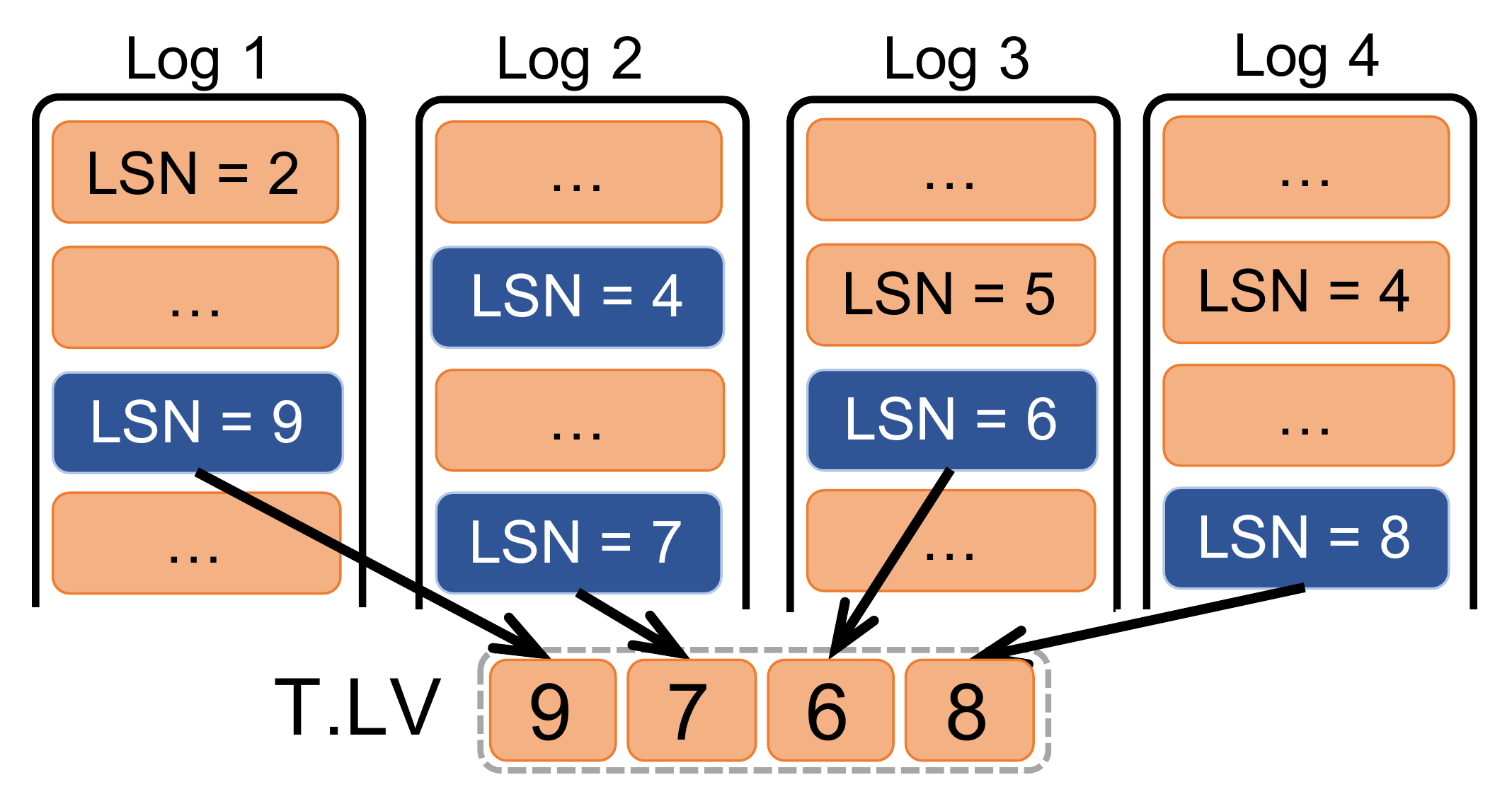}
    \caption{
        \textbf{LSN Vector (LV) example} --- 
        The $i^{th}$ element of transaction \txn{}'s LV is an LSN of log$_i$, indicating that \txn{} 
        depends on one or more transactions (rendered in dark blue) in log$_i$ before that LSN.
    }
    \label{fig:protocol-lv}
\end{figure}

An \textit{LSN Vector} (LV) is a vector of LSNs that encodes the dependencies between transactions. 
The DBMS assigns it to either (1) a transaction to record its dependency information or (2) a data 
item to capture the dependencies between transactions accessing it. The dimension of 
an \LV is the same as the number of logs. Each element of \LV indicates that a transaction \txn{} may depend 
on transactions before a certain position in the corresponding log. Specifically, given a 
transaction \txn{} and its assigned LV: {$\txn{}.\LV = (LV[1], LV[2], \dots, LV[n])$}, for any $1 \leq i 
\leq n$, the following property holds: 

{
\begin{property}\label{prop:lsn}
    Transaction $\txn{}$ does not depend on any transaction $\txn{}'$ that maps to the $i$-th log 
    with LSN $>LV[i]$. 
\end{property}
}

\cref{fig:protocol-lv} shows the LV of an example transaction \txn{}. The second element in $\txn{}.\LV$ is 
$7$, meaning that \txn{} may depend on any transaction that maps to \logger{2} with an LSN $\leq 7$ 
but no transaction with an LSN $> 7$. In this example, \txn{} depends on two transactions in 
\logger{2} with both LSNs no greater than $7$. 
The semantics of LV is similar to vector clocks~\cite{fidge1987timestamps, mattern1988virtual}. In 
particular, the following two operations will be frequently used on LVs: \elementWiseMax and 
\textit{comparison}. The \elementWiseMax is the element-wise maximum function:
{\small
\begin{align*}
    & LV = \textit{\elementWiseMax}(LV', LV'') 
    \Rightarrow \forall{i}, LV[i] = \textit{max}(LV'[i], LV''[i])
\end{align*}
}%
\myhighlightpar{For \emph{comparison}, the relationships are defined as follows:
{\small 
\begin{align*}
    LV &\leq LV' \iff \forall i, \LV[i] \leq \LV'[i].
\end{align*}
}
}
Following the semantics of vector clocks, \LV captures an approximation of the partial order among 
transactions --- LVs of dependent transactions are always ordered and LVs of independent 
transactions may or may not be ordered. 

An \LV of a transaction is written to the log together with the rest of the log entry. The 
dependency information captured by the partial order is sufficient for resolving the three 
challenges from \cref{sec:back-parallel}. \name's \LVs address these challenges in the 
following way: (1) A transaction \txn{} can commit if 
it is persistent and each log has flushed to the point specified by \txn{}.\LV, indicating that 
all transactions that \txn{} depends on are persistent. (2) During recovery, the DBMS determines 
that a transaction \txn{} has committed before the crash if each log has flushed to the point 
specified by \txn{}.\LV. (3) The recovery order follows the partial order specified by LVs, and 
the DBMS can recover unordered transactions in parallel.

\subsection{Logging Operations}
\label{sec:taurus-logging}

\myhighlightpar{The \name protocol is based on \textit{workers} and \textit{log 
managers} (denoted as $L_1, L_2, \dots, L_n$).}
Each log manager writes to a unique log file. Each worker is 
assigned to a log manager and we assume every log manager has exactly $p$ workers. The log 
managers and workers run on separate threads. We first describe the protocol's internal data 
structures and then explain its algorithms. 
\vspace*{-0.1in} \\

\textbf{Data Structures:} On top of a conventional 2PL protocol, \name adds the following data 
structures to the system.

\squishitemize
    \item \myitem{\txn{}.\LV} -- 
    Each transaction \txn{} contains a \txn{}.\LV encoding its dependency as 
    discussed in \cref{sec:protocol-lv}. When \txn{} initially starts, \txn{}.\LV is a vector of 
    zeroes.

    \item \myitem{\textit{Tuple.}\readLV/\writeLV} -- 
    Each tuple contains two LVs that serve as a medium for transaction \LVs to propagate between 
    transactions. Intuitively, these vectors are the maximum \LV of transactions that have 
    read/written the tuple. Initially, all elements are zeroes. This does not necessarily incur
    extra linear storage
    because the DBMS maintains this metadata in its lock table (cf. \cref{sec:lv-compression}).

    \item \myitem{\textit{L.logLSN}} -- 
    The highest position in the log file that has not been allocated for log manager \textit{L}. 
    It is initialized as zero. \myhighlightpar{Workers reserve space for log records by 
    incrementing \textit{L.logLSN}.}

    \item \myitem{\textit{L.allocatedLSN}} -- 
    A vector of length $p$ that stores the last LSN allocated by each worker of log manager 
    \textit{L}. Initially, all elements of \textit{allocatedLSN} are $\infty$. 

    \item \myitem{\textit{L.filledLSN}} -- 
    A vector of length $p$, storing the last LSN filled by each worker of log manager \textit{L}. 
    Initially, all elements are zeroes. The purpose of \textit{L.allocatedLSN} and 
    \textit{L.filledLSN} is to determine the point to which the log manager \textit{L} can safely 
    flush its log. They are irrelevant to the idea of LV but are important for the 
    correctness of \name. 

    \item \myitem{\textit{Global.PLV}} --
    \textit{PLV} stands for \textit{Persistent LSN Vector} that is a global vector of length $n$. 
    The element \PLV{}$_i$ denotes the LSN that log manager $L_i$ has successfully flushed up to. 
\squishend

\textbf{Worker Threads:}
\myhighlightpar{Worker threads track dependencies by enforcing partial orders on the LSN Vectors.} The logic of a worker thread is contained in the \textit{Lock} and \textit{Commit} functions shown 
in \cref{alg:logging}. The 2PL locking logic is in the \textit{FetchLock} function 
(\cref{line:fetchlock}); \name supports 
any variant of 2PL (e.g., deadlock-detection, no-wait, wait-and-die). 
After a worker thread acquires a lock, it 
executes Lines~\ref{line:element-wise-max-1}--\ref{line:element-wise-max-2} to update the LV of 
the transaction. 
\myhighlightpar{
It first updates $\txn{}.\LV$ to be the element-wise maximum of the current $\txn{}.\LV$ and the tuple's 
\writeLV (\cref{line:element-wise-max-1}). This enforces \txn{}.\LV to be no less than the LV of 
previous writing transactions. 
If the access is a write, it 
also updates $\txn{}.\LV$ using the tuple's \readLV.}

The DBMS calls the \textit{Commit} function shown in 
Lines~\ref{line:commit-start}--\ref{line:commit-end} when the transaction finishes. At this moment, 
\txn{} has locked tuples it accessed. Since the DBMS updates $\txn{}.\LV$ for 
each access, it already captures $\txn{}$'s dependency information. 
\myhighlight{The DBMS first checks if \textit{\txn{}} is read-only, and skip generating log records if so.}
Otherwise, it creates the log record 
for transaction $\txn{}$ (\cref{line:create-log-record}). The log record contains two parts: the 
\textit{redo log} \myhighlight{and a copy of \textit{\txn{}'s LV} at this moment}. The contents 
of the redo log depends on the logging scheme: the keys and values that $\txn{}$ 
modified (for data logging), or the information sufficient to reconstruct $\txn{}$ 
(for command logging). The DBMS writes the record into the corresponding log 
manager's buffer by \textit{WriteLogBuffer} (\cref{line:write-buffer}).
The algorithm then updates the $i$-th dimension of \txn{}.\LV to the returned LSN 
(\cref{line:lsn-update}), thereby allowing future transactions to capture their 
dependencies on \txn. \myhighlight{This update only changes \txn{}.\LV,
while the copy of \txn{}.\LV in the buffer does not contain this update.} %
Lines \ref{line:release-start}--\ref{line:release-end} update the \readLV and/or \writeLV of each 
tuple 
that \txn{} accessed before releasing the locks on those tuples. If \txn{} reads a tuple, 
it updates the tuple's \readLV using \txn{}.\LV, indicating that the tuple was read by 
\txn{} and future transactions must respect this dependency. Similarly, if \txn{} has written a 
tuple, the tuple's \writeLV is updated accordingly. Updating the LVs and releasing the 
lock must be executed in an atomic section, otherwise multiple transactions simultaneously updating 
the \readLV can cause race conditions leading to incorrect dependencies. 
As most 2PL protocols use 
latches to protect the release function, updating LVs can be piggybacked within those latches. %
For simplicity, we present a long atomic section covering Lines~\ref{line:release-start}--\ref{line:release-end} (shaded in gray).

After the DBMS releases transaction $\txn{}$'s locks, it has to wait for \PLV to catch up such 
that $\PLV \geq \txn{}.\LV$ (indicating $\txn{}$ is durable). All 
transactions within the same log manager commit sequentially. Since each log manager flushes 
records sequentially, this does not introduce a scalability bottleneck. %
\myhighlight{We employ the ELR optimization~\cite{dewitt1984implementation} to reduce lock
contention by allowing transactions to release locks before they are durable.}

The \textit{Commit} function calls \textit{WriteLogBuffer} 
(Lines~\ref{lines:write-buffer-begin}--\ref{lines:write-buffer-end}) to write a log entry into the 
log buffer.
It first allocates space in the log manager's ($L_i$) buffer by atomically incrementing its LSN by 
the size of the log 
record (\cref{line:atom-fetch-and-add}).
It then copies the log record into the log buffer (\cref{line:memcpy}). 
\cref{lines:allocate-greater-than-filled-begin,lines:allocate-greater-than-filled-end} are 
indicators for the log manager to decide up to which point it can flush the log buffer to disk. 
Specifically, before a transaction 
increments the LSN, it notifies the log manager ($L_i$) that its allocated space is no earlier than 
its current LSN (\cref{lines:allocate-greater-than-filled-begin}). This leads to 
$\textit{allocatedLSN}{[j]} \geq \textit{filledLSN}{[j]}$, which instructs $L_i$ that 
the log buffer contents after $\textit{allocatedLSN}{[j]}$ are unstable and should not be 
flushed to the disk. 
After the log buffer is filled, the transaction updates \textit{$L_i.\textit{filledLSN}[j]$} so 
that $\textit{allocatedLSN}{[j]} < \textit{filledLSN}{[j]}$, indicating that the worker thread has 
no ongoing operations on the log buffer.

\begin{algorithm}[t!]
    \caption{\small
        \textbf{Worker Thread} --- 
        We assume the worker is the $j$-th worker for log manager $L_i$.
    }\algoSize
    \label{alg:logging}
    {\SetKwProg{myfun}{Function}{}{}
\myfun{Lock(key, type, \txn{})}{ \label{line:lock-start}	
    \nonl\codeComment{Lock the tuple following the 2PL protocol.}\\
    \textit{FetchLock(key, type, \txn{})}\; \label{line:fetchlock}
    \textit{\txn{}.\LV = \elementWiseMax{}(\txn{}.\LV, DB[key].\writeLV)}\; \label{line:element-wise-max-1}
    \If{type is write}{ \label{line:element-wise-max-if}
        \textit{\txn{}.\LV = \elementWiseMax{}(\txn{}.\LV, DB[key].\readLV)}\; \label{line:element-wise-max-2}
    }
} \label{line:lock-end}	
\myfun{Commit(\txn{})}{ \label{line:commit-start}
    \myhighlightpar{
    \If{$\txn{}$ is not read-only}{\label{line:not-read-only-skip-1}
        \nonl\codeComment{Include \txn{}'s LV into the log record.}\\
        \textit{logRecord = \{CreateLogRecord(\txn{}), copy(\txn{}.LV)\}}\; \label{line:create-log-record}
        \textit{recordSize = GetSize(logRecord)}\;
        \textit{LSN = WriteLogBuffer(logRecord, recordSize)}\; \label{line:write-buffer}
        \textit{\txn{}.$\LV[i]$ = LSN}\codeComment{Update \txn{}.$\LV[i]$ in the memory.}\; \label{line:lsn-update} 
    }}
    \For{key $\in$ \txn{}'s access set}{ \label{line:release-start}
        \DrawBox{a}{b}
        \tikzmark{a}
        \If(\hfill \codeComment{Atomic Section}){\txn{} reads DB[key]}{ 
            \textit{DB[key].\readLV = \elementWiseMax{}(\txn{}.LV, DB[key].\readLV)}\;
        }
        \If{\txn{} writes DB[key]}{
            \nonl\codeComment{\txn{}.\LV is always no less than DB[key].\writeLV}\\
            \myhighlightpar{\textit{DB[key].\writeLV = \txn{}.\LV}}\;
            \label{line:update-writelv}
        }
        \textit{Release(key)}\hfill\tikzmark{b}%
    }
    \label{line:release-end}		
    \textit{Asynchronously commit \txn{} if $\PLV \geq \txn{}.\LV$ and all transactions in $L_i$ with smaller 
LSNs have committed}\;\label{line:commit-end}
}
\myfun{WriteLogBuffer(logRecord, recordSize)}{ \label{lines:write-buffer-begin}
    \textit{$L_i$.allocatedLSN$[j]$ = $L_i$.\loglsn}\; 
    \label{lines:allocate-greater-than-filled-begin}
    \textit{lsn = AtomicFetchAndAdd($L_i$.\loglsn, recordSize)}\; \label{line:atom-fetch-and-add}
    \textit{memcpy($L_i$.logBuffer + lsn, logRecord, recordSize)}\; \label{line:memcpy}
    \textit{L$_i$.filledlSN$[j]$ = lsn + recordSize}\; \label{line:assign-filledlsn} 
\label{lines:allocate-greater-than-filled-end}
    \Return{lsn + recordSize} \label{lines:write-buffer-end}
}
}
\end{algorithm}

\myhighlightpar{To demonstrate how \name{} tracks dependencies, we use the 
example 
in \cref{fig:working-thread-example} with three transactions ($\txn{1}$, $\txn{2}$, $\txn{3}$) 
and two database objects A,B.
WLOG, we assume $\txn{1}$ and $\txn{2}$ are assigned to Log 1 and $\txn{3}$ is assigned to Log 2. 
In the beginning, A has a \writeLV [4,2] and a \readLV [3,7] while object B has [8,6] and 
[5,11]. 
\textbf{\circled{1}} The DBMS initializes the transactions' \LV{s} as [0,0].
\textbf{\circled{2}} $\txn{1}$ acquires an exclusive lock on A and writes to it. Then, $\txn{1}$ updates 
$\txn{1}.\LV$ to be the element-wise maximum among A.\writeLV, A.\readLV, and \txn{1}.\LV.
In this example, $\txn{1}.\LV$=[max(4,3,0), max(2,7,0)] = [4,7]. Enforcing these partial orders
enforces \conflictWAR and \conflictWAW dependencies. Namely, any previous transactions that ever 
read or wrote A will have an \LV no greater than $\txn{1}.\LV$.
\textbf{\circled{3}} $\txn{1}$ acquires a shared lock on B and then reads it. Then, $\txn{1}$ updates $\txn{1}.\LV$ 
to be the element-wise maximum among $B.\writeLV$ and $\txn{1}.\LV$. This is to track \conflictRAW 
dependencies. Now $\txn{1}.\LV$=[max(8, 4), max(6, 7)] = [8,7].
\textbf{\circled{4}} $\txn{2}$ wants to read A but has to wait for $\txn{1}$ to release the lock.
\textbf{\circled{5}} Similarly, $\txn{3}$ wants to write B but has to wait as well.
\textbf{\circled{6}} After $\txn{1}$ finishes, $\txn{1}$ writes its redo record and a copy of $\txn{1}.\LV$ into 
the log buffer. After successfully writing to the buffer, $\txn{1}$ learns 
its LSN in Log 1 is 16. Then, $\txn{1}$ updates the first dimension of $\txn{1}.\LV$ to be 16, resulting in 
$\txn{1}.\LV$=[16,7].
\textbf{\circled{7}} For each item $\txn{1}$ accessed, $\txn{1}$ updates the \readLV (or \writeLV) 
accordingly. $\txn{1}$ updates
$A.\writeLV=\elementWiseMax \allowbreak (A.\writeLV, \txn{1}.\LV) = \txn{1}.LV$ = [16,7], and 
$B.\readLV =\elementWiseMax \allowbreak (B.\readLV, \txn{1}.\LV)$ = [16,11]. 
Then, $\txn{1}$ releases the locks. After this, $\txn{1}$ waits for itself and all the transactions it depends 
on to become persistent, equivalently, $\PLV \geq \txn{1}.\LV$. The workers can process other transactions, 
and periodically check if $\txn{1}$ should be marked as committed.
\textbf{\circled{8}} $\txn{2}$ acquires the shared lock on A. $\txn{2}$ then updates 
$\txn{2}.\LV$=\elementWiseMax(\txn{2}.\LV, A.\writeLV) = [16,7]. This update 
enforces the partial order that $\txn{1}.\LV \leq \txn{2}.\LV$ because $\txn{2}$ depends on $\txn{1}$. Since $\txn{2}$ is 
read-only, it does not create a log record. It also enters the asynchronous commit by 
waiting for $\PLV \geq \txn{2}.\LV$.
\textbf{\circled{9}} $\txn{3}$ acquires an exclusive lock on B and 
updates 
$\txn{3}.\LV=$ $\elementWiseMax \allowbreak ($ $\txn{3}.\LV,$ $B.\readLV,$ $B.\writeLV)$ = [16,11]. The fact that $\txn{3}$ depends on 
$\txn{1}$ reflects on $\txn{3}.\LV \geq \txn{1}.\LV$.
\textbf{\circled{10}} The logging threads have flushed all transactions before $\txn{1}.\LV=\txn{2}.\LV$ = [16,7] 
and updated \PLV. Observing $\PLV\geq$ [16,7], \name{} marks $\txn{1}$ and $\txn{2}$ 
as committed.
\textbf{\circled{11}} $\txn{3}$ writes its redo record and a copy of $\txn{3}.\LV$ to the buffer of Log 2, 
and gets its LSN as 21. $\txn{3}.\LV$ increases to [16,21].
\textbf{\circled{12}} $\txn{3}$ sets B.\writeLV to [16, 21] and releases the lock.
\textbf{\circled{13}} When \PLV achieves $\txn{3}.\LV=[16, 21]$, \name{} commits $\txn{3}$.}
\vspace*{-0.1in} \\

\begin{figure}
    \includegraphics[scale=0.4]{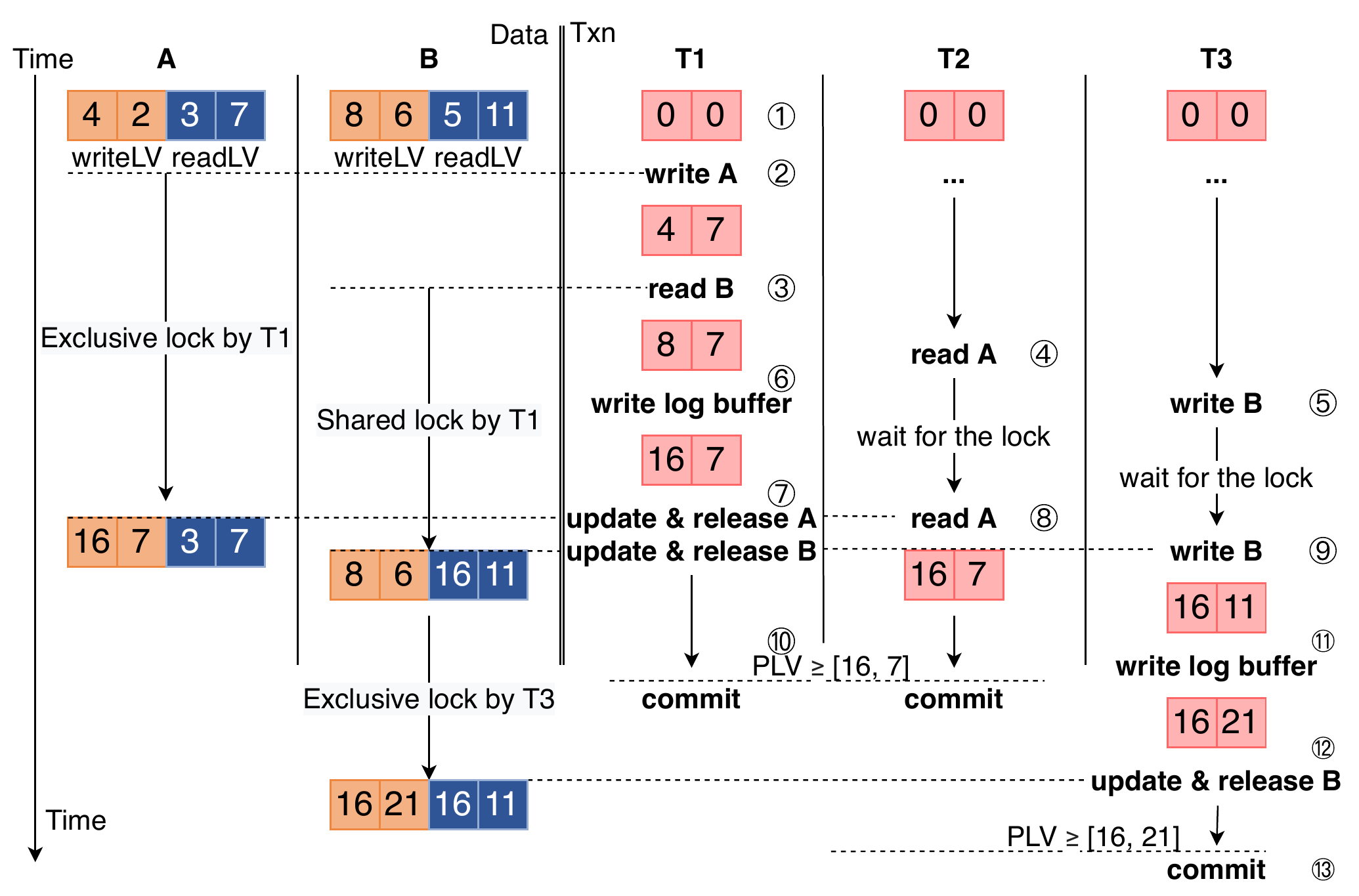}
    \caption{\myhighlightpar{\textbf{Worker Thread Example} -- Three transactions ($\txn{1}$, $\txn{2}$, and 
    \txn{3}) are accessing two objects A and B. Transactions are logged to two files. The diagram is 
    drawn in the time order with the axis on the left.}}
    \label{fig:working-thread-example}
\end{figure}

\begin{algorithm}[t!]
    \caption{\small
        \textbf{Log Manager Thread $L_i$}
    }\algoSize
    \label{alg:log-manager-logging}
    {    	\textit{readyLSN = $L_i$.\loglsn{}}\; \label{line:init-readylsn}
    	\ForEach{worker thread j that maps to $L_i$}{ \label{line:state-check-begin}
    		\nonl\codeComment{We assume $allocatedLSN[j]$ and $filledLSN[j]$ are fetched together atomically}\;
    		\If{$allocatedLSN[j]$ $\geq$ $filledLSN[j]$}{ 
    			\textit{readyLSN = min(readyLSN, $allocatedLSN[j]$)} \label{line:state-check-end}
    		}
    	}
    	\textit{flush the buffer up to readyLSN}\; \label{lines:try-flush-end}
    	\textit{$\PLV[i]$ = readyLSN}\; \label{lines:update-readylsn}
 }
\end{algorithm}

\textbf{Log Manager Threads:}
We use a dedicated thread serving as the log manager for each log file. The main job of the log 
manager is to flush the contents in the log buffer into the file on disk. It periodically invokes 
\cref{alg:log-manager-logging} when a timeout period has 
passed or when the buffer is half full, whichever happens first. The algorithm
identifies up to which point the DBMS can flush to the disk so that it does not flush data 
that active transactions are still processing.

\name{} uses two arrays, \textit{allocatedLSN} and \textit{filledLSN}, to achieve this goal. 
\textit{readyLSN} is the log buffer position up to which the DBMS can safely flush; 
its initial value is \loglsn{}$[i]$ (\cref{line:init-readylsn}). For each worker thread $j$ that 
belongs to $L_i$, if \textit{allocatedLSN$[j]$} $\geq$\\ \textit{filledLSN$[j]$}, then the transaction 
in thread $j$ is filling the log buffer at a position after \textit{allocatedLSN$[j]$} 
(\cref{alg:logging}, \cref{lines:allocate-greater-than-filled-begin} and 
\cref{lines:allocate-greater-than-filled-end}), so \textit{readyLSN} should not be greater than 
\textit{allocatedLSN$[j]$}. Otherwise, no transaction in worker $j$ is filling the log buffer, so 
\textit{readyLSN} is not changed (Lines~\ref{line:state-check-begin}--\ref{line:state-check-end}). 
Lastly, the log manager flushes the buffer to the disk up to \textit{readyLSN} and updates 
$\PLV[i]$ (Lines~\ref{lines:try-flush-end}--\ref{lines:update-readylsn}). 

\myhighlightpar{The frequency that the DBMS flushes log records to disk is based on the performance profile of the storage devices. Although each flush might enable a number of transactions to commit, transactions in the same log file still commit in a sequential order. This removes ambiguity of transaction dependency during recovery. Sequential committing will not affect scalability because ELR prevents transactions waiting for log record duration or sequential committing from being on the critical path.
}

\subsection{Recovery Operations}
\label{sec:taurus-recovery}
\myhighlightpar{
\name{}' recovery algorithm replays transactions following the partial orders between 
their \LVs, which is sufficient to respect all the data dependencies. This is equivalent to 
performing topological sorting in parallel on a dependency graph.
Each log manager thread reads log records from a file, and the worker threads recover 
transactions by re-applying the log records.}
\vspace*{-0.1in} \\

\textbf{Data Structures:}
The recovery process contains the following:

\squishitemize
    \item \myitem{\textit{L.pool}} --
    For each log manager, \textit{pool} is a queue containing transactions 
    that are read from the log but not recovered.

    \item \myitem{\textit{L.maxLSN}} -- 
    For each log manager, \textit{maxLSN} is the LSN of the latest 
    transaction that has been read from the log file.

    \item \myitem{\textit{Global.\RLV}} --
    \RLV is a vector of length $n$ ($n$ is the number of log managers). 
    An element \RLV{}$_i$ means that all transactions mapping to $L_i$ with $\textit{LSN} \leq 
    \RLV{}_i$ have been successfully recovered. Therefore, a transaction $\txn{}$ can start its recovery 
    if $\txn{}.\LV \leq \RLV$, at which point all transactions that $\txn{}$ depends on have been recovered. 
    Initially, \RLV is a vector of zeroes.

    \item \myitem{\textit{Global.\ELV}} --
    \ELV is a vector of length $n$. An element \ELV{}$_i$ is the number 
    of bytes in \logger{i}. The DBMS uses this vector to determine if a transaction committed 
    before the crash.
     \myhighlightpar{Before the recovery starts, \name{} fetches the sizes of the 
     log files to initialize \ELV, namely, $\ELV[i]$ is the size of \logger{i}.}
\squishend

\begin{algorithm}[t!]
    \caption{\small\textbf{Log Manager Recovery for Thread $L_i$.}}
    \label{alg:log-manager-recovery}
    {\algoSize \SetKwProg{myfun}{Function}{}{}
    \While{\txn{} = $L_i$.DecodeNext() \textbf{and} \txn{}.\LV $\leq$ \ELV}{ \label{line:decode-next}
        \textit{pool.Enqueue(\txn{})}\; \label{line:enqueue}
        \textit{pool.maxLSN = \txn{}.LSN}\; \label{line:update-maxlsn}
    } \label{line:run-end}}
\end{algorithm}

\textbf{Log Manager Threads:}
In \cref{alg:log-manager-recovery}, the thread reads the log file and decodes 
records into transactions (\cref{line:decode-next}). For a transaction $\txn{}$, if $\txn{}.LV \leq \ELV$, 
then $\txn{}$ committed before the crash and is therefore considered for recovery; 
otherwise, $\txn{}$ and transactions after it are ignored for recovery. A transaction is enqueued 
into the tail of $pool$ and the value of \textit{maxLSN} is updated to be the LSN of \txn{} 
(Lines~\ref{line:enqueue}--\ref{line:update-maxlsn}). It is important that the thread updates 
\textit{maxLSN} \underline{after} it executes \textit{Enqueue}, otherwise the DBMS may 
recover transactions in an incorrect order.
\myhighlightpar{If the pool is empty after the DBMS updates \textit{maxLSN} but before it enqueues 
$\txn{}$, then it sets \RLV[i]=$\txn{}.LSN$ to indicate that $\txn{}$ is recovered; this prevents another 
worker from recovering a transaction that depends on $\txn{}$ before $\txn{}$ is recovered.}
\vspace*{-0.1in} \\

\begin{algorithm}[t!]
    \caption{\small\textbf{Worker Recovery Thread}}
    \label{alg:recovery-worker}
    {\algoSize \While{\textbf{not} IsRecoveryDone()}{ \label{line:isrecoverydone}
    \nonl\codeComment{FetchNext atomically dequeues a transaction $\txn{}$ such that $\txn{}.\LV \leq \RLV$}\;
    \textit{\txn{} = pool.FetchNext(\RLV)}\; \label{line:pool-fetch-next}
    \textit{Recover(\txn{})}\; \label{line:recover-txn}
    \DrawBox{c}{d}
    \tikzmark{c}
    \eIf(\hfill \codeComment{Atomic Section}){pool is empty}{ \label{line:rlv-update-begin}
        \textit{RLV$[i]$ = Max(RLV$[i]$, pool.maxLSN)}\; 
        \label{line:update-rlv-by-maxlsn}
    }{
        \textit{RLV$[i]$ = Max(RLV$[i]$, pool.head.LSN - 1)}%
        \label{line:update-rlv-by-head}\hfill \tikzmark{d} 
    }
    \label{line:rlv-update-end}
}
}
\end{algorithm}

\textbf{Worker Threads:}
In \cref{alg:recovery-worker}, the worker threads keep executing until the log manager finishes 
decoding all the transactions and the pool is empty. A worker thread tries to get a transaction $\txn{}$ 
from $pool$ such that $\txn{}.\LV \! \leq \! \RLV$ (\cref{line:pool-fetch-next}). Then, the worker 
thread 
recovers $\txn{}$ (\cref{line:recover-txn}). For data logging, the data elements in the log record are 
copied to the database; for command logging, the transaction is re-executed. During the 
re-execution, no concurrency control algorithm is needed, since \name guarantees no conflicts during 
recovery. Then, $\RLV[i]$ is updated (Lines~\ref{line:rlv-update-begin}-\ref{line:rlv-update-end}).
\myhighlightpar{If \textit{pool} is empty, then the thread sets $\RLV{}[i]$ to 
$\textit{pool.maxLSN}$, the largest LSN of any transaction added to \textit{pool}, if it is larger;} 
otherwise, 
$\RLV{}[i]$ is set to 
one less than the first transaction's LSN, indicating that the previous transaction has been 
recovered but not the one blocking the head of \textit{pool}. In the pseudo-code, the 
code for \RLV update is protected with an atomic section for correctness. We 
use a lock-free design to avoid this critical section in our implementation.

The \textit{pool} data structure described above can become a potential scalability bottleneck if a 
large number of workers are mapped to a single log manager. There are additional optimizations that 
address this issue. For example, we partition each 
$pool$ into multiple queues. We also split \RLV into local 
copies and add delegations to reduce false sharing in CPU caches. 

\subsection{\myhighlight{Supporting Index Operations}}
\label{sec:index-ops}
\myhighlightpar{
Although our discussion has focused on \emph{read} and \emph{update} operations, \name can also 
support \emph{scan}, \emph{insert}, and \emph{delete} operations with an additional index locking 
protocol. %
}

\myhighlightpar{For a range scan, the transaction (atomically) fetches a shared lock on each of the 
result rows using the \textit{Lock} function in \cref{alg:logging}. When the transaction commits, it 
goes through the \textit{Commit} function and update the \readLV's of the rows. To avoid phantoms, 
the transaction performs the same scan again before releasing the locks in \textit{Commit} function. 
If the result rows are different, some other transactions have inserted or deleted rows within the 
scan range, we abort the transaction. This scan-twice trick is from Silo \cite{tu13}. We notice 
that, assuming 2PL, the transaction only needs to record the number of rows returned. During the 
second scan, it's guaranteed that the rows in the previous scan still exist because shared locks 
are held by the transaction. Therefore, if the row count is still the same, the result rows are not 
changed. %
}

\myhighlightpar{If a transaction $\txn{}$ inserts a row with primary key \textit{key}, it initializes 
\textit{DB[key].\readLV} and \textit{DB[key].\writeLV} to be {0}. Because the index for 
\textit{DB[key]} is not updated yet, other transactions will not see the new row. 
In \textit{Commit} function after $\txn{}$ releases the locks, it updates 
\textit{DB[key].\writeLV = \txn{}.LV}.
Finally, $\txn{}$ inserts \textit{key} into the index.}

\myhighlightpar{When a transaction $\txn{}$ deletes a row with primary key \textit{key}, it first grabs 
an exclusive lock of the row. And updates $\txn{}.\LV=ElemWiseMax \allowbreak (\txn{}.\LV, DB[key].readLV, DB[key].writeLV)$. Any 
other transaction trying to access this row will abort due to lock conflicts. In 
the \textit{Commit} function before $\txn{}$ releases the locks, it removes \textit{key} from the index.
}

\subsection{\myhighlight{Limitations of \codename}}
\label{sec:limitation}
We now discuss the limitations of \name's design and potential ways to mitigate them.

One potential problem is that the size of \LV is proportional to the number of log 
managers. For a large number of log managers, the computation and storage overhead of \LV will 
increase. In contrast, serial logging maintains a single LSN and therefore avoids this problem. 
Although we believe most DBMSs will use a relatively small number of log files and thus this 
overhead is acceptable, \name can also leverage \LV compression (\cref{sec:lv-compression}) 
and SIMD instructions (\cref{sec:eval-log-num}) to partially resolve this issue. If necessary, a 
dependency-aware transaction-to-log mapping mechanism can also potentially reduce inter-log 
dependencies.

Another limitation of \name is the amount of parallelism during recovery for workloads with high 
contention. For these workloads, the inherent recovery parallelism can be lower than the number of 
log managers. During recovery, a large number of inter-log dependencies will exist. In \name, the 
dependencies propagate through \RLV (\cref{alg:recovery-worker}), which leads to inter-thread 
communication, incurring relatively long latency between the recovery of dependent transactions. In 
contrast, a serial recovery scheme has no delay between consecutive transactions and, therefore, may 
deliver better performance. To address this, when the contention is high, \name will degrade to 
using serial recovery. Specifically, a single worker recovers all the transactions sequentially. The 
worker checks every \textit{pool} of log managers and recovers the transaction that satisfies $\txn{}.\LV 
\leq \RLV$; this approach incurs no delay between two consecutive transactions. We evaluate 
this aspect in \cref{sec:eval-sens}.

During recovery, if the pool size is large and contention is high, workers might need to scan 
the whole pool to find the next transaction that is ready to be recovered. Heuristic optimizations 
like zig-zag scans could help. We defer the problem of developing a data 
structure specialized for \name{} recovery to future work.

\section{Optimizations and Extensions}
\label{sec:discussion}

We now discuss optimizations to reduce \name' \LV storage overhead and computational overhead, and two 
extensions to support Optimistic Concurrency Control (OCC) and MVCC.

\subsection{Optimization: LV Compression}
\label{sec:lv-compression}

The design of \name as described in \cref{sec:protocol} has two issues: (1) the DBMS 
stores \readLV{} and \writeLV{} for every tuple, which changes the data 
layout and incurs extra storage overhead; (2) the transaction's \LV is stored for each log record, 
which can significantly increase the log size especially for command logging where each log record 
is relatively small. We describe optimizations that address these problems. 
\vspace*{-0.1in} \\

\textbf{Tuple LV Compression:}
To reduce a tuple's \LV storage, we observe that keeping \LVs for tuples that were accessed a long 
time ago is unnecessary. The \LVs in these tuples are too small to affect active transactions. 
This optimization thus stores \LVs only for active tuples in the lock table.
For these tuples, transactions operate on their \LVs following 
\cref{alg:logging}. If the DBMS inserts a  tuple into the lock table, then the algorithm assigns its 
\readLV and \writeLV to be the current \PLV. The system can evict a tuple from the 
lock table if no transactions hold locks on it and both its \readLV and \writeLV are not 
greater than the current \PLV. 

For the tuples previously evicted from the lock table and later inserted back, the optimization 
increases the \readLV and \writeLV of these tuples and also the \LVs of transactions accessing them. 
This modification makes transactions depend on more transactions than before. %
To make the trade-off between higher compression ratio and fewer 
artificial dependencies, we introduce a new parameter $\delta$ and evict a tuple from the lock table 
only if $\forall i, \PLV{}[i] - \LV{}[i] \geq \delta$ is true for both \readLV and \writeLV. 
Accordingly, a newly inserted tuple will have $\readLV{}[i] = \writeLV{}[i] = \PLV{}[i] - \delta$. 
Larger $\delta$ means fewer artificial dependencies, but more tuples will stay in the lock table 
waiting for eviction, and vice versa.
\vspace*{-0.1in} \\

\begin{algorithm}[t!]
    \caption{\small
        \textbf{\LV Compression for Log Records} --
        Each log manager $L_i$ periodically calls \textit{FlushPLV}. \
        The worker threads mapped to $L_i$ call \textit{Compress} and \textit{Decompress}. 
    }
    \label{alg:log-lsn-compression}
    {\algoSize     \SetKwProg{myfun}{Function}{}{}	
    \myfun{FlushPLV()}{ \label{line:flush-plv-start}
        \textit{currentPLV = Global.\PLV}\;
        \textit{logBuffer.append(currentPLV)}\;
        \textit{\LPLV = currentPLV}\; \label{line:assign-lplv}
    } \label{line:flush-plv-end}

    \myfun{Compress(\LV)}{
        \textit{compressedLV = \LV}\;  \label{line:compress-start}
        \ForEach{\LV{}$[j]$ $\in$ \LV}{
            \If{\LV{}$[j] \leq L_i.\LPLV{}[j]$}{ \label{line:compress-omit-start}
                \textit{compressedLV$[j]$ = NaN}\;
            } \label{line:compress-omit-end}
        }
        \Return{compressedLV}\;
    }

    \myfun{Decompress(compressedLV)}{
        \textit{\LV = compressedLV}\; \label{line:rebuild-start}
        \ForEach{LV$[j]$ $\in$ \LV}{
            \If{LV$[j]$ = NaN}{ \label{line:rebuild-lv-start}
                \textit{LV$[j]$ = $L_i$.LPLV$[j]$}\;
            } \label{line:rebuild-lv-end}
        }
        \Return{LV}\;
    }}
\end{algorithm}

\textbf{Log Record LV Compression:}
We next address the second issue that each log record must store the \LV of the transaction. We 
propose an optimization where each log record stores only a few but not all dimensions of a 
transaction's \LV. The motivating insight is that for workloads with low to medium contention, most 
dimensions of a log record's \LV are too small to be interesting. For example, suppose that a transaction $\txn{}$ 
depends on a committed transaction $\txn{}'$. It is not critical 
to remember precisely which $\txn{}'$ that $\txn{}$ depends on, but only that $\txn{}$ depends on some transaction 
that happened before a specific point in time. Therefore, we can set anchor points (in the form of 
\LVs) into each log such that, if $\txn{}$ depends on only transactions before an anchor point, it 
stores the anchor point instead of the detailed \LV.

In \cref{alg:log-lsn-compression}, we introduce a variable \LPLV as the anchor point. 
\textbf{\textit{L.\LPLV}} is an LSN Vector that is maintained by each log manager $L$. It keeps a 
copy of the most recent \PLV written into $L$'s log buffer. Periodically, the log manager appends 
\PLV into the log buffer and updates $L.\LPLV$ 
(Lines~\ref{line:flush-plv-start}--\ref{line:flush-plv-end}). 

To compress a transaction $\txn{}$'s \LV, we check for every dimension if $\txn{}.\LV$ is no greater than $L_i.\LPLV$. 
If this is true for dimension $j$, namely, $\txn{}.\LV{}[j] \leq L_i.\LPLV{}[j]$, we 
can artificially increase $\txn{}.\LV{}[j]$ to $L_i.\LPLV{}[j]$. Since $L_i.\LPLV$ is already in the buffer, 
the system no longer needs to store $\txn{}.\LV{}[j]$ 
(Lines~\ref{line:compress-start}--\ref{line:compress-omit-end}). During recovery, the DBMS performs the 
opposite operation; if the $j$-th dimension of an \LV was compressed, it replaces it with 
the value of $\LPLV{}[j]$ (Lines~\ref{line:rebuild-start}--\ref{line:rebuild-lv-end}). If it reads an 
anchor from the log, it updates \LPLV.

\begin{figure}
    \centering
    \subfloat[Logging]{
        \adjincludegraphics[width=0.52\columnwidth, trim={0 {0.05\height} {0.0\width} {0.05\height}}, clip]{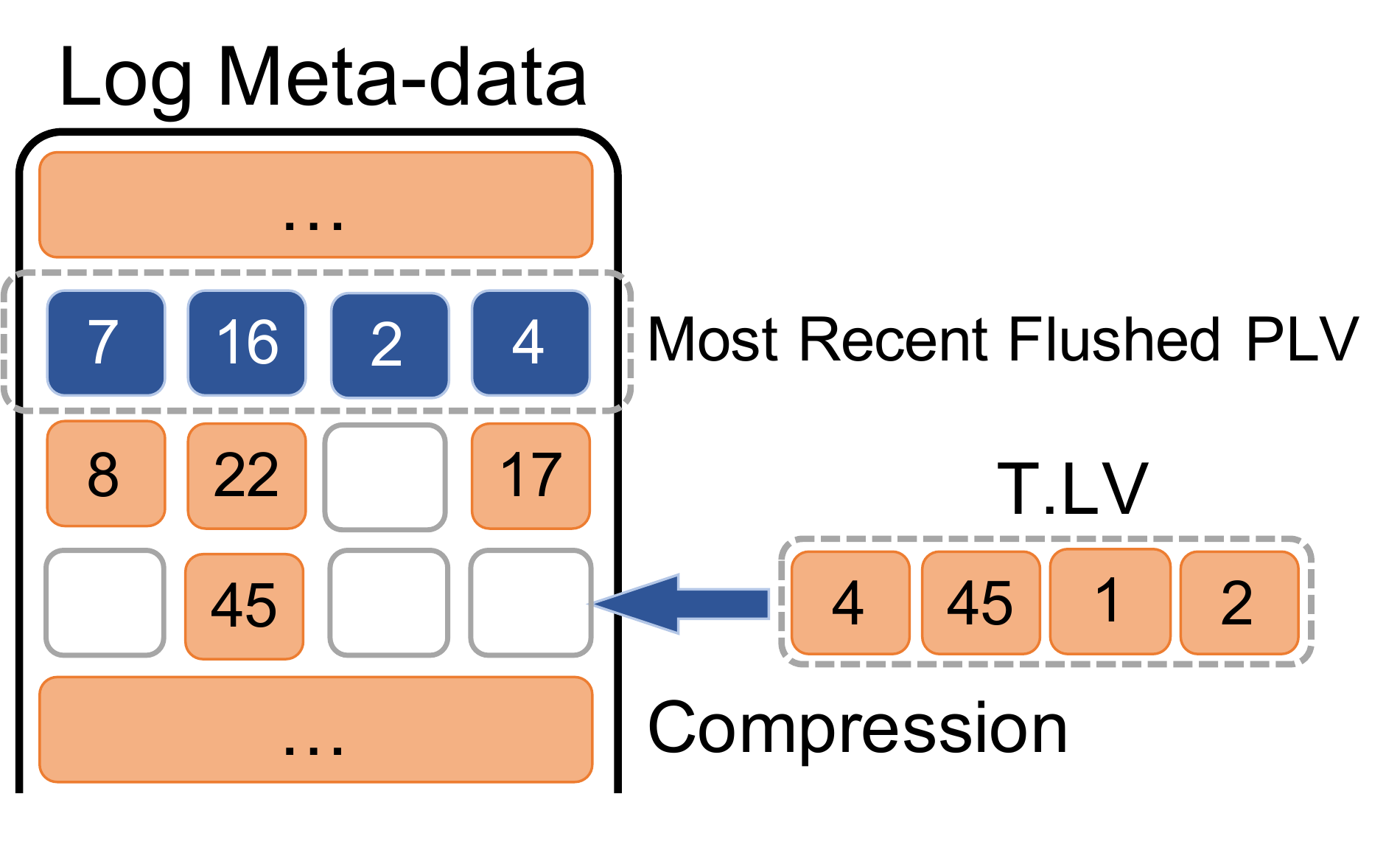}
        \label{fig:lv-compression}
    }\hspace*{-0.1in} 
    \subfloat[Recovery]{
        \adjincludegraphics[width=0.52\columnwidth, trim={{0.0\width} {0.05\height} 0 {0.05\height}}, clip]{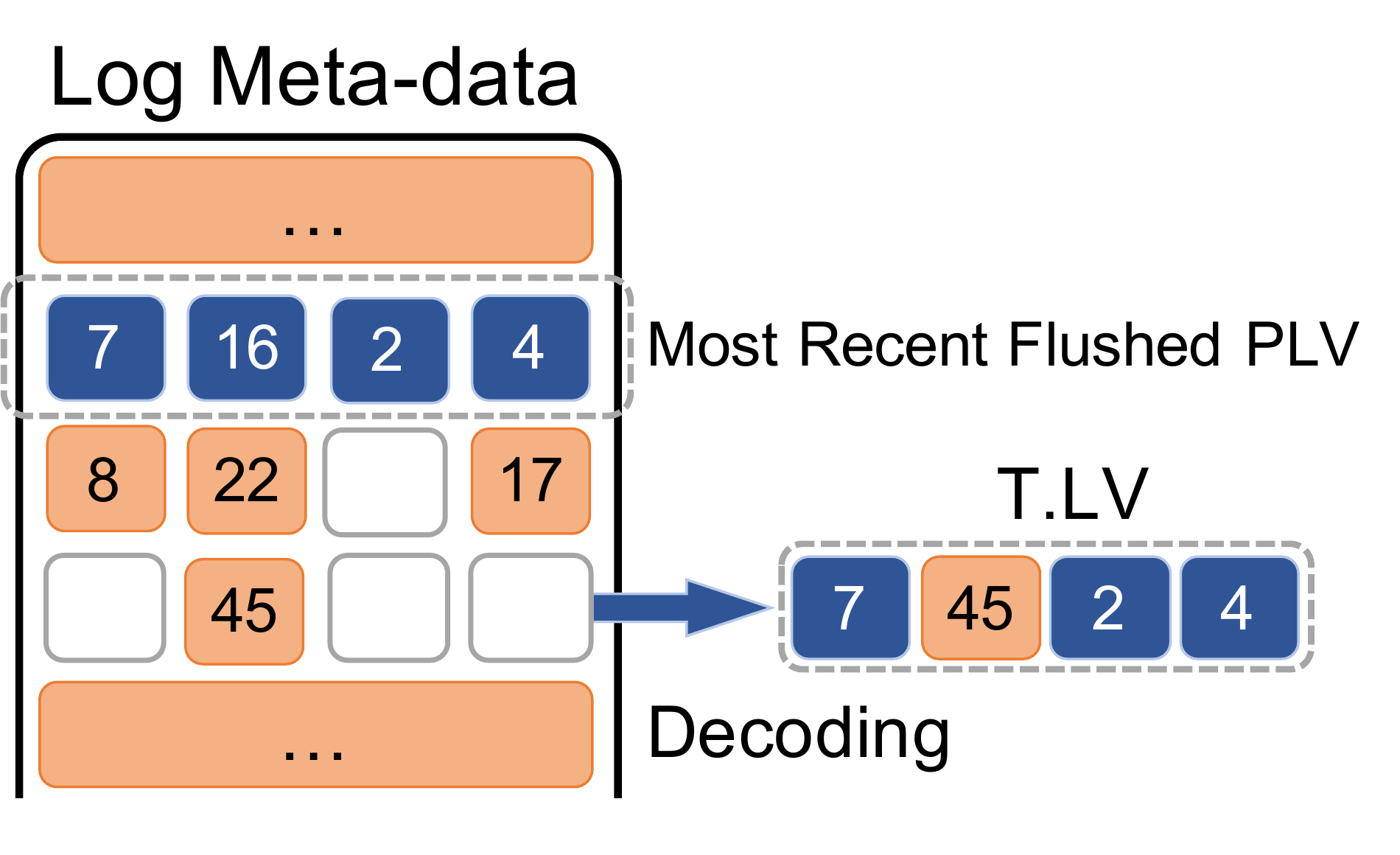}
        \label{fig:lv-decompression}
    }
    \caption{
        \textbf{LV Compression} -- 
        Example of \name's compression method.
    }
    \label{fig:lsn-vector-compression-example}
\end{figure}

\cref{fig:lsn-vector-compression-example} shows an example of \LV compression. In 
\cref{fig:lv-compression}, transaction $\txn{}$'s \LV = [4, 45, 1, 2] is written to the 
log. The system 
compares it against \LPLV and finds that $\txn{}.$\LV has only one dimension (the 2nd dimension with 
value 45) greater than \LPLV. Only the 2nd dimension is written into the log. During recovery, 
\cref{fig:lv-decompression} shows that \name fills in the blanks with the most recently seen anchor, 
\LPLV = [7, 16, 2, 4]. The compressed \LV is decoded into [7, 45, 2, 4]. Note that the 1$^{st}$, 
3$^{rd}$, and 4$^{th}$ dimension of the decompressed \LV are greater than the original $\txn{}.\LV$.

The frequency of \LPLV flushing makes a trade-off between parallelism in recovery and \LV 
compression 
ratio. When the frequency is high, a dimension of \LV is smaller than \LPLV 
and thus it enables better compression, but some amount of recovery parallelism is sacrificed since 
the decompressed \LVs have larger values.

\subsection{Optimization: Vectorization}
\label{sec:vectorization}
\myhighlightpar{
The logging overhead mainly consists of four parts: (1) the overhead introduced 
by \name where we calculate \LVs and move them around; (2) the overhead of creating the 
log records and writing them to the in-memory log buffer; (3) for lock-based concurrency control 
algorithms, the extra latency caused by (1) and (2) will result in extra lock contention; (4) the time cost in persisting the log records to the disk.
Among 
them, (4) is moved off the critical path by ELR; (2) and (3) are shared by essentially all the write-ahead logging algorithms. }
\myhighlightpar{
These overheads will not block the DBMS from scaling up. Overheads (1) and 
(2) are linear in the number of total transactions executed.
Overhead (1) is also related to the number of log files. If the system is 
writing to many log files and the transactions have a short execution time, it is 
up to {13.8\%} of the total execution time if implemented naively.  We can exploit the data parallelism in the LSN Vector as the values 
within a single vector are processed independently. Modern CPUs provide SIMD extensions 
that allow the DBMS to process multiple vector elements items in a single instruction. For example, the instruction \texttt{\_mm512\_max\_epu32} can compute the element-wise maximum of two vectors of 16 32-bit integers. In \cref{sec:eval-log-num}, we show that switching to vectorized operations reduces \name' overhead by 89.5\%.
}

\subsection{Extension: Support for OCC}
\label{sec:protocol-occ-variant}
Our overview of \name thus far assumes that the DBMS uses 2PL. 
\name is also compatible with other schemes. We next discuss how \name can support
Optimistic Concurrency Control (OCC)~\cite{kung1981optimistic}.

\cref{alg:life-cycle-occ} shows the protocol for a worker thread. Different from a 2PL protocol 
(\cref{alg:logging}), an OCC transaction calls \textit{Access} when accessing a tuple and 
\textit{Commit} after finishing execution. The \readset and \writeset are maintained by the 
\textit{read/write} functions in the conventional OCC algorithm, from which \textit{Access} is 
called. In the \textit{Access} function, the transaction atomically reads the value, 
\readLV, \writeLV, and potentially other auxiliary data. Commonly seen in OCC algorithms, 
the \textit{ValidateSuccess} function returns true
if the values in the \readset are not modified by other transactions.
The atomicity is guaranteed through a 
latch, or by reading a version number twice before and after reading the value~\cite{tu13}.

\begin{algorithm}[t!]
    \caption{\small
        \textbf{OCC Logging for Worker Threads}
    }
    \label{alg:life-cycle-occ}
    {\algoSize \SetKwProg{myfun}{Function}{}{}
	\myfun{Access(key, \txn{})}{
		\textit{value, \readLV, \writeLV = load(key)} \codeComment{load atomically}\;
        \textit{\txn{}.\LV = \elementWiseMax(\txn{}.\LV, \writeLV)}\;
		\Return{value}
	}
	\myfun{Commit(\txn{})}{
		\For{key $\in$ sorted(\txn{}.\writeset)}{	\label{line:lock-ws-start}		
			\textit{DB[key].lock()}\;\label{line:lock-ws-end}
		}

		\For{key $\in$ \txn{}.\readset}{
			\ForEach{dimension i of \LV}{ \label{line:commit-update-readlv-begin}
				\DrawBox{e}{f}
				\tikzmark{e}
				\If(\hfill\codeComment{Atomic}){$DB[key].\readLV{}[i]$ < $\txn{}.LV[i]$}{
					$DB[key].\readLV{}[i]$ = $\txn{}.LV[i]$ \hfill \tikzmark{f}
				}
				\label{line:commit-update-readlv-end}
			}
		}

		\If{\textbf{not} ValidateSuccess())}{	\label{line:validation}
			\textit{Abort(\txn{})}\;
		}
		\label{line:occ-right-after-validation}
		\textit{Create log record and write to log buffer similar to Lines~\ref{line:create-log-record}--\ref{line:lsn-update} in \cref{alg:logging}}\; \label{line:occ-log-buffer}

		\For{key $\in$ \txn{}.\writeset}{
			\textit{DB[key].\writeLV = \elementWiseMax(DB[key].\writeLV, \txn{}.\LV)}\;		
			\textit{DB[key].release()}\;
		}
		\textit{Asynchronously commit \txn{} if $\PLV \geq \txn{}.\LV$ and all transactions in $L_i$ with smaller LSNs have committed}\;
	}
}
\end{algorithm}

For high concurrency, we choose a reader-lock-free design of the \textit{Commit} function. 
The transaction first locks all the tuples in the \writeset 
(Lines~\ref{line:lock-ws-start}--\ref{line:lock-ws-end}). Before validating the \readset 
(\cref{line:validation}), it updates the \readLV of tuples in the \readset one dimension at 
a time (Lines~\ref{line:commit-update-readlv-begin}--\ref{line:commit-update-readlv-end}). Each update 
happens atomically using compare-and-swap instructions. This is 
necessary because the data item might appear in the \readset of multiple transactions, and 
concurrent updates of \readLV might cause loss of data. The reason that the \readLV extension must 
occur before the validation is to enforce write-after-read dependencies. To see a failure example, 
consider a transaction $\txn{1}$ modifying the data after $\txn{2}$'s validation but before $\txn{2}$s updates on 
\readLV. Then, it is possible that $\txn{1}$ does not observe the latest \readLV and therefore fails to 
capture the write-after-read dependency to $\txn{2}$. Note that updating \readLV before the validation 
might result in extra non-existing dependencies (i.e., \LVs larger than necessary) where the 
transaction aborts later in the validation but has already updated the \readLV of some data tuples. Such 
aborts only affect performance but not correctness. The design of log managers stays the same as in 
\cref{alg:log-manager-logging}.

\subsection{Extension: Multi-Versioning}
\label{sec:work-mvcc}

\myhighlightpar{We next discuss how \name{} works with MVCC. 
We assume the recovery process also uses multi-versions. Otherwise, the DBMS has to reorder 
the transactions either by changing already persistent data or appending extra information. 
Concurrency control algorithms based 
on logical timestamps allow physically late transactions to access versions early and commit 
transactions logically early. The DBMS, however, creates log records and flushes them in the 
physical time order. 
Solving the decoupled order requires extra design. Allowing multi-versions in the recovery process 
relaxes the decoupling by allowing physically late transactions to commit logically early in the 
recovery. This assumption frees \name{} from tracking the \conflictWAR dependencies because the read 
operation can still fetch the correct historic version even after the tuple has been modified. 
Therefore, \name{} only needs to track \conflictWAW and \conflictRAW dependencies. Different from 
\cref{sec:protocol-lv}, \name{} for MVCC only adds a single metadata field for the data versions, 
the LSN Vector \LV. Our 
discussion is based on the MVCC scheme \cite{larson11} used in Hekaton~\cite{diaconu13}. The 
algorithm adds three extra fields to the data version tuples, namely, Begin Timestamp, End 
Timestamp, and a hash pointer. 
}

\myhighlightpar{Whenever a transaction reads a data version $v$, the transaction updates $\txn{}.\LV$ to 
be $\elementWiseMax(\txn{}.\LV, v.\LV)$ to catch \conflictRAW dependencies. When a transaction updates 
the data by adding a new data version $v$ after the old version $u$ during normal processing phase, 
it first updates the timestamps as in MVCC, then it updates $\txn{}.\LV$ to be 
$\elementWiseMax(\txn{}.\LV,$ $u.\LV)$, and $v.\LV$ to be empty.}

\myhighlightpar{In the postprocessing phase, if the transaction $\txn{}$ commits, before it replaces its 
transaction ID with its end timestamp, it iterates data versions in the {\writeset}. For a 
data version $v$ in the {\writeset}, it replaces $v.\LV$ to be $\txn{}.\LV$. The log records of $\txn{}$ contains $\txn{}.\LV$ and the commit timestamp of $\txn{}$. The former identifies whether $\txn{}$ should recover and the recovery order, and the latter determines the visible version when reading the data as well as the logical timestamp of the new versions when writing the data.}

\myhighlightpar{During recovery, \cref{alg:log-manager-recovery} and \cref{alg:recovery-worker} are 
executed. Only the visible version is returned for read operations. Whenever a write happens, 
the transaction writes a new version with the commit timestamp. Different from MVCC, 
transactions no longer acquire locks during recovery because \name guarantees no conflicts will 
occur. Without \name{}, the log records described in \cite{larson11} contain 
only the payload and the logical timestamps, enforcing a total order among transactions. \name{} 
exploits the parallelism to recover non-conflicting transactions in parallel.}

\section{Evaluation}
\label{sec:evaluation}
\myhighlightpar{We implemented \name in the DBx1000 in-memory DBMS~\cite{dbx1000} to evaluate its 
performance.
We include both the 2PL and OCC variants of \name{} in the evaluation.}
\myhighlightpar{We evaluate the DBMS on three storage devices: (2) NVMe SSDs, (2) hard 
drives (HDDs), and (3) Persistent Memory (PM) simulated by a RAM disk. The performance profiles of 
these devices highlight different properties of the logging algorithms. As the mainstream 
fast storage, NVMe SSDs provide a high bandwidth, affording insights of the performance in production. HDDs have limited bandwidth, which is better for command logging. The cutting-edge PM 
largely eliminates disk bandwidth restrictions and exposes CPU and memory overheads.}

We compare \name to the following protocols all in DBx1000:
\vspace*{-0.1in} \\

    \textbf{No Logging:}
    The DBMS has all logging functionalities disabled.
    It does not incur any logging-related overhead and therefore serves as a performance 
    upper bound. 
    \vspace*{-0.1in} \\
    
    \textbf{Serial Logging:}
    This is our baseline implementation that uses a single disk and supports both data logging 
    and command logging. For data logging, the DBMS saturates the 160~MB/s bandwidth.
    With command logging, the DBMS generates a smaller log and its performance is 
    limited by the atomic increment of the central LSN.
    \vspace*{-0.1in} \\

    \textbf{Serial Logging with RAID-0 Setup:}
    This is the same configuration as Serial Logging, except that it uses a RAID-0 
    array across the eight disks using Linux's software RAID driver.
    \vspace*{-0.1in} \\

    \myhighlightpar{
    \textbf{Plover:}
    This a parallel data logging scheme that partitions log records based on data 
    accesses~\cite{zhou2020plover}.
    It uses per-log sequence numbers to enforce a total order among transactions.
    Each transaction generates multiple log entities.}
    \vspace*{-0.1in} \\

    \textbf{Silo-R:}
    Lastly, we also implemented the parallel logging scheme from 
    Silo~\cite{zheng2014fast,tu13}. %
    Silo uses a variant OCC that commits transactions in epochs. 
    The DBMS logs transactions in batches that are processed by multiple threads in parallel. 
    Silo-R only supports data logging because the system does not track write-after-read 
    dependencies.

\subsection{Workloads}
\label{sec:workloads}
We first describe the benchmarks used in the evaluation:
\vspace*{-0.1in} \\

\textbf{Yahoo! Cloud Serving Benchmark (YCSB):}
\myhighlightpar{This benchmark simulates the workload pattern of cloud-based OLTP 
systems~\cite{cooper10}. In our experiments, we simulate a DBMS with a single table. Each data row 
has 10 fields and each field contains 100 bytes. We evaluate two databases with 10~GB 
and 500~GB of data.
We build a single index for the table. The access 
pattern of transactions visiting the rows follows a Zipfian distribution; unless otherwise stated, 
we set the distribution parameter to $0.6$ to simulate moderate contention. By default, each 
transaction accesses two tuples and each access has a 50\% chance to be a read operation and a 50\% 
chance to be a write operation. We will perform sensitivity studies regarding these workload 
parameters in \cref{sec:eval-sens}.}
The size of each transaction's command log record is smaller than that of its data log records.
\vspace*{-0.1in} \\

\textbf{TPC-C:}
This is the standard OLTP benchmark that simulates a wholesale company operating on 
warehouses~\cite{tpc-c}. The database has nine tables covering a 
variety of necessary information and transactions are performing daily order-processing business. We 
simulate two (\tpccPayment and \tpccNewOrder) out of the five transaction types in TPC-C as around 
90\% of 
the default TPC-C mix consists of these two types of transactions. When \name is running in command 
logging mode, each transaction log record consists of the input parameters to the stored 
procedure. 
\myhighlight{The workload is logically partitioned by warehouses. We use 80 warehouses in the evaluation. We evaluate the full TPC-C workload in \cref{sec:eval_tpcc_full_mix}.} 

\myhighlightpar{The choices of the benchmarks provide a comprehensive evaluation of \name and baselines. YCSB, 
TPC-C \tpccPayment, and TPC-C \tpccNewOrder represent short transactions with moderate 
contention, short transactions with low contention, and long transactions.}

\begin{figure*}[t]
    \centering
    \hspace{0.4in}{\fbox{\adjincludegraphics[scale=0.3, trim={{0.13\width} {0.6\height} 0 {0.15\height}}, clip]
            {./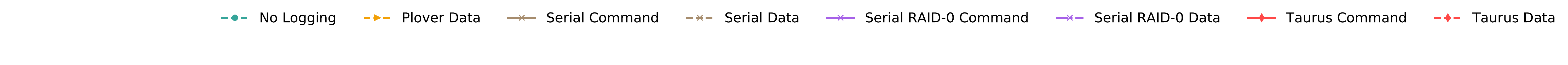}}
    } %
    \vspace{-0.0in}\newline
    \subfloat[YCSB-500G]{
        \adjincludegraphics[scale=\globalGraphScale, trim={0 0 0 {0.2\height}}, clip]
            {./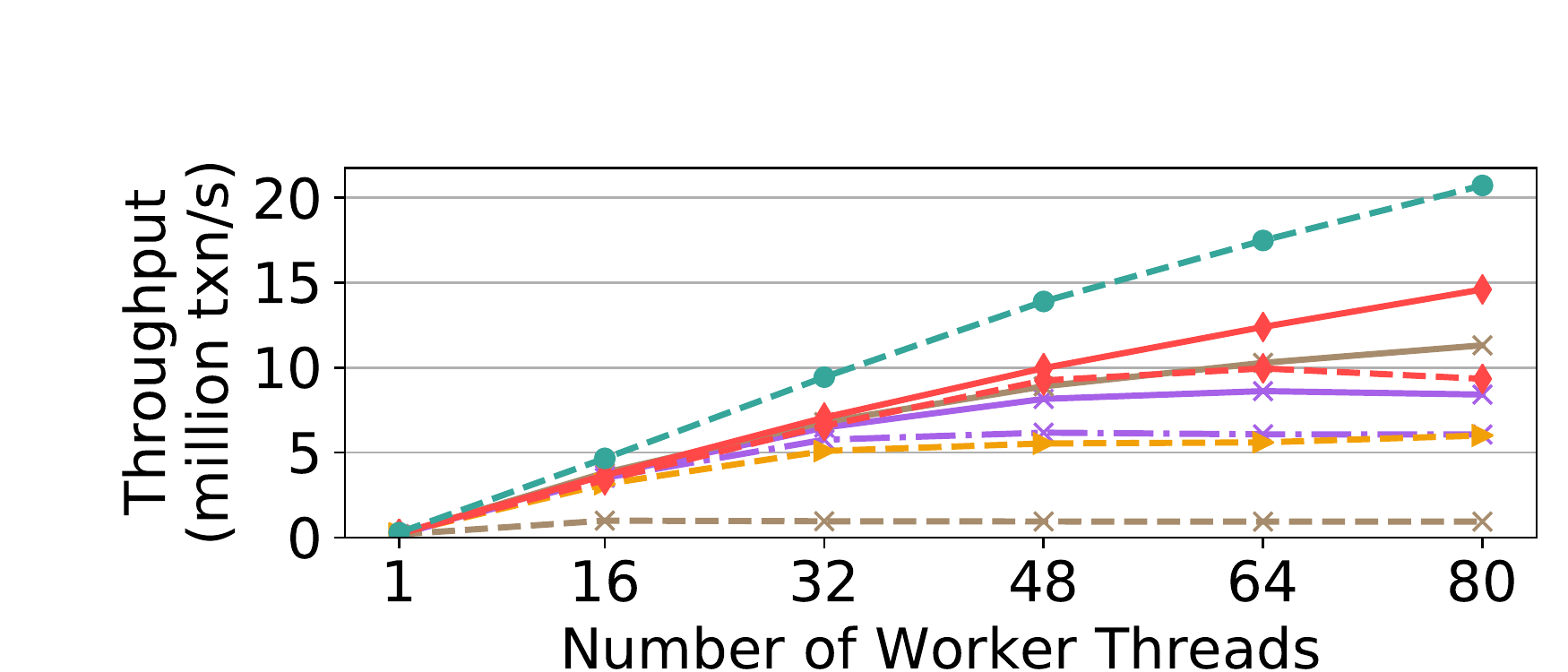}
            \label{fig:logging-performance-adv-2pl-a}
    }
    \hfill
    \subfloat[TPC-C \tpccPayment]{
        \adjincludegraphics[scale=\globalGraphScale, trim={0 0 0 {0.2\height}}, clip]
            {./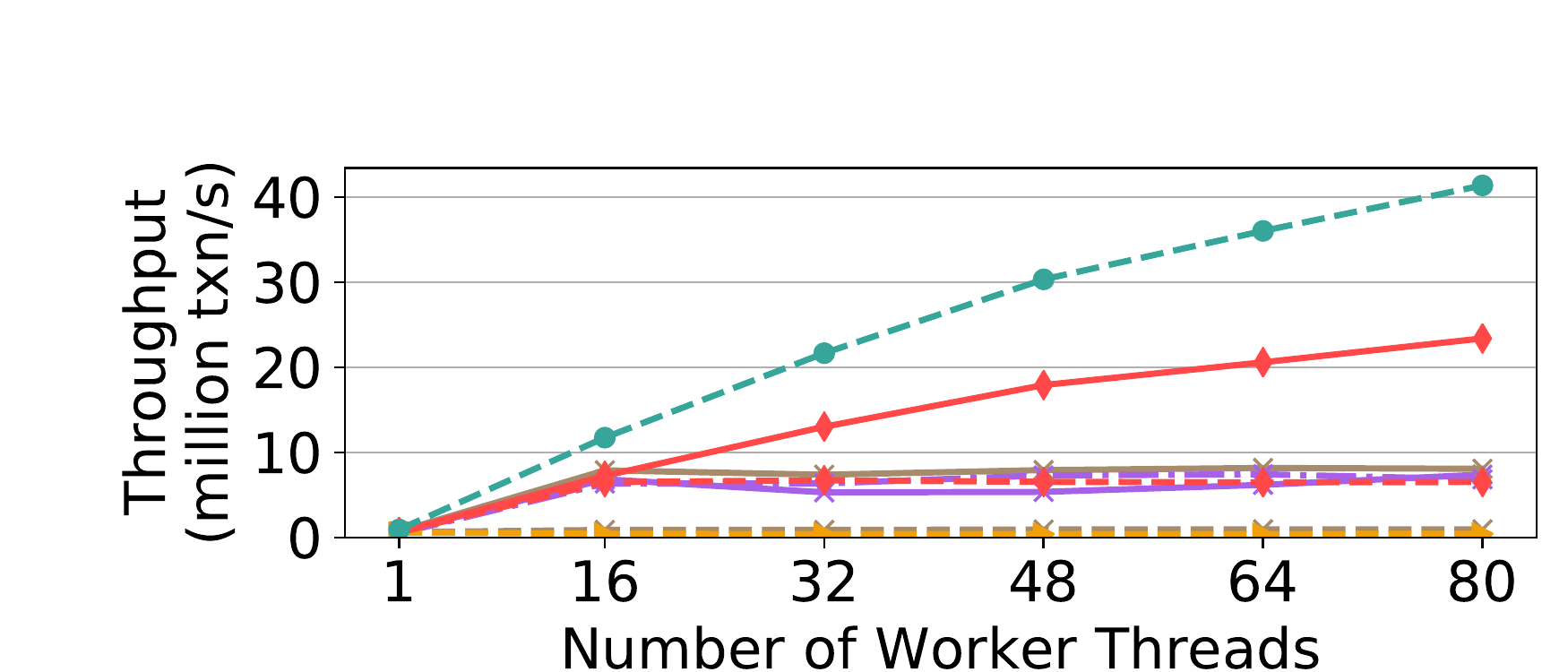}
            \label{fig:logging-performance-adv-2pl-b}
    }
    \hfill
    \subfloat[TPC-C \tpccNewOrder]{
        \adjincludegraphics[scale=\globalGraphScale, trim={0 0 0 {0.2\height}}, clip]
            {./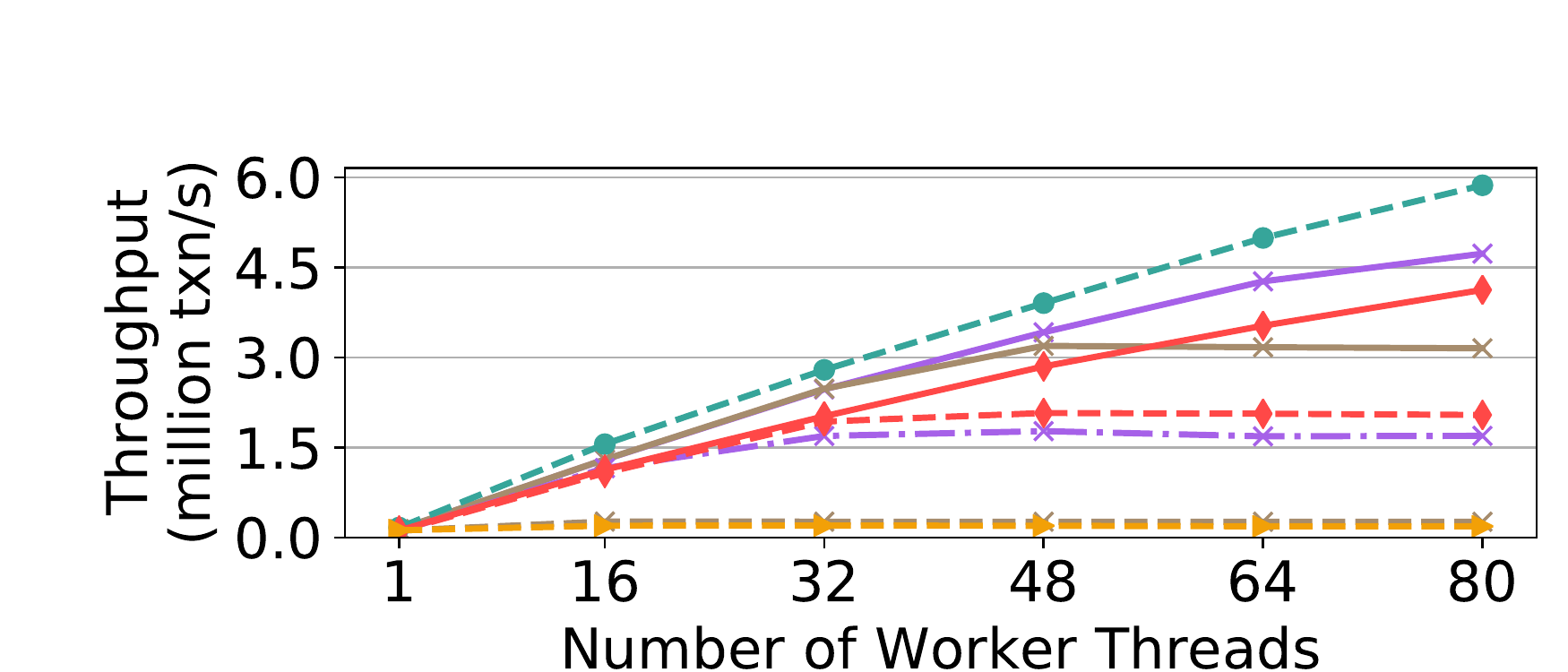}
            \label{fig:logging-performance-adv-2pl-c}
    }
    \caption{
        \textbf{Logging Performance (2PL)} --
            Performance comparison on YCSB-500G, TPC-C \tpccPayment, and TPC-C \tpccNewOrder on NVMe drives.
    }
    \label{fig:logging-performance-adv-2pl}
\end{figure*}

\begin{figure*}[t]
    \centering
    \hspace{1.1in}{
        \fbox{\hspace{-0.17in}\adjincludegraphics[scale=0.3, trim={{0.17\width} {0.6\height} 0 {0.15\height}}, clip]
            {./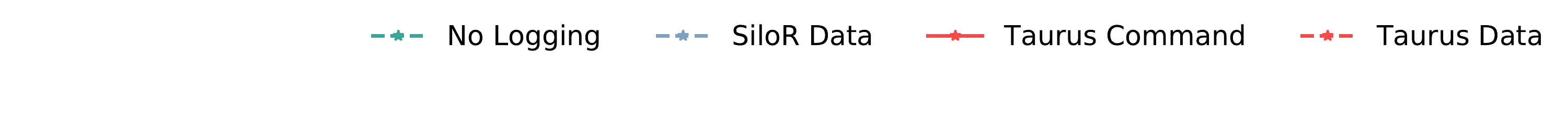}}
    } %
    \vspace{-0.0in}\newline
    \subfloat[YCSB-500G]{
        \adjincludegraphics[scale=\globalGraphScale, trim={0 0 0 {0.2\height}}, clip]
            {./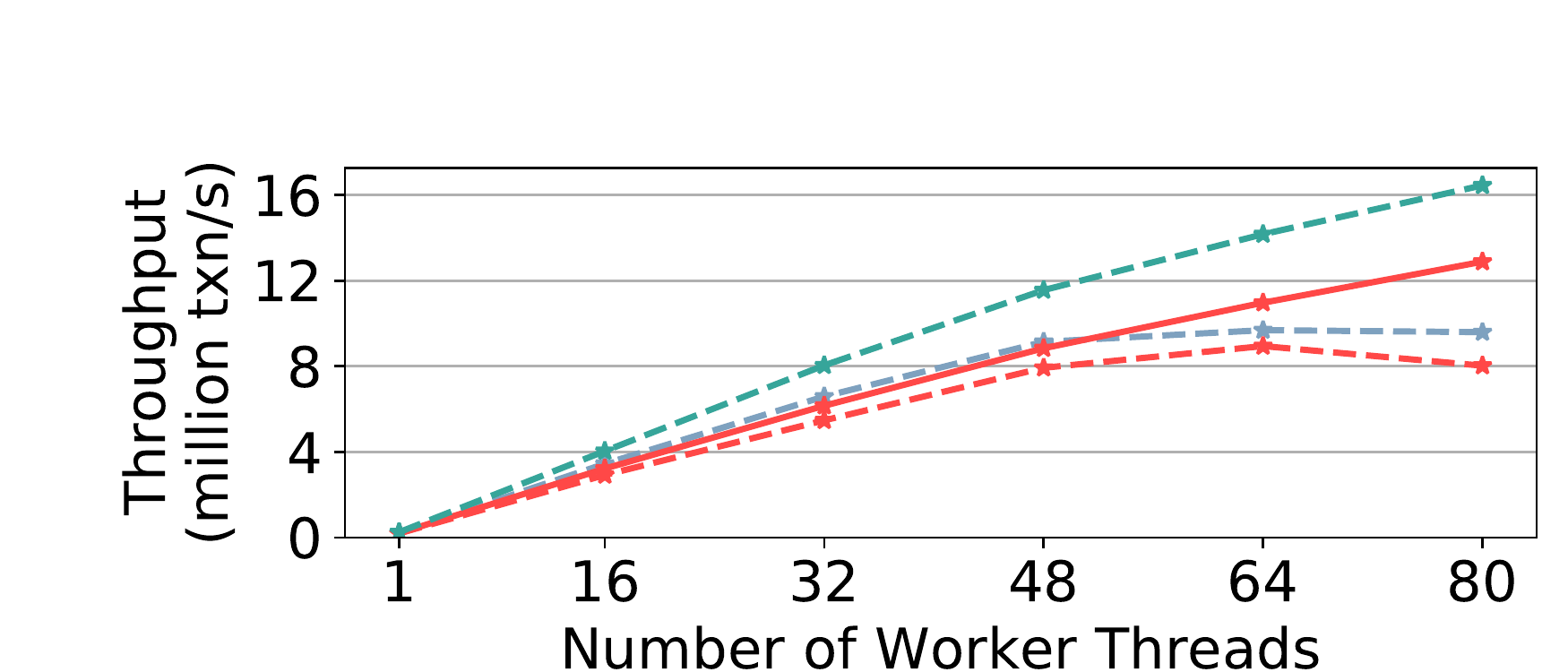}
            \label{fig:logging-performance-adv-occ-a}
    }
    \hfill
    \subfloat[TPC-C \tpccPayment]{
        \adjincludegraphics[scale=\globalGraphScale, trim={0 0 0 {0.2\height}}, clip]
            {./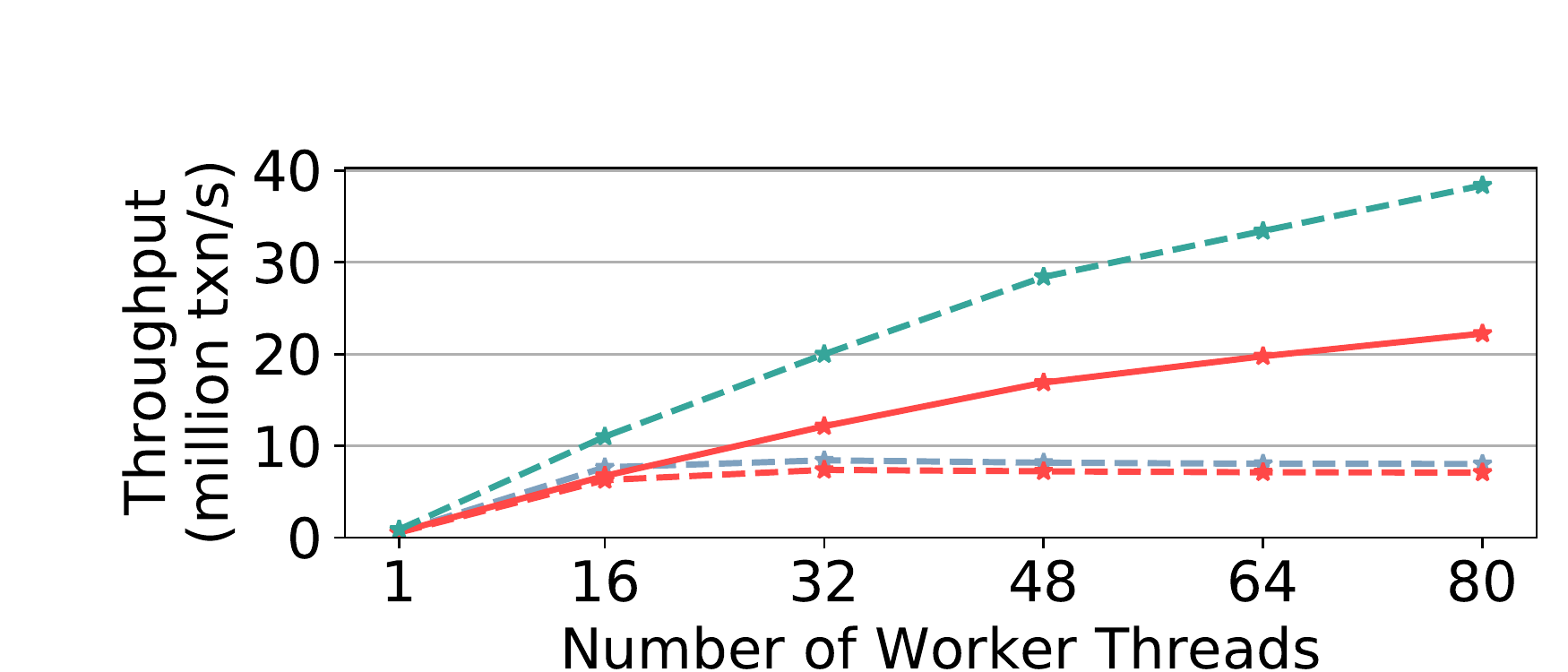}
            \label{fig:logging-performance-adv-occ-b}
    }
    \hfill
    \subfloat[TPC-C \tpccNewOrder]{
        \adjincludegraphics[scale=\globalGraphScale, trim={0 0 0 {0.2\height}}, clip]
            {./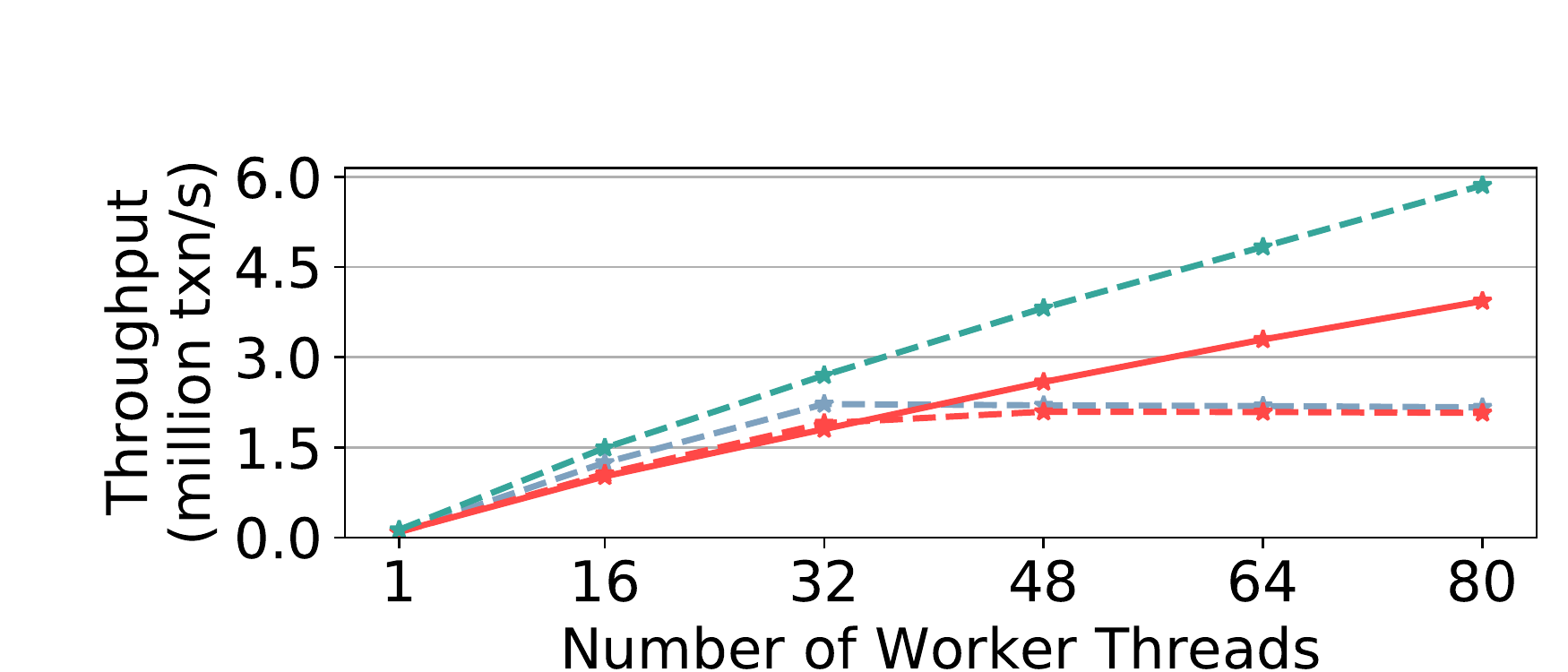}
            \label{fig:logging-performance-adv-occ-c}
    }
    \caption{
        \textbf{Logging Performance (OCC)} --
        Performance comparison on YCSB-500G, TPC-C \tpccPayment, and TPC-C \tpccNewOrder on NVMe drives.
    }
    \label{fig:logging-performance-adv-occ}
\end{figure*}

\begin{figure*}[t]
    \centering
    \subfloat[YCSB-500G]{
        \adjincludegraphics[scale=\globalGraphScale, trim={0 0 0 {0.2\height}}, clip]
            {./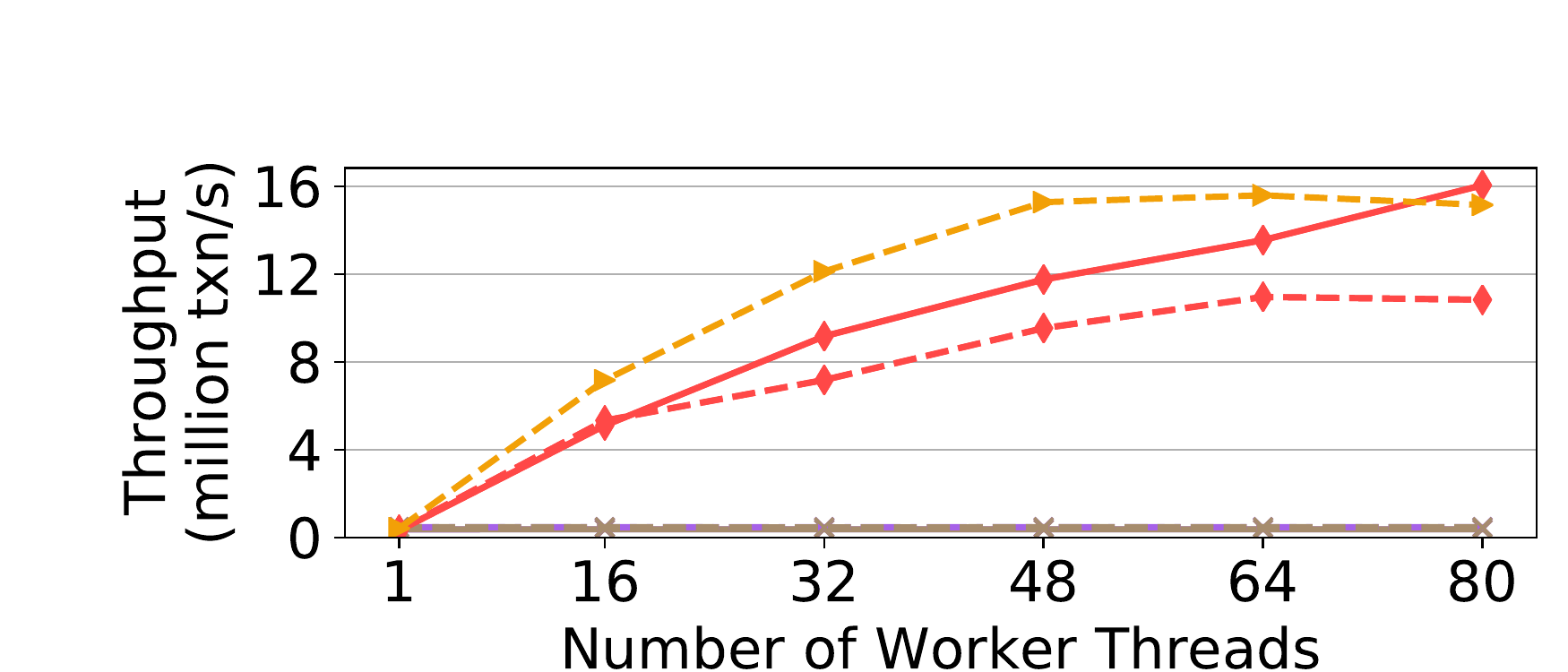}
            \label{fig:recovery-performance-adv-2pl-a}
    }
    \hfill
    \subfloat[TPC-C \tpccPayment]{
        \adjincludegraphics[scale=\globalGraphScale, trim={0 0 0 {0.2\height}}, clip]
            {./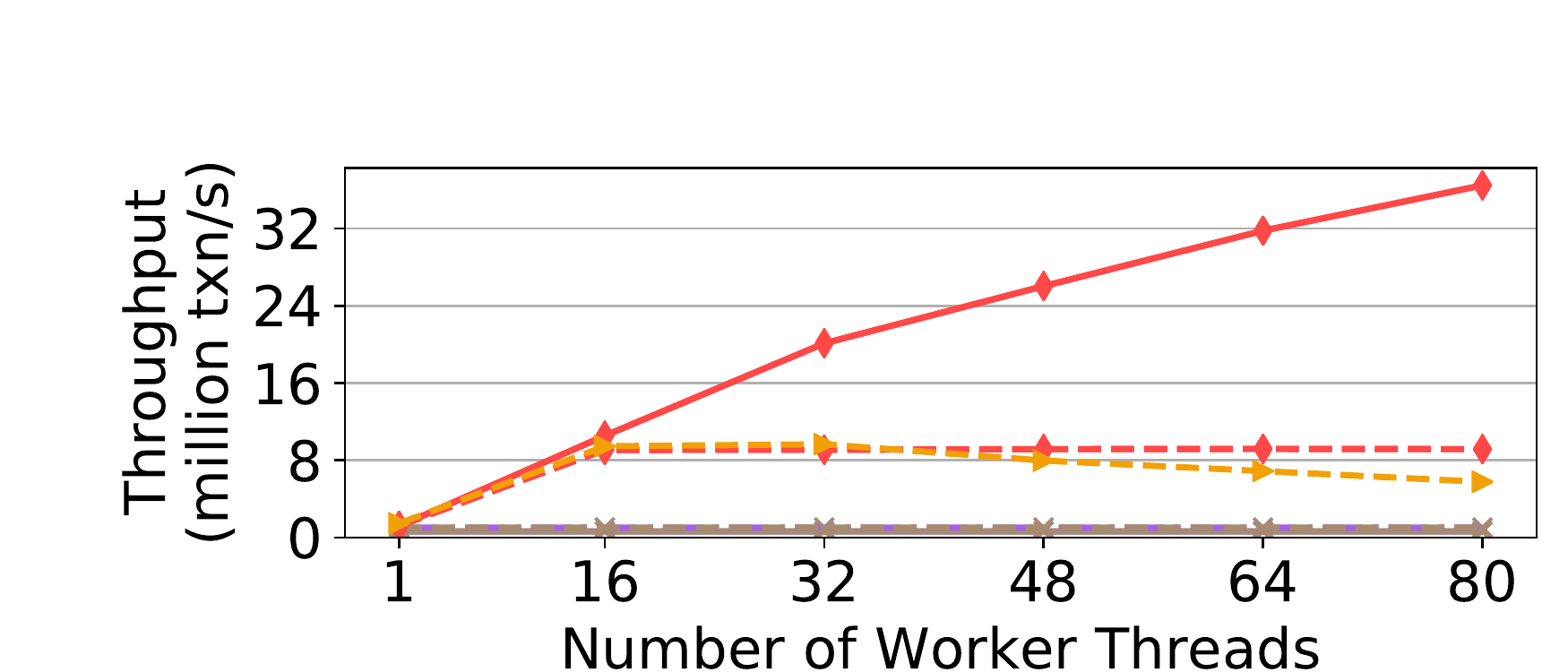}
            \label{fig:recovery-performance-adv-2pl-b}
    }
    \hfill
    \subfloat[TPC-C \tpccNewOrder]{
        \adjincludegraphics[scale=\globalGraphScale, trim={0 0 0 {0.2\height}}, clip]
            {./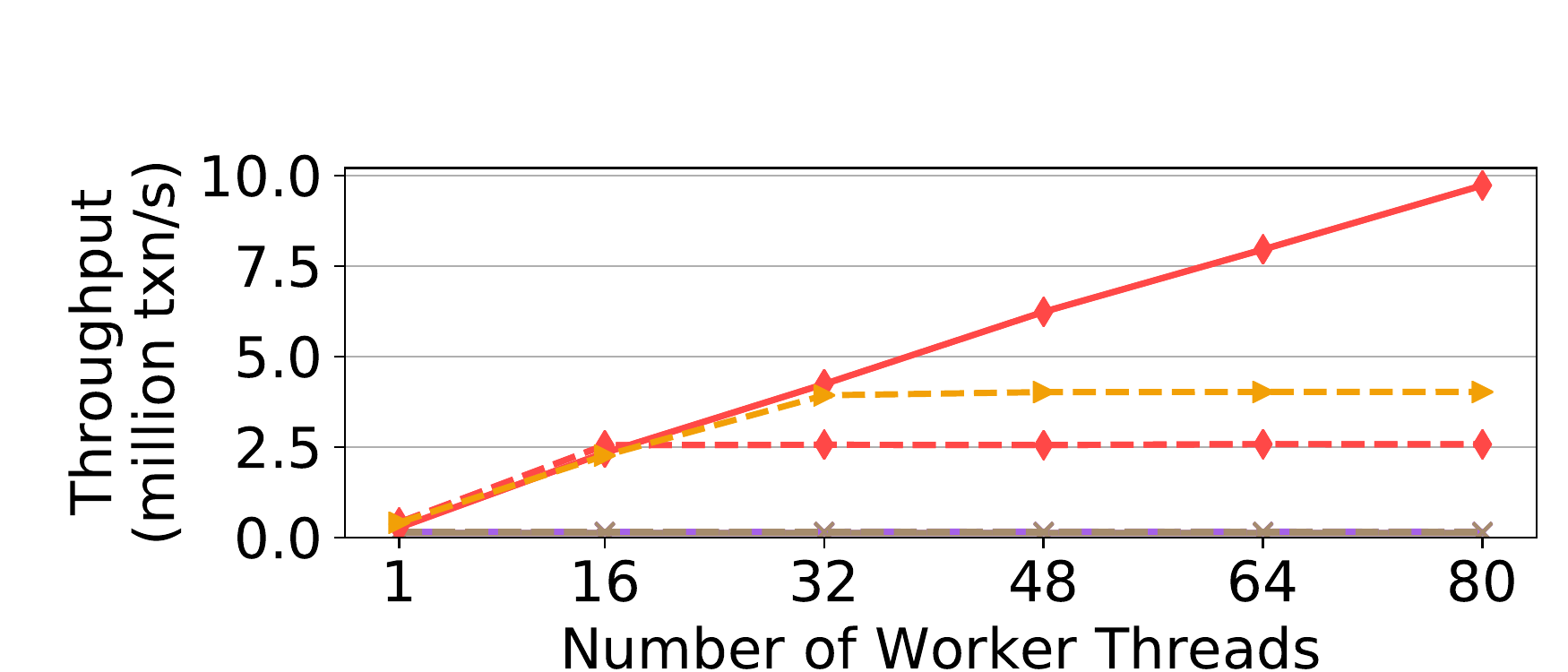}
            \label{fig:recovery-performance-adv-2pl-c}
    }
    \caption{
        \textbf{Recovery Performance (2PL)} --
        Performance comparison on YCSB-500G, TPC-C \tpccPayment, and TPC-C \tpccNewOrder on NVMe drives.
    }
    \label{fig:recovery-performance-adv-2pl}
\end{figure*}

\begin{figure*}[t]
    \centering
    \subfloat[YCSB-500G]{
        \adjincludegraphics[scale=\globalGraphScale, trim={0 0 0 {0.2\height}}, clip]
            {./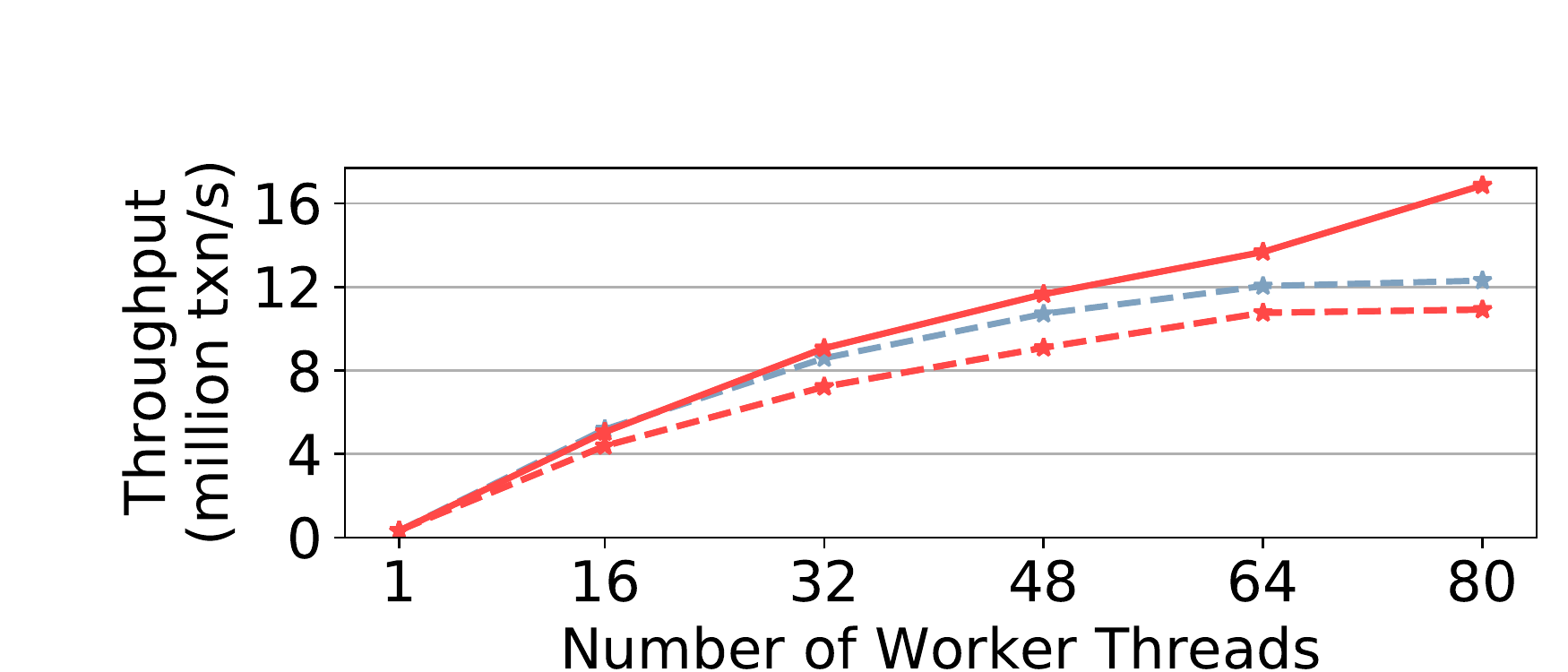}
            \label{fig:recovery-performance-adv-occ-a}
    }
    \hfill
    \subfloat[TPC-C \tpccPayment]{
        \adjincludegraphics[scale=\globalGraphScale, trim={0 0 0 {0.2\height}}, clip]
            {./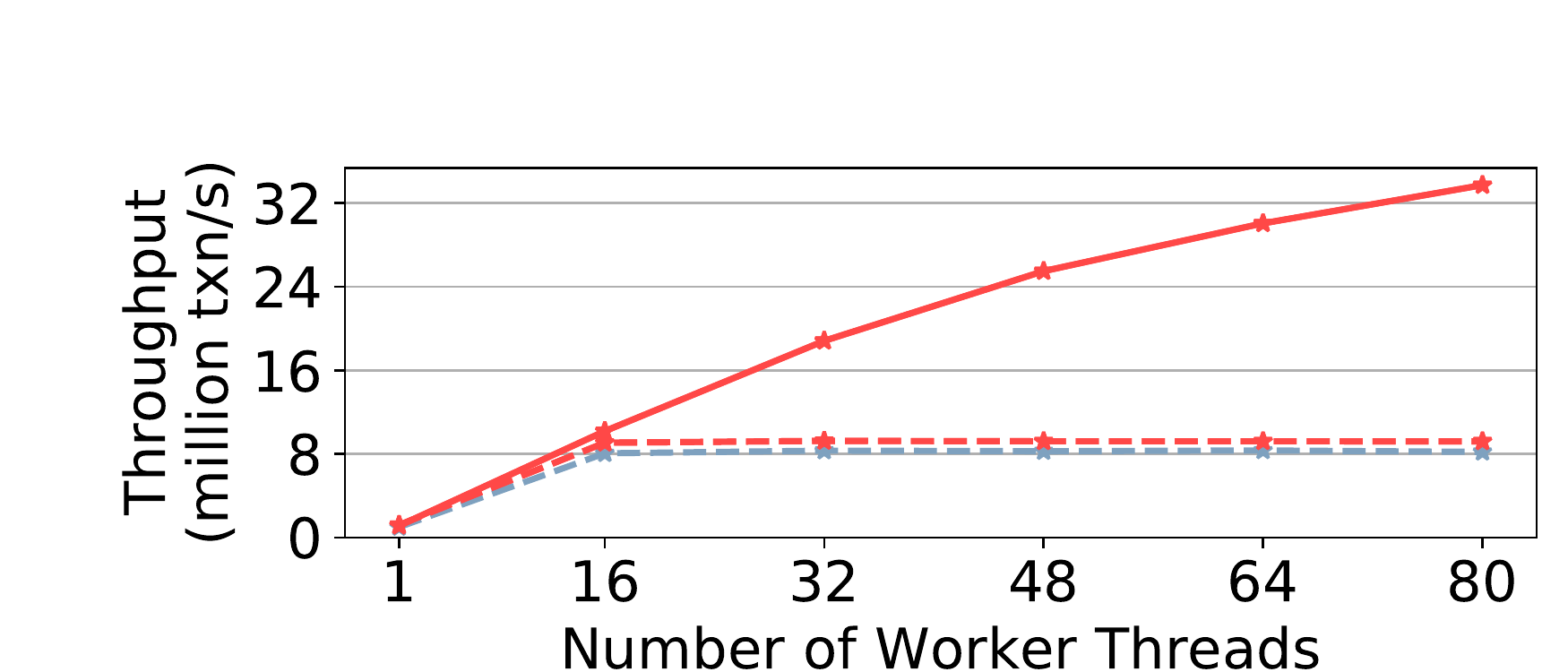}
            \label{fig:recovery-performance-adv-occ-b}
    }
    \hfill
    \subfloat[TPC-C \tpccNewOrder]{
        \adjincludegraphics[scale=\globalGraphScale, trim={0 0 0 {0.2\height}}, clip]
            {./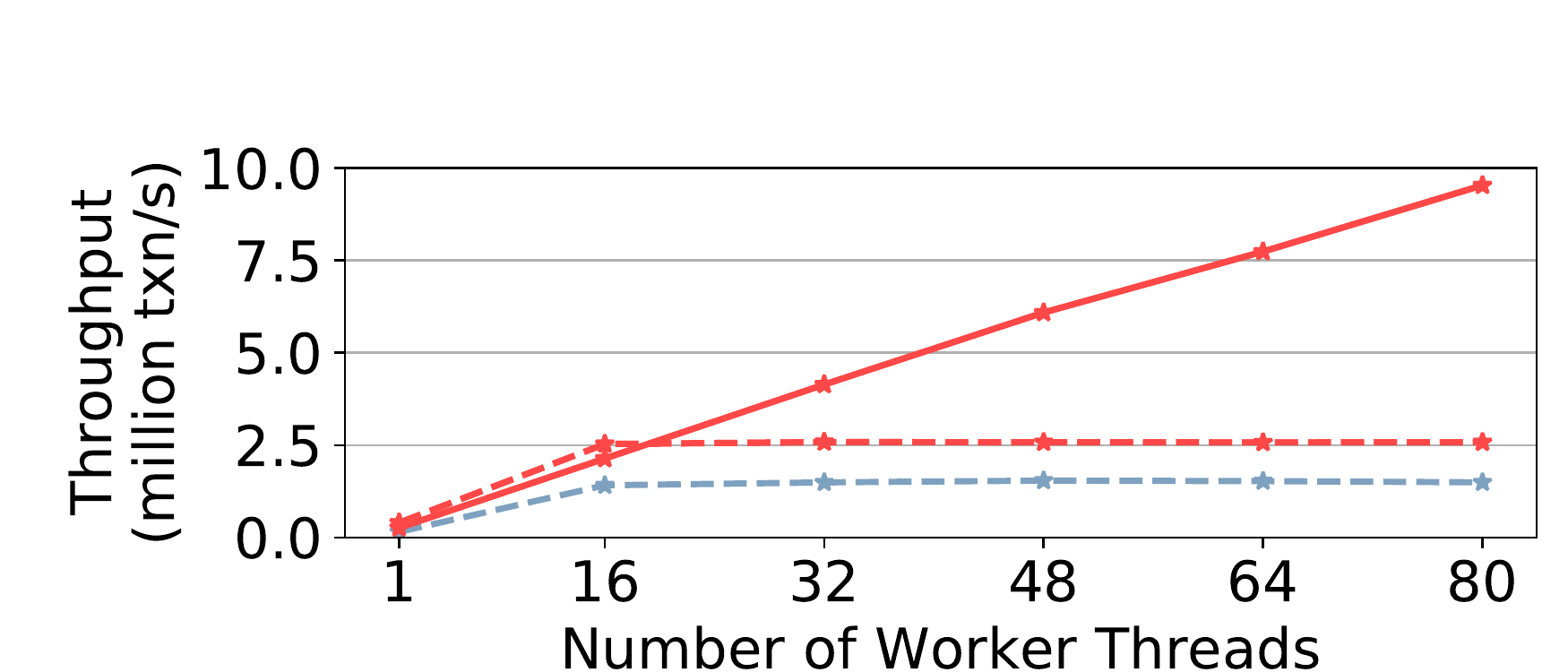}
            \label{fig:recovery-performance-adv-occ-c}
    }
    \caption{
        \textbf{Recovery Performance (OCC)} --
        Performance comparison on YCSB-500G, TPC-C \tpccPayment, and TPC-C \tpccNewOrder on NVMe drives.
    }
    \label{fig:recovery-performance-adv-occ}
\end{figure*}

\subsection{\myhighlightpar{Performances with NVMe SSDs}}
\label{sec:eval-nvme-drives}
{We run the DBMS on an Amazon EC2 \texttt{i3en.metal} instance with two Intel Xeon 
8175M CPUs (24 cores per CPU) with hyperthreading (96 virtual cores in total). The server has eight 
NVMe SSDs. Each device provides around 2~GB/s bandwidth and in total the server has 
theoretically 16~GB/s I/O bandwidth. We run the DBMS with at most 80 worker threads and 16 
log manager threads. Every disk contains two log files to better exploit the bandwidth.}
\vspace*{-0.1in} \\

\textbf{{Logging Performance:}}
Our first experiment evaluates the runtime performance of \codename by measuring 
the throughput when the number of worker threads changes. We test the logging protocols with 
YCSB-500G and TPC-C benchmarks.
We measure the throughput by the number of transactions committed by the worker threads per 
second. We keep the 2PL and OCC results separate to avoid comparisons based on the concurrency 
control algorithm performance.
We show the 2PL results in \cref{fig:logging-performance-adv-2pl} 
and the OCC results in \cref{fig:logging-performance-adv-occ}.
The x-axes are the number of worker threads (excluding the log managers), and the y-axes are the 
execution throughput. 

\cref{fig:logging-performance-adv-2pl-a} presents the logging performance for the YCSB-500G 
benchmark. \name{} with command logging scales linearly, while \name{} with data logging plateaus 
after 48 threads because it is bounded by the I/O of 16 dedicated writers. The serial command baseline also reaches a high 
throughput due to the succinctness of the command logging. It grows slower after 48 threads. 
This is not because of the disk bandwidth because it achieves similar 
performance with the RAID-0 disk array. It is instead because every transaction that spans multiple 
threads increments the shared 
LSN; this leads to excessive cache coherence traffic that inhibits scalability~\cite{tu13}. \name 
command logging is more scalable as the number of worker threads increases because 
each log manager maintains a separate LSN.
Serial data saturates the single disk's bandwidth. 
Similar to \name{}, Plover writes records across multiple files. For every 
transaction, it generates a log record for each accessed partition, and 
accesses the per-log LSN to generate a global LSN for the 
transaction. The DBMS then uses this global LSN to update the per-log sequence 
numbers. These updates are atomic to prevent data races. Plover is limited by the 
contention of the local counters. \name{} with command logging is up to 
\AdvmainLnXVIYCSBLogNumberofWorkerThreadsvsMaxThrCompareTaurusCommandNOWAITOverPloverDataNOWAIT 
faster than Plover.

\cref{fig:logging-performance-adv-2pl-b} shows the performance for the short and low-contended
\tpccPayment transactions. These results are similar to YCSB. %
All the logging baselines incur a significant overhead compared to 
No Logging. The gap between No Logging and \name{} reflects the overheads discussed in 
\cref{sec:vectorization}.
The LV maintenance in \name only costs 1.6\% of the running time. \name with command logging has the 
best performance. %
Plover suffers from the increased data accesses, causing the 
worker threads to compete for one of the few latches on the local sequence numbers, essentially 
downgrading to a single stream logging. \cref{fig:logging-performance-adv-2pl-c} shows 
the comparison for the TPC-C \tpccNewOrder transactions. These transactions access a larger number 
of tuples ($\sim$30 tuples per invocation) than the previous workloads. The overall 
throughput is lower, making it difficult for the DBMS to hit the LSN allocation bottleneck. 
Therefore, serial command logging scales well. The
gap between serial command with RAID-0 and \name command corresponds to LV-related overheads.
\name with command logging shows advantages when the number of workers is adequate. We project 
that the serial command logging will plateau
reaching the cache traffic limit when there are more than 120 workers whereas \name{} should still scale.
Similar to \tpccPayment transactions, Plover is bounded by the contention. 

\cref{fig:logging-performance-adv-occ} shows the comparison between the OCC variant of \name and 
Silo-R. The No Logging baseline also uses the OCC algorithm to keep comparison fair. For all the 
benchmarks we observe that both Silo-R and \name data logging plateaus at a similar level, saturating the disk bandwidth. Before they saturate the bandwidth,
Silo-R performs slightly better than \name because it does not need to track LSN Vectors.
However, Silo-R cannot track RAW 
dependencies, so it is incompatible with command logging. \name command logging, 
benefiting 
from the conciseness of the log records, outperforms Silo-R in every benchmark, by up 
to 
\AdvmainLnXVIsiloPayTPCCLogNumberofWorkerThreadsvsMaxThrCompareTaurusCommandSiloOverSiloRDataSilo.
\vspace*{-0.1in} \\

\textbf{{Recovery Performance:}} We next evaluate the DBMS's recovery time for all the 
protocols. We use the log files generated by 80 worker threads for 
better recovery parallelism. These files are large enough for steady performance measurements and 
are stored in uncompressed bytes across the disks with I/O caches cleaned.

\cref{fig:recovery-performance-adv-2pl-a} shows the recovery peformance on YCSB-500G. Plover 
outperforms \name below 80 threads because it does not need to resolve dependencies. Every 
Plover log file corresponds to a partition that contains totally ordered records, which is 
sufficient to recover transactions independently. Plover saturates the 
disk's 16~GB/s bandwidth after 48 threads and plateaus. \name{} command scales 
linearly and exceeds Plover at 80 threads. The serial baselines, regardless of data or command 
logging, with a RAID-0 setup or not, are limited by the total sequence order of transactions. 
\name{} recovery is up to 
\AdvmainLnXVIYCSBRecNumberofWorkerThreadsvsMaxThrCompareTaurusCommandNOWAITOverSerialCommandNOWAIT 
faster than the serial baselines.

The recovery performance of TPC-C \tpccPayment is in \cref{fig:recovery-performance-adv-2pl-b}. 
Both Plover and \name data logging hit the I/O bottleneck quickly, while \name{} command logging scales linearly. %
\cref{fig:recovery-performance-2pl-command-c} shows the 
comparison for TPC-C \tpccNewOrder. \name command scales well and outperforms Plover by 
up to 
\AdvmainLnXVINewOrderTPCCRecNumberofWorkerThreadsvsMaxThrCompareTaurusCommandNOWAITOverPloverDataNOWAIT. 
The gap between Plover and \name data logging corresponds to dependency resolution and the 
resulting memory overhead. \name{} command is slower than \name{} data 
at 16 threads due to the cost of re-running the transactions.

\cref{fig:recovery-performance-adv-occ} presents the comparisons for the OCC baselines. Similar to 
Plover, Silo-R requires data logging and therefore falls behind \name command logging in all three benchmarks. But Silo-R does not require dependency resolution so it outperforms \name with 
data 
logging when the number of transactions is large. Silo-R uses latches when transactions update 
tuples to ensure that they only perform updates with a higher version number. This 
overhead is more 
significant when the transactions are long.
\name{} command logging outperforms Silo-R by up to 
\AdvmainLnXVINewOrderTPCCRecNumberofWorkerThreadsvsMaxThrCompareTaurusCommandNOWAITOverSiloRDataSilo
. 

\subsection{\myhighlightpar{Performance with Hard Disks}}
\label{sec:evaluation-hdd-drives}

\begin{figure*}[t]
    \centering
    \hspace{0.4in}{
        \fbox{\adjincludegraphics[scale=0.3, trim={{0.31\width} {0.55\height} 0 {0.1\height}}, clip]
            {./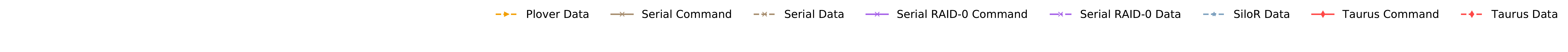}}
    }
    \vspace{-0.0in}\newline
    \subfloat[YCSB-10G Data]{
        \adjincludegraphics[scale=\globalGraphScale, trim={0 0 0 {0.2\height}}, clip]
            {./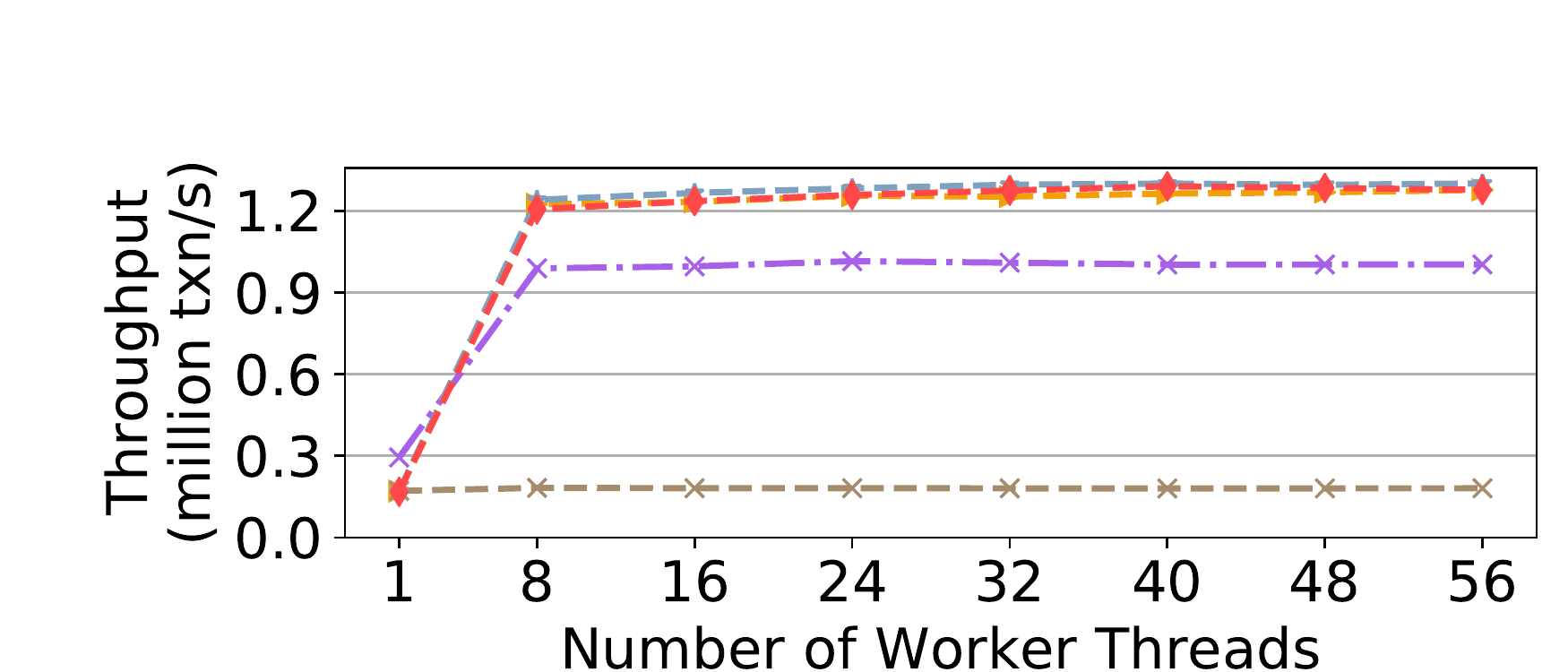}
            \label{fig:logging-performance-2pl-data-a}
    }
    \hfill
    \subfloat[TPC-C \tpccPayment Data]{
        \adjincludegraphics[scale=\globalGraphScale, trim={0 0 0 {0.2\height}}, clip]
            {./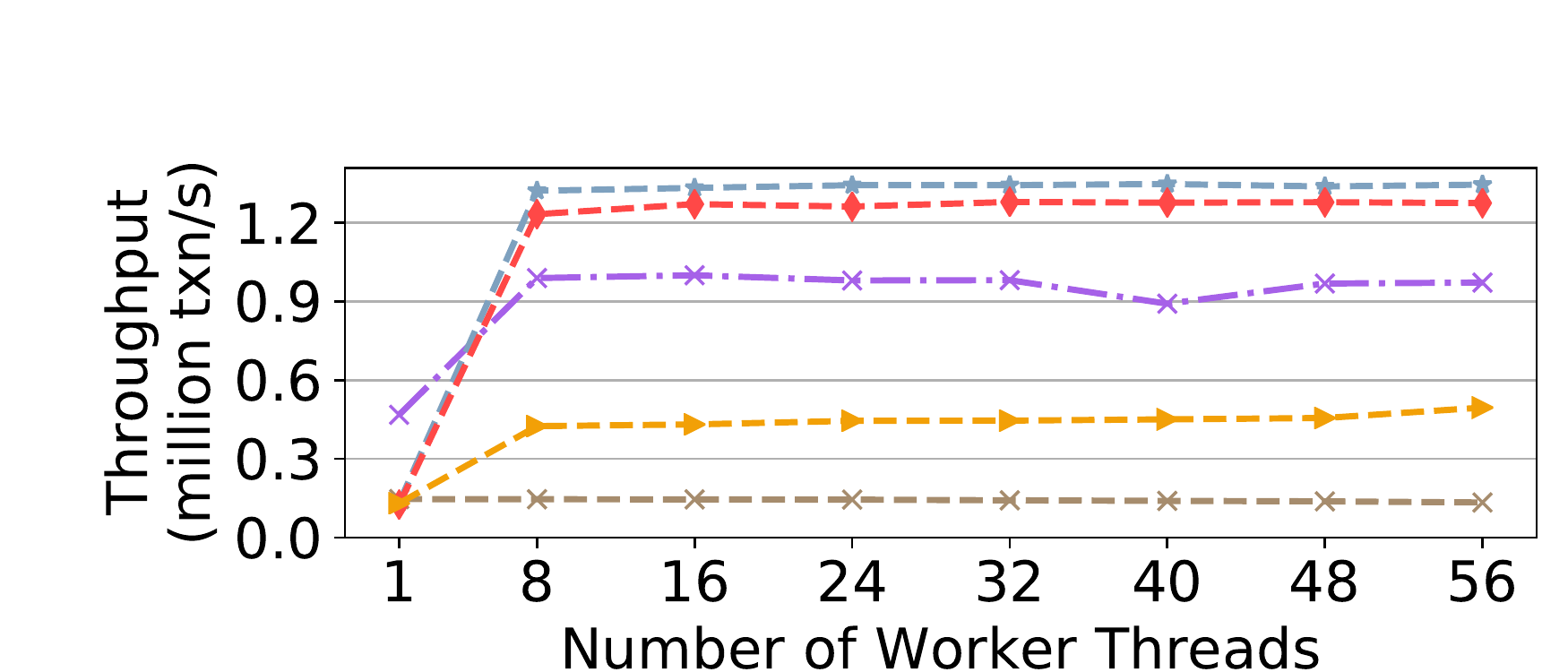}
            \label{fig:logging-performance-2pl-data-b}
    }
    \hfill
    \subfloat[TPC-C \tpccNewOrder Data]{
        \adjincludegraphics[scale=\globalGraphScale, trim={0 0 0 {0.2\height}}, clip]
            {./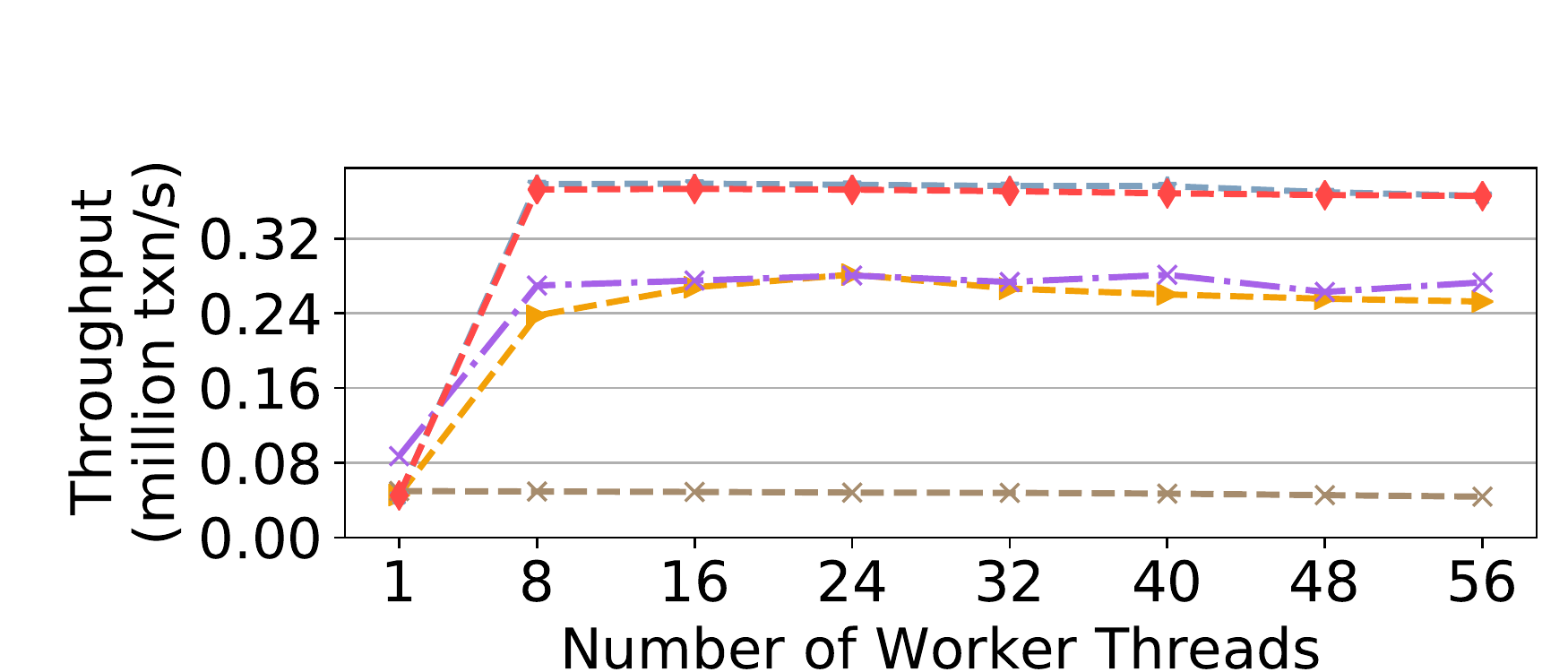}
            \label{fig:logging-performance-2pl-data-c}
    }

    \subfloat[YCSB-10G Command]{
        \adjincludegraphics[scale=\globalGraphScale, trim={0 0 0 {0.2\height}}, clip]
            {./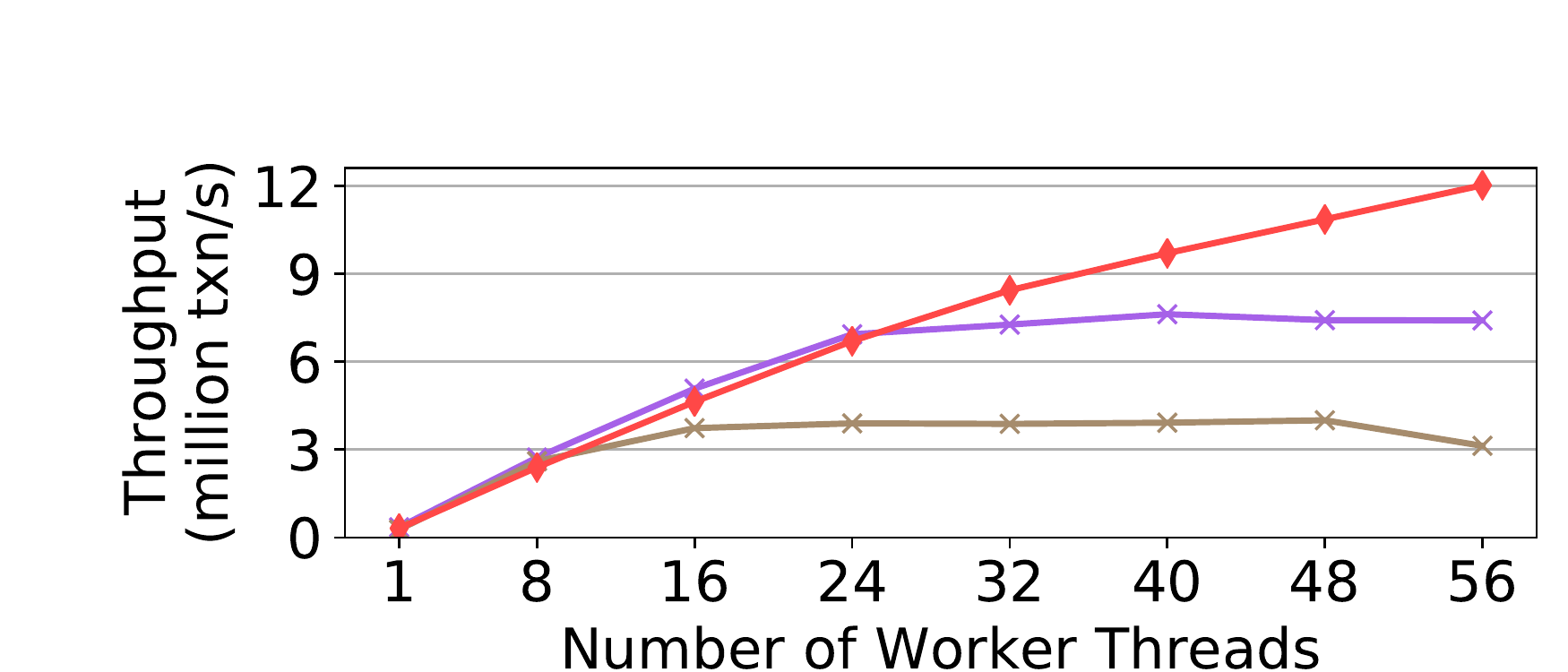}
            \label{fig:logging-performance-2pl-command-a}
    }
    \hfill
    \subfloat[TPC-C \tpccPayment Command]{
        \adjincludegraphics[scale=\globalGraphScale, trim={0 0 0 {0.2\height}}, clip]
            {./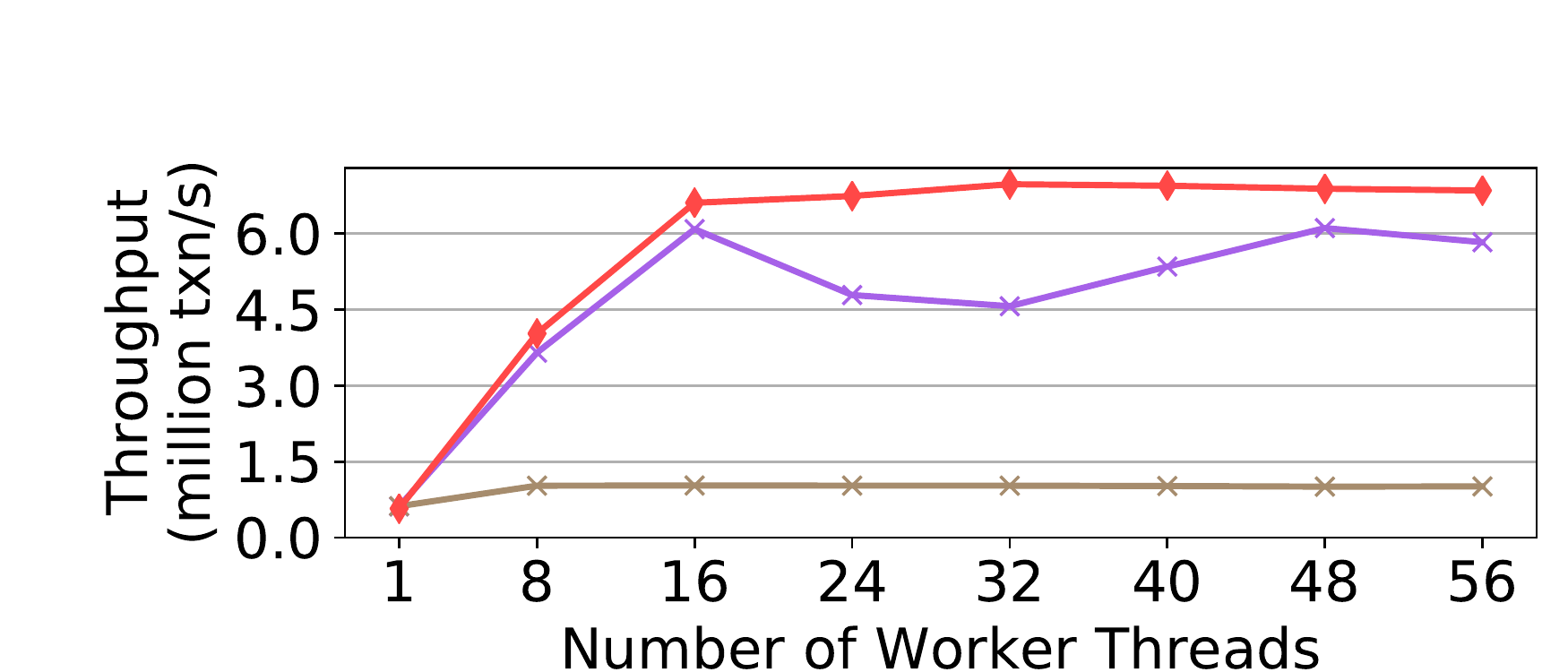}
            \label{fig:logging-performance-2pl-command-b}
    }
    \hfill
    \subfloat[TPC-C \tpccNewOrder Command]{
        \adjincludegraphics[scale=\globalGraphScale, trim={0 0 0 {0.2\height}}, clip]
            {./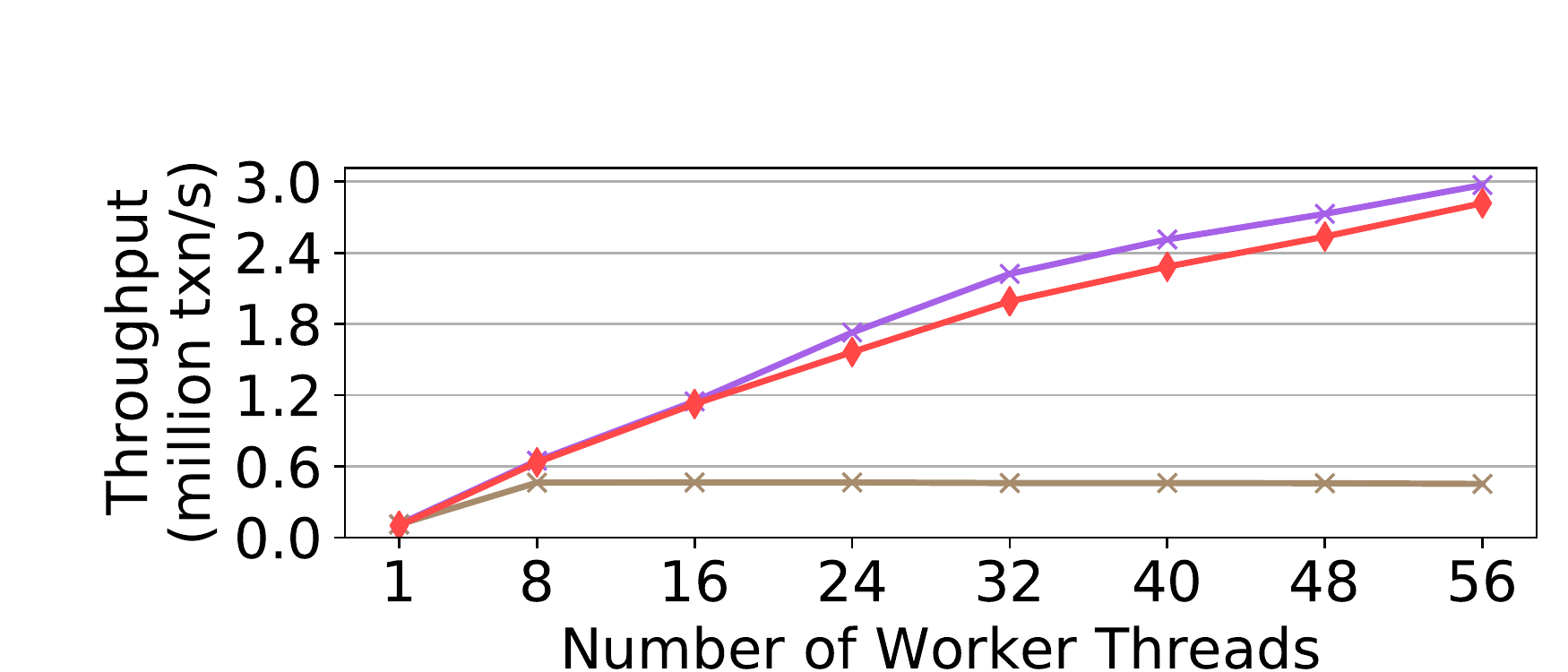}
            \label{fig:logging-performance-2pl-command-c}
    }
    \caption{
        \textbf{Data and Command Logging Performance} --
        Performance comparison on YCSB-10G, TPC-C \tpccPayment, and TPC-C \tpccNewOrder on HDDs.
    }
    \label{fig:logging-performance-2pl}
    \vspace{-0.05in}
\end{figure*}

\begin{figure*}[t]
    \centering
    \subfloat[YCSB-10G Data]{
        \adjincludegraphics[scale=\globalGraphScale, trim={0 0 0 {0.2\height}}, clip]
            {./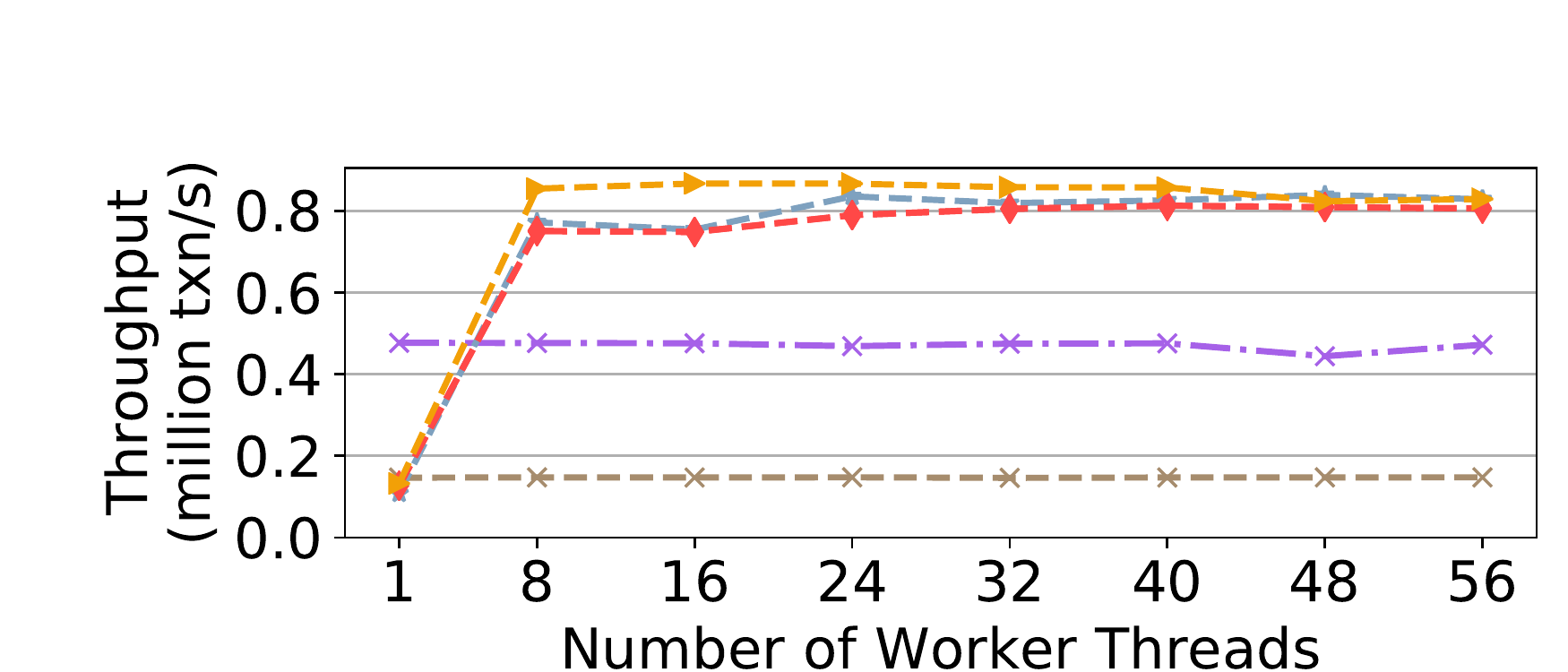}\label{fig:recovery-performance-2pl-data-a}
    }\hfill
    \subfloat[TPC-C \tpccPayment Data]{
        \adjincludegraphics[scale=\globalGraphScale, trim={0 0 0 {0.2\height}}, clip]
            {./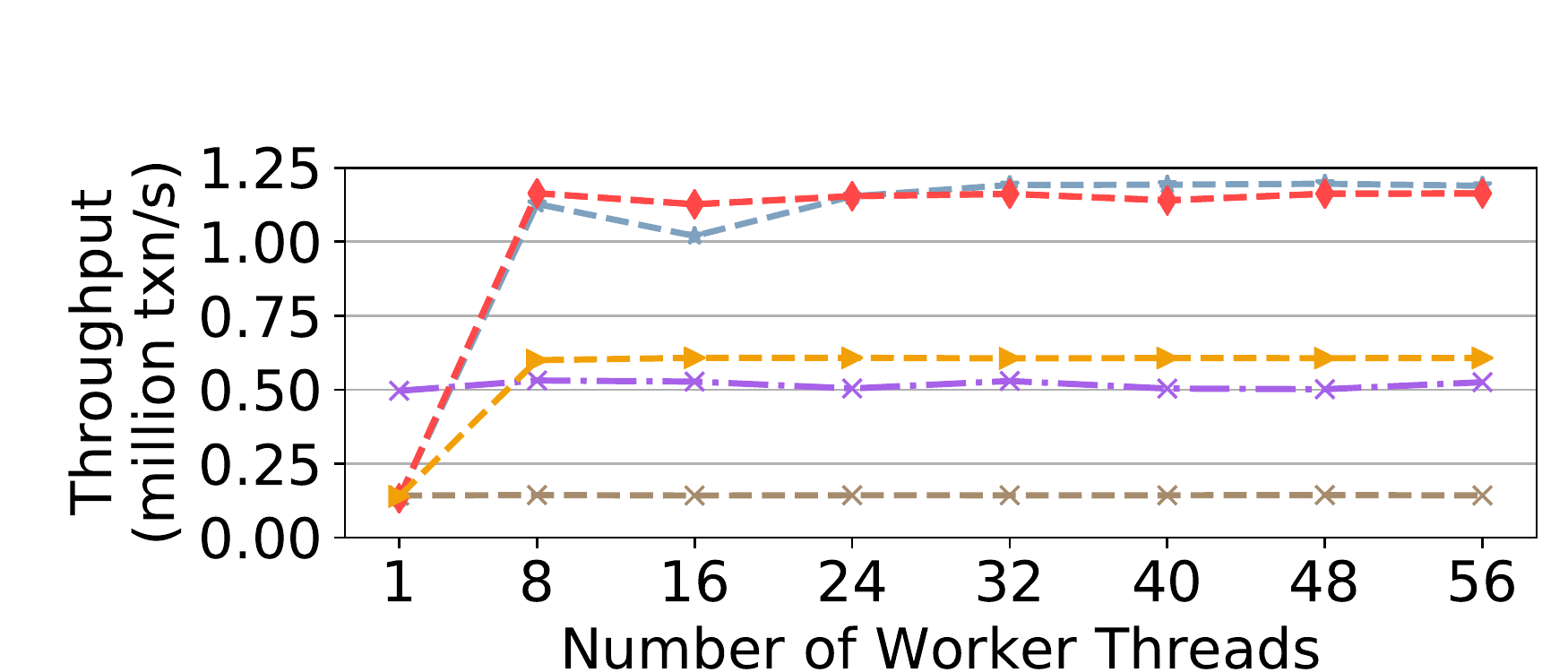}\label{fig:recovery-performance-2pl-data-b}
    }
    \hfill
    \subfloat[TPC-C \tpccNewOrder Data]{
        \adjincludegraphics[scale=\globalGraphScale, trim={0 0 0 {0.2\height}}, clip]
            {./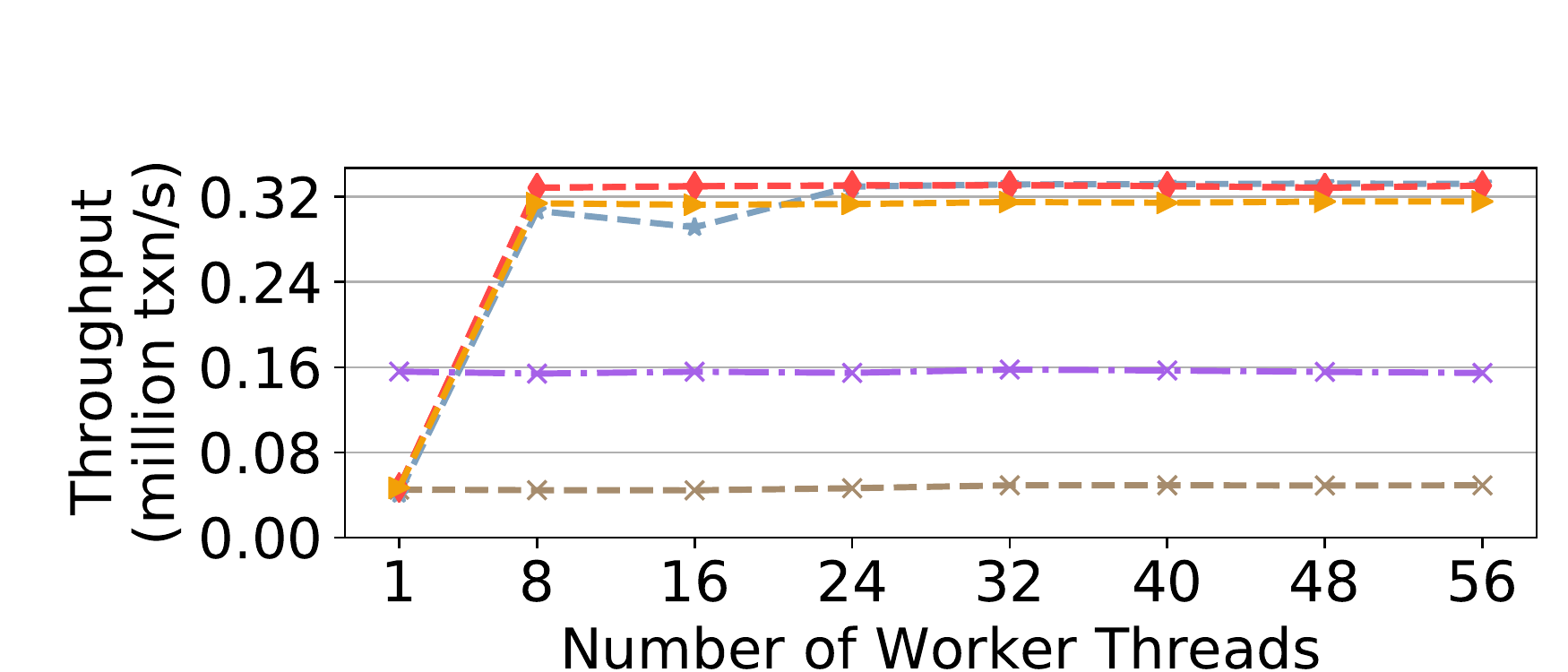}\label{fig:recovery-performance-2pl-data-c}
    }
    \newline
    \subfloat[YCSB-10G Command]{
        \adjincludegraphics[scale=\globalGraphScale, trim={0 0 0 {0.2\height}}, clip]
            {./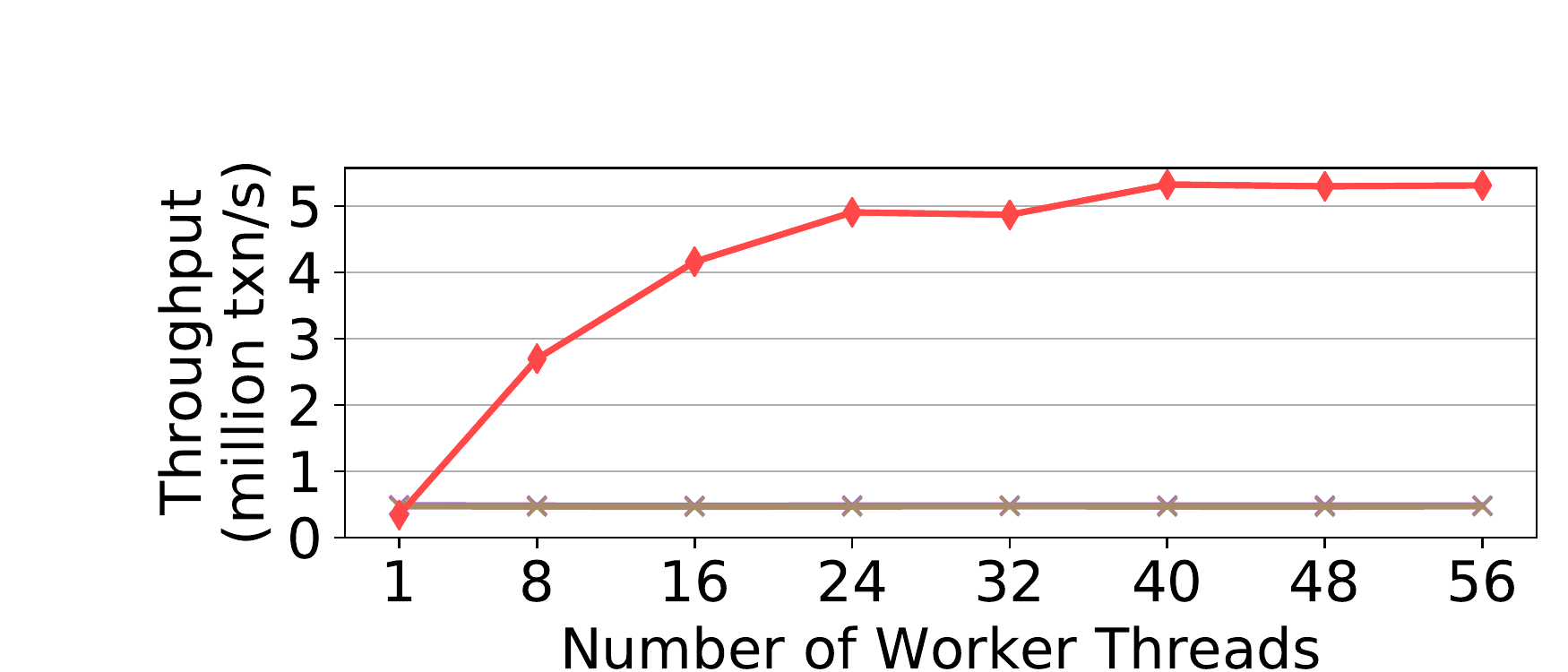}\label{fig:recovery-performance-2pl-command-a}
    }\hfill
    \subfloat[TPC-C \tpccPayment Command]{
        \adjincludegraphics[scale=\globalGraphScale, trim={0 0 0 {0.2\height}}, clip]
            {./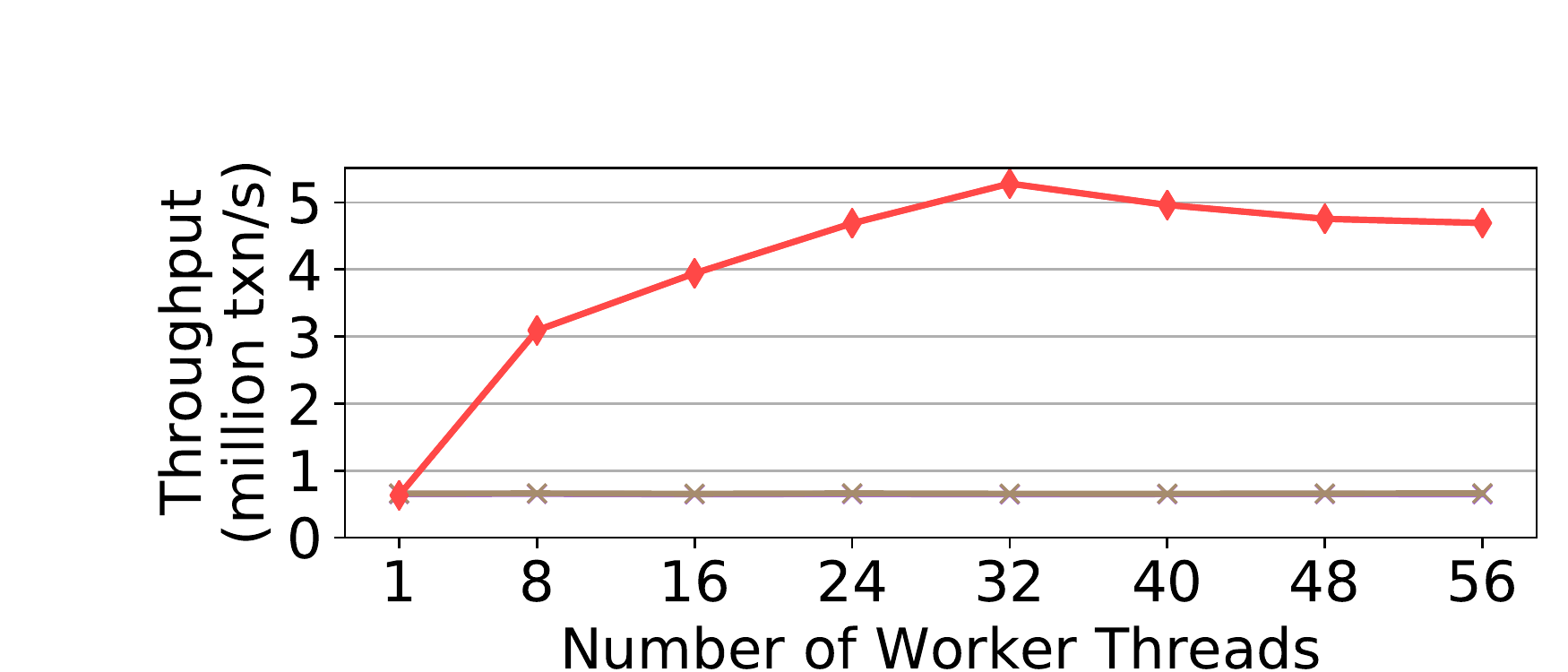}\label{fig:recovery-performance-2pl-command-b}
    }
    \hfill
    \subfloat[TPC-C \tpccNewOrder Command]{
        \adjincludegraphics[scale=\globalGraphScale, trim={0 0 0 {0.2\height}}, clip]
            {./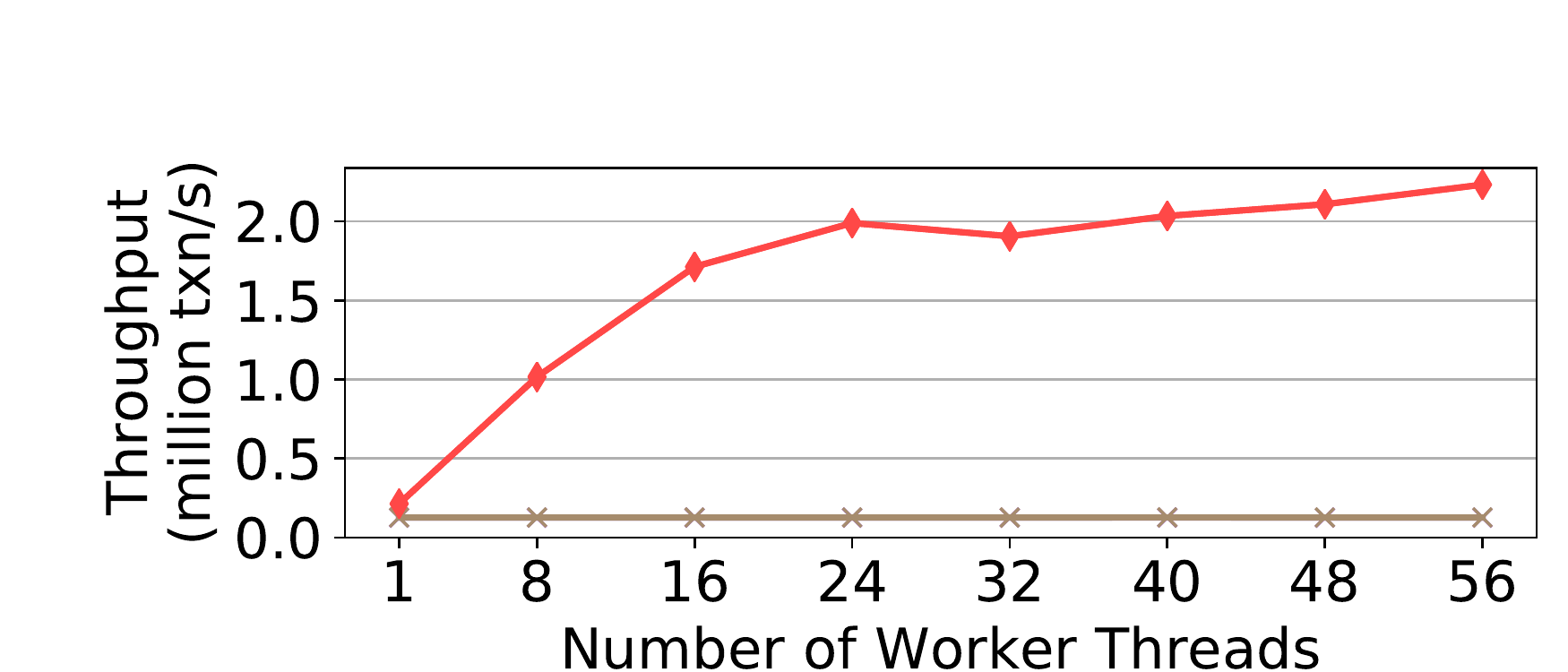}\label{fig:recovery-performance-2pl-command-c}
    }
    \caption{
        \textbf{Data and Command Recovery Performance} --
        Performance comparison on YCSB-10G, TPC-C \tpccPayment, and TPC-C \tpccNewOrder on HDDs.
    }
    \label{fig:recovery-performance-2pl}
    \vspace{-0.05in}
\end{figure*}

To better understand the performance of baselines with limited bandwidth,
we performed the evaluation on an Amazon EC2 \texttt{h1.16xlarge} machine with eight HDD drives. 
Each disk provides around 160~MB/s bandwidth and in total the server has 1.3~GB/s I/O bandwidth. 
Because the server only has 256~GB memory, we use YCSB-10G.
The data logging and command logging baselines differ in absolute throughput on 
HDDs, so we present them separately. Silo-R is bound by the disk bandwidth often, and the 
difference in concurrency control does not contribute to the relative order. Therefore, 
we display the results for Silo-R and 2PL baselines together.
\vspace*{-0.1in} \\

\textbf{Logging Performance:} \cref{fig:logging-performance-2pl-data-a} 
shows the logging performance of data logging baselines for YCSB-10G. We observe that serial 
data saturates the bandwidth of a single disk quickly.
\name Data achieves 
\mainDataYCSBLogNumberofWorkerThreadsvsThroughputCompareTaurusDataNOWAITOverSerialDataNOWAIT higher 
throughput than serial data as it writes to eight 
disks in parallel. Serial data logging on RAID-0 delivers similar performance since the 
bandwidth of the disk array is $8\times$ greater. Silo-R and Plover also write the log across 
eight disks uniformly, thereby achieving similar 
performance to \name. 
In \cref{fig:logging-performance-2pl-data-b,fig:logging-performance-2pl-data-c}, we 
also observe this pattern for the TPC-C transactions except that 
Plover plateaus because of the high contention.

\cref{fig:logging-performance-2pl-command-a} shows the command logging baselines for the YCSB benchmark.
Serial command logging outperforms serial data logging, benefiting from the smaller log record sizes.
Starting from 16 threads, its performance is limited by 
the bandwidth of a single disk.
The serial command baseline on a RAID array plateaus after 24 threads, limited by the cache coherence traffic.  
\name with command logging is \mainYCSBLogNumberofWorkerThreadsvsThroughputCompareTaurusCommandNOWAITOverSiloRDataSiloMax faster than Silo-R and Plover.
\cref{fig:logging-performance-2pl-command-b} shows the DBMS's throughput for the TPC-C \tpccPayment
transaction. \name plateaus after 16 threads, limited by the disk bandwidth, achieving \mainPayTPCCLogNumberofWorkerThreadsvsThroughputCompareTaurusCommandNOWAITOverSiloRDataSiloMax speedup over Silo-R. Serial command logging suffers from NUMA issues between 16 threads and 48 threads as the log buffer resides on a single socket.
For the TPC-C \tpccNewOrder workload in \cref{fig:logging-performance-2pl-command-c}, both serial command logging on the RAID-0 array and \name command logging have good scalability. 
\vspace*{-0.1in} \\

\textbf{Recovery Performance:} \cref{fig:recovery-performance-2pl} presents the recovery performance on HDDs. 
The serial baselines are again limited by the transaction total order.
For \name, we can see that the recovery
performance of data logging plateaus after the number of worker threads exceeds 8. It is up to \mainDataYCSBRecNumberofWorkerThreadsvsMaxThrCompareTaurusDataNOWAITOverSerialRAIDDataNOWAITMax faster than the serial data
logging on a disk array.
\name data logging achieves similar throughput as Silo-R, while \name command logging
is up to 
\mainYCSBRecNumberofWorkerThreadsvsMaxThrCompareTaurusCommandNOWAITOverSiloRDataSiloMax faster
than Silo-R for recovery. Plover parallels Silo-R except for \tpccPayment, where the contention devolves Plover to single stream logging.

The peak performance of \name command logging and \name data logging are \mainCommandYCSBRecNumberofWorkerThreadsvsMaxThrCompareTaurusCommandNOWAITOverSerialCommandNOWAITMax and \mainDataYCSBRecNumberofWorkerThreadsvsMaxThrCompareTaurusDataNOWAITOverSerialDataNOWAITMax faster than serial baselines for YCSB recovery. 
For TPC-C \tpccPayment in \cref{fig:recovery-performance-2pl-command-b}, the DBMS achieves its peak
recovery performance using \name command logging where it is \mainCommandPayTPCCRecNumberofWorkerThreadsvsMaxThrCompareTaurusCommandNOWAITOverSerialCommandNOWAIT
faster than the serial command
logging baseline. The performance of \name command logging decreases when the number of workers increases beyond 24 because the parallelism is fully exploited and more threads will only incur more contention.

For TPC-C \tpccNewOrder, the performance ratios between \name and the serial baselines are
\mainCommandNewOrderTPCCRecNumberofWorkerThreadsvsMaxThrCompareTaurusCommandNOWAITOverSerialCommandNOWAITMax and
\mainDataNewOrderTPCCRecNumberofWorkerThreadsvsMaxThrCompareTaurusDataNOWAITOverSerialDataNOWAITMax for command logging (or data logging lifted by disk arrays)
and data logging (without disk arrays), respectively. If the DBMS
uses \name command logging protocol instead of its data logging protocol, then it
improves the performance by \mainNewOrderTPCCLogNumberofWorkerThreadsvsThroughputCompareTaurusCommandNOWAITOverTaurusDataNOWAIT. This is up to \mainNewOrderTPCCLogNumberofWorkerThreadsvsThroughputCompareTaurusCommandNOWAITOverSerialDataNOWAITMax better than serial data logging. 
Databases with limited bandwidth can benefit from \name supporting command logging.

\begin{figure}
    \centering
    \hspace{0.0in}{
        \fbox{\adjincludegraphics[scale=0.3, trim={{0.17\width} {0.3\height} 0 {0.1\height}}, clip]
            {./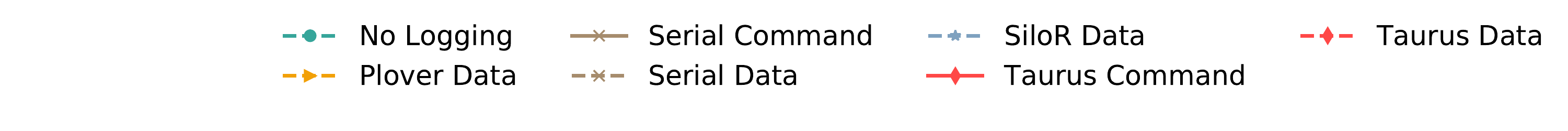}}
    }
    \vspace{0in}\newline
    \hspace{-0.3in}
    \subfloat[Logging YCSB-500G]{
        \adjincludegraphics[scale=0.27, trim={{0.03\width} 0 {0.1in} {0.2\height}}, clip]
            {./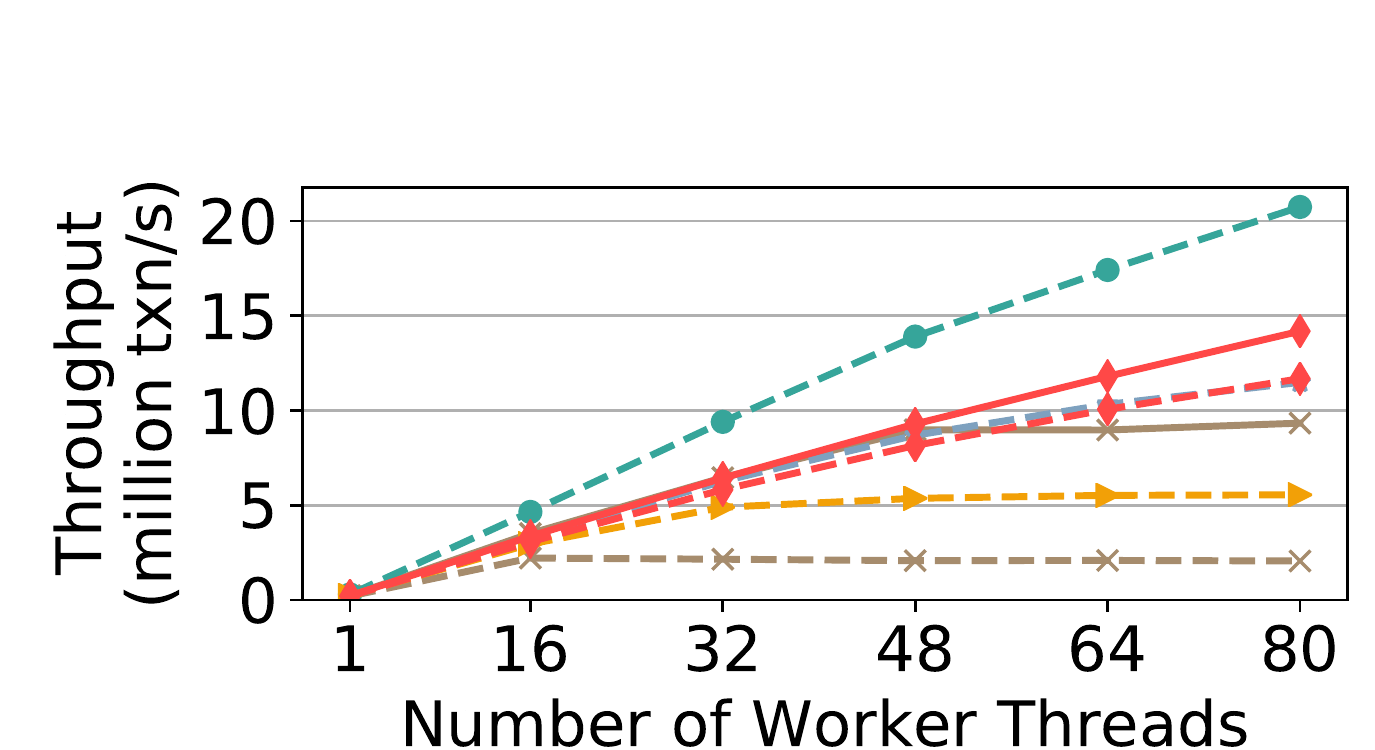}
            \label{fig:dram-2pl-logging}
    }
    \hspace{-0.1in}
    \subfloat[Recovery YCSB-500G]{
        \adjincludegraphics[scale=0.27, trim={{0.15\width} 0 0 {0.2\height}}, clip]
            {./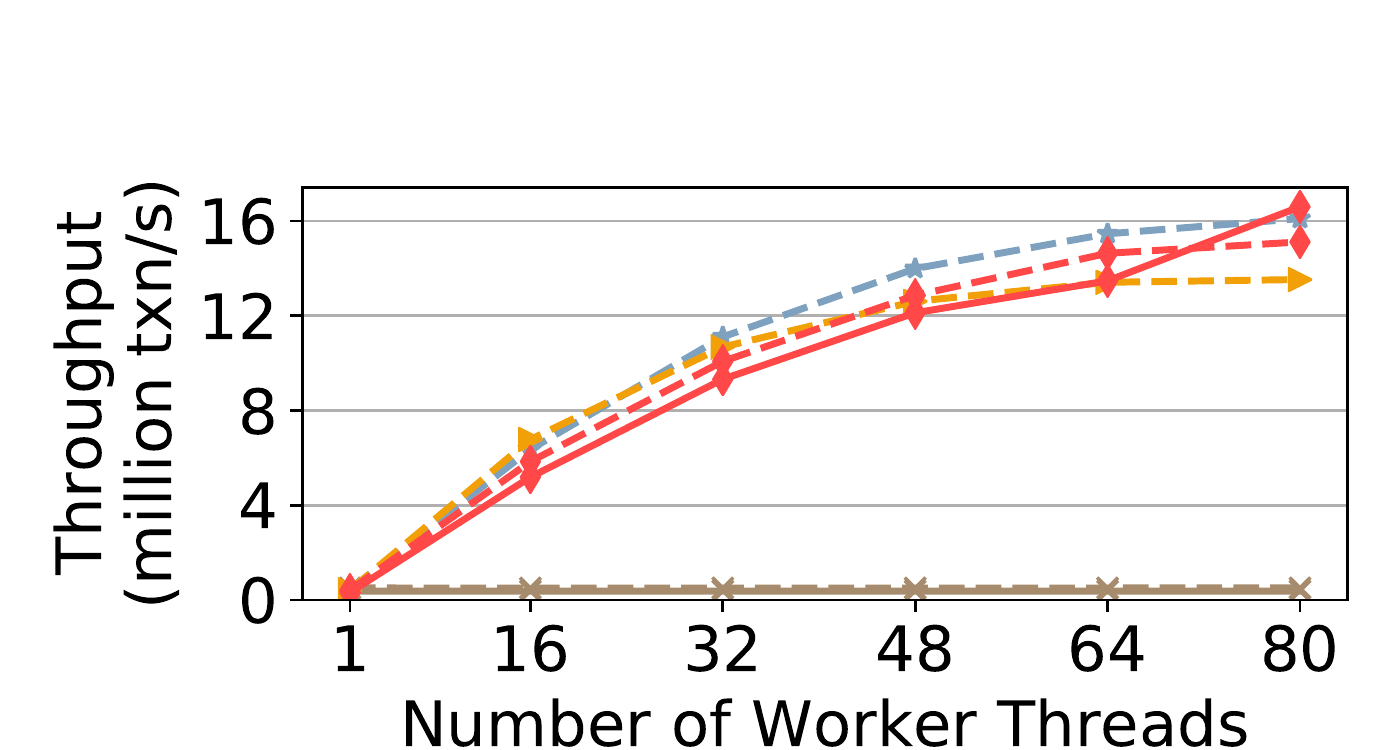}
            \label{fig:dram-2pl-recovery}
    }
    \caption{\textbf{DRAM Performance} -- Performance comparison on DRAM filesystems.} %
    \label{fig:dram-2pl}
\end{figure}

\subsection{\myhighlightpar{Performance with PM (RAM Disk)}}
\label{sec:eval-dram}
Since emerging Persistent Memory (PM) has higher I/O bandwidth than of SSDs and HDDs,
we evaluated the performance of \name{} on DRAM filesystems to simulate a PM environment. Every 
operation to this filesystem goes through the OS. This overhead is also shared in 
the architecture with a real PM. 
The PM incurs a higher latency 
($<$1~us for 99.99\%) and has a bandwidth $3-13\times$ lower than DRAM~\cite{yang2020empirical}. We conjecture that \name{} command logging would perform relatively better than other baselines on a real PM because the bandwidth might become the bottleneck.

The results on the DRAM filesystem are shown in \cref{fig:dram-2pl}. 
The advantage of command logging over data logging is greatly reduced when the bandwidth is sufficient.
\name{} command logging scales linearly, while the serial command logging is again restricted by the 
contention of the single counter.
All the parallel algorithms scale well in recovery. Silo-R outperforms \name slightly because it does not have to resolve dependencies during recovery.
We can infer that \name{} does not incur observable overhead that would preclude it from a PM-based 
DBMS.

\subsection{TPC-C Full Mix}
\label{sec:eval_tpcc_full_mix}
\myhighlightpar{
To demonstrate the generality of \name{} and to evaluate \name{} in a more realistic DBMS OLTP workload, we added the support for range scans, row insertions, and row deletions. We implement all the types of transactions from the TPC-C benchmark with the 2PL concurrency control algorithm. %
The full TPC-C mix consists of 45\% \tpccNewOrder, 43\% \tpccPayment, 4\% \tpccOrderStatus, 4\% \tpccDelivery, and 4\% \tpccStockLevel. Among them, \tpccOrderStatus and \tpccStockLevel are read-only transactions, and therefore \name{} does not create log records for them. Figure~\ref{fig:tpcc-full-2pl} shows the logging performance and recovery performance. We observe that, starting from 32 threads, the logging algorithms are limited by the workload parallelism since the no logging baseline plateaus at a similar level. Compared to the no logging baseline, the overhead caused by \name{} is around \mainTPCFLogNumberofWorkerThreadsvsThroughputCompareTaurusCommandNOWAITOverNoLoggingDataNOWAITMax{}. In recovery, the serial algorithms are again limited by the loss of parallelism. \name{} command logging outperforms the serial baselines by \mainTPCFRecNumberofWorkerThreadsvsThroughputCompareTaurusCommandNOWAITOverSerialCommandNOWAIT{}. \cref{fig:tpcc-full-breakdown} shows the time breakdown among all the five TPC-C transactions. Although \tpccStockLevel transactions are read-only, they take a significant proportion of the total running time. This proportion increases with the number of threads because \tpccStockLevel transactions perform massive read operations. %
}
\begin{figure}[t]
    \centering
    \hspace{0.in}{
        \fbox{\adjincludegraphics[scale=0.3, trim={{0.29\width} {0.25\height} {0.01\width} {0.1\height}}, clip]
            {./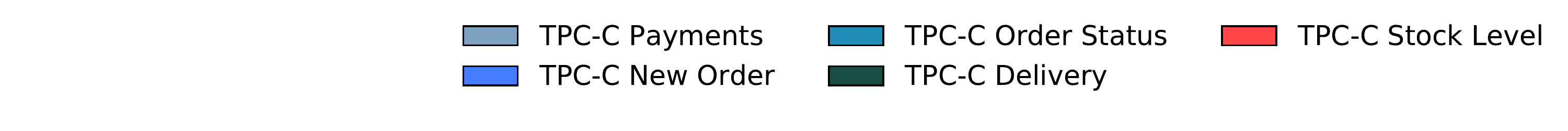}}
    }
    \vspace{-0.0in}\newline
    \adjincludegraphics[scale=0.27, trim={0 0 0 {0.05\height}}, clip]
            {./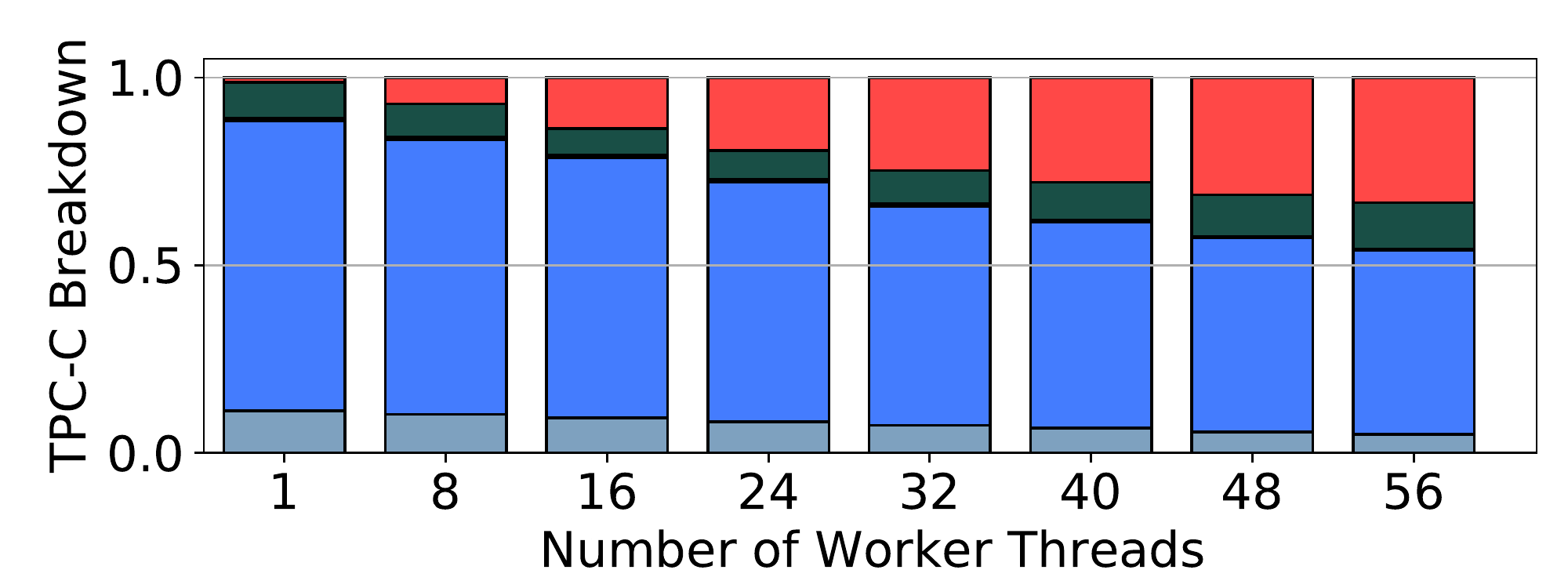}
            \label{fig:tpcc-full-breakdown-stats}
    \caption{\textbf{TPC-C Full Mix Time Breakdown} -- Time cost breakdown by different TPC-C transaction types.}
    \label{fig:tpcc-full-breakdown}
\end{figure}

\subsection{\myhighlightpar{Sensitivity Study}}
\label{sec:eval-sens}
In this section we show that \codename{} robustly provides relatively good
performance even when various factors change.
\vspace*{-0.1in} \\

\begin{figure}[t]
    \centering
    \hspace{-0.1in}{
        \fbox{\adjincludegraphics[scale=0.3, trim={{0.08\width} {0.0\height} 0 {0.1\height}}, clip]
             {./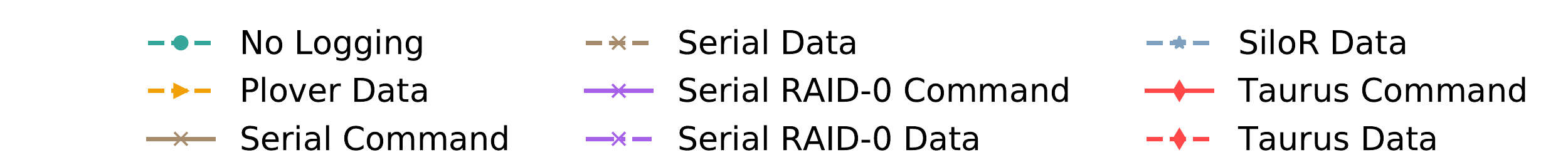}}
    }
    \hspace{-0.3in}\subfloat[YCSB-10G Logging]{
        \adjincludegraphics[scale=0.3, trim={{0.03\width} 0 0 {0.2\height}}, clip]
            {./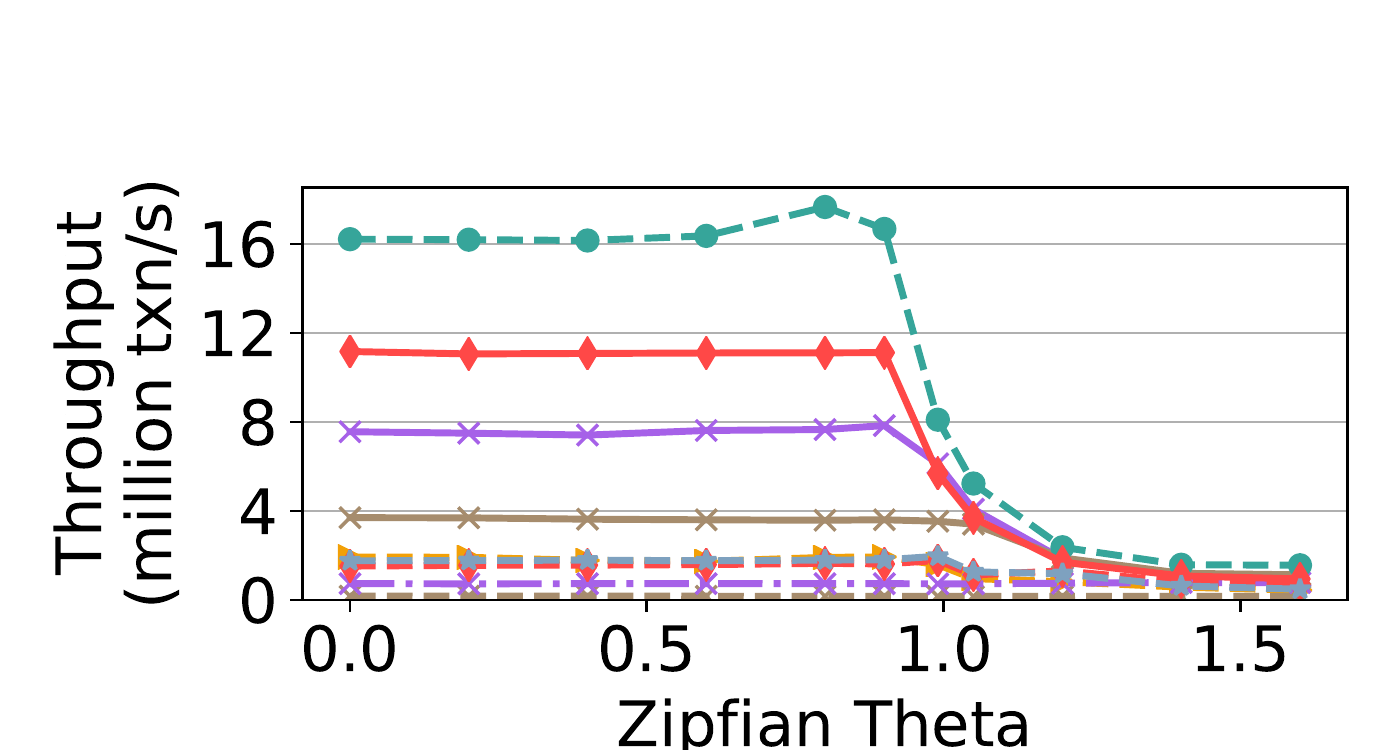}\label{fig:contention-ycsb-2pl-a}
    }
\hspace{-0.05in}\subfloat[YCSB-10G Recovery]{
        \adjincludegraphics[scale=0.3, trim={{0.17\width} 0 0 {0.2\height}}, clip]
            {./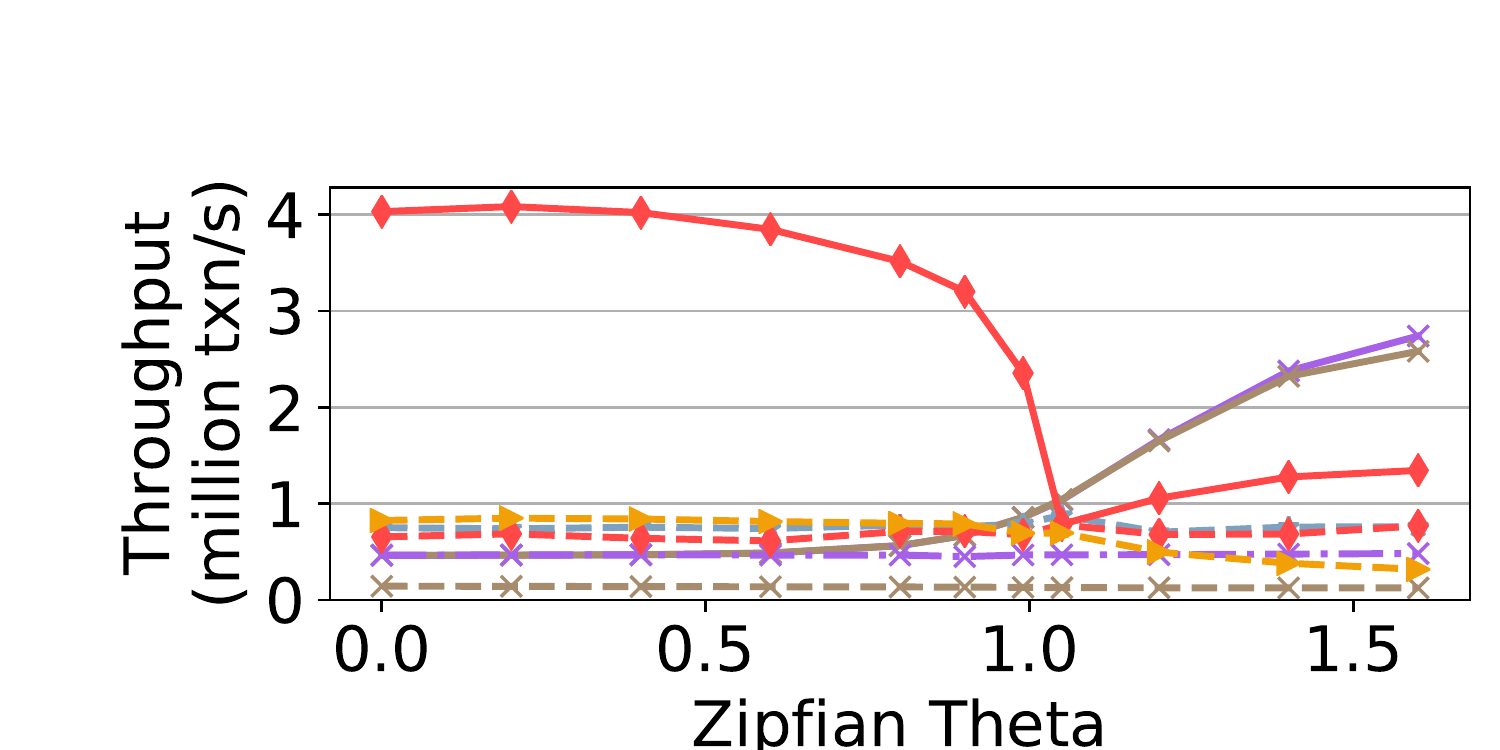}\label{fig:contention-ycsb-2pl-b}
    }
    \caption{
        \textbf{Contention} --
        Zipfian Theta in YCSB. %
    }
    \label{fig:contention-ycsb-2pl}
    \vspace{-0.05in}
\end{figure}

\textbf{Contention:}
We use the YCSB-10G benchmark on the \texttt{h1.16xlarge} server to study how the contention level 
impacts performance. We control the contention level by adjusting the $\theta$ parameter of the 
Zipfian distribution. A higher $\theta$ value corresponds to more contention. We include a 
baseline (No Logging) to provide an upper bound at different contention levels. 
Every baseline uses 56 worker threads.

\cref{fig:contention-ycsb-2pl} shows the DBMS's throughput measurements when varying the Zipfian
$\theta$ parameter for the logging and recovery procedures. 
The gap between \name and No Logging is insensitive to data contention.
When $\theta$ is greater than 1.0, the performance of all logging
schemes decreases due to the reduced parallelism in the workload.
\cref{fig:contention-ycsb-2pl-b} shows the recovery measurements when the
contention level increases. These results indicate different trends for serial algorithms
and \name. For \name, the performance drops when $\theta$ goes beyond 0.8 because
of the inter-log dependency problem (\cref{sec:limitation}): dependencies
between transactions spanning different logs incur extra latency that hurts performance at high
contention. In contrast, serial algorithms have low throughput at low contention, but
their throughput increases with higher $\theta$. This is because higher data skew
makes the working set fit in on-chip caches, resulting in a higher cache hit rate and thus
better performance. Since the recovery proceeds sequentially, contention does not introduce
data races, so it does not harm the performance of the serial baselines.
When the contention level is high (i.e., $\theta > 1$), we run \name with serial recovery to
avoid the high latency between dependent transactions. As shown in
\cref{fig:contention-ycsb-2pl-b}, this configuration enables \name to achieve good performance 
under high contention.
\vspace*{-0.1in} \\

\begin{figure}[t]
    \centering
    \hspace{0.3in}{
        \fbox{\adjincludegraphics[scale=0.3, trim={{0.2\width} {0.55\height} 0 {0.07\height}}, clip]
            {./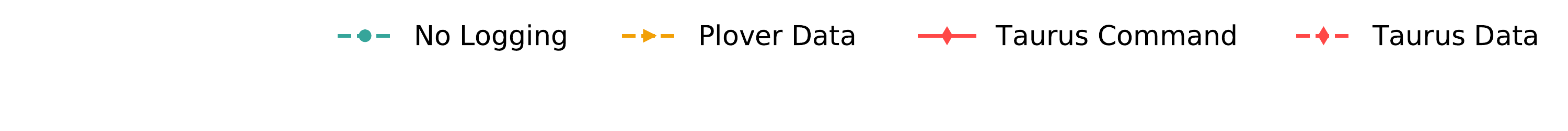}}
    }
    \vspace{-0.in}
    \adjincludegraphics[scale=\globalGraphScale, trim={0 0 0 {0.2\height}}, clip]
            {./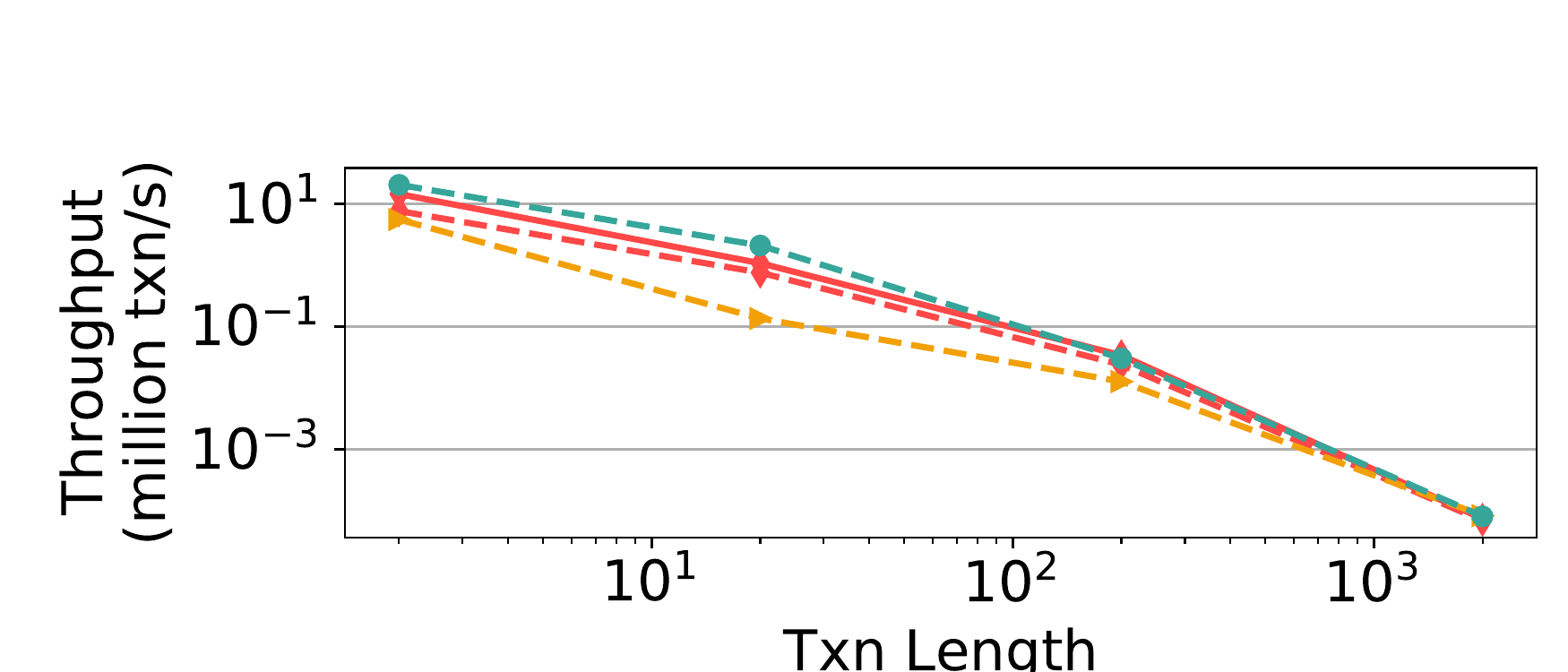}
            
    \caption{\textbf{Transaction Impact} -- We vary the number of tuples per transaction touches from 2 to 2000. %
    }
    \label{fig:sweepr-2pl}
\end{figure}

\begin{figure}[t]
    \centering
    \hspace{-0.0in}{
        \fbox{\adjincludegraphics[scale=0.3, trim={{0.313\width} {-0.05\height} {0.01\width} {0.1\height}}, clip]
            {./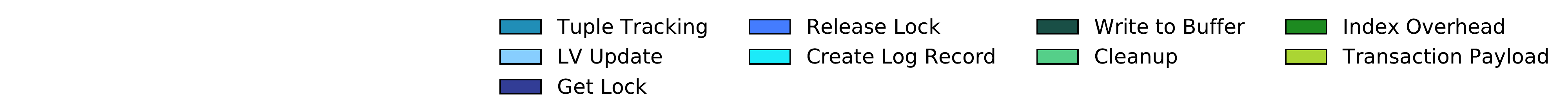}}
    }\vspace{-0.0in}
    \newline
    \adjincludegraphics[scale=0.3, trim={0 {0.05\height} 0 {0.05\height}}, clip]
            {./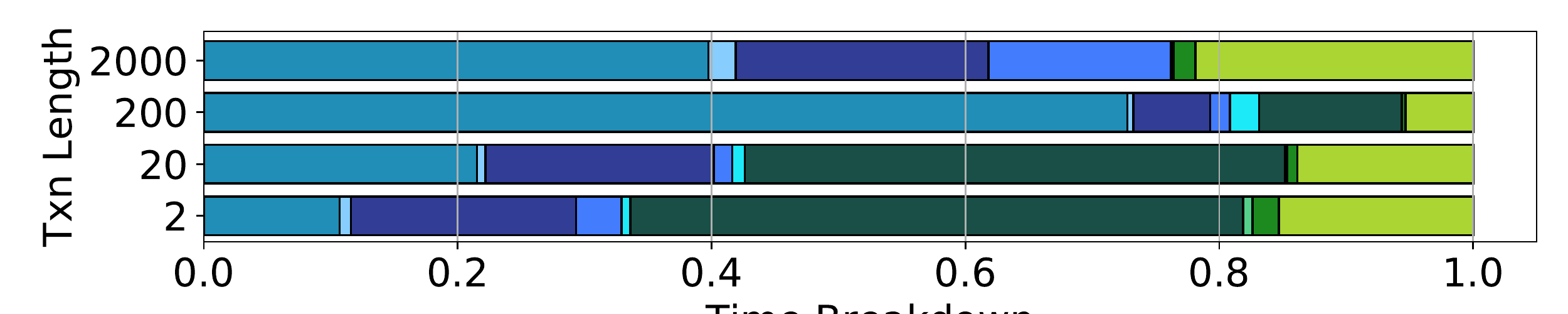}
            \label{fig:sens-sweepR-breakdown}
    \caption{\textbf{Transaction Impact} -- The percentages of the tuple tracking overhead for \name{} Data Logging.}
    \label{fig:sens-sweepR-breakdown}
\end{figure}

\textbf{Transaction Impact}
\label{sec:eval-transaction-impact}
We evaluate how \name{} performs when every transaction touches a large number of 
tuples. We ran YCSB-500G on an Amazon EC2 \texttt{i3en.metal} instance and vary the number 
of tuples every transaction accesses from 2 to 2,000. \cref{fig:sweepr-2pl} shows the throughput is 
inversely proportional with the transaction length.
\cref{fig:sens-sweepR-breakdown} shows the time breakdown of \name data logging for YCSB-10G. We can 
observe that when the number of tuples accessed per transaction increases from 2 to 200, the LV 
update overhead stays fixed at 0.6\%, while the tuple tracking overhead of the 2PL 
implementation increases from 10.7\% to 72.8\%. With the \texttt{NO\_WAIT} policy~\cite{yu2014} to 
prevent deadlocks, the abort rate grows quickly with the number of tuples accessed. At 2000 tuples per 
transaction, the abort rate is high, causing the overhead distribution to change greatly because 
overheads grow differently. Some overheads like writing the log buffer only occur after a 
transaction finishes, some overheads like tuple tracking occurs linearly in the number of tuples 
accessed, and some overheads like getting the lock are more sensitive to the growing contention. At 
2,000 tuples per transaction, the LV updating overhead is around 2.1\%.
\vspace*{-0.1in} \\

\begin{figure}[t]
    \centering
    \hspace{0.3in}{
        \fbox{\adjincludegraphics[scale=0.3, trim={{0.05\width} {0.2\height} 0 {0.1\height}}, clip]
            {./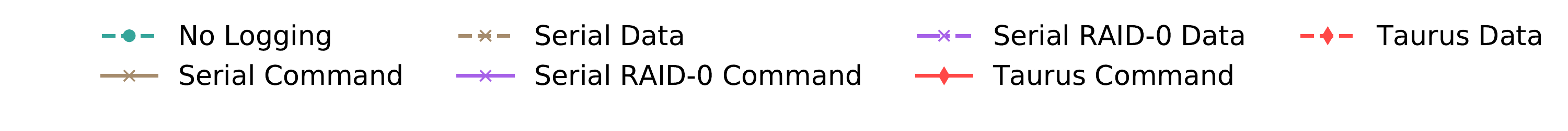}}
    }
    \hspace{-0.1in}\subfloat[Logging]{
        \adjincludegraphics[scale=0.32, trim={0 0 {0.015\width} {0.2\height}}, clip]
            {./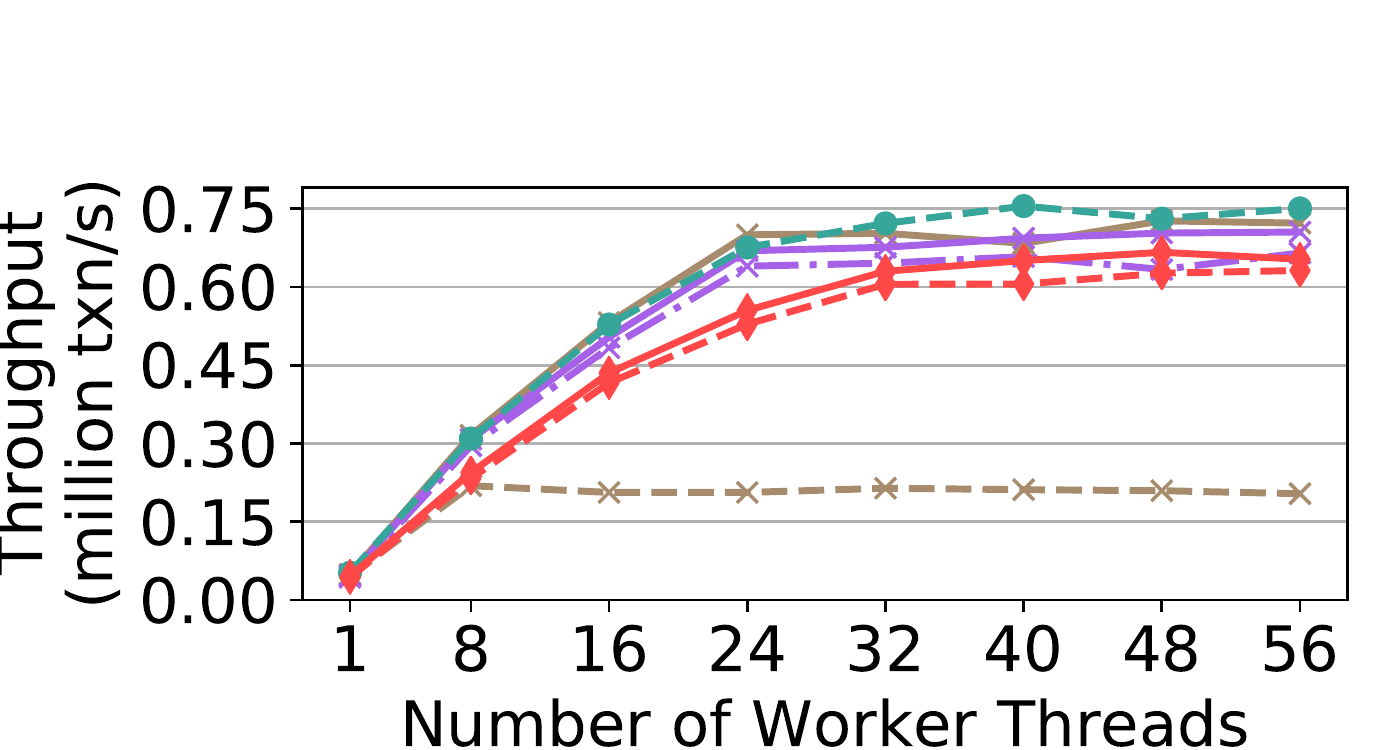}
            \label{fig:tpcc-full-2pl-logging}
    }
    \subfloat[Recovery]{
        \adjincludegraphics[scale=0.32, trim={{0.13\width} 0 0 {0.2\height}}, clip]
            {./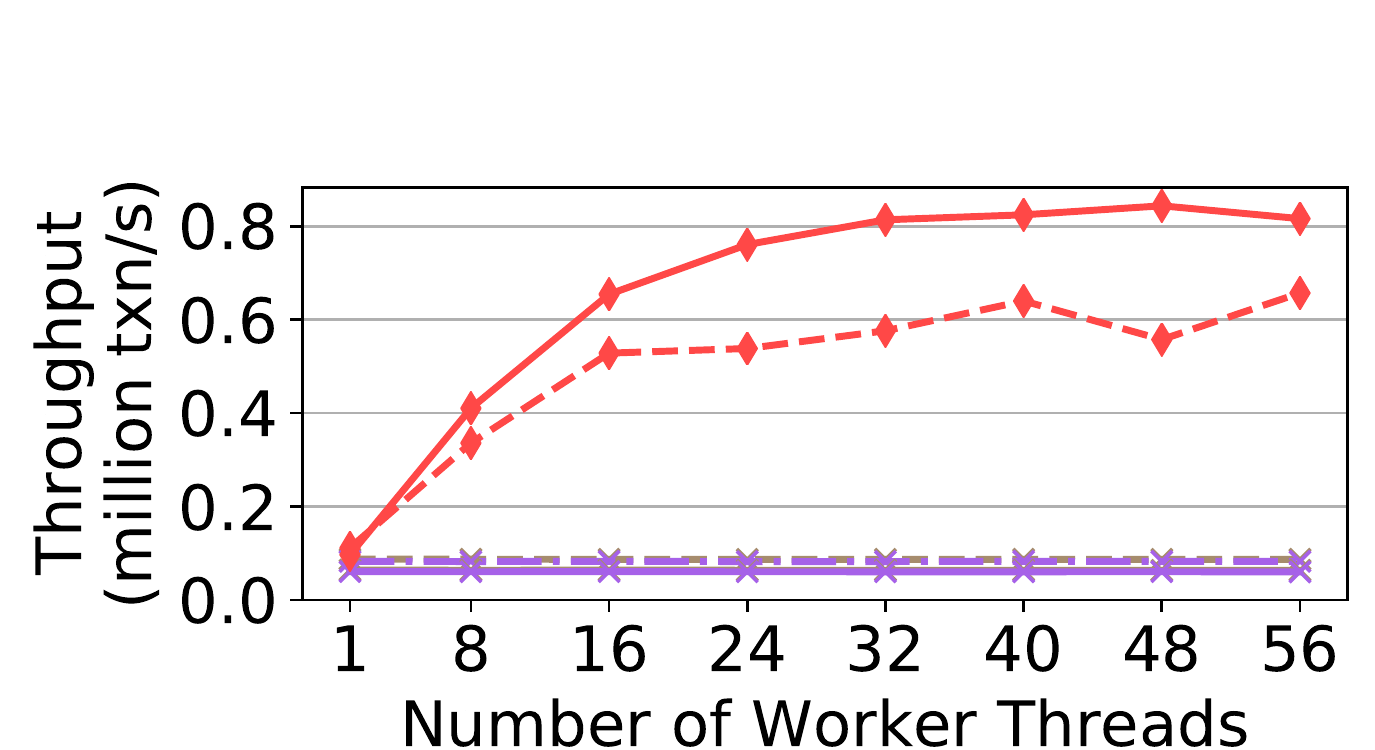}
            \label{fig:tpcc-full-2pl-recovery}
    }
    \caption{\textbf{TPC-C Full Mix Performance (2PL)} -- Performance comparison and scalability on TPC-C Full Mix.} %
    \label{fig:tpcc-full-2pl}
\end{figure}

\begin{figure}[t]
    \centering
    \hspace{0.25in}{
        \fbox{\adjincludegraphics[scale=0.3, trim={{-0.025\width} {0.55\height} {0.02\width} {0.1\height}}, clip]
            {./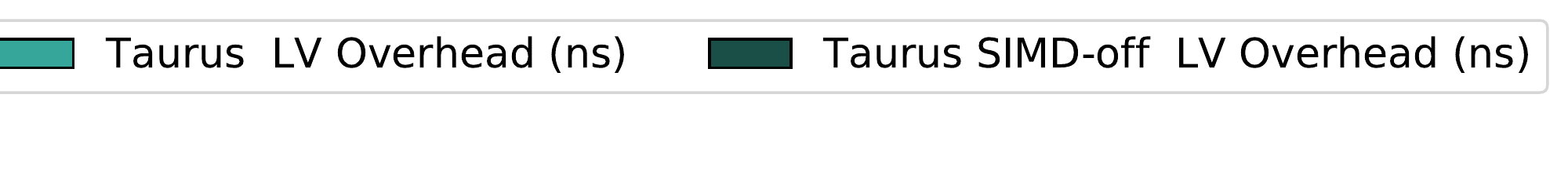}}
    }\vspace{-0.0in}
    \newline
    \adjincludegraphics[scale=0.27, trim={0 0 0 {0.05\height}}, clip]
            {./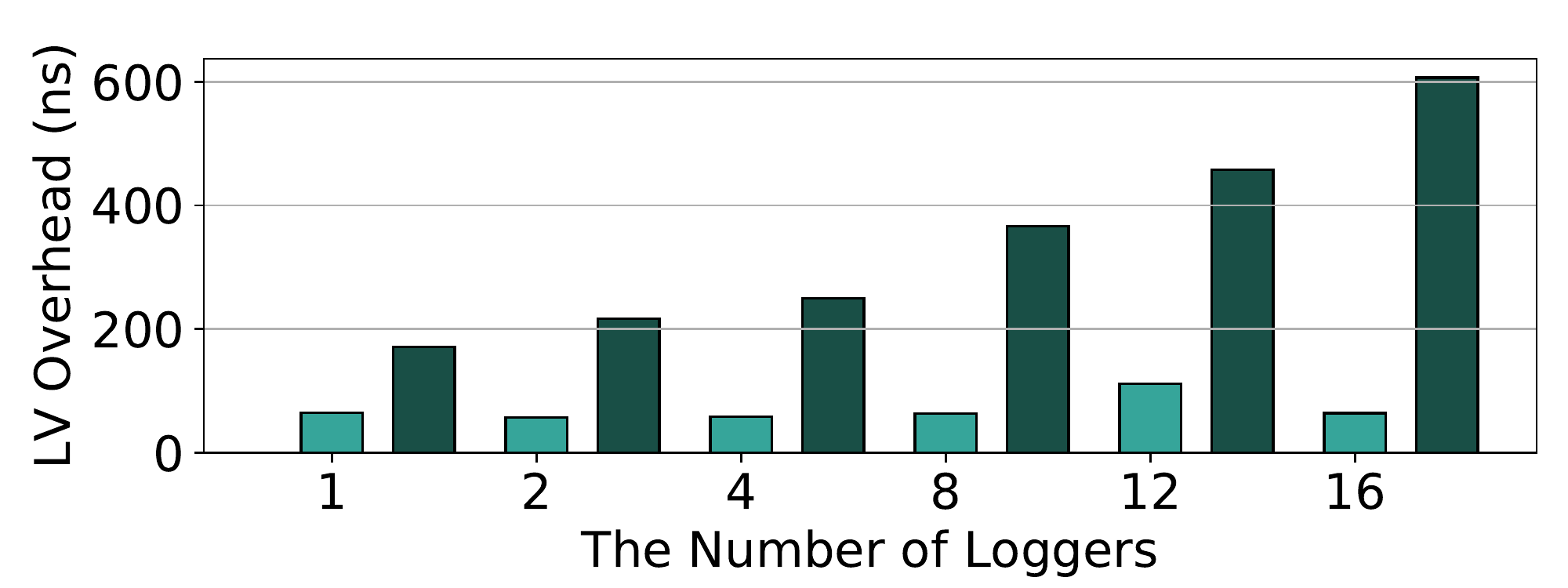}
            \label{fig:sens-log-num}
    
    \caption{\textbf{Vectorization} -- The LV Update overhead (in nanosec).}
    \label{fig:sens-log-num}
\end{figure}

\textbf{Number of Log Files}
\label{sec:eval-log-num}
We also evaluate the effectiveness of the SIMD optimizations. %
We run \name{} command logging with SIMD on and off against the YCSB-10G workload with 64 threads. The results are shown in \cref{fig:sens-log-num}. The x-axis is the number of log files \name{} used, and the y-axis represents the time (in nanoseconds) consumed in the LV operations and updates per transaction. The gap between the two baselines increases with the number of log files. %
Turning on SIMD reduces the overhead by up to 89.5\%.

\section{Related Work}
\label{sec:related}

\textbf{\myhighlight{Early-lock-release (ELR):}} ELR \cite{dewitt1984implementation, graefe2013controlled, kimura2012efficient, 
soisalon1995partial} allows a transaction to release locks before flushing to log files. Controlled 
Lock Violation \cite{graefe2016controlled} is similar. It permits the acquisition of locks if the lock
holder has appended its log record into the buffer. \name{} includes ELR in its design for high 
performance. 
\vspace*{-0.1in} \\

\textbf{\myhighlight{Single-Storage Logging Algorithms:}}
ARIES \cite{mohan1992aries} has been the gold standard in database logging and has been widely
implemented. However, ARIES does not scale well on multicore processors, as many recent works
have observed \cite{johnson2010aether, tu13, wang2014using, zheng2014fast}. C-ARIES 
\cite{speer2007c} was proposed to support parallel recovery, and CTR~\cite{antonopoulos2019constant} improves the
recovery time by using multi-versioned concurrency control and aggressive checkpointing, but the contention caused by the
original ARIES logging remains. %

Aether \cite{johnson2010aether}, ELEDA \cite{jung2017scalable}, and \textsc{Border-Collie} 
\cite{kim2019border} have optimized ARIES by reducing the length of critical sections during
logging. But these protocols still use a single storage device and suffer from the centralized LSN 
bottleneck. TwinBuf~\cite{meng2018twin} uses two log buffers to support parallel buffer filling. 
Besides the single storage bottleneck, TwinBuf relies on global timestamps to order the log records. 
Aether, ELEDA, \textsc{Border-Collie}, and TwinBuf are similar to the serial data baseline we 
evaluated in \cref{sec:evaluation}.
\vspace*{-0.1in} \\

\textbf{\myhighlight{Single-Stream Parallel Logging Algorithms:}}
P-WAL~\cite{nakamura2019integration} realizes parallel logging but relies on a single LSN counter to
order transactions, which will incur scalability issues. Besides, the enforced order causes serial
recovery. Adaptive logging \cite{yao2016adaptive} achieves parallel recovery for command
logging in a distributed partitioned database. Different from \name{},
it infers dependency information from the transactions' read/write set. This approach
requires that the DBMS maintain each transaction's start and end times to detect dependencies. 
PACMAN \cite{wu2017fast} enables parallel command logging recovery by using program analysis
techniques to determine what computation can be performed in parallel, while \name{} does not
require any program analysis and is simpler to implement. Also, \name{} supports both parallel 
logging and recovery, while \cite{wu2017fast} only supports parallel recovery.
\vspace*{-0.1in} \\

\textbf{\myhighlight{Logging Algorithms for Modern Hardware:}} Fast recovery based on non-volatile memory (NVM) is an active research area~\cite{arulraj2015let, 
arulraj2016write, chatzistergiou2015rewind, fang2011high, huang2014nvram, wang2014scalable, 
kimura2015foedus, kim2016nvwal}. This line of work leverages the high read/write bandwidth and
byte-addressable nature of NVM to improve the logging and recovery performance. \name{}, in
contrast, can be applied to both traditional HDD/SSD devices and the new NVM devices. Since NVM 
devices are randomly accessible, \name{} can work with multiple log files per NVM device.
\vspace*{-0.1in} \\

\textbf{\myhighlight{Dependency-Tracking Algorithms:}}
Similar to \name{}, \cite{dewitt1984implementation} also uses dependency tracking to log to multiple 
log files, but does not log dependency information to the log records. This leads to two
shortcomings: (1) transactions with dependencies have to be logged in order, which leads to
significant performance overhead when there are many inter-log dependencies; (2) they do not support
parallel recovery. DistDGCC~\cite{yao2018scaling} is also coupled with a dependency tracking logging
scheme, but it logs fine-grained dependency graphs 
. A transaction
visiting lots of items results in a huge log record, which may potentially incur scalability issues. 
In contrast, \name{} only logs LSN Vectors to enforce dependencies. 
\myhighlightpar{In~\cite{johnson2012scalability}, Johnson et al. proposed a parallel logging scheme that relies on a single-dimension Lamport clock to achieve a global total order of transactions. \name{} uses multi-dimension vector clocks and only preserves partial orders between dependent transactions, enabling moderate parallelism in recovery. Enforcing a total order can accelerate the recovery if the inherent parallelism is significantly low, where a serial recovery is expected to outperform parallel recovery because of the extra inter-thread communications. \name{} provides a serial recovery fallback to fit low-parallelism workloads.
}

Kuafu \cite{hong2013kuafu} presents an algorithm for replaying transactions in parallel on a 
secondary database. Similar to \name{}, Kuafu also encodes dependency information in the log to
replay transactions in parallel. Kuafu supports data logging but not command logging, and it
maintains the whole dependency graph while \name{} maintains only minimal dependency information. 
Therefore, Kuafu will suffer from a bandwidth bottleneck if applied to our setting.
\vspace*{-0.1in} \\

\textbf{\myhighlight{Multi-Partition Logging:}}
Bernstein et al. present a logging algorithm \cite{bernstein2015scaling} for multi-partition databases. They 
distinguish transactions that only visit a single partition from those that visit multiple partitions.
These single-partition transactions are sent to the log file corresponding to the partition (called a single-partition log) as no cross-partition dependencies will occur.
All the transactions that visit multiple partitions are sent to a single log file (called the multi-partition log).
Their design also uses vector clocks%
---two-dimensional
vector clocks are maintained by each log to keep a partial order between the multi-partition
log and the corresponding single-partition log. 
Given a partitioning scheme, picking out transactions without cross-log dependencies is orthogonal
to the problem we solve here; \name{} can be plugged in to better deal with multi-partition transactions by enabling
multiple-stream logging.

\section{Conclusion}
\label{sec:conclusion}
We presented \name, a lightweight parallel logging scheme for high-throughput main memory 
DBMSs. It is designed to support not only data logging but also command logging, and is compatible
with multiple concurrency control algorithms. It is both efficient and
scalable compared to state-of-the-art logging algorithms. %

\FloatBarrier

\bibliographystyle{abbrv}
{\small \bibliography{refs,arch,misc,db}}
\appendix
\begin{appendix}

\section{Proof of Correctness \& Liveness}
\label{sec:proof}

In this section, we prove both the correctness and liveness of the \name protocol. Specifically, we 
prove that (1) data dependencies are correctly enforced during recovery; (2) all and only committed 
transactions will be recovered; and (3) the protocol will not deadlock or livelock during forward 
execution and recovery if the concurrency control protocol does not deadlock or livelock. 
We use \textit{\txn{}.LSN} to denote $\txn{}$'s position in the corresponding log.%

\begin{theorem}\label{thm:correctness-enforce}
\textbf{[Correctness of Recovery Order]} Data dependencies are correctly enforced during recovery: 
for any transaction $\txn{2}$ that depends on $\txn{1}$, 
$\txn{2}$ is recovered after $\txn{1}$.
\end{theorem}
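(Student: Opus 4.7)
The plan is to split Theorem~\ref{thm:correctness-enforce} into a \emph{propagation} claim on the logging side and a \emph{monotonicity} claim on the recovery side, then glue them together through the guard $\txn{}.\LV \leq \RLV$ on line~\ref{line:pool-fetch-next} of Algorithm~\ref{alg:recovery-worker}.

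First I would prove the propagation claim: if $\txn{2}$ depends on $\txn{1}$ and $\txn{1}$ is written to log $L_i$ at LSN $L_1$, then the copy of $\txn{2}.\LV$ stored in $\txn{2}$'s log record satisfies $\txn{2}.\LV[i] \geq L_1$. The argument traces the update sequence in Algorithm~\ref{alg:logging}. Line~\ref{line:lsn-update} sets $\txn{1}.\LV[i] \gets L_1$ before the atomic release block on lines~\ref{line:release-start}--\ref{line:release-end}, which in turn writes the updated $\txn{1}.\LV$ into \textit{DB[key].\writeLV} and/or \textit{DB[key].\readLV}. Because $\txn{1}$ releases its lock only at the end of that atomic block, any conflicting access by $\txn{2}$ must happen strictly after this propagation; the \elementWiseMax updates on lines~\ref{line:element-wise-max-1}--\ref{line:element-wise-max-2} then pull $L_1$ into $\txn{2}.\LV[i]$ and a later case analysis over \conflictRAW, \conflictWAW, and \conflictWAR shows that the correct tuple LV is always consulted. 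The same argument transfers to Algorithm~\ref{alg:life-cycle-occ} once one observes that the \readLV update on lines~\ref{line:commit-update-readlv-begin}--\ref{line:commit-update-readlv-end} is placed before validation, so no \conflictWAR edge is missed.

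Next I would prove the monotonicity claim: $\RLV[i]$ never advances to a value $\geq L_1$ until the transaction at LSN $L_1$ in $L_i$ has finished recovery. This is an induction over the trace of updates on lines~\ref{line:update-rlv-by-maxlsn} and~\ref{line:update-rlv-by-head} of Algorithm~\ref{alg:recovery-worker}, exploiting (i) the sequential enqueue order maintained by Algorithm~\ref{alg:log-manager-recovery}, (ii) the FIFO discipline of $pool$, (iii) the placement of the $\RLV$ update strictly after \textit{Recover} returns, and (iv) the atomic section surrounding the update. The invariant to carry is that at any quiescent moment $\RLV[i]$ is at most the largest LSN whose transaction, together with all predecessors in $L_i$, has completed recovery. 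Given both claims, the theorem follows: line~\ref{line:pool-fetch-next} refuses to dequeue $\txn{2}$ until $\RLV \geq \txn{2}.\LV$; by propagation this forces $\RLV[i] \geq L_1$, and by monotonicity $\txn{1}$ must already be recovered.

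The hard part will be the monotonicity claim, specifically ruling out the race in which two workers concurrently dequeue $\txn{1}$ and its successor $\txn{1}'$ from the same pool, $\txn{1}'$ finishes first, and its worker advances $\RLV[i]$ past $L_1$ while $\txn{1}$ is still replaying. To close this, I expect to need an auxiliary invariant coupling the set of in-flight recoveries, the remaining pool contents, and $\RLV[i]$, demonstrating that the chosen bound $pool.head.\textit{LSN}-1$ (or $pool.maxLSN$ when empty) is never larger than the LSN of any unstarted or in-flight transaction in $L_i$. This may require strengthening the atomic block in Algorithm~\ref{alg:recovery-worker} to also track outstanding predecessors, or, alternatively, observing that $\txn{1}'$ can only be dequeued when $\txn{1}'.\LV \leq \RLV$ and reusing the propagation claim recursively to drain in-flight work in LSN order before any $\RLV$ advance that would uncover $L_1$.
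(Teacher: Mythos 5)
Your decomposition is the same as the paper's: its Step~1 is your propagation claim (argued, as you do, by observing that \cref{line:lsn-update} installs $\txn{1}.\textit{LSN}$ into $\txn{1}.\LV[i]$ before the atomic release block publishes $\txn{1}.\LV$ into the tuple's \writeLV/\readLV, so any later conflicting access by $\txn{2}$ picks it up via \elementWiseMax and monotonicity of the tuple LVs), and its Step~2 is your monotonicity claim combined with the guard on \cref{line:pool-fetch-next}. The difference is that the paper disposes of Step~2 in one sentence --- it cites \cref{line:rlv-update-begin,line:update-rlv-by-head} and asserts that all transactions in $L_i$ with LSN $\leq \RLV[i]$ have been recovered --- whereas you correctly identify that this assertion is the actual content of the theorem and try to prove it as an invariant. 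Your worry about the race (two workers concurrently take $\txn{1}$ and an independent successor $\txn{1}'$ from the same \textit{pool}; $\txn{1}'$ finishes first and advances $\RLV[i]$ past $\txn{1}.\textit{LSN}$) is legitimate under a literal reading of ``FetchNext atomically dequeues,'' and the paper's proof does not address it.

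The way to close it is not either of your two suggestions. Your second alternative fails outright: $\txn{1}'$ need not depend on $\txn{1}$, so nothing forces $\txn{1}'.\LV[i] \geq \txn{1}.\textit{LSN}$, and the propagation claim gives you no handle on the order in which independent same-log transactions drain. The resolution lies in the stated definition of the data structure: \textit{pool} contains transactions ``read from the log but not recovered,'' so a transaction must remain in (and, by FIFO enqueue order in \cref{alg:log-manager-recovery}, remain the head of) \textit{pool} until \textit{Recover} returns, rather than being removed at fetch time. Under that reading, $\textit{pool.head.LSN}-1$ is by construction a lower bound on the LSN of every unrecovered (including in-flight) transaction of $L_i$, and $\textit{pool.maxLSN}$ is used only when nothing is unrecovered; this is exactly the auxiliary invariant you said you would need, and with it your induction goes through. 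You should state this reading explicitly as a hypothesis on \textit{FetchNext}, since the theorem is false for an implementation that physically dequeues before recovering. Your extension of the propagation claim to \cref{alg:life-cycle-occ} is a bonus the paper's proof does not attempt.
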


\begin{proof}
    We prove the theorem in two steps. First, we prove that for any transaction $\txn{2}$ that depends 
    on $\txn{1}$, \name enforces that $\txn{2}.\LV[i] \geq \txn{1}.\textit{LSN}$, where $\txn{1}$ logs to the 
    $i$-th log manager (i.e., $L_i$). Then, we prove that if $\txn{2}.\LV[i] \geq \txn{1}.\textit{LSN}$, 
    $\txn{2}$ will be recovered after $\txn{1}$.

    \textbf{Step 1:} W.l.o.g., we consider a \conflictRAW dependency where $\txn{1}$ writes to tuple 
    $A$ and then $\txn{2}$ reads $A$ (proofs for write-after-read or write-after-write dependencies are 
    similar). According to \cref{line:write-buffer,line:lsn-update,line:update-writelv} in 
    \cref{alg:logging}, $A.\writeLV[i] = \txn{1}.\textit{LSN}$ when $\txn{1}$ releases its write lock 
    on $A$. When $\txn{2}$ reads $A$ at a later time, \cref{line:element-wise-max-1} in 
    \cref{alg:logging} enforces that $\txn{2}.\LV[i] \geq A.\writeLV[i]$. Since \writeLV can only 
    monotonically increase, we have $\txn{2}.\LV[i] \geq \txn{1}.\textit{LSN}$.

    \textbf{Step 2:} During recovery, $\txn{2}$ can be recovered only if $\txn{2}.\LV \leq \RLV$, which 
    means $\txn{2}.\LV[i] \leq \RLV[i]$ (\cref{line:pool-fetch-next} in \cref{alg:recovery-worker}). 
    Given $\txn{1}.\textit{LSN} \leq \txn{2}.\LV[i]$, the recovery of $\txn{2}$ requires $\txn{1}.\textit{LSN} \leq 
    \RLV[i]$. According to Lines~\ref{line:rlv-update-begin}--\ref{line:update-rlv-by-head} in 
    \cref{alg:recovery-worker}, all transactions in $L_i$ with LSNs no greater than \RLV{}$[i]$ 
    have been recovered. Therefore, $\txn{1}.\textit{LSN} \leq \RLV{}[i]$ means $\txn{1}$ is already 
    recovered. The recovery of $\txn{2}$ requires the recovery of $\txn{1}$.%
\end{proof}

\begin{theorem}\label{thm:all-commit}
    \textbf{[Correctness of Completeness]} All and only committed transactions will be recovered.
\end{theorem}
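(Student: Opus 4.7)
The plan is to prove the two directions of the biconditional separately. We adopt the natural recovery-time definition of \emph{committed}: a transaction $\txn{}$ is committed if, at some moment before the crash, the global persistent vector satisfied $\PLV \geq \txn{}.\LV$, meaning both $\txn{}$'s own log record and every record it depends on are durably on disk. Under this definition, the user-facing commit acknowledgement (line~\ref{line:commit-end} of \cref{alg:logging}) has the same characterization as the property we must verify during recovery.

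First I would prove an auxiliary monotonicity lemma: after the crash, for every log $j$, $\ELV[j] \geq \PLV[j]$ where $\PLV$ is the value at any pre-crash moment. This is almost immediate because $\PLV[j]$ is advanced only in \cref{lines:update-readylsn} of \cref{alg:log-manager-logging}, after the buffer is flushed to position \textit{readyLSN}; disk writes persist through crashes, so the file size $\ELV[j]$ is at least any previously flushed position. With this in hand, the \textbf{soundness} direction follows directly: if worker recovery dequeues $\txn{}$, then \cref{line:decode-next} of \cref{alg:log-manager-recovery} guarantees $\txn{}.\LV \leq \ELV$. In particular $\txn{}.\textit{LSN} = \txn{}.\LV[i] \leq \ELV[i]$, so $\txn{}$'s record is durable; and for every other dimension $j$, $\txn{}.\LV[j] \leq \ELV[j]$ ensures that the prefix of log $j$ holding all of $\txn{}$'s dependencies is durable. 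This matches our definition of committed.

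For \textbf{completeness}, suppose $\txn{}$ was committed pre-crash, so $\PLV \geq \txn{}.\LV$ held at the commit moment. The auxiliary lemma gives $\ELV \geq \txn{}.\LV$. Since $\txn{}.\textit{LSN} \leq \ELV[i]$, the log manager $L_i$ will read $\txn{}$'s record from disk, the guard $\txn{}.\LV \leq \ELV$ in \cref{line:decode-next} passes, and $\txn{}$ is enqueued into $L_i.\textit{pool}$. I would then argue by strong induction on commit order that $\txn{}$ is eventually dequeued and recovered: inductively, every transaction $\txn{}'$ with $\txn{}'.\LV \leq \txn{}.\LV$ is already recovered, so $\RLV$ has been advanced along each dimension (via lines~\ref{line:rlv-update-begin}--\ref{line:rlv-update-end}) to at least $\txn{}.\LV$, at which point some worker fetches $\txn{}$ in \cref{line:pool-fetch-next}.

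The main obstacle will be closing a subtle race in the completeness argument: showing that $\RLV[j]$ cannot be prematurely advanced past a not-yet-enqueued dependency, which would strand $\txn{}$ by causing the pool-head check in \cref{line:update-rlv-by-head} to skip over it. This requires a careful case analysis of the ordering in \cref{alg:log-manager-recovery}, where \textit{pool.maxLSN} is updated \emph{after} the enqueue, interleaved with the atomic section of \cref{line:rlv-update-begin}--\ref{line:rlv-update-end}. The key invariant I would establish and maintain is that whenever a worker reads an empty pool and sets $\RLV[j] = \textit{pool.maxLSN}$, every transaction in $L_j$ with LSN $\leq \textit{maxLSN}$ has already been recovered; this follows because the log manager only advances \textit{maxLSN} after a successful enqueue, and the atomic section ensures workers observe consistent snapshots of (pool contents, \textit{maxLSN}). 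Once this invariant is nailed down, liveness of recovery, and hence completeness, follows by the induction.
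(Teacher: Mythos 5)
Your decomposition into soundness and completeness is reasonable, but there is a genuine gap at the very first step: your working definition of \emph{committed} does not match the protocol. \cref{line:commit-end} of \cref{alg:logging} commits $\txn{}$ only when $\PLV \geq \txn{}.\LV$ \emph{and} all transactions in the same log with smaller LSNs have already committed; you drop the second conjunct and assert the two characterizations coincide. They do not. Consider $\txn{2}$ and $\txn{3}$ both assigned to log $i$ with $\txn{2}.\textit{LSN} < \txn{3}.\textit{LSN}$, where $\txn{2}$ depends on a transaction in log $j$ whose record was buffered but never flushed, so $\txn{2}.\LV[j] > \ELV[j]$, while $\txn{3}$ has no such dependency and $\txn{3}.\LV \leq \ELV$. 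Under your definition $\txn{3}$ is ``committed,'' yet it is never recovered: the while loop at \cref{line:decode-next} of \cref{alg:log-manager-recovery} stops reading log $i$ at $\txn{2}$, so $\txn{3}$ is never enqueued. Your completeness direction is therefore false as stated, and the induction you sketch never gets off the ground for $\txn{3}$. (The protocol is still correct, because $\txn{3}$ was never acknowledged to the client: it waits behind $\txn{2}$, which can never commit.)

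This per-log prefix structure is exactly what the paper's proof is organized around: it locates the first transaction $\txn{}'$ in each log violating $\txn{}'.\LV \leq \ELV$, shows that everything strictly before $\txn{}'$ committed and everything from $\txn{}'$ onward did not (invoking the in-order commit rule in its opening sentence), and observes that recovery enqueues precisely that prefix. To repair your argument you would either adopt this cut-point formulation or strengthen your definition of committed to require that the entire LSN-prefix of $\txn{}$'s own log up to $\txn{}$ satisfies $\LV \leq \ELV$; your soundness direction then also needs a short induction over that prefix rather than the single check $\txn{}.\LV \leq \ELV$. Your auxiliary lemma $\ELV \geq \PLV$ and your attention to the \textit{maxLSN}-after-enqueue ordering are sound and correspond to real ingredients of the argument; note that the paper discharges the ``everything in the prefix is eventually recovered'' step by appealing to the acyclicity result (\cref{thm:liveness-2}), which is the cleaner way to close your strong induction on commit order.
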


\begin{proof}
    Given that \name \textit{in-order} commits transactions that map to the same log manager 
    (\cref{line:commit-end} in \cref{alg:logging}), we must find the last committed 
    transaction $\txn{}$ and prove that $\txn{}$ and all transactions before $\txn{}$ are recovered and no transaction 
    after $\txn{}$ is recovered. 

    Based on \cref{line:decode-next} in \cref{alg:log-manager-recovery}, the transaction $\txn{}$ that we 
    are looking for is the transaction right before the first transaction $\txn{}'$ that violates 
    $\txn{}'.\LV \leq \ELV$. If no such $\txn{}'$ exists, we choose $\txn{}$ as the last transaction in the log 
    file. Given this, we prove the following three properties:

    \textbf{Property \#1:} $\txn{}$ is the last committed transaction before the crash.
    Since $\txn{}$ is the transaction right before the first transaction $\txn{}'$ violating
    $\txn{}'.\LV \leq \ELV$, all transactions before $\txn{}'$ satisfy this inequality. Therefore, before the  
    crash, for all of them we have $\txn{}.\LV \leq \PLV$. 
    By \cref{line:commit-end} in \cref{alg:logging}, these transactions have committed before the 
    crash.
    We then prove that $\txn{}'$ did not commit before the crash. There are two cases to consider. If 
    $\txn{}'$ does not exist in the log file, then $\txn{}'$ was not persistent and thus never committed 
    before the crash. Otherwise, $\txn{}'$ exists but $\txn{}'.\LV>\ELV$. This means during forward 
    processing, we have $\txn{}'.\LV>\PLV$, indicating that $\txn{}'$ never committed before the crash. 

    \textbf{Property \#2:} All transactions before $\txn{}'$ will be recovered.
    During recovery, $\txn{}$ and transactions before $\txn{}$ form a directed dependency graph 
    where each element in an LV indicates an edge in the graph. Later, in \cref{thm:liveness-2}, we 
    prove that the dependency graph is acyclic, which means that each transaction will 
    be recovered. 

    \textbf{Property \#3:} No transaction after $\txn{}$ will be recovered. 
    By \cref{line:decode-next} in \cref{alg:log-manager-recovery}, all transactions after $\txn{}$ 
    are ignored and will not be recovered. 
\end{proof}    

\begin{theorem}
\textbf{[Liveness in Forward Processing]} The protocol will not deadlock or livelock during forward processing if the concurrency control protocol does not deadlock or livelock. \label{thm:liveness-1}
\end{theorem}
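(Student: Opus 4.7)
The plan is to show that every synchronization point \name adds on top of the underlying concurrency control (CC) protocol is either wait-free, bounded by a background thread's independent progress, or off the critical path of lock holding. Since the CC protocol's locking in \textit{FetchLock} (\cref{line:fetchlock}) is assumed to be deadlock- and livelock-free, it suffices to prove that \name's additional operations cannot introduce a circular wait and always terminate in finite time.

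First, I would enumerate the synchronization mechanisms added by \name in \cref{alg:logging}: (i) the element-wise max updates on $\txn{}.\LV$, $\textit{DB[key]}.\readLV$, and $\textit{DB[key]}.\writeLV$ (Lines~\ref{line:element-wise-max-1}--\ref{line:element-wise-max-2} and \ref{line:release-start}--\ref{line:release-end}); (ii) \textit{WriteLogBuffer} (Lines~\ref{lines:write-buffer-begin}--\ref{lines:write-buffer-end}); (iii) the atomic section enclosing the LV update and lock release; and (iv) the asynchronous commit wait for $\PLV \geq \txn{}.\LV$ (\cref{line:commit-end}). For (i), the updates use atomic compare-and-swap or are performed under the same short latch that already protects lock release, so they complete in bounded time without acquiring new locks. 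For (ii), \textit{WriteLogBuffer} uses an \textit{AtomicFetchAndAdd} on $L_i.\loglsn$ and an unconditional \texttt{memcpy} into the pre-allocated slot; no thread ever waits on another worker here, so this call is wait-free. For (iii), the atomic section touches only the tuple being released and does not attempt to acquire any further CC locks, so it cannot participate in a lock-dependency cycle with the CC protocol.

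The main obstacle is (iv): the commit wait on $\PLV$. I would argue it is deadlock-free by observing that, at the point a transaction $\txn{}$ reaches \cref{line:commit-end}, it has already released all its tuple locks (Lines~\ref{line:release-start}--\ref{line:release-end}) and written its log record into the buffer. Therefore $\txn{}$'s wait does not block any other worker from acquiring locks or writing to the log. Progress of $\PLV$ is driven entirely by the log-manager threads (\cref{alg:log-manager-logging}), whose advancement depends only on the \textit{allocatedLSN}/\textit{filledLSN} arrays that every \textit{WriteLogBuffer} call updates in bounded time before returning. Hence, for every transaction whose record is in the buffer, the corresponding $\PLV$ dimension eventually catches up. To conclude liveness, I would show by induction on the in-log commit order that, once $\txn{}.\LV$ is fixed, (a) every log manager $L_k$ with $\txn{}.\LV[k]>0$ will flush past that position in finite time because all predecessor transactions on $L_k$ themselves make progress by the same argument, and (b) all earlier transactions on $\txn{}$'s own log commit in finite time. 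Combining (a) and (b) with the wait-freedom of (i)--(iii) and the CC protocol's liveness assumption yields the theorem. The only subtle point is ensuring no worker-to-worker waiting arises from the in-order commit within a single log manager; this is resolved by noting that a predecessor transaction on the same log is itself not blocked on anything other than $\PLV$, so the chain of waits is finite and entirely bounded by background flushing rather than by any live lock.
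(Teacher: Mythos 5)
Your proposal is correct and follows essentially the same route as the paper's proof: both rest on the observations that the \textit{allocatedLSN}/\textit{filledLSN} window is non-blocking and bounded so log managers always make flushing progress, that the worker-side logic added on top of the concurrency control protocol is combinational and wait-free, and that the $\PLV \geq \txn{}.\LV$ commit wait therefore terminates because $\PLV$ advances independently of any committing transaction. Your explicit remark that the commit wait occurs after lock release (so it is off the critical path) is a point the paper makes via ELR in the main text rather than in the proof itself, but it does not change the substance of the argument.
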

\begin{proof}

Assuming that no thread will be indefinitely suspended, we show that the system will eventually make 
progress (e.g., committing a transaction). We prove this in three parts:

First, if there are transactions in the log buffer, they will eventually be flushed to disks. The 
only possibility that data in the log buffer is not flushed is because \textit{readyLSN} is 
throttled by \textit{allocatedLSN} (\cref{line:state-check-end} in \cref{alg:log-manager-logging}). 
This throttling can happen only if \textit{$allocatedLSN[j]$ $\geq$ $filledLSN[j]$} which means a 
transaction is in the middle of filling a log record. This process, however, completes in a short 
period of time, since all the operations between 
\cref{lines:allocate-greater-than-filled-begin,lines:allocate-greater-than-filled-end} in 
\cref{alg:logging} are non-blocking.

Second, active transactions will eventually be written to the log buffer. 
Assuming the concurrency control algorithm does not incur deadlocks/livelocks, the logic in \cref{alg:logging} is combinational and non-blocking, and the conditional statements are independent from the concurrency control algorithm. Since a log buffer will finally flush, it will contain space for an active transaction to write to. 

Finally, an active transaction will eventually commit assuming no system failure. Since transactions in each log buffer will eventually be flushed, for each transaction $\txn{}$, \PLV will exceed $\txn{}.\LV$. This will also be true for transactions in the same log manager with smaller LSNs. 
Following \cref{line:commit-end} in \cref{alg:logging}, this means $\txn{}$ will eventually commit.%
\end{proof}
\begin{theorem}
\textbf{[Liveness in Recovery]} The protocol will not deadlock or livelock during the \emph{recovery process}. \label{thm:liveness-2}
\end{theorem}
\begin{proof}
    The recovery follows a directed dependency graph where each node is a transaction and each edge 
    corresponds to a value in \LV. Now, we prove that the graph is acyclic.

    \begin{figure}[t]
        \centering
        \adjincludegraphics[width=0.5\columnwidth, trim={0 {0.1\height} 0 {0.05\height}}, 
            clip]{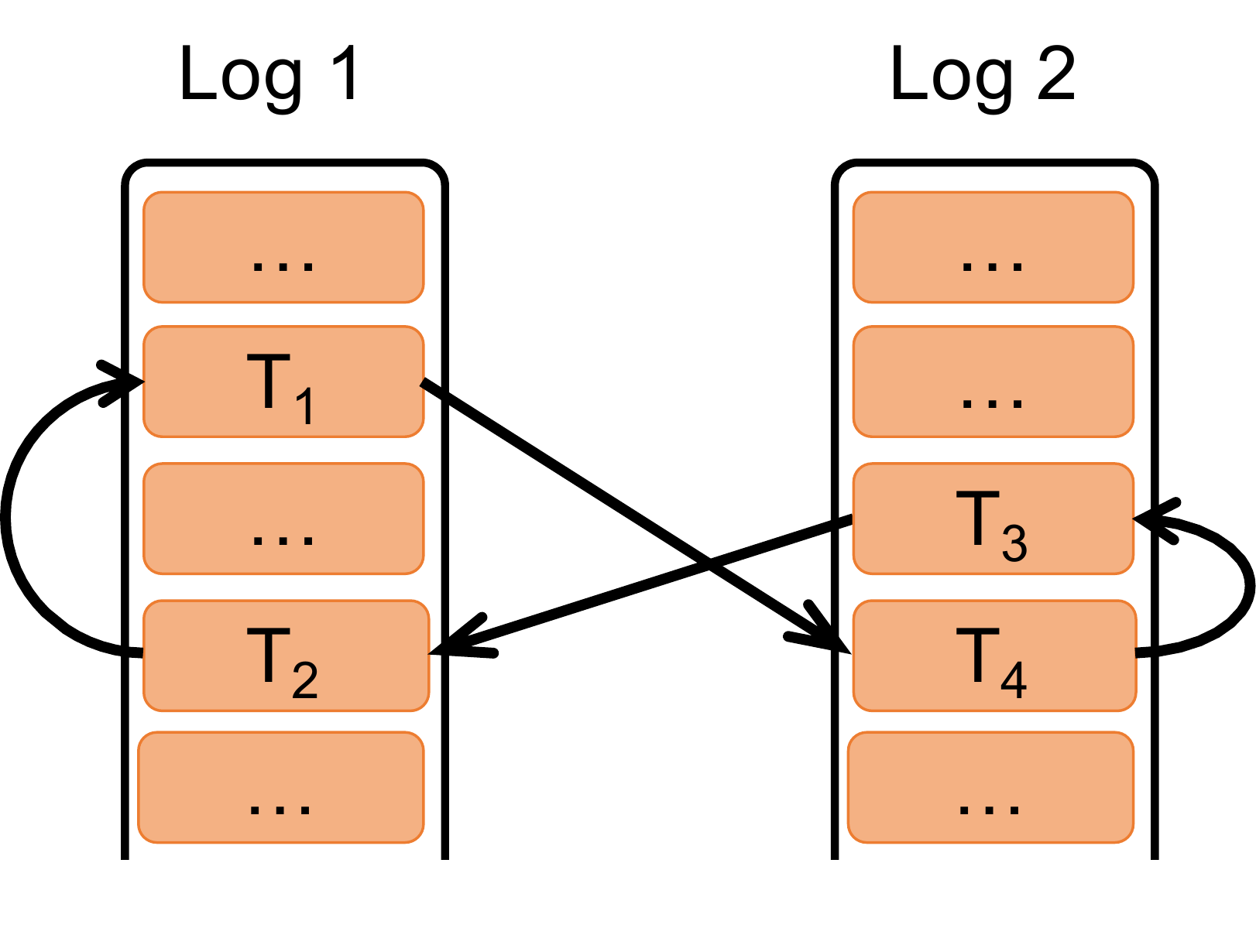}
        \caption{
            \textbf{A Hypothetical Deadlock Situation} --
            The dependencies of transactions form a cycle.
        }
        \label{fig:deadlock}
    \end{figure}

    \cref{fig:deadlock} shows a hypothetical deadlock situation that may occur if \name is 
    incorrectly designed. A cycle is formed between the four transactions: \txn{1} $\leftarrow$ 
    \txn{2} $\leftarrow$ \txn{3} $\leftarrow$ \txn{4} $\leftarrow$ \txn{1}. Although all the 
    transactions are in the \textit{pools}, none of them is able to make any forward progress.
    Note that, the correctness of the concurrency control algorithm is not sufficient to rule out
    this situation because \name{} adds extra dependencies while logging.
 
    To prove that no dependency cycles exist in \name, we define a commit time, \textit{ct(\txn{})}, 
    to each transaction $\txn{}$. We show that every edge in the graph follows the order of commit 
    time, namely, an edge $\txn{2}\!\rightarrow\!\txn{1}$ exists 
    $\Rightarrow$\textit{ct(\txn{2}) $>$ ct(\txn{1})}. Since time specifies a total 
    order, proving this inequality means cycles are impossible. In particular, we choose the time 
    when $\txn{}$'s log record is allocated in the log buffer (i.e., right after atomically incrementing 
    LSN in \cref{alg:logging}, \cref{line:atom-fetch-and-add}) as its commit time \textit{ct(\txn{})}.

    As we proved in \cref{thm:correctness-enforce}, \txn{2} depending on \txn{1} mapped to log manager $i$ $\Rightarrow$ 
    $\txn{2}.\LV{}[i]$ $\geq$ $\txn{1}.\textit{LSN}$. It 
    is clear that for transactions mapping to the same log manager, their \textit{ct} order is the 
    same as their LSN order; therefore$ \textit{ct(\txn{}') > ct(\txn{1})}$, where $\txn{}'$ is the 
    transaction on $L_i$ with $\txn{}'.LSN=\txn{2}.LV{}[i]$. According to \cref{line:lsn-update} in 
    \cref{alg:logging}, a tuple may have its \readLV{}$[i]$ or \writeLV{}$[i]$ equal a particular 
    \textit{LSN} only after the log record has been written to the log buffer at that \textit{LSN}. 
    \txn{2}.\LV{}$[i]$ must be copied from a tuple and thus must occur at a even later time, so 
    \textit{ct(\txn{2}) > ct($\txn{}'$) > ct(\txn{1})}. Therefore, if \txn{2}.\LV{}$[i]$ $\geq$ \txn{1}.\textit{LSN}, 
    we have $\textit{ct(\txn{2}) > ct(\txn{1})}$ proving the theorem.
\end{proof}

\section{Correctness Proof of LV Compression}
\label{sec:correctness-compression}
We next prove that the theorems in \cref{sec:proof} still hold with the two optimizations discussed 
above. Both optimizations share the same basic idea: if a \LV is too small, it suffices to store an 
upper bound of it, which can be shared by multiple \LVs. Therefore, for a transaction $\txn{}$ with \LV 
before the optimizations and $\LV{}'$ after the optimizations, we must have $\LV{}' \geq \LV$. The 
optimizations will not affect the correctness of \cref{thm:correctness-enforce} since it does not 
violate existing dependencies. \cref{thm:all-commit} is also not affected since $\LV{}'$ will never 
exceed \PLV and thus \ELV as well. \cref{thm:liveness-1} is not affected by the second optimization; 
for the first optimization, it will not block the asynchronous commit of transactions since an 
increased \LV of a transaction is not higher than the current \PLV. Finally, \cref{thm:liveness-2} 
is not as straightforward --- since the optimizations make transactions depend on more transactions, 
it may potentially create cycles in the dependency graph. In the following, we will prove that 
although it increases \LVs, it only increases them to the point that the dependencies still follow 
the real-time order. Therefore, following the proof of \cref{thm:liveness-2}, no cycle may exist.

For the first optimization, the DBMS may copy the tuple's \LV{} from \PLV. For the second 
optimization, the DBMS increases some dimensions of a log record's \LV to the corresponding values 
of the \PLV. In both cases, the log record of a transaction $\txn{}$ will have an \LV no greater than the 
current \PLV when the DBMS writes $\txn{}$'s log records to the buffer. Therefore, for each $\txn{}'$ 
in \logger{i} s.t. $\txn{}'.LSN \leq \txn{}.\LV[i]$, we still have $ct(\txn{})$ > $ct(\txn{}')$.

As an intuition, the increment of a tuple's \LV is equivalent to having a dummy transaction that 
reads a tuple and later writes the same values back to it. Similarly, the increment of a 
transaction's \LV is equivalent to having the transaction visit a dummy tuple with \PLV as its \LV. 
Inserting such dummy transactions or visiting dummy tuples only causes \LVs to artificially increase 
in the same way. These operations do not affect the correctness of the database, because the same tuple 
accesses and transaction interleavings form valid runtime events for the un-optimized 
\codename{} described in \cref{sec:protocol}, and it can handle them by the proofs in 
\cref{sec:proof}.

\section{Evaluation of the LV Compression}
\label{sec:eval-compression}

\begin{figure}[h]
    \centering
    \hspace{1in}{
        \adjincludegraphics[scale=0.4, trim={{0.01\width} {0.5\height} 0 0}, clip]
            {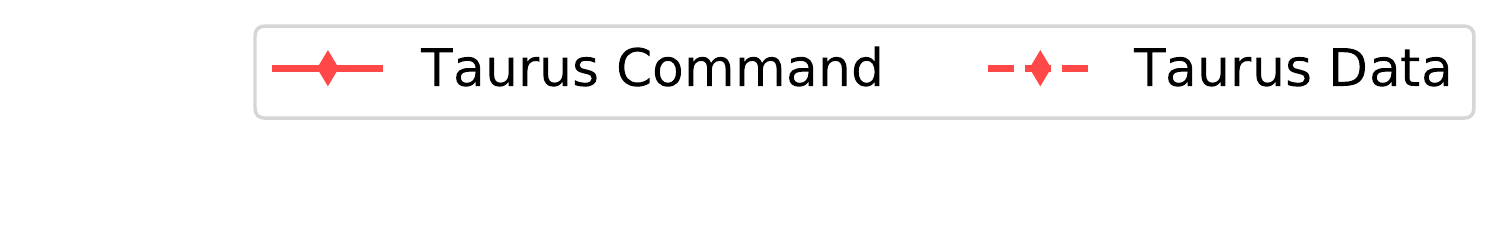}
    }
    \newline
    \subfloat[YCSB Logging]{
        \adjincludegraphics[scale=0.35, trim={0 0 0 {0.23\height}},clip]
            {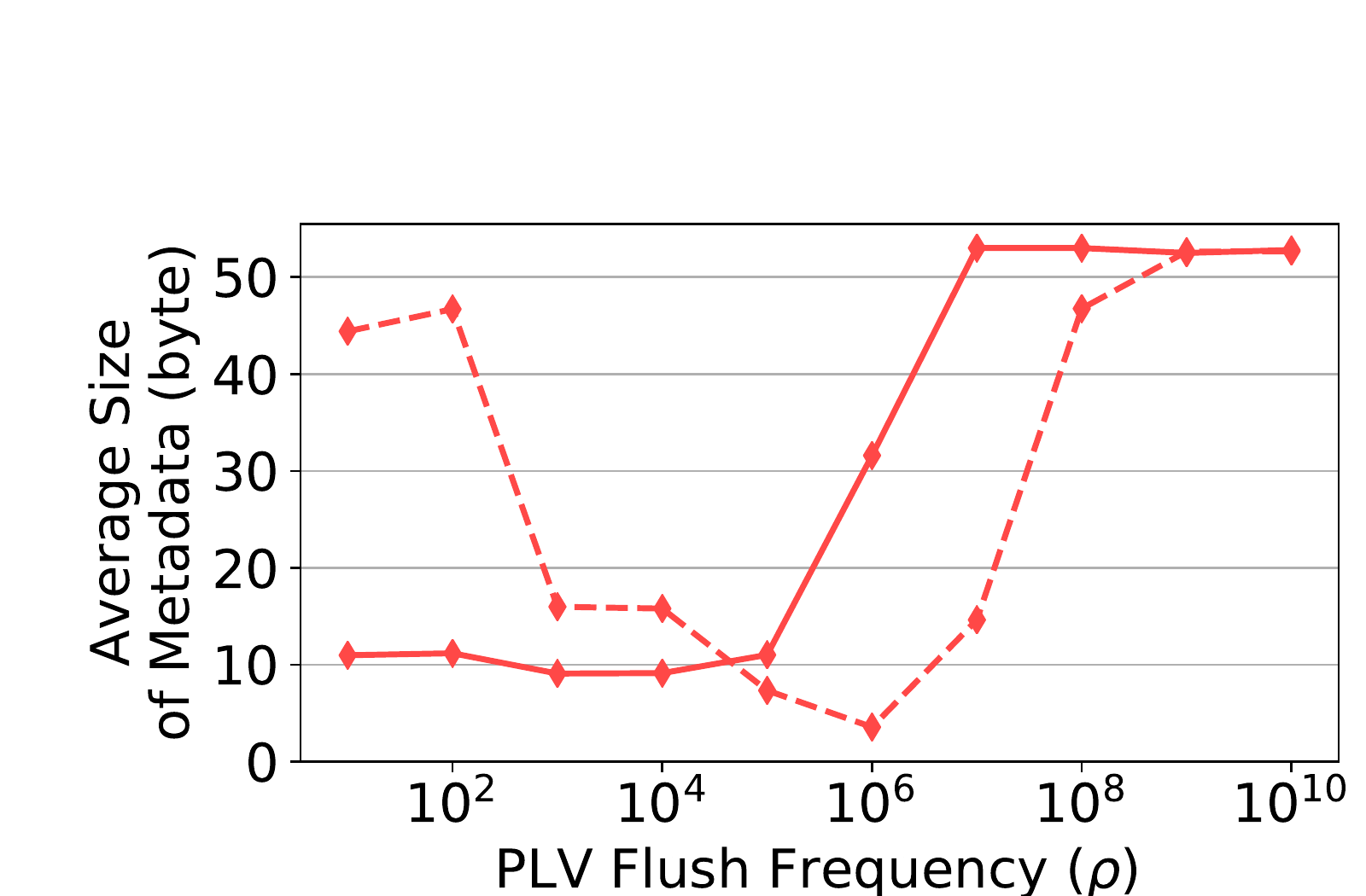}
        \label{fig:compression-ycsb-2pl-a}
    }\\
    \subfloat[YCSB Recovery]{
        \adjincludegraphics[scale=0.35, trim={0 0 0 {0.23\height}}, clip]
            {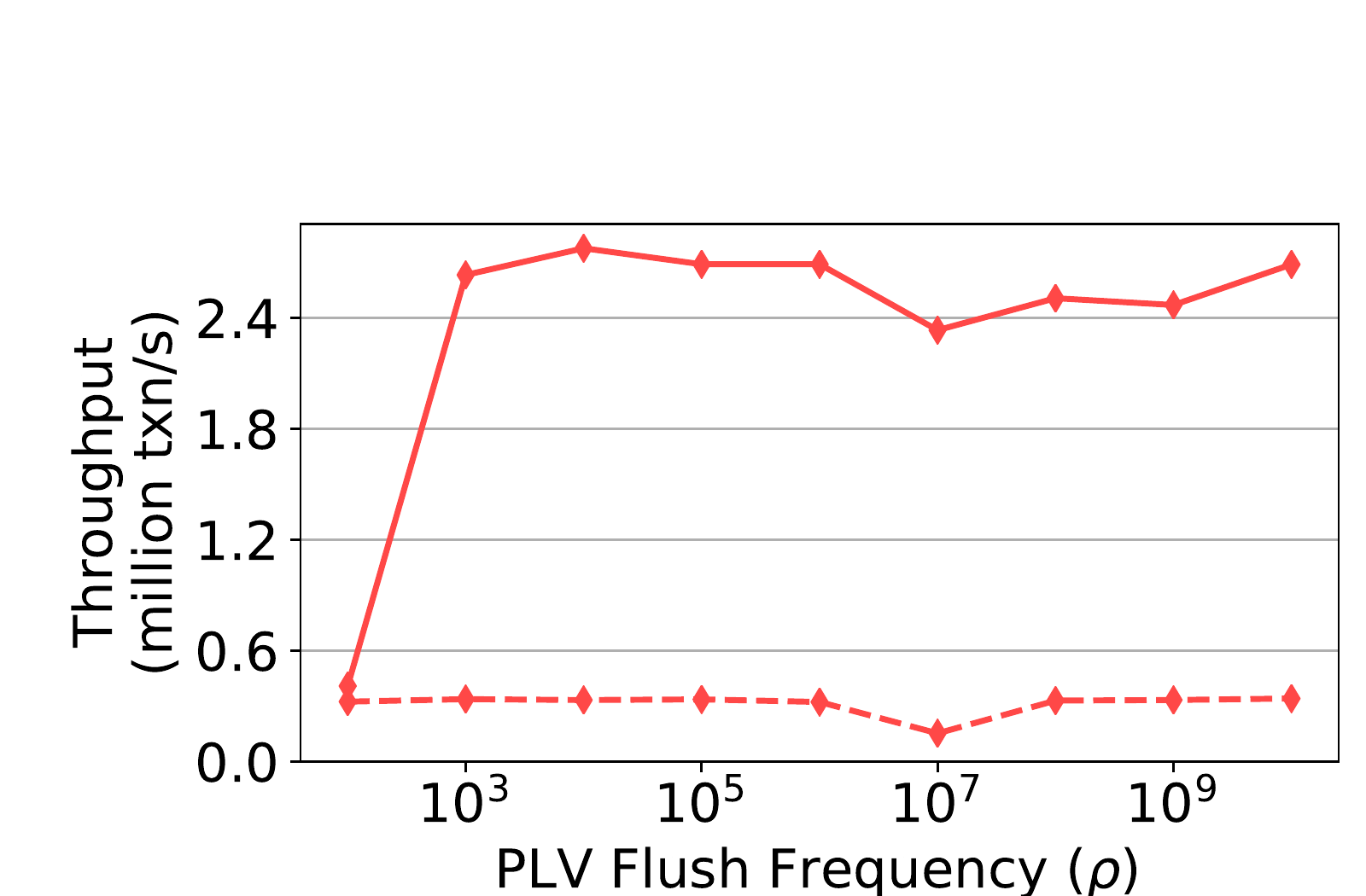}
            \label{fig:compression-ycsb-2pl-b}
    }\\
    \subfloat[YCSB Logging]{
        \adjincludegraphics[scale=0.35, trim={0 0 0 {0.23\height}}, clip]
{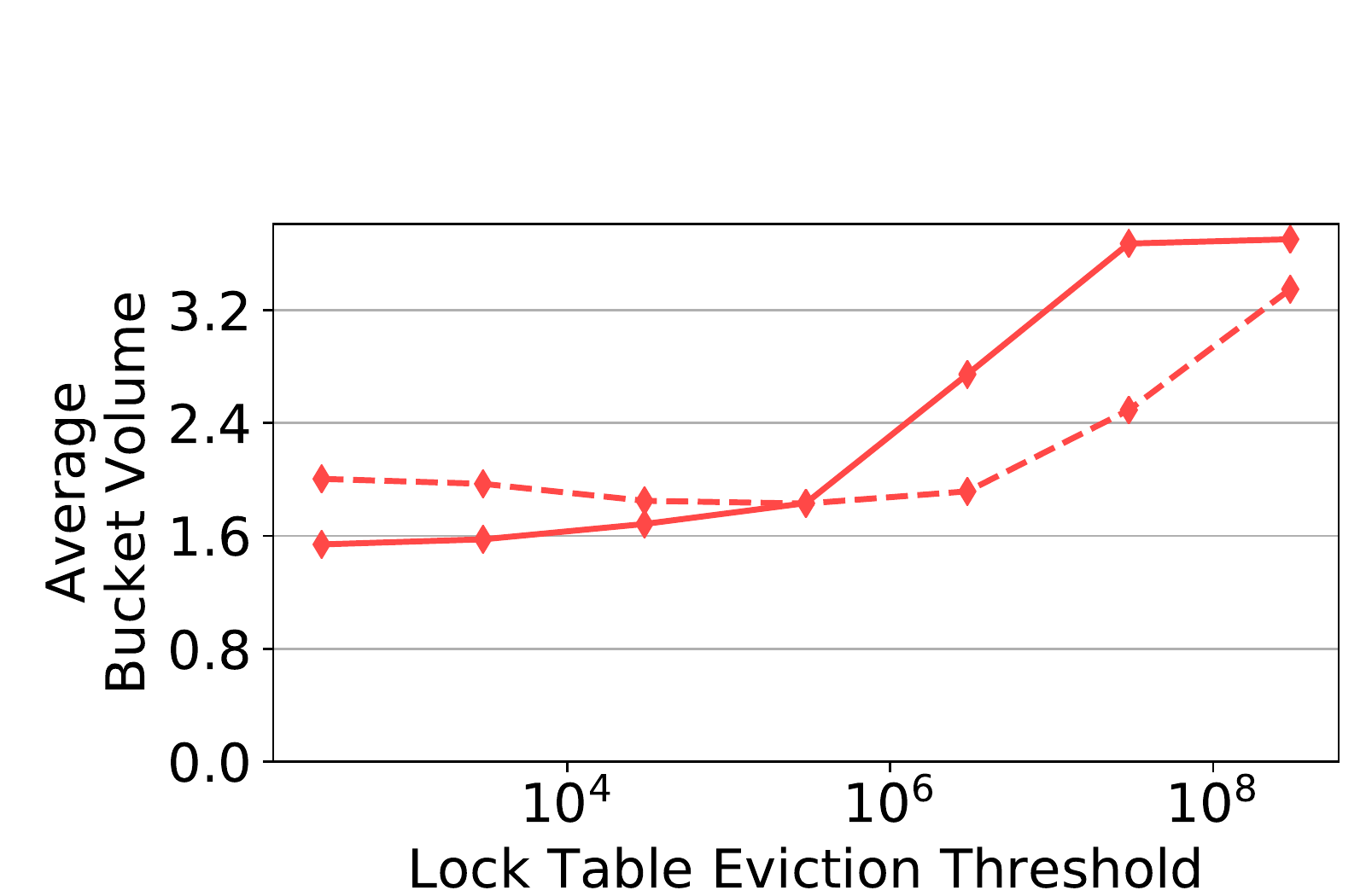}
        \label{fig:compression-ycsb-2pl-c}
    }
    \caption{
        \textbf{LV Compression} --
        (a) PLV Flush Frequency ($\rho$) vs. the average size of metadata in a high-contention workload;
        (b) PLV Flush Frequency ($\rho$) vs. the recovery throughput; and
        (c) lock table eviction threshold vs. the average bucket volume.
    }
    \label{fig:compression-ycsb-2pl}
\end{figure}

We next evaluate the scalability optimizations that we presented in \cref{sec:lv-compression} with a high-contention workload.
We use a single byte to denote the number of elements in the
\textit{compressedLV}, and a 64-bit integer for each element. Without compression, the LV would take 64 bytes to
store the eight 64-bit integers. %
We test the effect of the compression with long YCSB transactions (each visiting 16 rows).
We measure how the amount of metadata %
is affected when adjusting the flush frequency. We vary the \emph{PLV Flush
Frequency} $\rho$ through a large range. %
The DBMS writes a \PLV anchor for
every $\rho$ bytes of the log. %
The DBMS tracks both the LSN Vectors of transactions and the \PLVs when
computing the metadata size. 

From \cref{fig:compression-ycsb-2pl-a}, we observe that even for a high-contention workload, when we set $\rho$
appropriately, on average the DBMS only has to write $3.5$ bytes of metadata per log record for \name{} data logging, 
and $9.1$ bytes per log record for \name{} command logging.
When $\rho$ is small, the average metadata size can be significant because the DBMS flushes too many PLVs, causing extra overhead amortized
on each record.
As \emph{PLV Flush Frequency} grows beyond $10^6$, the size of the metadata
increases and finally reaches a steady value. 
The \name data logging curve is ``right-shifted'' compared to that of \name{} command logging 
because a data logging record is about 26$\times$ larger than a command logging record.
Therefore, the same $\rho$ results in more frequent \PLV flushes for command logging than data
logging. We see in \cref{fig:compression-ycsb-2pl-b} that 
a larger $\rho$ tends to bring better recovery performance\footnote{\small To fully
exploit the recovery parallelism, we use more threads in recovery than in logging: 56 workers in
recovery and 8 workers in logging.} because the \PLVs are flushed less frequently, resulting in
fewer artificial dependencies. This is not observed in \name{} data logging as the recovery is bounded by
the I/O bandwidth.

We also examine how the DBMS's lock table eviction threshold $\delta$ affects the average volume
of the lock table buckets. The results in \cref{fig:compression-ycsb-2pl-c} indicate
that a larger $\delta$ threshold results in a larger lock table.
Qualitatively, when $\delta$ is large, we expect \name to perform better during the recovery because
the DBMS enforces fewer extra dependencies. We contend that such a difference is only possible when
the recovery manager is not the bottleneck, and the underlying workload has a moderate level of
contention.

\end{appendix}

\end{document}